\newtheorem{theorem}{Theorem}[section]
\newtheorem{lemma}[theorem]{Lemma}
\newtheorem{corollary}[theorem]{Corollary}
\newtheorem{claim}[theorem]{Claim}
\crefname{section}{Section}{Sections}
\crefname{theorem}{Theorem}{Theorems}
\crefname{assumption}{Assumption}{Assumptions}
\crefname{lemma}{Lemma}{Lemmas}
\crefname{definition}{Definition}{Definitions}
\crefname{conjecture}{Conjecture}{Conjectures}
\crefname{corollary}{Corollary}{Corollaries}
\crefname{construction}{Construction}{Constructions}
\crefname{claim}{Claim}{Claims}
\crefname{observation}{Observation}{Observations}
\crefname{proposition}{Proposition}{Propositions}
\crefname{fact}{Fact}{Facts}
\crefname{question}{Question}{Questions}
\crefname{problem}{Problem}{Problems}
\crefname{remark}{Remark}{Remarks}
\crefname{example}{Example}{Examples}
\crefname{equation}{Equation}{Equations}
\crefname{appendix}{Appendix}{Appendices}
\crefname{algorithm}{Algorithm}{Algorithms}
\crefname{model}{Model}{Models}
\crefname{figure}{Figure}{Figures}
\crefname{table}{Table}{Tables}
\newcommand{\customlabel}[2]{%
\protected@write \@auxout {}{\string \newlabel {#1}{{#2}{\thepage}{#2}{#1}{}} }%
\hypertarget{#1}{}
}
\newcommand{\R}{\mathbb{R}}
\newcommand{\cC}{\mathcal{C}}
\newcommand{\cD}{\mathcal{D}}
\newcommand{\cE}{\mathcal{E}}
\newcommand{\cG}{\mathcal{G}}
\newcommand{\cI}{\mathcal{I}}
\newcommand{\cL}{\mathcal{L}}
\newcommand{\cP}{\mathcal{P}}
\newcommand{\A}[1]{{\bf (A#1)}}
\newcommand{\gammas}{{\gamma^\star}}
\newcommand{\phis}{{\phi^{\star}}}
\newcommand{\nfrac}{\nicefrac}
\newcommand{\sfrac}[2]{#1/#2}
\renewcommand{\epsilon}{\varepsilon}
\newcommand{\argmin}{\operatornamewithlimits{argmin}}
\newcommand{\argmax}{\operatornamewithlimits{argmax}}
\newcommand{\Ex}{\operatornamewithlimits{\mathbb{E}}}
\def\abs#1{\left| #1 \right|}
\newcommand{\inparen}[1]{\left(#1\right)}
\newcommand{\inbrace}[1]{\left\{#1\right\}}
\newcommand{\insquare}[1]{\left[#1\right]}
\newcommand{\typeP}{{\textsc{TypeP}}\xspace}
\newcommand{\typeN}{{\textsc{TypeN}}\xspace}
\newcommand{\gammat}{{\tilde{\gamma}}}
\newcommand{\dom}{{\mathsf{Dom}}}
\newcommand{\phib}{{\tilde{\phi}}}
\newcommand{\nub}{{\tilde{\nu}}}
\newcommand{\eat}[1]{}
\newcommand{\leftmarginCUSTOM}{18pt}
\setlist[itemize]{leftmargin=\leftmarginCUSTOM}
\setlist[itemize]{itemsep=0pt}
\newcommand{\cN}{\ensuremath{\mathcal{N}}}
\newcommand{\ent}{\ensuremath{\mathsf{Ent}}}
\newcommand{\Err}{\ensuremath{\mathrm{Err}}}
\newcommand{\err}{\ensuremath{\mathrm{Err}}}
\newcommand{\fD}{{f_{\mathcal D}}}
\title{Bias in Evaluation Processes: An Optimization-Based Model}
\author{L. Elisa Celis\footnote{Yale University} \and Amit Kumar\footnote{IIT Delhi} \and Anay Mehrotra${}^{\dag}$ \and Nisheeth K. Vishnoi${}^{\dag}$}
\begin{document}

\maketitle

\begin{abstract}
  Biases with respect to socially-salient attributes of individuals have been well documented in evaluation processes used in settings such as admissions and hiring. 
  We view such an evaluation process as a transformation of a  distribution of the true utility of an individual for a task to an observed distribution and model it as a solution to a loss minimization problem subject to an information constraint. 
  Our model has two parameters that have been identified as factors leading to biases: the resource-information trade-off parameter in the information constraint and the risk-averseness parameter in the loss function. 
  We characterize the distributions that arise from our model and study the effect of the parameters on the observed distribution. 
  The outputs of our model enrich the class of distributions that can be used to capture variation across groups in the observed evaluations. 
  We empirically validate our model by fitting real-world datasets and use it to study the effect of interventions in a downstream selection task. 
  These results contribute to an understanding of the emergence of bias in evaluation processes and provide tools to guide the deployment of interventions to mitigate biases.  
\end{abstract}

\newpage
\tableofcontents
\newpage

\section{Introduction}\label{sec:intro}

    Evaluation processes arise in numerous high-stakes settings such as hiring, university admissions, and fund allocation decisions \cite{baswana2019centralized,Bohnet_2016,KleinbergR18,raghavan2020mitigating}. 
    Specific instances  include
        recruiters estimating the hireability of candidates via interviews \cite{pulakos2005selection,Bohnet_2016},
        reviewers evaluating the competence of grant applicants from proposals \cite{wenneras2001nepotism,objective2020bast}, and 
         organizations assessing the scholastic abilities of students via standardized examinations \cite{tamar2014newSAT,baswana2015joint}.
        In these processes, an evaluator estimates an individual's value to an institution. 
       The evaluator need not be a person, they can be a committee, an exam, or even a machine learning algorithm \cite{amazonRecruitingTool,raghavan2020mitigating,term_of_stay_correlated_with_race}.
 Moreover, outcomes of real-world evaluation processes have at least some uncertainty or randomness \cite{Bohnet_2016,objective2020bast,jacobs2021measurement}.
    This randomness can arise both, due to the features of the individual (e.g., their test scores or grades) that an evaluator takes as input \cite{boldt1986validity,richStudentSAT2019,retakingSAT}, as well as, due to the evaluation process itself \cite{dana_dawes_peterson_2013,Bohnet_2016,useless2017nyt}.

    Biases  against individuals in certain disadvantaged  groups have been well-documented in evaluation processes
\cite{wenneras2001nepotism,greenwald2006implicit,lyness2006fit,corinne2012science,Bohnet_2016}.  
        For instance, in employment decisions and peer review, women receive systematically lower competence scores than men, even when qualifications are the same \cite{wenneras2001nepotism,corinne2012science}, 
        in standardized tests, the scores show higher variance in students from certain genders  \cite{baye2016gender,dea2018gender}, and 
        in risk assessment--a type of evaluation--widely used tools were twice as likely to misclassify Black defendants as being at a high risk of violent recidivism than White defendants \cite{angwin2017machine}. 
    Here, neither the distribution of individuals' true evaluation depends on their socially-salient attributes nor is the process trying to bias evaluations, yet biases consistently arise  \cite{wenneras2001nepotism,greenwald2006implicit,lyness2006fit,corinne2012science,Bohnet_2016}. 
    Such evaluations are increasingly used by ML systems to learn or make decisions about individuals, potentially exacerbating inequality \cite{amazonRecruitingTool,raghavan2020mitigating,term_of_stay_correlated_with_race}.
    This raises the question of explaining the emergence of biases in evaluation processes which is important to understand how to mitigate them, and is studied here.

   \paragraph{Related work.} 
    A wide body of work has studied reasons why such differences may arise and how to mitigate the effect of such biases \cite{hardt2016equality,Dwork2012,dieterich2016compas,botelho2017pursuing,kleinberg2019simplicity,blum2020recovering}. 
    For one, socioeconomic disadvantages (often correlated with socially-salient attributes) have been shown to impact an individual's ability to perform in an evaluation process, giving rise to different performance distributions across groups  \cite{dixon2013race,british2020alongtogether}. %
  Specifically, disparities in access to monetary resources are known to have a significant impact on individuals' SAT scores \cite{dixon2013race}.
    Moreover, because of differences between socially-salient attributes of individuals and evaluators, the same amount of resources (such as time or cognitive effort) spent by the evaluator and the individual, can lead to different outcomes for individuals in different groups \cite{erickson1982counselor,lang1986Language,argyle1994psychology}.
    For instance, it can be cognitively more demanding, especially in time-constrained evaluation processes, for the evaluator to interact with individuals who have a different cultural background than them, thus impacting the evaluations \cite{lang1986Language,keinan1987stressthreat,gilbert1991troublethinking,eells1994work,van1999judgement, oreopoulos2012some,economistCV20seconds}.
    Further, such biases in human evaluations can also affect learning algorithms through biased past data that the algorithms take as input \cite{hardt2016equality,Dwork2012,dieterich2016compas,blum2020recovering}.

Another factor that has been identified as a source of bias is  ``risk averseness:'' the tendency  
        to perceive a lower magnitude of increase in their utility due to a profit than the magnitude of decrease in their utility due to a loss of the same magnitude as the profit \cite{kahneman1982judgment,von2007theory,zhang2014origin}.
    Risk averseness is known to play a role in high-stakes decisions such as who to hire, who to follow on social networks, and whether to pursue higher education \cite{goel2008overconfidence,schooling2013belzil,hiring2018baert}.
    In evaluations with an abundance of applicants, overestimating the value of an individual can lead to a downstream loss (e.g., because an individual is hired or admitted) whereas under-estimating may not have a significant loss \cite{northwestern2022badhire,fahey2022badhire}.
    Thus, in the presence of risk averseness, the outputs of evaluation processes may skew the output evaluations to lower or higher values.
    The same skew can also arise from the perspective of individuals \cite{babcock2009women,hannah2014negotiate,nelson2015women}.
    For instance, when negotiating salaries, overestimating their salary can lead to adverse effects in the form of evaluators being less inclined to work with the individual or in extreme cases denying employment \cite{babcock2009women,hannah2014negotiate}. 
    Moreover, these costs have been observed to be higher for women than for men, and are one of the prominent explanations for why women negotiate less frequently \cite{babcock2009women,hannah2014negotiate}.

   A number of interventions to mitigate the adverse effects of such biases in evaluation processes have been proposed.
    These include 
    representational constraints that, across multiple individuals, increase the representation of disadvantaged and minority groups in the set of individuals with high evaluations \cite{collins2007tackling,Sowell_2008,facebook_rooney_rule,baswana2015joint,obama_rr,bland2017schumer,passarielloWSJ}, 
    structured evaluations which reduce the scope of unintended biases in evaluations \cite{BarbaraAscription2000,gawande2010checklist,wenneras2001nepotism,baldiga2014gender},  and anonymized evaluations that, when possible, blind the decision makers to the socially-salient attributes of individuals being evaluated \cite{BlindAuditions2000}.

     Mathematically,  some works have modeled the outcomes of evaluation processes based on empirical observations
\cite{arrow1998economicsRacialDiscrimination,becker2010economics,KleinbergR18,EmelianovGGL20}.
      For instance, the implicit variance model of \cite{EmelianovGGL20}  
        models differences in the amount of noise in the utilities for individuals in different groups.
        Here, the output estimate is drawn from a Gaussian density whose mean is the true utility $v$ (which can take any real value) and whose variance depends on the group of the individual being evaluated: The variance is higher for individuals in the disadvantaged group compared to individuals in the advantaged group. 
        Additive and multiplicative skews in the outputs of evaluation processes have also been modeled \cite{KleinbergR18,becker2010economics} (also see \cref{sec:other_related}).
 \cite{KleinbergR18} consider true utilities $v >0$ distributed according to the Pareto density and they model the output as $v/\rho$ for some fixed $\rho \geq 1$; where $\rho$ is larger for individuals in the disadvantaged group.
   These models have been influential in the study of various downstream tasks such as selection \cite{KleinbergR18,EmelianovGGL20,celis2021longterm,salem2019closing,garg2021standardized,MehrotraPV22,MehrotraV23,BoehmerCHMV23},  ranking \cite{celis2020interventions}, and classification \cite{blum2020recovering}  in the presence of biases. 

  \vspace{-2mm}
\paragraph{Our contributions.}
       We propose a new optimization-based approach to model
      how an evaluation process transforms an (unknown) input density $f_\mathcal{D}$ representing the true utility of an individual or a population to an observed distribution in the presence of information constraints or risk aversion.
  Based on the aforementioned studies and insights in social sciences, our model has two parameters: the resource-information parameter ($\tau \in \mathbb{R}$) in the information constraint and the risk-averseness parameter ($\alpha \geq 1$) in the objective function; see  \eqref{prog:framework} in \cref{sec:model}.
        The objective measures the 
            inaccuracy of the estimator with respect to the true density $f_\mathcal{D}$, and involves a given loss function $\ell$
            and the parameter $\alpha$ -- $\alpha$ is higher (worse) for individuals in groups facing higher risk aversion.
The constraint places a lower bound of $\tau$ on the amount of information (about the density of the true value $v$) that the individual and evaluator can acquire or exchange in their interaction -- $\tau$ is higher for individuals in groups that require more resources to gain unit information. 
We measure the amount of information in the output density by its differential entropy.
          Our model builds on the maximum-entropy framework in statistics and information theory \cite{Jaynes1} and is derived in \cref{sec:model} and can be viewed as extending this theory to output a rich family of biased densities.

In \cref{sec:theory}, we show various properties of the output densities of our model.
We prove that the solution to \eqref{prog:framework} is unique under general conditions and characterize the output density as a function of $f_\mathcal{D}$, $\ell$, $\tau$, and $\alpha$; see \cref{thm:mainopt_inf}.
By varying the loss function and the true density, our framework can not only output standard density functions (such as Gaussian, Pareto, Exponential, and Laplace), but also their appropriate ``noisy'' and ``skewed'' versions, generalizing the models studied in \cite{KleinbergR18,becker2010economics,EmelianovGGL20}.
Subsequently, we investigate how varying the parameter $\tau$ affects the output density in \cref{sec:theory}.
We observe that when $\tau \to -\infty$, there is effectively no constraint, and the output is concentrated at a point.
For any fixed $\alpha$, as $\tau$ increases, the output density  {\em spreads}--its variance and/or mean increases.
We also study the effect of increasing $\alpha$ on the output density.
We observe that when the true density is Gaussian or Pareto, the mean of the output density decreases as $\alpha$ increases for any fixed $\tau$. 
Thus, individuals in the group with higher $\alpha$ and/or $\tau$ face higher noise and/or skew in their evaluations as predicted by our model.
        Empirically, we evaluate our model's ability to emulate biases present in real-world evaluation processes using two real-world datasets (JEE-2009 Scores and the Semantic Scholar Open Research Corpus) and one synthetic dataset (\cref{sec:empirics}).
        For each dataset, we report the total variation (TV) distance between the densities of biased utilities in the data and the best-fitting densities output by our framework and earlier models. 
        Across all datasets, we observe that our model can output densities that are close to the density of biased utilities in the datasets and has a better fit than the models of \cite{EmelianovGGL20,KleinbergR18}; \cref{tab:validation_results}.
        Further, on a downstream selection task, we evaluate the effectiveness of two well-studied bias-mitigating interventions: equal representation (ER) and proportional representation (PR) constraints, and two additional interventions suggested by our work: decreasing the resource-information parameter $\tau$ and reducing the risk-averseness parameter $\alpha$.
        ER and PR are constraints on the allowable outcomes, $\tau$ can be decreased by, e.g., training the evaluators to improve their efficiency, and $\alpha$ can be decreased using, e.g., structured interviews \cite{Bohnet_2016}.
        We observe that for each intervention, there are instances of selection, where it outperforms all other interventions (\cref{fig:selection_results}).
        Thus, our model can be used as a tool to study the effectiveness of different types of interventions in downstream tasks and inform policy; see also   \cref{sec:additional_simulations:case_study} and \cref{sec:specific_examples}.

\section{Model}\label{sec:model}

The evaluation processes we consider have two stakeholders--an evaluator and an individual--along with a societal context that affects the process. 
In an evaluation process, an evaluator interacts with an individual to obtain an estimate of the individual's utility or value. 
We assume that each individual's true utility $v$ is drawn from a  probability distribution.
This not only captures the case that the same individual may have variability in the same evaluation (as is frequently observed in interviews, examinations, and peer-review \cite{boldt1986validity,dana_dawes_peterson_2013,richStudentSAT2019,retakingSAT,Bohnet_2016,useless2017nyt,objective2020bast}) but also the case that $v$ corresponds to the utility of an individual drawn from a population. 
For simplicity, we consider the setting where $v$ is real-valued and its density is supported on a continuous subset $\Omega \subseteq  \mathbb{R}$.
 This density gives rise to a distribution over $\Omega$ with respect to the Lebesgue measure $\mu$ over $\mathbb{R}$.
For instance, $\Omega$ could be the set of all real numbers $\mathbb{R}$, the set of positive real number $\mathbb{R}_{>0}$, an open interval such as $[1,\infty)$, or a closed interval  $[a,b]$.
Following prior work modeling output densities \cite{KleinbergR18,celis2020interventions,EmelianovGGL20},
 we assume that the true utility of all individuals is drawn from the {\em same} density $f_{\mathcal{D}}$.

We view an evaluation process as a transformation of an (unknown) true density $f_\mathcal{D}$ into an observed density $f_\mathcal{E}$ over $\Omega$.  
 In real-world evaluation processes, this happens through various means: 
by processing features of an individual (e.g., past performance on exams or past employment), through interaction between the evaluator and the individual (e.g., in oral examinations), or by requesting the individual to complete an assessment or test \cite{pulakos2005selection,tamar2014newSAT,baswana2015joint}.
We present an optimization-based model that captures some of the aforementioned scenarios and outputs $f_\mathcal{E}$.
The parameters of this model encode factors that may be different for different socially-salient groups, thus, making $f_\mathcal{E}$ group dependent even though $f_\cD$ is not group dependent. 
We derive our model in four steps.

\paragraph{Step 1: Invoking the entropy maximization principle.} In order to gain some intuition, consider a simple setting where the utility of an individual is a fixed quantity $v$ (i.e., $\fD$ is a Dirac-delta function around $v$).      
We first need to define an {\em error} or loss function $\ell\colon \Omega \times \Omega \rightarrow \R$; given a guess $x$ of $v$, the loss function $\ell(x,v)$ indicates the gap between the two values. 
We do not assume that $\ell$ is symmetric but require $\ell(x,v) \geq 0$ when $x \geq v$.
 Some examples of $\ell(x,v)$ are  $(x-v)^2, \abs{x-v}, x/v$, and  $\ln(x/v)$. 
 The right choice of the loss function can be context-dependent, e.g., $(x-v)^2$ is a commonly used loss function for real-valued data, and $\ln(x/v)$ is sometimes better at capturing relative error for heavy-tailed distributions over positive domains \cite{yangqing2011heavy}.
For a density $f$ for  $x$, $\Ex_{x \sim f} \ell(x,v)$ denotes the expected error of the evaluation process. 
 One can therefore consider the following problem: Given a value $\Delta$, can we find an $f$ such that $\Ex_{x \sim f} \ell(x,v)\leq \Delta$? 
 This problem is under-specified as there may be (infinitely) many densities $f$ satisfying this constraint.
 To specify $f$ uniquely, we appeal to the  maximum entropy  framework in statistics and information theory \cite{jaynes1982maximumentropy}: 
Among all the feasible densities,  one should select the density $f$ which has the maximum entropy.
This principle leads to the selection of a density that is consistent with our constraint and makes no additional assumption.
We use the notion of the (differential) entropy of a density with respect to the Lebesgue measure $\mu$ on $\mathbb{R}$:
\begin{equation}\textstyle \ent(f)  \coloneqq  -\int_{ x \in \Omega} f(x) \ln f(x) d\mu(x),
\end{equation}
%  \]
where $f(x)\ln f(x)=0$ whenever $f(x)=0$.
 Thus, we get the following optimization problem: 
 \begin{equation}\label{eq:1}
\textstyle \argmax_{f :\; {\text{density on $\Omega$}}} \ent(f), \quad s.t., \quad \Ex_{x \sim f} \ell(x, v) \leq  \Delta.
 \end{equation}
 This optimization problem is well-studied and it is known that by using different loss functions, we can derive many families of densities~\cite{Lisman1972probable,maxEntropyWiki}. 
 For instance, for $\ell(x,v) \coloneqq (x-v)^2$, we recover the Gaussian density with mean $v$, and for $\ell(x,v) \coloneqq \ln(x/v)$, we obtain a Pareto density. 

 \paragraph{Step 2: Incorporating the resource-information parameter.} We now extend the above formulation to include information constraints in the evaluation process. 
In an evaluation process, both the evaluator and the individual spend resources such as time, cognitive effort, or money to communicate the information related to the utility of the individual to the evaluator. 
For instance, in interviews,  both the interviewer and the interviewee spend time and cognitive effort.
 In university admissions,  the university admissions office needs to spend money to hire and train application readers who, in turn, screen applications for the university's admission program, and the applicants need to spend time, cognitive effort, and money to prepare and submit their applications \cite{sarah2018highschool,sarah2022colleges}.
The more resources are spent in an evaluation process, the more additional information about $v$ is acquired.
 We model this using a resource-information parameter $\tau$, which puts a lower bound on the entropy of $f$. 
 Thus, we modify the optimization problem in  \cref{eq:1} in the following manner.
 We first flip the optimization problem to an equivalent problem where we minimize the expected loss subject to a lower bound on the entropy, $\ent(f)$, of $f$.
 A higher value of the resource-information parameter $\tau$ means that one needs to spend more resources to obtain the same information and corresponds to a stronger lower bound on $\ent(f)$ (and vice-versa) in our framework. 
  \begin{equation} \label{eq:2}
\textstyle \argmin_{f:\ {\text{density on $\Omega$}}} \ \Ex_{x \sim f} \ell(x, v) , \quad s.t., \quad \ent(f) \geq \tau. \end{equation} 
When $\tau \to -\infty$, the optimal density tends to a point or delta density around $v$ (recovering the most information), and when $\tau \to \infty$, it tends to a uniform density on $\Omega$ (learning nothing about $v$). 
Since differential entropy can vary from negative to positive infinity, the value of $\tau$ is to be viewed relative to an arbitrary reference point. 
$\tau$ may vary with the socially-salient group of the individual in real-world contexts.
For instance,            in settings where the evaluator needs to interact with individuals (e.g., interviews), disparities can arise because it is less cognitively demanding for an evaluator to communicate with individuals who speak the same language as themselves, compared to individuals who speak a different language \cite{lang1986Language}.
In settings where the evaluator assesses individuals based on data about their past education and employment (e.g., at screening stages of hiring or in university admissions), disparities can arise because the evaluator is more knowledgeable about a specific group's sociocultural background compared to others, and would have to spend more resources to gather the required information for the other groups \cite{cornell1996culture,admissionpodcast2021Yale}.

\paragraph{Step 3: Incorporating the risk-averseness parameter.} 
 We now introduce the parameter $\alpha$ that captures risk averseness. 
 Roughly speaking, risk averseness may arise in an evaluation process because of the downstream impact of the output. 
 The evaluator may also benefit or may be held accountable for the estimated value, and hence, would be eager or reluctant to assign values much higher than the true utility \cite{goel2008overconfidence,schooling2013belzil,hiring2018baert}. 
 Further, the individual may also be risk averse, e.g. during a hiring interview, the risk of getting rejected may prompt the individual to quote less than the expected salary \cite{babcock2009women,hannah2014negotiate,nelson2015women}. 
To formalize this intuition, for a given $\ell$, we define a risk-averse loss function $\ell_\alpha: \Omega \times \Omega \rightarrow \R$ that incorporates the parameter $\alpha \geq 1$ in $\ell$ as follows: 
 \begin{equation}\label{eq:loss}
 \textstyle \ell_\alpha(x,v)  \coloneqq   \alpha \cdot \ell(x, v)  \ \ {\text {if $x \geq v$} } \mbox{ and }
 \ell_\alpha(x,v)  \coloneqq \ell(x,v)  \ \ {\text {if $x < v$}.}
 \end{equation}
 Not only does this loss function penalize overestimation versus underestimation,  but in addition, the more the overestimation, the more the penalization is.
This intuition is consistent with the theory of risk averseness \cite{arrow1965riskbearing,pratt1978riskaversion}.
Our choice is related to the notion of hyperbolic absolute risk aversion \cite{ingersoll1987theory,merton1975stochastic}, and one may pick other ways to incorporate risk averseness in the loss function \cite{von2007theory,levy2006stochastic}. 
As an example, if $\ell(x,v) = (x-v)^2$, then the $\ell_2^2$-loss is $\ell_{\alpha}(x,v) = \alpha \cdot (x-v)^{2}$, if $x \geq v$ and $(x-v)^2$ otherwise. 
 If $\ell(x,v) = \ln\inparen{{x}/{v}}$,  then the $\log$-ratio loss is $\ell_{\alpha}(x,v) = \alpha \cdot  \ln\inparen{{x}/{v}}$, if $x \geq v$; $\ln\inparen{{x}/{v}}$ otherwise. 
 Plots of these two loss functions are in \cref{fig:loss_functions}.\footnote{A variation of \eqref{eq:loss} that we use in the empirical part is the following: For a fixed ``shift'' $v_0$, let $\ell_{\alpha}(x,v)  \coloneqq \alpha \cdot \ell(x,v+v_0)$ if $x \geq v+v_0$ and 
 $\ell_{\alpha}(x,v)  \coloneqq \ell(x,v+v_0)$  {if $x < v+v_0$}.} 
Thus, the analog of \cref{eq:2} becomes 
  \begin{equation} \label{eq:3}
\textstyle \argmin_{f:\ {\text{density on $\Omega$}}} \ \Ex_{x \sim f} \ell_\alpha(x-v, v) , \quad s.t., \quad \ent(f) \geq \tau. \end{equation} 
% % 
We note that, for the risk-averse loss function defined in \eqref{eq:loss}, the following hold:
    For all $\alpha \geq \alpha'\geq 1$, 
     $\ell_\alpha(x,v)\geq \ell_{\alpha'}(x,v)$ for all $x,v\in \Omega$, and
     $\ell_\alpha(x,v)-\ell_{\alpha'}(x,v)$ is an increasing function of $x$ for $x\geq v$.
  Beyond \eqref{eq:loss}, one could consider other $\ell_\alpha(x,v)$ satisfying these two properties in our framework; we omit the details.
One can also incorporate the (opposite) notion of  ``risk eager,'' where values of $x$ lower than $v$ are penalized more as opposed to values of $x$ higher than $v$ by letting $\alpha \in (0,1]$.   
\begin{figure}[ht!]
     \centering
     \subfigure[$\ell_\alpha(x,0)$ when $\ell(x,v)=(x-v)^2$]{
        \includegraphics[width=0.45\linewidth]{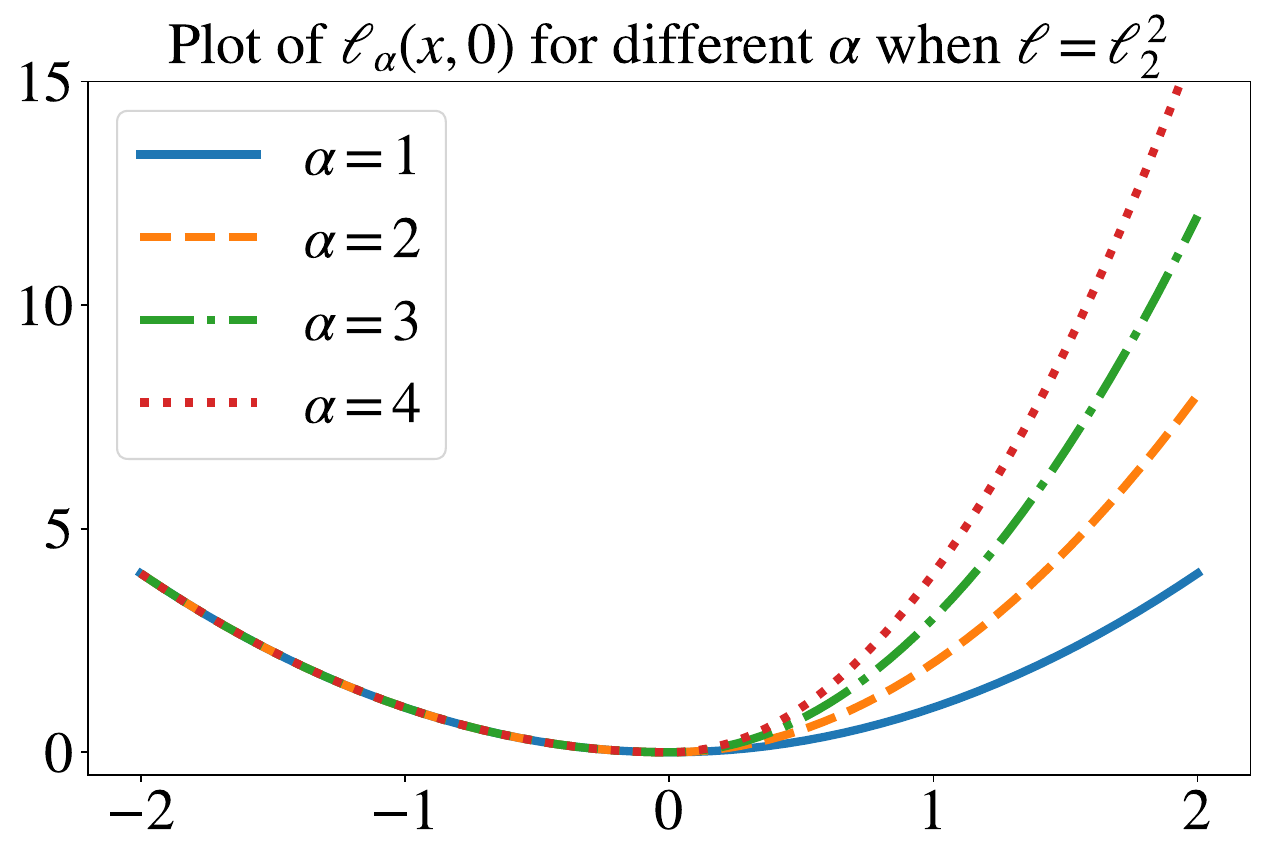}
     }
     \subfigure[$\ell_\alpha(x,5)$ when $\ell(x,v)=\ln{x}-\ln{v}$]{
        \includegraphics[width=0.45\linewidth]{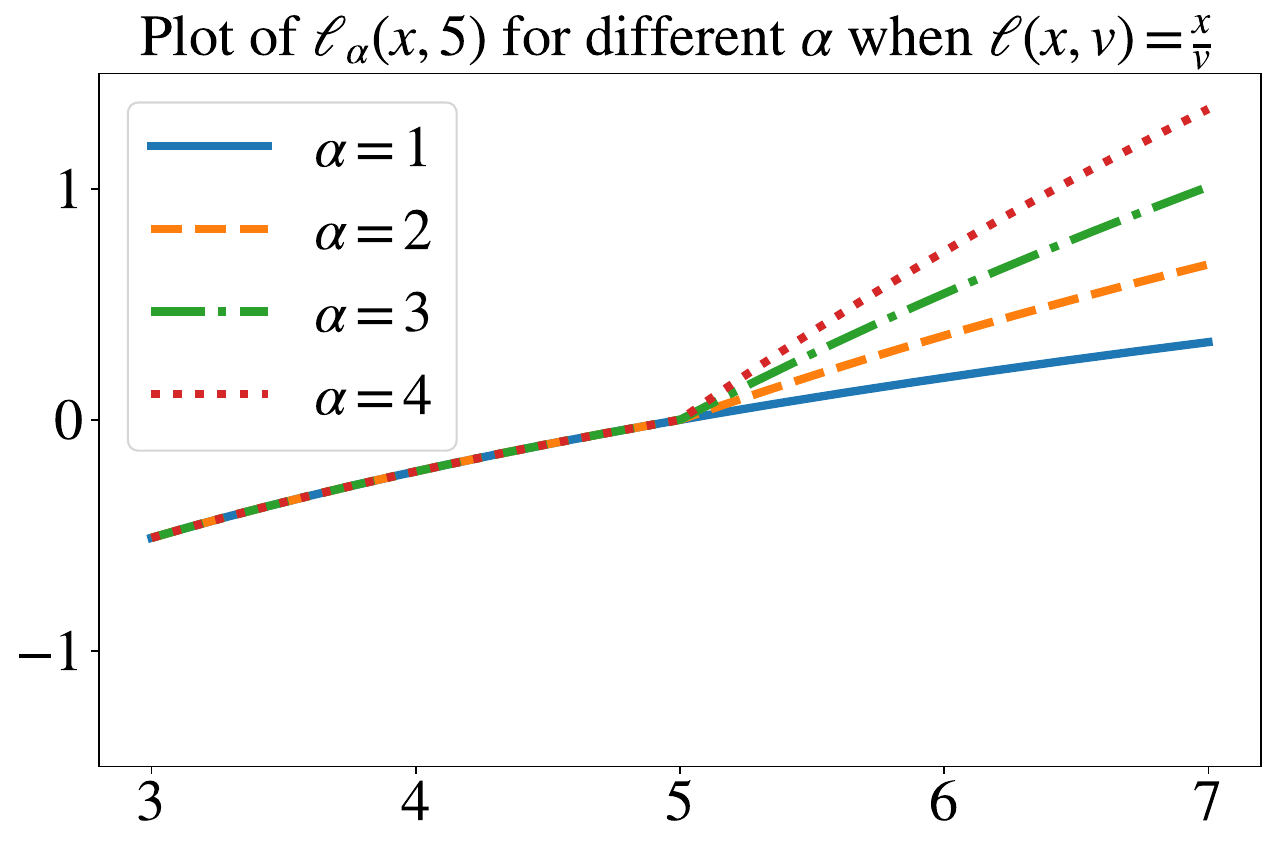}
     }
     \caption{
        Plots of risk-averse loss function $\ell_\alpha(x,v)$ for different $\alpha$ and $\ell$.
        }
     \label{fig:loss_functions}
 \end{figure}

\paragraph{Step 4: Generalizing to arbitrary $\fD$.}  To extend \cref{eq:3} to the setting when $v$ comes from a general density $\fD$, 
we replace the loss function by its expectation over $f_\mathcal{D}$ and arrive at the model for the evaluation process that we propose in this paper:

 \begin{tcolorbox}[left=3pt,right=3pt,top=2pt,bottom=2pt]
            %\vspace{-2.5mm}
          \begin{align*}
                        \textstyle \hspace{-6.5mm}\argmin_{\text{$f$: density on $\Omega$}}\ \ 
                         & \textstyle  \err_{\ell,\alpha}\inparen{\fD,f}  \coloneqq 
                         \int_{v \in \Omega} \left[ \int_{x \in \Omega} \ell_\alpha(x, v) f(x) d \mu(x) \right] \fD(v) d \mu(v)\textstyle , \hspace{-6.5mm}\customlabel{prog:framework}{OptProg} 
                         \tag{OptProg}\\
               \textstyle            \text{such that} & \quad  \textstyle 
                         - \int_{x \in \Omega} f(x) \log 
                          {f(x)} d \mu(x) 
                          \geq \textstyle 
                          \tau. 
                \end{align*}
        \end{tcolorbox}

               \noindent
                For a given $\ell$ and parameters $\alpha$ and $\tau$, this optimization framework can be viewed as transforming the 
                true utility density $f_\mathcal{D}$  of a group of individuals to the density $f_\mathcal{E}$ (the solution to this optimization problem). 
                 It is worth pointing out that neither the evaluator nor the individual is solving the above optimization problem -- rather \eqref{prog:framework} models the evaluation process and  
the loss function $\ell$, $\alpha$, and $\tau$ depend on the socially-salient attribute of the group of an individual; see also \cref{sec:specific_examples}.

\section{Theoretical results}\label{sec:theory}
  
\textbf{Characterization of the optimal solution.} We first characterize the solution of the optimization problem \eqref{prog:framework} in terms of $\fD$, $\ell_\alpha$, $\tau$, and $\alpha$. 
   Given a probability density $\fD$, a parameter $\alpha \geq 1$, and a loss function $\ell$, consider the function $I_{\fD,\ell,\alpha}(x) \coloneqq  \int_{v\ \in \Omega} \ell_\alpha(x,v) \fD(v) d \mu(v)$. 
   This integral captures the expected loss when the estimated utility is $x$. 
  Further, for a density $f$, the objective function of~\eqref{prog:framework} can be expressed as  $\err_{\ell,\alpha}\inparen{\fD,f}\coloneqq \int_{x \in \Omega} I_{\fD,\ell,\alpha}(x) f(x) d \mu(x)$.
\begin{theorem}[\bf Informal version of \Cref{thm:mainopt} in \cref{sec:characterization}]
\label{thm:mainopt_inf}
Under general conditions on $\fD$ and $\ell$,  for any finite  $\tau$ and  $\alpha \geq1$, \eqref{prog:framework} has a unique solution  
  $   f^\star(x) \propto \exp \left( - \sfrac{I_{\fD, \ell, \alpha}(x)}{\gammas}  \right)$, 
where $\gamma^\star>0$ is unique and also depends on $\alpha$ and  $\tau$.
Further, $\ent(f^\star)=\tau$. 
\end{theorem}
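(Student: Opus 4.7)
The plan is to attack \cref{thm:mainopt_inf} through convex duality / Lagrangian variational calculus, exploiting that \eqref{prog:framework} is a convex program: the objective $\err_{\ell,\alpha}(\fD, f) = \int_\Omega I_{\fD,\ell,\alpha}(x) f(x) d\mu(x)$ is linear in $f$, and the feasible set $\{f : f \geq 0,\ \int f\, d\mu = 1,\ \ent(f) \geq \tau\}$ is convex because $\ent$ is a concave functional. My first move is to show the entropy constraint is tight at any optimum: if $\ent(f) > \tau$, I can perturb $f$ by moving an $\eps$-amount of mass from a set where $I_{\fD,\ell,\alpha}$ is large to a set where it is small; the objective strictly decreases while the entropy changes by only $O(\eps)$, so the perturbed density remains feasible, contradicting optimality. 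Hence $\ent(f^\star) = \tau$, giving the last conclusion of the theorem for free.

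Next I would form the Lagrangian
$$\cL(f, \gamma, \lambda) = \int_\Omega I_{\fD,\ell,\alpha}(x) f(x)\, d\mu(x) + \gamma \Big(\tau + \int_\Omega f \ln f\, d\mu\Big) + \lambda \Big(\int_\Omega f\, d\mu - 1\Big),$$
with $\gamma > 0$ (strictly positive because the entropy constraint is tight and the objective is sensitive to it). Taking the pointwise first variation gives $I_{\fD,\ell,\alpha}(x) + \gamma(\ln f(x) + 1) + \lambda = 0$, which rearranges to $f^\star(x) \propto \exp(-I_{\fD,\ell,\alpha}(x)/\gamma)$ with normalizing constant $Z(\gamma) := \int_\Omega \exp(-I_{\fD,\ell,\alpha}(x)/\gamma)\, d\mu(x)$. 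Uniqueness of $f^\star$ then follows from strict concavity of $\ent$: if distinct optima $f_1, f_2$ existed with common objective value, their midpoint would match the (linear) objective but strictly exceed the entropy floor, and by the first step one could then strictly improve the objective, a contradiction.

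To pin down $\gamma^\star$, I would study the one-parameter family $f_\gamma(x) := Z(\gamma)^{-1} \exp(-I_{\fD,\ell,\alpha}(x)/\gamma)$, for which $\ent(f_\gamma) = \ln Z(\gamma) + \gamma^{-1} \Ex_{f_\gamma}[I_{\fD,\ell,\alpha}]$. Differentiating under the integral and using the identity $\frac{d}{d\gamma} \ln Z(\gamma) = \gamma^{-2} \Ex_{f_\gamma}[I_{\fD,\ell,\alpha}]$, the first and third terms cancel and one obtains $\frac{d}{d\gamma}\ent(f_\gamma) = \gamma^{-3} \mathrm{Var}_{f_\gamma}[I_{\fD,\ell,\alpha}] \geq 0$, which is strictly positive whenever $I_{\fD,\ell,\alpha}$ is non-constant. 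As $\gamma \to 0^+$, $f_\gamma$ concentrates on $\argmin I_{\fD,\ell,\alpha}$ and $\ent(f_\gamma) \to -\infty$; as $\gamma \to \infty$, $f_\gamma$ flattens and $\ent(f_\gamma)$ rises to the supremum entropy allowed on $\Omega$. By continuity and strict monotonicity, the intermediate value theorem delivers a unique $\gamma^\star > 0$ with $\ent(f_{\gamma^\star}) = \tau$.

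The main technical obstacle is verifying that the ``general conditions'' on $\fD$ and $\ell$ genuinely support the above manipulations: namely, that $I_{\fD,\ell,\alpha}$ is finite $\mu$-a.e., that $Z(\gamma) < \infty$ on an open interval of $\gamma > 0$ (which, for unbounded $\Omega$, amounts to a coercivity/tail condition on the $\fD$-averaged loss $\ell_\alpha(\cdot,v)$), that differentiation under the integral and the pointwise first variation are rigorously justified in function space, and that $I_{\fD,\ell,\alpha}$ is non-constant so that the strict monotonicity of $\ent(f_\gamma)$ in $\gamma$ holds. The cleanest way to sidestep the function-space variation is to invoke standard convex duality in $L^1(\Omega,\mu)$ together with the Donsker--Varadhan / Gibbs variational identity, leaving only integrability hypotheses on $\fD$ and $\ell$ to be verified.
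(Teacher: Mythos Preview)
Your approach is correct in outline and takes a genuinely different route from the paper's. The paper works through full convex duality: after establishing strong duality via a separating-hyperplane/Slater argument, its main technical effort (\Cref{lem:tech1,lem:tech2}) goes into showing $\gammas>0$ by proving that every feasible $f$ satisfies $\err_{\ell,\alpha}(\fD,f)\geq I_{\fD,\ell,\alpha}(x^\star)+\eta$ for some fixed $\eta>0$; this is exactly where the logarithmic-growth assumption~\A{2} on $\ell$ is used, and $\gammas=0$ would force the dual value down to $I_{\fD,\ell,\alpha}(x^\star)$, contradicting strong duality. Only then is the Gibbs form read off from the stationarity condition, with uniqueness coming from strict convexity of the feasible set and $\ent(f^\star)=\tau$ from complementary slackness. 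You instead bypass \Cref{lem:tech1,lem:tech2} entirely by working primally with the Gibbs family $f_\gamma$ and the identity $\tfrac{d}{d\gamma}\ent(f_\gamma)=\gamma^{-3}\mathrm{Var}_{f_\gamma}[I_{\fD,\ell,\alpha}]>0$, then invoking the intermediate value theorem---a cleaner and more elementary argument. Two points to tighten. First, your step~1 and the parenthetical ``$\gamma>0$ because the constraint is tight'' in step~2 presuppose that an optimizer exists, which is precisely what step~4 delivers; reorder so that step~4 constructs $f_{\gamma^\star}$ first, then verify optimality via the Gibbs/Donsker--Varadhan variational principle, after which tightness and uniqueness follow. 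Second, under the paper's minimal growth hypothesis~\A{2}, $I_{\fD,\ell,\alpha}$ may grow only logarithmically on unbounded $\Omega$, so $Z(\gamma)<\infty$ only for $\gamma$ below some finite threshold $\gamma_{\max}$; your claim that $\ent(f_\gamma)$ rises to the supremum as $\gamma\to\infty$ must then be replaced by showing $\ent(f_\gamma)\to+\infty$ as $\gamma\uparrow\gamma_{\max}$, which holds but needs its own argument and is where the coercivity condition you flag becomes load-bearing.
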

The uniqueness in \cref{thm:mainopt_inf} implies that,  if $\alpha=1$, $\tau=\ent(\fD)$, and $\ell(x,v)\coloneqq \ln \fD(x) - \ln \fD(v)$, then the optimal solution is $f^\star(x)=\fD(x)$.
To see this, note that in this case $I_{\fD,\ell,\alpha}(x)=\ln \fD(x)+\ent(f_D)$. 
Hence,  $f^\star(x)=\fD(x)$ satisfies \cref{thm:mainopt_inf}'s conclusion (see Section~\ref{sec:soundness} for details). 
Thus, in the absence of risk averseness, and for an appropriate choice of resource-information parameter, the output density is the same as the true density.

\cref{thm:mainopt_inf} can be viewed as a significant extension of results that show how well-known probability distributions arise as solutions to the entropy-maximization framework. 
Indeed, the standard maximum-entropy formulation only considers the setting where the input utility is given by a single value, i.e., the distribution corresponding to $\fD$ is concentrated at a single point; and the risk-averseness parameter $\alpha=1$. 
While the optimal solution to the maximum-entropy framework~\eqref{eq:1} restricted to the class of well-known loss functions, e.g. $\ell_2^2$-loss or linear loss, can be understood by using standard tools from convex optimization  (see \cite{conrad2004probability}), characterizing the optimal solution to the general formulation~\eqref{prog:framework} is more challenging because of several reasons: (i) The input density $\fD$ need not be concentrated at a single point, and hence one needs to understand conditions on $\fD$ when the formulation has a unique optimal solution. (ii) The loss function can be arbitrary and one needs to formulate suitable conditions on the loss function such that~\eqref{prog:framework} has a unique optimal solution.  (iii) The risk-averseness parameter $\alpha$ makes the loss function asymmetric (around any fixed value $v$) and makes the analysis of the error in the objective function non-trivial.  
Roughly speaking, the only restrictions, other than standard integrability assumptions,  that we need on the input are: (a) Monotonicity of the loss function $\ell(x,v)$ with respect to either $x$ or $v$, (b) the growth rate of the loss function $\ell(x,v)$ is at least logarithmic, and (c) the function $I_{\fD, \ell, \alpha}(x)$ has a unique global minimum, which is a much weaker assumption than convexity of the function. 
Note that for the $\ell_2^2$-loss function given by $\ell(x,v) = (x-v)^2$, the first two conditions hold trivially; and when the input density $\fD$ is Gaussian, it is not hard to show that $I_{\fD, \ell, \alpha}(x)$ is strongly convex and hence the third condition mentioned about holds (see~\cref{sec:gaussian} for details). 
These conditions are formally stated in~\cref{sec:characterization:primal_and_assumptions} and we show that they hold for the cases of  Gaussian, Pareto, Exponential, and Laplace densities in Sections~\ref{sec:gaussian},~\ref{sec:pareto},~\ref{sec:exponential},~\ref{sec:laplace} respectively.

The proof of \Cref{thm:mainopt_inf} is presented in \cref{sec:characterization} and the following are the key steps in it:
(i) The proof starts by considering the dual of \eqref{prog:framework} and shows that strong duality holds (see~\Cref{subsec:dual} and~\Cref{subsec:strongduality}). 
(ii) The next step is to show that the optimal solution $f^\star$ of~\eqref{prog:framework} exists and is unique. This requires proving that the dual variable $\gamma^\star$ (corresponding to the entropy constraint in~\eqref{prog:framework}) is positive – while this variable is always non-negative, the main technical challenge is to show that it is {\em non-zero}. 
In fact, there are instances of~\eqref{prog:framework} where $\gamma^\star$
 is zero, and an optimal solution does not exist (or an optimal solution exists, but is not unique). 
 (iii) The proof of $\gamma^\star \neq 0$ requires us to understand the properties of the integral $I_{\fD, \ell, \alpha}(x)$ (abbreviated as $I(x)$ when the parameters $\fD, \ell, \alpha$ are clear from the context). 
 In~\Cref{subsec:I} we show that $I(x)$ can be expressed as a sum of two monotone functions (see~\Cref{thm:I}). This decomposition allows us to show that the optimal value of~\eqref{prog:framework} is finite. 
 (iv) In~\Cref{subsec:positivity}, we show that the optimal value of~\eqref{prog:framework} is strictly larger than $I(x^\star)$
(\Cref{lem:tech1},~\Cref{lem:tech2}), where $x^\star$ is the minimizer of $I(x)$. This requires us to understand the interplay between the growth rate of the expected loss function and the entropy of a density as we place probability mass away from $x^\star$. Indeed, these technical results do not hold true if the loss function $\ell(x,v)$ grows very slowly (as a function of $x/v$ or $(|x-v|$).  
(v) Finally, in~\Cref{lem:nonneg}, we show that $\gamma^\star$
 is nonzero. This follows from the fact that if $\gammas=0$, then the optimal value of~\eqref{prog:framework} is equal to $I(x^\star)$, which contradicts the claim in (iv) above. Once we show 
$\gammas > 0$, the expression for the (unique) optimal solution, i.e.,  $   f^\star(x) \propto \exp \left( - \sfrac{I(x)}{\gammas}  \right)$, follows from~\Cref{thm:mainopt1}. 

We conclude this section with two remarks. 
1) In \cref{sec:Gibbs}, we show that \cref{thm:mainopt_inf} implies that
$\tau = \inparen{\sfrac{\err_{\ell,\alpha}\inparen{\fD,f^\star}}{\gamma^\star}} + \ln Z^\star$, 
where $Z^\star \coloneqq \int_\Omega \exp \left( - \sfrac{I_{\fD, \ell, \alpha}(x)}{\gammas}  \right)d\mu(x)$ is the partition function or the normalizing constant that makes $f^\star$ a probability density; 
This equation is an analog of the Gibbs equation in statistical physics and gives a physical interpretation of $I_{\fD, \ell, \alpha}(x)$ and $\gamma^\star$: $\gamma^\star$ is the {\em temperature} and $I_{\fD, \ell, \alpha}(x)$ is the \textit{energy} corresponding to \textit{state} $x$.
This may be useful in understanding the effects of different parameters on the output density.
2) If one wishes, one can use \eqref{prog:framework} to understand the setting where a single individual is being evaluated by setting the input density $\fD$ to be concentrated at their true utility $v$. 
For instance, if we set the loss function to be the $\ell_2^2$-loss,  using~\Cref{thm:mainopt_inf}, one can show that for any given values of the parameters, $\alpha$ and $\tau$ and loss function being $\ell_2^2$-loss, the mean of the output density is   $v -  \sqrt{\frac{\gamma^\star}{\pi}} \cdot \frac{\sqrt{\alpha}-1}{\sqrt{\alpha}}$; see \cref{sec:single} for a proof.
Therefore for $\alpha > 1$, the mean of the output density is strictly less than $u$. This gives a mapping from the ``true ability'' to the (mean of the) ``biased ability'' in this case. 
This mapping can be used to understand how the parameters $\tau$ and $\alpha$ in the evaluation process transform the true ability.

\paragraph{Effect of varying $\tau$ for  a fixed $\alpha$.} We first study the effect of changing the resource-information parameter $\tau$ on $f^\star$ for a fixed value of the risk-averseness parameter $\alpha \geq 1$.
To highlight this dependency on $\tau$, here we use the notation $f_\tau^\star$ to denote the optimal solution $f^\star$. 
We start by noting that as $\gamma^\star$ increases, the optimal density becomes close to uniform, and as it goes towards zero, the optimal density concentrates around a point $x^\star$ that minimizes energy: $\argmin_{x \in \Omega}I_{\fD, \ell, \alpha}(x)$.
Note that the point $x^\star$ does not depend on $\tau$.
However, $\gamma^\star$ may depend in a complicated manner on both $\tau$ and $\alpha$, and it is not apparent what effect changing $\tau$ has on $\gamma^\star$.
We show that, 
 for any fixed $\alpha \geq 1$,  as $\tau$  
decreases, the output density gets concentrated around $x^\star$ (see~\Cref{thm:tau_appendix}). This confirms the intuition that as we reduce $\tau$ by adding more resources in the evaluation process, the uncertainty in the output density should be reduced. 
Similarly, if $\tau$ increases because of a reduction in the resources invested in the evaluation, the uncertainty in $f^\star_\tau$ should increase, and hence, the output density should converge towards a uniform density. 
For specific densities, one can obtain sharper results.
Consider the case when $\fD$ is a Gaussian with mean $m$ and variance $\sigma^2$, and 
 $\ell_\alpha(x,v)\coloneqq\alpha (x-v)^{2}$ if $x \geq v$, and $(x-v)^2$ if $x < v$.
The uncertainty in the Gaussian density is captured by the variance, and hence, we expect the output density to have a higher variance when the parameter $\tau$ is increased. 
 Indeed, when $\alpha=1,$ we show that the optimal density $f^\star_\tau$ is a Gaussian with mean $m$ and variance $e^{2 \tau} \sigma^2$; see \cref{sec:gaussian}. 
 Thus, if one increases $\tau$  from $-\infty$ to $\infty$, the variance of the output density changes {\em monotonically} from $0$ to $\infty$. 
When $\alpha \geq 1$,
numerically, it can be seen that for any fixed $\alpha \geq 1$, increasing $\tau$ increases the variance, and also  decreases the mean of $f^\star_\tau$; see \Cref{fig:3d_plot_gaussian:tau}. 
Intuitively, the decrease in mean occurs because higher variance increases the probability of the estimated value being much larger than the mean, and the risk-averseness parameter imposes a high penalty when the estimated value is larger than the true value. 
In fact, we show in~\Cref{thm:effect_on_variance} that the variance of $f^\star_\tau$ for any continuous input density $\fD$ supported on $\mathbb{R}$ is at least $\frac{1}{2\pi} e^{2 \tau-1}$.
This follows from the well-known fact that among all probability densities supported on $\mathbb{R}$ with variance $\sigma^2$, the Gaussian density with variance $\sigma^2$ maximizes the (differential) entropy~\cite{bookOfStatProofs}; see Section~\ref{sec:effectincreasingtau} for details. 
In a similar vein, we show that the mean of $f^\star_\tau$ is at least $e^{\tau-1}$ when the input density $\fD$ is supported on $[0, \infty)$, and hence approaches $\infty$ as $\tau$ goes to $\infty$ (see~\Cref{thm:effect_on_mean}). 
This result relies on the fact that among all densities supported on $[0, \infty)$ and with a fixed expectation $\sfrac{1}{\lambda}$ (for $\lambda>0$), the one maximizing the entropy is the exponential density with parameter $\lambda$ \cite{conrad2004probability}.

We now consider the special setting when the input density $\fD$ is Pareto with parameter $\beta>0$ ($\fD(x)\coloneqq \beta x^{-\beta-1}$ for $x \in [1, \infty)$), and 
 $\ell_\alpha(x,v)\coloneqq\alpha \ln (x/v)$ if $x \geq v$, and $\ln(x/v)$ if $x < v$.
 When $\alpha=1,$ we show that the optimal density $f^\star_\tau$ is also a Pareto density with parameter $\beta^\star$ satisfying the following condition:
    $1 + (\sfrac{1}{\beta^\star}) - \ln \beta^\star = \tau.$ 
 Using this, it can be shown that, for $\alpha=1$, both the mean and variance of $f^\star_\tau$ \textit{monotonically} increase to $\infty$ as $\tau$ goes to $\infty$; see \cref{sec:pareto}. 
 The increase in variance reflects the fact that increasing $\tau$ increases the uncertainty in the evaluation process. 
 Unlike the Gaussian case, where the mean of $f^\star_\tau$ could shift to the left of $0$ with an increase in $\tau$, the mean of the output density in this setting is constrained to be at least $1$, and hence, increasing the variance of $f^\star_\tau$ also results in an increase in its mean. 
Numerically, for any fixed $\alpha \geq 1$, increasing $\tau$ increases both the mean and variance of $f^\star_\tau$; see \Cref{fig:3d_plot_gaussian:tau,fig:3d_plot_pareto:tau}.

\paragraph{Effect of varying $\alpha$ for  a fixed $\tau$.}
Let $f_\alpha^\star$ denote the optimal solution $f^\star$ with $\alpha$ for a fixed $\tau$. 
We  observe that, for any fixed $x$, $I_{\fD, \ell, \alpha}(x)$, which is the expected loss when the output is $x$,  and  $\Err_{\ell, \alpha}(f^\star_\alpha, \fD)$ are increasing functions of $\alpha$; see \cref{sec:alpha}.
Thus, intuitively, the mass of the density should shift towards the minimizer of $I_{\fD, \ell, \alpha}(x)$.
Moreover, the minimizer of $I_{\fD, \ell, \alpha}(x)$ itself should reduce with increasing $\alpha$.
Indeed, as the evaluation becomes more risk averse, the expected loss, $I_{\fD, \ell, \alpha}(x)$, for an estimated value $x$, increases. However, the asymmetry of the loss function leads to a more rapid rate of increase in $I_{\fD, \ell, \alpha}(x)$ for larger values of $x$. As a result,  minimizer of $I_{\fD, \ell, \alpha}(x)$ decreases with increasing $\alpha$.
Thus, as we increase $\alpha$, the output densities should {\em shrink} and/or {\em shift} towards the left. 
We verify this for Pareto and Gaussian densities numerically:
for fixed $\tau$, increasing $\alpha$ decreases the mean $f^\star_\alpha$ for both Pareto and Gaussian  $f_\cD$; see \cref{fig:3d_plot_gaussian:alpha,fig:3d_plot_pareto:alpha}.
As for the variance, with fixed $\tau$, increasing $\alpha$ increases the variance when $f_\cD$ is Gaussian and decreases the variance when $f_\cD$ is Pareto; see \cref{fig:3d_plot_gaussian:alpha,fig:3d_plot_pareto:alpha}.
See also discussions in \cref{sec:gaussian,sec:pareto}.

\paragraph{Connection to the implicit variance model.} Our results confirm that increasing $\tau$ effectively increases the ``noise'' in the estimated density by moving it closer to the uniform density.  
The implicit variance model of~\cite{EmelianovGGL20} also captures this phenomenon.
More concretely,  
in their model, the observed utility of the advantaged group is a Gaussian random variable with mean $\mu$ and variance $\sigma_0^2$, and the observed utility of the disadvantaged group is a Gaussian random variable with mean $\mu$ and variance $\sigma_0^2 + \sigma^2$. 
This model can be derived from our framework where the input density $\fD$ is Gaussian, the risk-averseness parameter $\alpha=1$, the loss function is $\ell(x,v)=(x-v)^2$ and the disadvantaged group is associated with a higher value of the resource-information parameter $\tau$; see Section~\ref{sec:gaussian} for details.

\paragraph{Connection to the multiplicative bias model.} In the multiplicative-bias model of \cite{KleinbergR18}, the true utility of both the groups is drawn from a Pareto distribution, and the output utility for the disadvantaged group is obtained by scaling down the true utility by a factor $\rho > 1$. 
This changes the domain of the distribution to $[1/\rho,\infty)$ from $[1,\infty)$ and, hence, does not fit exactly in our model which does not allow for a change in the domain. 
Nevertheless, we argue that when the input density $\fD$ is Pareto with a parameter $\beta$, $\tau=\ent(\fD)$ and $\alpha > 1$, and the loss function is given by $\ell(x,v)=\ln x - \ln v$,  then output density $f^\star_\alpha$ 
has a smaller mean than that of the input density. 
As we increase $\alpha$, the evaluation becomes more risk-averse and hence decreases the probability of estimating higher utility values.
Hence, the mean of the output density decreases.
We first show that for any fixed $\beta$, as the parameter  $\alpha$ increases, the output density 
converges to a density $g^\star$. We then show numerically that, for all the Pareto distributions considered by our study,  the mean of the density $g^\star$ is less than that of the output density $f^\star_\alpha$ when $\alpha=1$.  We give details of this argument in Section~\ref{sec:pareto}. 

\section{Specific examples of our model}\label{sec:specific_examples}
In this section, we present some specific examples of mechanisms by which information constraints and risk aversion lead to bias.
One context is college admissions. 
It is well known that SAT scores are implicitly correlated with income (and, hence, test preparation) in addition to student ability \cite{upenn}. 
While the true ability may be $v$, the score is skewed depending on the amount/quality of test preparation, which depends on socioeconomic status that may be correlated to socially-salient attributes. 
The parameter $\alpha$ in our model can be used to encode this. 
As for $\tau$, while an evaluator may know what a GPA means at certain universities well known to them, they may not understand what GPA means for students from lesser-known schools. 
This lack of knowledge can be overcome, but takes effort/time, and without effort entrenches the status quo.

Another example is the evaluation of candidates using a standardized test.  
In time-constrained settings, a high value of the resource-information parameter $\tau$ for the disadvantaged group indicates that such candidates may not be able to comprehend a question as well as someone from an advantaged group. 
This could be due to various factors including less familiarity with the language used in the test or the pattern of questions, as opposed to someone who had the resources to invest in a training program for the test.  
Similarly, a high value of the risk-averseness parameter captures that an evaluator, when faced with a choice of awarding low or high marks to an answer given by a candidate from the disadvantaged group, is less likely to give high marks. 
More concretely, suppose there are several questions in a test, where each question is graded either $0$ or $1$. 
Assume that a candidate has true utility $v \in [0,1]$, and hence, would have received an expected score $v$ for each of the questions if one were allowed to award grades in the continuous range $[0,1]$. 
However, the fact that the true scores have to be rounded to either $0$ or $1$ can create a bias for the disadvantaged group. 
Indeed, the probability that an evaluator rounds such an answer to 1 may be less than $v$ -- the risk-averseness parameter measures the extent to which this probability gets scaled down.

Most of the prior works on interventions in selection settings have focused on adding representational constraints for disadvantaged groups. 
Such constraints, often framed as a form of affirmative action, could be beneficial but may not be possible to implement in certain contexts. 
For instance, in a landmark 2023 ruling, the US Supreme Court effectively prohibited the use of race-based affirmative action in college admissions \cite{supreme}. 
Our work, via a more refined model of how bias might arrive at a population level in evaluation processes, allows for evaluating additional interventions that focus on procedural fairness; this allows working towards diversity and equity goals without placing affirmative-action-like constraints. 

In this framing, we can consider decreasing either $\alpha$ or $\tau$. 
Improving either would work towards equity, but which ones to target or to what extent and via which method would be context-dependent and vary in cost. 
A decrease in $\alpha$ can be achieved by reducing risk-averseness in the evaluation process; e.g. by investing in better facilities for disadvantaged groups, or making the evaluation process blind to group membership.
A reduction in $\tau$ may follow by allocating additional resources to the evaluation process, e.g., by letting a candidate choose an evaluation in their native language. Our framework allows a policymaker to study these trade-offs, and we discuss specific examples in \cref{sec:additional_simulations}.

\section{Empirical results}\label{sec:empirics}

    \begin{table}[]
        \centering
        %\vspace{-0.1in}
        \footnotesize
        \begin{tabular}{lccccc}
            \toprule
            \small
            \textbf{Dataset} & \multicolumn{3}{c}{This work} \hspace{-2.5mm} & Multiplicative \hspace{-3mm} & Implicit \\
            & Vary $\alpha$ and $\tau$ & Fix $\alpha{=}1$ & Fix $\tau{=}\ent(f_\cD)$ \hspace{-1.5mm} & bias \cite{KleinbergR18} & variance \cite{EmelianovGGL20} \\
            \midrule{}\hspace{-1mm}
            \small
            JEE-2009 (Birth category) & \textbf{0.09} & 0.15 & 0.21 & 0.14 & 0.10 \\
            JEE-2009 (Gender) & \textbf{0.07}& 0.15 & 0.19 & \textbf{0.07} & 0.08 \\
            \small
            Semantic Scholar (Gender) & \textbf{0.03} & 0.09 & 0.23 & 0.08 & 0.13 \\
            \small
            Synthetic Network  & \textbf{0.03} & 0.05 & 10.0 & 0.05 & 0.22\\
            \bottomrule{}
        \end{tabular}
        %\vspace{-0.1in}
        \caption{
            \small 
            \textit{TV distances between best-fit densities and real data (\cref{sec:empirics}) with 80\%-20\% training and testing data split:}
            Each dataset consists of two densities $f_{G_1}$ and $f_{G_2}$ of utility, corresponding to the advantaged and disadvantaged groups.
            We fix $f_\cD{=}f_{G_1}$ and report the best-fit TV distance between $f_{G_2}$ and densities output by (a) our model, (b) the multiplicative bias model, and (c) the implicit variance model.
            We  compare our model to variants where we fix $\alpha{=}1$ and  $\tau{=}\ent(f_\cD)$.
            Our model achieves a small TV distance on all datasets.
        }
        \label{tab:validation_results}
        %\vspace{-0.3in}
    \end{table}

     %\vspace{-2mm}
    {\paragraph{{{Ability to capture biases in data.}}}
    First, we evaluate our model's ability to output densities that are ``close'' to the densities of biased utility in one synthetic and two real-world datasets.

    \noindent{\em {Setup and discussion.}}
        In all datasets we consider, there are natural notions of utility for individuals: scores in college admissions, number of citations in research, and degree in (social) networks. 
        In each dataset, we fix a pair of groups $G_1$ and $G_2$ (defined by protected attributes such as age, gender, and race) and consider the empirical density of utilities $f_{G_1}$ and $f_{G_2}$ for the two groups.
        Suppose group $G_1$ is more privileged or advantaged than $G_2$.
        In all datasets, we observe notable differences between $f_{G_1}$ and $f_{G_2}$ that advantaged $G_1$ (e.g., $f_{G_1}$'s mean is at least 34\% higher than $f_{G_2}$'s). 

\medskip        
    \noindent{\em {Implementation details.}} Our goal is to understand whether our model can ``capture'' the biases or differences between $f_{G_1}$ and $f_{G_2}$.
        To evaluate this, we fix $\fD=f_{G_1}$, i.e., $f_{G_1}$ is the true density, and compute the minimum total variation distance between $f_{G_2}$ and a density output by our model, i.e., $\min_{\alpha,\tau} d_{\rm TV}\inparen{f^\star_{\alpha,\tau}, f_{G_2}}$; where $f^\star_{\alpha,\tau}$ is the solution to \eqref{prog:framework} with inputs $\alpha$ and $\tau$. (The total variation distance between two densities $f$ and $g$ over $\Omega$ is $\frac{1}{2}\int_\Omega \abs{f(x)-g(x)}d\mu(x)$ \cite{CL}.)
        To illustrate the importance of both $\alpha$ and $\tau$, we also report the TV-distances achieved with $\alpha=1$ (no skew) and with $\tau=\tau_0\coloneqq \ent(\fD)$ (vacuous constraint), i.e., $\min_{\tau} d_{\rm TV}\inparen{f^\star_{1,\tau}, f_{G_2}}$ and $\min_{\alpha} d_{\rm TV}\inparen{f^\star_{\alpha,\tau_0}, f_{G_2}}$, respectively.
        As a further comparison, we also report the minimum TV distances achieved by existing models of biases in evaluation processes: the multiplicative-bias model and the implicit variance model \cite{KleinbergR18,EmelianovGGL20}.
        Concretely, we report $\min_{\mu,\rho} d_{\rm TV}\inparen{f_{\cD,\mu,\smash{\rho}}, f_{G_2}}$ and $\min_{\mu,\sigma} d_{\rm TV}\inparen{f_{\cD,\mu,\sigma}, f_{G_2}}$ where 
            $f_{\cD,\mu,\smash{\rho}}$ is the density of $\rho v+\mu$ for $v\sim \fD$ and
            $f_{\cD,\mu,\sigma}$ is the density of $v+\mu+\sigma\zeta$ where $v\sim \fD$ and $\zeta\sim \cN(0,1)$.

    Below we present brief descriptions of the datasets; detailed descriptions and additional implementation appear in \cref{sec:additional_simulations}.

\medskip    \noindent{\em {Dataset 1 (JEE-2009 scores).}}
        Indian Institutes of Technology (IITs) are, arguably, the most prestigious engineering universities in India.
        Admission into IITs is decided based on students' performance in the yearly Joint Entrance Exam (JEE) \cite{baswana2019centralized}.
        This dataset contains the scores, birth category (official SES label \cite{Sowell_2008}), and  (binary) gender of all students from JEE-2009 (384,977 total) \cite{rti_against_jee}.
        We consider the score as the utility and run two simulations with birth category ($G_1$ denotes students in GEN category) and gender ($G_1$ denotes male students) respectively as the protected attributes.       
        We set $\Omega$ as the discrete set of possible scores.
        We fix $\ell_2^2$-loss as $f_{G_1}$ and $f_{G_2}$ appear to be Gaussian-like (unimodal with both a left-tail and a right-tale; see \cref{fig:scores_jee_data} in \cref{sec:additional_simulations}).

           \medskip
    \noindent{\em {Dataset 2 (Semantic Scholar Open Research Corpus).}}    
        This dataset contains the list of authors, the year of publication, and the number of citations for 46,947,044 research papers on Semantic Scholar. 
        We consider the total first-author citations of an author as their utility and consider their gender (predicted from first name) as the protected attribute ($G_1$ denotes male authors). 
        We fix $\Omega=\inbrace{1,2,\dots}$ and $\ell$ as the $\log$-ratio loss as $f_{G_1}$ and $f_{G_2}$ have Pareto-like density.

    \medskip
    \noindent{\em {Dataset 3 (Synthetic network data).}}
        We generate a synthetic network with a biased variant of the Barabási–Albert model \cite{bamodel1,bamodel4,bamodel2,bamodel3}.
        The vertices are divided into two groups $G_1$ and $G_2$.
        We start with a random graph $G_0$ with $m{=}50$ vertices where each vertex is in $G_1$ w.p. $\frac{1}{2}$ independently. 
        We extend $G_0$ to $n{=}10,000$ vertices iteratively: at each iteration, one vertex $u$ arrives, $u$ joins $G_1$ w.p. $\frac{1}{2}$ and otherwise $G_2$, and $u$ forms one edge with an existing vertex $v$--where $v$ is chosen w.p. $\propto d_v$ if $v\in G_1$ ($d_v$ is $v$'s current degree) and $\propto \frac{1}{2}d_v$ otherwise.
        We use a vertex's degree as its utility, fix $\Omega=\inbrace{1,2,...}$, and use $\log$-ratio loss as $f_{G_1}$ and $f_{G_2}$ have Pareto-like density.

   \medskip
    \noindent{\em {Observations.}}
        We report the TV distances for all simulations in \cref{tab:validation_results} (also see \Cref{fig:bestfitplots:jee} and~\Cref{fig:bestfitplots:semantic}) for the plots of the corresponding best-fit densities).
        We observe that across all simulations our model can output densities that are close in TV distance ($\leq 0.09$) to $f_{G_2}$.
        Moreover, both $\alpha$ and $\tau$ parameters are important, and dropping either can increase the TV distance significantly (e.g., by 1.5 times on JEE-2009 data with birth category and 2.66 times on the Semantic Scholar data).
        Compared to the implicit variance model, our model has a better fit on the JEE-2009 (Birth category), Semantic Scholar, and Synthetic Network data because the implicit variance model does not capture skew and in these datasets $f_{G_1}$ is a skewed version of $f_{G_2}$. 
        Compared to the multiplicative bias model, our model has a better fit on the JEE-2009 (Birth category), as here the utilities have Gaussian-like distributions due to which multiplicative bias largely has a translation effect.
        Finally, on the JEE-2009 (Gender) data, our model's performance is similar to multiplicative bias and implicit variance models because in this data $f_{G_1}$ and $f_{G_2}$ are similar (TV distance${\leq}0.08$)

    \paragraph{Effect of interventions on selection.}
        Next, we illustrate the use of our model to study the effectiveness of different bias-mitigating interventions in downstream selection tasks (e.g., university admissions, hiring, and recommendation systems).

        \medskip
        \noindent{\em {Subset selection tasks.}}
            There are various types of selection tasks \cite{elkind2014properties,microsoft_diverse,Bohnet_2016}.
            We consider the simplest instantiation where there are $n$ items, each item $i$ has a true utility $v_i\geq 0$, and the goal is to select a size-$k$ subset $S_\cD$ maximizing $\sum_{i\in S} v_i$.
            %the sum of utilities 
            If $V{=}(v_1,\dots,v_n)$ is known, then this problem is straightforward: select $k$ items with the highest utility.
            However, typically $V$ is unknown and is estimated via a (human or algorithmic) evaluation process that outputs a possibly skewed/noisy estimate $X{=}(x_1,\dots,x_n)$ of $V$ \cite{rooth2010automatic,ziegert2005employment,corinne2012science,posselt2016inside,capers2017implicit}.  
            Hence, the outputs is $S_\cE{\coloneqq}\argmax_{\abs{S}=k}\sum_{i\in S} x_i$, which may be very different from $S_\cD$ and 
            possible has a much lower true utility: $\sum_{i\in S_\cE} v_i \ll \sum_{i\in S_\cD} v_i$.  

\medskip        \noindent{\em{Interventions to mitigate bias.}}
            Several interventions have been proposed to counter the adverse effects of bias in selection, including, representational constraints, structured interviews, and interviewer training. 
            Each of these interventions tackles a different dimension of the selection task.
            Representational constraints require the selection to include at least a specified number of individuals from unprivileged groups \cite{Bohnet_2016,cavicchia-implicit-bias-rooney,Sowell_2008}.
            Structured interviews reduce the scope of unintended skews by requiring all interviewees to receive the same (type of) questions \cite{Bohnet_2016,BarbaraAscription2000,gawande2010checklist}.
            Interviewer training aims to improve efficiency: the amount of (accurate) information the interviewer can acquire in a given time \cite{Bohnet_2016}.
            Which intervention should a policymaker enforce?

\medskip
        \noindent{\em{Studying the effectiveness of interventions.}}
            A recent and growing line of work \cite{KleinbergR18,celis2020interventions,blum2020recovering,EmelianovGGL20,garg2021standardized,celis2021longterm,mehrotra2022intersectional} evaluates the effectiveness of representational constraints under specific models of bias: they ask, given $\cC$ of subsets satisfying some constraint, when does the constraint optimal set, $S_{\cE,\cC}\coloneqq \argmax_{S\in \cC}\sum_{i\in S}x_i$, have a higher utility than the unconstrained optimal $S_\cE$, i.e., when is $\sum_{i\in S_{\cE,\cC}} v_i > \sum_{i\in S_\cE} v_i$?
            Based on their analysis \cite{KleinbergR18,celis2020interventions,blum2020recovering,EmelianovGGL20,garg2021standardized,celis2021longterm,mehrotra2022intersectional} demonstrate the benefits of different constraints including, equal representation (ER), which requires the output $S$ to satisfy $\abs{S\cap G_1}=\abs{S\cap G_2}$ and, proportional representation (PR), which requires $S$ to satisfy $\nfrac{\abs{S\cap G_1}}{\abs{G_1}}=\nfrac{\abs{S\cap G_2}}{\abs{G_2}}$.
            A feature of our model is that its parameters $\alpha$ and $\tau$ have a physical interpretation, which enables the study of other interventions: for instance, structured interviews aim to reduce skew in evaluation, which corresponds to shifting $\alpha$ closer to 1, and interviewer-training affects the information-to-resource trade-off, i.e., reduces $\tau$.

            Using our model we compare ER and PR with two new interventions: change $\alpha$ by 50\% ($\alpha$-intervention) and change $\tau$ by 50\% ($\tau$-intervention).
            Here, 50\% is an arbitrary amount for illustration.

         \medskip
        \noindent{\em{Setup.}}
            We consider a selection scenario based on the JEE 2009 data: we fix $f_\cD=f_{G_1}$, $\alpha',\tau'$ to be the best-fit parameters on the JEE 2009 data (by TV distance), and $\abs{G_1}=1000$.
            Let $f_\alpha$ and $f_\tau$ be the densities obtained after applying the $\ alpha$ intervention and the $\ tau$ intervention respectively. (Formally, $f_\alpha{=}f^\star_{\smash{\alpha'}/2,\smash{\tau'}}$ and $f_\tau{=}f^\star_{\smash{\alpha'},3\smash{\tau'}/2}$.)
            We vary $\abs{G_2}\in \inbrace{500, 1000, 1500}$.
            For each $i\in G_1$, we draw $v_i\sim f_\cD$ and set $x_i=v_i$ (no bias).
            For each $i\in G_2$, we draw $v_i\sim f_\cD$, $x_i\sim f^\star_{\alpha, \tau}$, $x_i^\alpha\sim f_\alpha$, and $x_i^\tau\sim f_\tau$ coupled so that the CDFs of the respective densities at $v_i,x_i,x_i^\alpha,$ and $x_i^\tau$ are
            the same.
            We give ER and PR the utilities $\inbrace{x_i}_i$ as input, we give $\alpha$-intervention utilities $\inbrace{x_i^\alpha}_i$ as input, and the $\tau$-intervention utilities $\inbrace{x_i^\tau}_i$.
            For each $\abs{G_1}$ and $\abs{G_2}$, we vary $50\leq k\leq 1000$, sample utilities and report the expected utilities of the subset output by each intervention over 100 iterations (\cref{fig:selection_results}). Here, $1000$ is the largest value for which ER is satisfiable across all group sizes $\abs{G_1}$ and $\abs{G_2}$.

\medskip        \noindent{\em{Observations.}}
            Our main observation is that there is no pair of interventions such that one always achieves a higher utility than the others.
            In fact, for each intervention, there is a value of $k$, $\abs{G_1}$, and $\abs{G_2}$, such that the subset output with this intervention has a higher utility than the subsets output with other interventions.
            Thus, each intervention has a very different effect on the latent utility of the selection and a policymaker can use our model to study the effects in order to systematically decide which interventions to enforce; see \cref{sec:additional_simulations:case_study} for a case study of how a policymaker could potentially use this model to study bias-mitigating interventions in the JEE context.

        \begin{figure}[t!]
            \centering
            % \vspace{-0.3in}
            \includegraphics[width=0.9\linewidth]{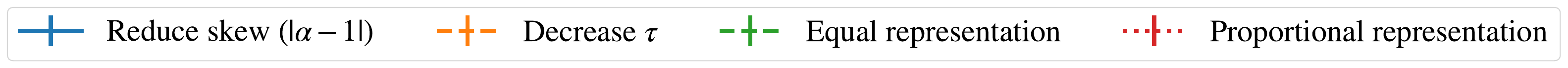}
            \subfigure[$\abs{G_1}=1000$ and $\abs{G_2}=500$]{
                \includegraphics[width=0.31\linewidth]{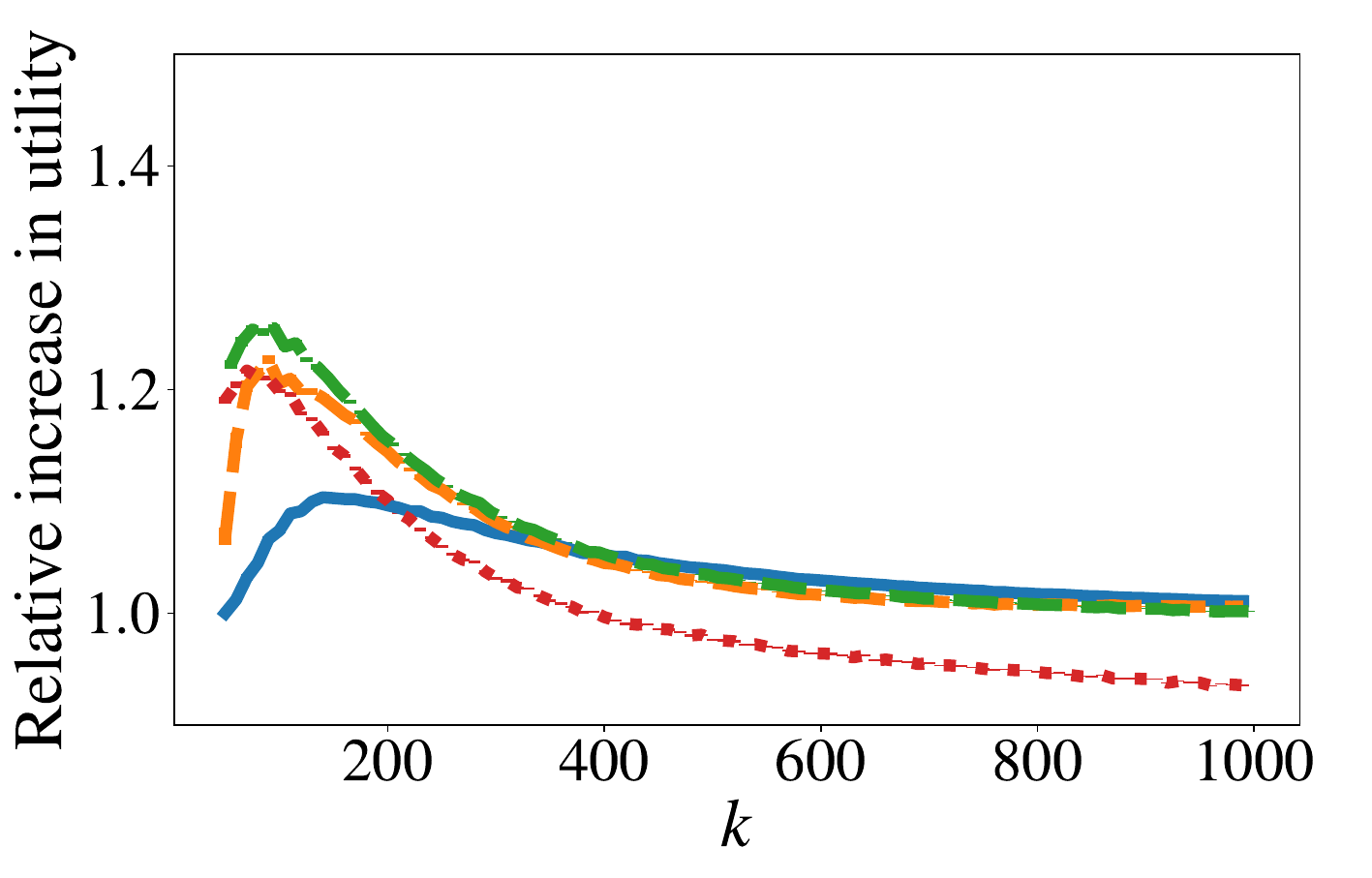}
            }
            \subfigure[$\abs{G_1}=1000$ and $\abs{G_2}=1000$]{
                \includegraphics[width=0.31\linewidth]{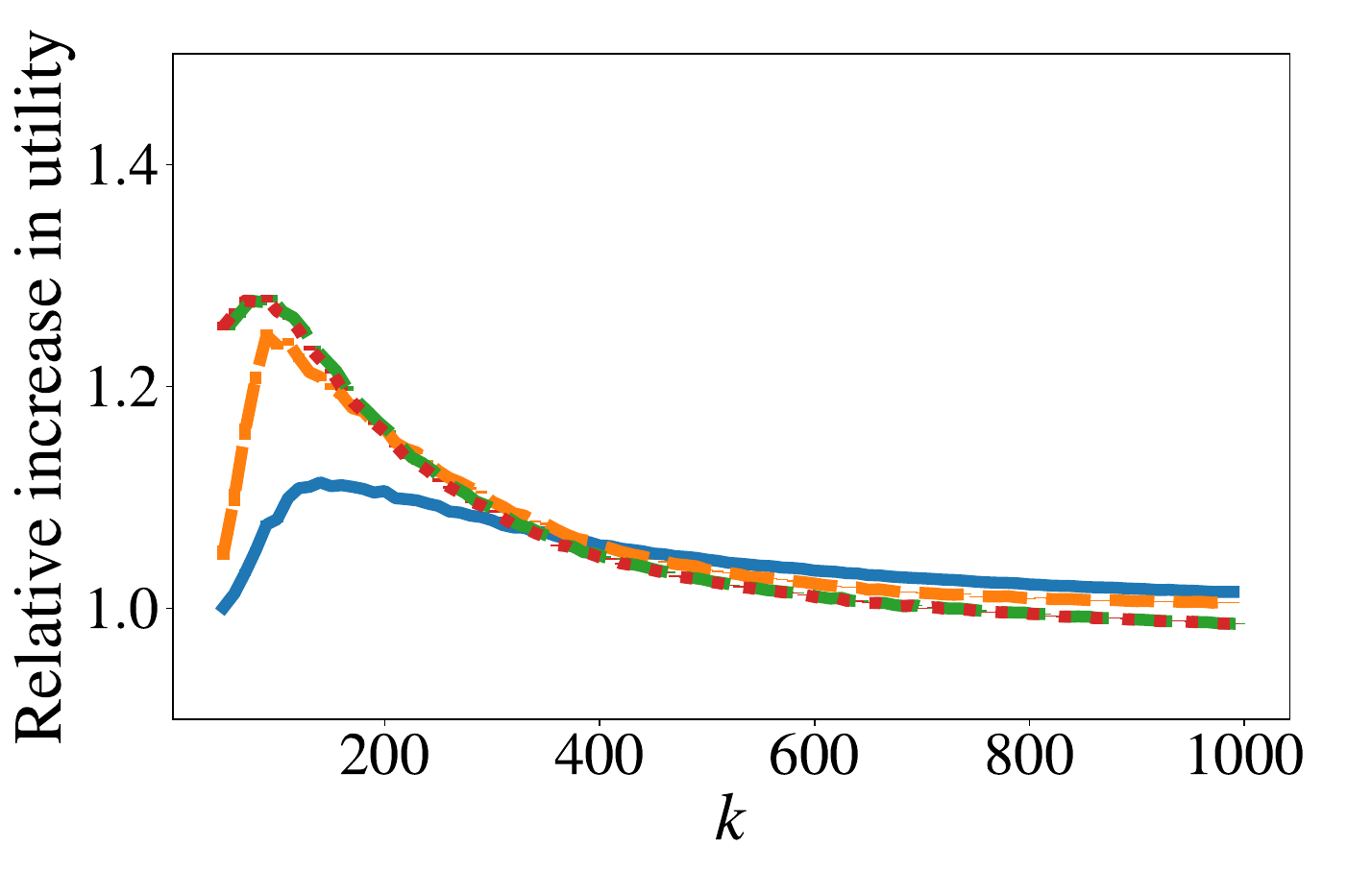}
            }
            \subfigure[$\abs{G_1}=1000$ and $\abs{G_2}=2000$]{
                \includegraphics[width=0.31\linewidth]{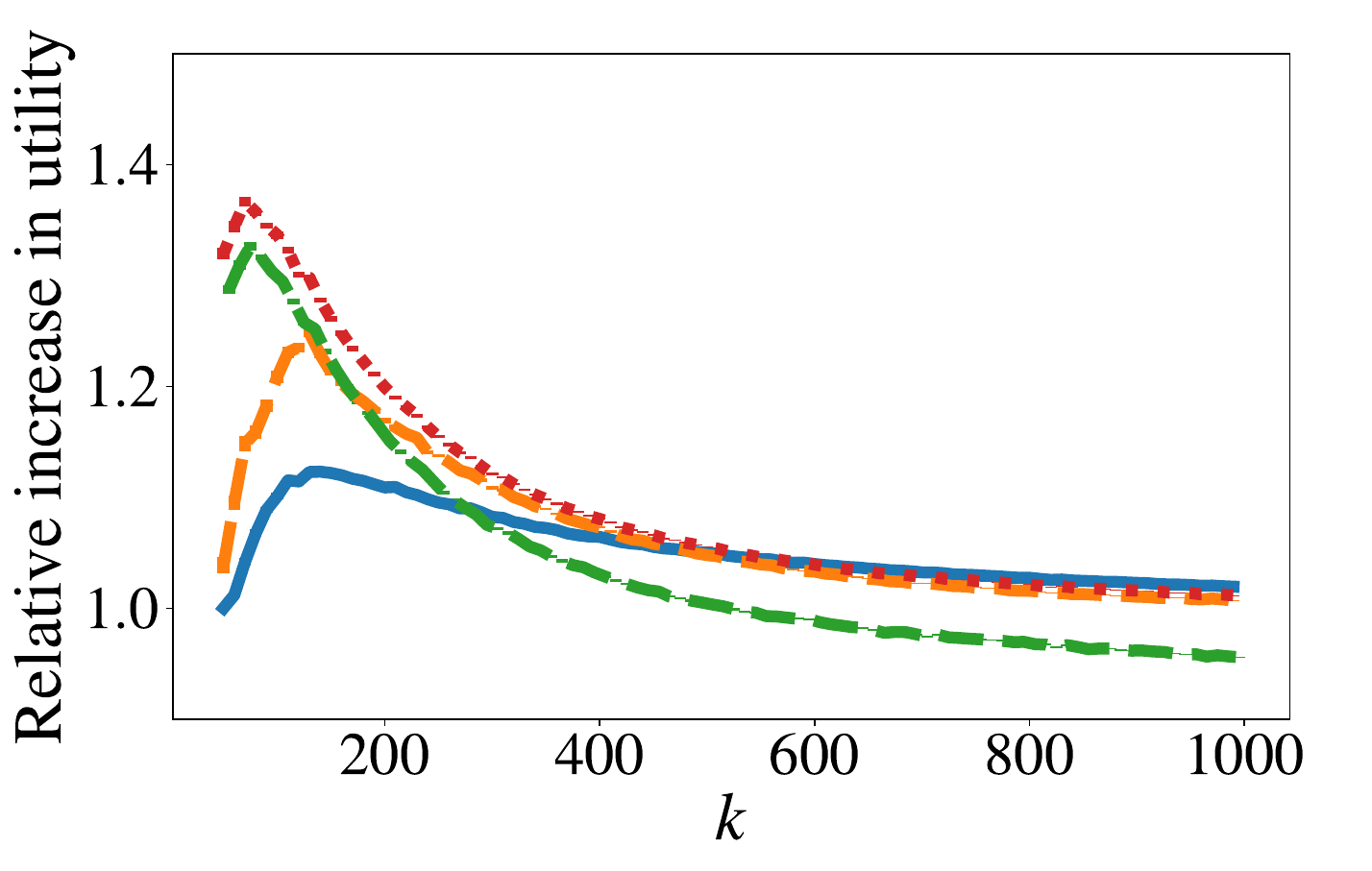}
            }
            %\vspace{-0.1in}
            \caption{\small 
                \textit{Effectiveness of different interventions on the selection-utility--as estimated by our model:}
                The $x$-axis shows $k$ (size of selection) and the $y$-axis shows the ratio of the (true) utility of the subset output with an intervention to the (true) utility of the subset output without any intervention.
                The main observation across the figures is that, for each intervention, there is a choice of $k$, and group sizes $\abs{G_1}$ and $\abs{G_2}$, where the intervention outperforms all other interventions.
                Hence, each intervention has a different effect on the latent utility of selection and a policymaker can use our model to study their effect and decide which intervention to enforce.
                Error bars represent the standard error of the mean over 100 repetitions.
            }
            %\vspace{-0.25in}
            %
            \label{fig:selection_results}
        \end{figure}
        
 %\vspace{-2mm}

   \section{Characterization of optimal solution to the optimization problem}\label{sec:characterization}
In this section, we present the formal version of \cref{thm:mainopt_inf}.
We first state the general conditions under which this result is true, and give an outline of its proof. 
Recall that an instance of the optimization problem is given by a tuple $\cI\coloneqq (\Omega, \fD, \ell, \alpha, \tau)$, where $\Omega$ is a closed interval in $\R$ (i.e., $\Omega$ is one of  $[a,b], [a, \infty), (-\infty, b], (-\infty, \infty)$ for some suitable real values $a$ or $b$),  $\fD$ is the density of the true utility that is Lebesgue measurable, $\ell$ is the loss function, $\tau$ is the resource-information parameter and $\alpha \geq 1$ is the risk-averseness parameter. 
The main result, \Cref{thm:mainopt}, states that under mild conditions, the instance $\cI$ has a unique optimal solution. 
We present applications of \Cref{thm:mainopt} when the input density is Gaussian, Pareto, Exponential, and Laplace in Sections \ref{sec:gaussian}, \ref{sec:pareto}, \ref{sec:exponential}, and \ref{sec:laplace} respectively.

In~Section~\ref{sec:characterization:primal_and_assumptions}, we state the primal optimization formulation and the assumptions needed by \Cref{thm:mainopt}. Clearly, we need that the instance $\cI$ has a non-empty set of solutions, i.e., there exists a density of entropy at least $\tau$ (assumption~\A{0}). 
In~Section~\ref{subsec:dual}, we state the dual of the convex program  \ref{primalform} and show that weak duality holds. 
This proof requires integrability of the loss function with respect to the measure induced by the density $\fD$~(assumption~\A{3}). This assumption also ensures that the optimal value does not become infinite. 

In~Section~\ref{subsec:strongduality}, we show that strong duality holds. 
We use Slater's condition to prove this. 
This proof also shows that there exist optimal Lagrange dual variables. 
However, this does not show that there is an optimal solution $f^\star$ to the primal convex program. 
In fact, one can construct examples involving natural loss functions and densities where strong duality holds, 
but the optimal solution does not exist. 
In order to prove the existence of an optimal solution $f^\star$ to the instance $\cI$, we show that the optimal Lagrange dual variable $\gammas$ (corresponding to the entropy constraint in~\ref{primalform}) is strictly positive. 
This proof requires several technical steps. 

We first study properties of the function $I_{\fD, \alpha, \ell}(x)$ in~Section~\ref{subsec:I}, which 
is the expected loss if the estimated value is $x$.
It is easy to verify that the objective function of~\ref{primalform}, $\err_{\ell, \alpha}(\fD, f)$, is the integral of $I_{\fD, \alpha, \ell}(x) f(x)$ over $\Omega$. 
Therefore, the optimal solution $f^\star(x)$ would place a higher probability mass on regions where $I_{\fD, \alpha, \ell}(x)$ is small. 
Under natural monotonicity conditions on the loss function~(assumption~\A{1}), we show that $I_{\fD, \alpha, \ell}(x)$ 
 can be expressed as a sum of an increasing and a decreasing function. 
This decomposition shows that for natural loss functions, e.g., those which or concave or convex in each of the coordinates, the function $I_{\fD, \alpha, \ell}(x)$  is unimodal; we state this as an  assumption~\A{5} to take care of general loss function settings. 

In~Section~\ref{subsec:positivity},  we show that $\gammas > 0$. 
This proof hinges on the following result: For every instance $\cI=(\Omega, \fD, \ell, \alpha, \tau)$ with finite optimal value, there is a positive constant $\eta$ such that the optimal value is at least $I_{\fD, \ell, \alpha}(x^\star) + \eta$, where $x^\star$ is the minimizer of $I_{\fD, \ell, \alpha}(x)$. 
This requires unimodality of the function  $I_{\fD, \ell, \alpha}(x)$ and bounded mean (and median) of $\fD$ (assumption~A{4}).   Somewhat surprisingly, this result also requires that the loss function $\ell(v,x)$ grows at least logarithmically as a function of $|v-x|$ (assumption~\A{2}).  Without this logarithmic growth, 
an optimal solution need not exist: it may happen that there are near-optimal solutions that place vanishingly small probability mass outside the global minimum $x^\star$ of $I_{\fD, \ell, \alpha}(x)$. Thus, even though we have an entropy constraint, there will be a sequence of solutions, whose error converges to the optimal value, that converges to a delta function.

In~Section~\ref{subsec:optimality}, we use the positivity of $\gammas$ to explicitly write down an expression for the optimal primal solution. 
We use strict convexity of the feasible region for~\ref{primalform} to show that the optimal solution is unique. Finally, in~Section~\ref{sec:soundness}, we show that if $\fD$ has bounded entropy, then there is a suitable choice for the loss function $\ell$, and parameters $\alpha$ and $\tau$, such that we recover $\fD$ as the unique optimal solution in the resulting instance.

\subsection{The primal optimization problem and assumptions}\label{sec:characterization:primal_and_assumptions}
We re-write the optimization problem~\ref{prog:framework} here. 
An instance $\cI$ of this problem is given by a tuple $\cI\coloneqq (\Omega, \fD, \ell, \alpha, \tau)$, where $\Omega$ is a closed interval in  $\R$, $\fD$ is the density of the true utility that is Lebesgue measurable, $\ell$ is the loss function, $\tau$ is the resource-information parameter and $\alpha \geq 1$ is the risk-averseness parameter. 
Recall that $\mu$ is the Lebesgue measure on $\mathbb{R}$. Note that $\Omega$ can be of the form $(-\infty, \infty)$, or $[a,\infty)$ for a real $a$, $(-\infty, b]$ for a real $b$, or $[a,b]$ for real $a,b, a < b$. 

The primal problem for $\cI=(\Omega, \fD, \ell, \alpha, \tau)$  is as follows. Here, $\dom$ denotes the set $\{f: \Omega \rightarrow \R_{\geq 0}\}$.

\begin{align}
         \min_{f \in \dom}
        \err_{\ell,\alpha}\inparen{\fD,f}  & \coloneqq 
                         \int_{\Omega} \left[ \int_{\Omega} \ell_\alpha(x, v) f(x) d \mu(x) \right] \fD(v) d \mu(v) 
                         \label{primalform}
                         \tag{PrimalOpt}\\
                          \text{such that}  \quad   
                         \ent(f)\coloneqq - \int_{\Omega} f(x) \ln 
                           {f(x)} d \mu(x) & \geq \tau \label{eq:entropycons}\\
                          \int_\Omega f(x) d\mu(x) & = 1. \label{eq:densitycons}  
\end{align}
We  state the assumptions and justify each of these: 
\begin{itemize}
\item[{\bf A0}] ({\bf Strict feasibility of the instance}) The interval $\Omega$ has length strictly larger than $e^\tau$. This is satisfied trivially if $\Omega$ is infinite.

\paragraph{Remark:} {\em In order to ensure that the set of feasible solutions to an instance of~\ref{primalform} is non-empty, we require that $\Omega$ has length at least $e^\tau$; otherwise even the uniform distribution on $\Omega$ shall have entropy less than $\tau$. The strict inequality is needed to ensure strong duality (Slater's condition).}
    \item[{\bf A1}] ({\bf Monotonicity of the loss function}) We assume that the loss function $\ell: \Omega \times \Omega \rightarrow \R$ is continuous and $\ell(x,x)=0$ for all $x \in \Omega$.  
    We consider two types of loss functions, \typeP and \typeN. 
    The \typeP loss functions have the following monotonicity property: For any fixed $v \in \Omega$, $\ell(v,x)$ strictly increases as $|v-x|$ increases. 
    It follows that $\ell(v,x) \geq  0$ with equality if and only if $v=x$.
    The \typeN loss functions have the following property: For a fixed $v \in \Omega$, $\ell(v,x)$ is a strictly increasing function of $x$. 
    It follows that $\ell(v, x) \geq 0$ if $x \geq v$ and $\ell(v,x) < 0$ if $x < v$. 
    For example, $\ell(x,v) = (x-v)^2, |x-v|$ are of \typeP, whereas $\ell(x,v) = \ln x - \ln v, (x-v)$ are of \typeN. 

    {\bf Remark:} {\em These are natural monotonicity properties. A $\typeP$ loss function ensures that 
    the optimal density to an instance $\cI$ of~\ref{primalform} assigns higher values to points where the input density $\fD$ is more concentrated. A $\typeN$ loss function ensures that the optimal density does not place high probability mass at points that are much larger than the mean of $\fD$. }
    
    \item[{\bf A2}] ({\bf Growth rate of the loss function}) We would like to assume that the loss function $\ell(x,v)$ grows reasonably rapidly as $|x-v|$ increases. 
    It turns out that the ``right'' growth rate would be at least logarithmic. 
    For instance, we could require that $|\ell(x,v)|$ is $\Omega(|\ln(x-v)|).$ 
    However, unless we assume some form of the triangle inequality, such a lower bound would not imply a lower bound on $|\ell(x_2,v) - \ell(x_1,v)|$ for any $x_2, x_1, v \in \Omega$. 
    Thus, we make the following assumption: There is a constant $C$, such that for all $x_2, x_1, v \in \Omega$, with $|x_1- x_2| \geq C, v \notin (x_1, x_2),$
    $$ |\ell(x_2,v) - \ell(x_1,v)| \geq \ln|x_2-x_1| - \theta_{x_1},$$
    where $\theta_{x_1}$ is a value which depends on $x_1$ only. 
    For example, when $\ell(x,v) = (x-v)$, the above is satisfied with $\theta_x = 0, C = 1$; and when $\ell(x,v) = \ln x - \ln v$, the above property holds with $\theta_x = \ln x, C=0$. 

    {\bf Remark:} {\em This is a subtle condition, and is needed to ensure that an optimal solution exists. Consider for example, an instance $\cI$ with $\Omega = [0,\infty)$, $\fD$ being the exponential density, $\ell(x,v) = \ln \ln (x+2) - \ln \ln (v+2)$ (the ``+2'' factor is to ensure that $\ln (x+2)$ does not become negative). The parameters $\alpha, \tau$ can be arbitrary. Let $\varepsilon > 0$ be an arbitrarily small constant. Now, consider a solution that places $1-\varepsilon$ probability mass at $x=0$ and spreads the remaining $\varepsilon$ probability mass uniformly over an interval of length $e^{\tau/\varepsilon}$ to achieve entropy $\tau$. Since the loss function grows slowly, the expected loss of this solution decreases with decreasing $\varepsilon$. Thus, even though strong duality holds in this instance, an optimal solution does not exist. In fact, we have a sequence of solutions, with error converging to the optimal value of $\cI$,  converging to the delta-function at 0.  }
    \item[{\bf A3}] ({\bf Integrability of the loss function}) We assume that the function $\ell(x,v) \fD(v)$ is in $L^1(\mu)$ for every $x \in \Omega$.
    In other words, for each $x \in \Omega$, 
    $$ \int_\Omega |\ell(x,v)| \fD(v) d \mu(v) < \infty.$$

    {\bf Remark:} {\em This is needed in order to carry out basic operations like the swapping of integrals (Fubini's Theorem) on the objective function. 
    Further, this ensures that the objective function does not become infinite under reasonable conditions.} 
    \item[{\bf A4}] ({\bf Bounded mean and half-radius}) We assume that the true utility density $\fD$ has a bounded mean $m$. 
    Moreover, we assume that there is a finite value $R$ such that at least half of the probability mass of $\fD$ lies in the range $[m-R, m+R]$, i.e., 
    $$\int_{\Omega \cap [m-R, m+R]} \fD(v) d \mu(v) \geq 1/2.$$
    {\bf Remark:} {\em This condition ensures that $\fD$ decays at a reasonable rate (though it is much milder than requiring bounded variance).}

    \item[{\bf A5}] ({\bf Unique global minimum for estimated loss function}) Let $I_{\fD, \ell, \alpha}(x)$ denote the expected loss when the estimated value is $x$, i.e., 
     $$ I_{\fD, \ell, \alpha}(x) \coloneqq \int_\Omega \ell_\alpha(x,v) \fD(v) d \mu(v). $$
     We assume that this function has a unique minimum on $\Omega$. 
     Moreover, if $x^\star$ denotes $\argmin_{x \in \Omega} I_{\fD, \ell, \alpha}(x)$, we also assume that for all other local minima $x$ of this function, $I_{\fD, \ell, \alpha}(x)$ is larger than $I_{\fD, \ell, \alpha}(x^\star)$ by a fixed constant. 
     We state this condition formally as follows: Given any  $\delta > 0$, there is an $\varepsilon_\delta > 0$ such that for all $x$ satisfying $|x-x^\star| > \delta$, we have $I_{\fD, \ell, \alpha}(x) \geq I_{\fD, \ell, \alpha}(x^\star) + \varepsilon_\delta$. 

     {\bf Remark:} {\em This condition is needed to ensure the uniqueness of the optimal solution. The optimal density for an instance $\cI=(\Omega, \fD, \ell, \alpha, \tau)$ tries to place higher probability mass in regions where  $I_{\fD, \ell, \alpha}(x)$ is low. Therefore, a unique global minimum (and well-separatedness from other local minima) is needed to ensure that the optimal solution is unique. When the loss function is $\typeN$, this condition is always satisfied (unless the optimal value is $-\infty$). For a $\typeP$ loss function, this condition is satisfied if the loss function $\ell(x,v)$ is 
     concave or convex in $x$ (for each fixed value of $v$), e.g., when $\ell(x,v) = (x-v)^2, |x-v|, |\ln x - \ln v|$.}
\end{itemize}
The following is the formal version of \cref{thm:mainopt_inf}.
  \begin{theorem}[\bf Characterization of optimal density]
\label{thm:mainopt}
    Consider an instance $\cI$ of the optimization problem~\ref{primalform} defined on a closed interval $\Omega \subseteq \R$. 
    Let $\ell, \alpha, \tau, \fD$ denote the loss function,   risk-averseness parameter, resource-information parameter, and the density of the true utility respectively in $\cI$. 
    If assumptions~\A{0}--\A{5} are satisfied, then there is a unique solution $f^\star$ to the instance $\cI$. This solution satisfies the following condition: 
\begin{align}
    \label{eq:optsoln}
     f^\star(x) \propto \exp \left( - \frac{I_{\fD, \ell, \alpha}(x)}{\gammas}  \right),
\end{align}
where $\gamma^\star$ is the Lagrange variable corresponding to the entropy constraint~\eqref{eq:entropycons} and is strictly positive. 
Moreover, $\ent(f^\star)=\tau$.
\end{theorem}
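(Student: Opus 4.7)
The plan is to attack \Cref{thm:mainopt} by convex duality: form the Lagrangian of~\ref{primalform} with a nonnegative multiplier $\gamma$ for the entropy constraint~\eqref{eq:entropycons} and $\lambda \in \R$ for the normalization constraint~\eqref{eq:densitycons}, rewrite the objective as $\int_\Omega I_{\fD,\ell,\alpha}(x) f(x)\, d\mu(x)$ using Fubini (valid by~\A{3}), and then carry out the pointwise inner minimization over $f \geq 0$. Formally setting variations to zero yields the Gibbs form $f(x) \propto \exp(-I_{\fD,\ell,\alpha}(x)/\gamma)$ whenever $\gamma > 0$. Weak duality is immediate, and strong duality follows from Slater's condition, which holds by~\A{0}: the uniform density on any subinterval of $\Omega$ of length strictly greater than $e^\tau$ is strictly feasible. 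This step produces an optimal dual pair $(\gamma^\star, \lambda^\star)$ with $\gamma^\star \geq 0$, and equates the dual value to the primal value.

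The crux is upgrading $\gamma^\star \geq 0$ to $\gamma^\star > 0$, because the Gibbs expression~\eqref{eq:optsoln} is only meaningful with a positive denominator, and without it an optimal primal may not even exist (cf.\ the remark after~\A{2}). I would argue by contradiction: if $\gamma^\star = 0$, then strong duality forces the primal optimum to equal $\inf_{x \in \Omega} I_{\fD,\ell,\alpha}(x) = I_{\fD,\ell,\alpha}(x^\star)$, where $x^\star$ is the unique global minimizer provided by~\A{5}. It therefore suffices to exhibit a constant $\eta > 0$ such that every feasible $f$ satisfies $\err_{\ell,\alpha}(\fD,f) \geq I_{\fD,\ell,\alpha}(x^\star) + \eta$. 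For any such $f$, split its mass into a $\delta$-ball around $x^\star$ and the complementary tail. By the well-separatedness clause of~\A{5}, any probability mass in the tail contributes at least $\varepsilon_\delta$ over $I(x^\star)$. The main obstacle is that almost all of $f$'s mass could in principle sit in the $\delta$-ball, trivializing this bound; this is precisely where the entropy constraint enters. The constraint $\ent(f) \geq \tau$ forces a nonvanishing tail probability $\varepsilon > 0$, and this tail must occupy a region of length at most exponential in $1/\varepsilon$. On that region, the logarithmic growth~\A{2} of $\ell$ together with the bounded mean and half-radius~\A{4} (which prevents $\fD$ from chasing the mass out to infinity) and the monotone decomposition of $I$ from \Cref{subsec:I} force the expected loss on the tail to grow faster than the entropy credit the tail provides. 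Carefully balancing these two estimates produces the uniform $\eta > 0$, contradicting $\gamma^\star = 0$. This is the technically delicate step, and~\A{2} is precisely the assumption that makes it go through, as the slow-growth counterexample in the remark after~\A{2} confirms.

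With $\gamma^\star > 0$ in hand, the candidate density $f^\star(x) = (Z^\star)^{-1}\exp(-I_{\fD,\ell,\alpha}(x)/\gamma^\star)$ is well-defined and normalizable on $\Omega$ (integrability of the exponential tail again uses~\A{2} and~\A{4}), it satisfies the KKT stationarity identity derived in the Lagrangian step, and it is primal feasible. Complementary slackness with $\gamma^\star > 0$ then forces $\ent(f^\star) = \tau$, giving the final assertion of the theorem. Uniqueness is read off from the Lagrangian: the map $f \mapsto \int_\Omega I_{\fD,\ell,\alpha}(x) f(x)\, d\mu(x) + \gamma^\star \int_\Omega f(x) \ln f(x)\, d\mu(x) - \lambda^\star \int_\Omega f(x)\, d\mu(x)$ is strictly convex in $f$ because $\gamma^\star > 0$ and $f \mapsto f \ln f$ is strictly convex; thus any primal optimizer must also minimize the Lagrangian, and strict convexity pins it down to the single Gibbs density $f^\star$. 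Assembling weak duality, strong duality, positivity of $\gamma^\star$, KKT stationarity, complementary slackness, and strict convexity completes the proof of \Cref{thm:mainopt}.
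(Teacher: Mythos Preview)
Your proposal is correct and follows essentially the same route as the paper: weak duality, Slater via \A{0}, then the contradiction argument for $\gammas > 0$ by showing every feasible $f$ has error strictly above $I_{\fD,\ell,\alpha}(x^\star)$ (the paper implements your entropy-vs-tail balancing via a dyadic family of intervals $A_i$ of lengths $e^{2^i/\varepsilon}$, and splits the argument into \typeP and \typeN cases), followed by KKT and complementary slackness. The one cosmetic difference is in uniqueness: you invoke strict convexity of the Lagrangian in $f$ (since $\gammas > 0$), whereas the paper argues via strict convexity of the feasible set $\mathcal{F}$ together with complementary slackness forcing $\ent(f^\star)=\tau$; both arguments rest on the strict convexity of $f \mapsto f\ln f$ and are equivalent.
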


\subsection{Dual formulation and weak duality}
\label{subsec:dual}
Let $\gamma \geq 0$ and $\phi \in \R$ denote the Lagrange variables for the constraints~\eqref{eq:entropycons} and~\eqref{eq:densitycons} respectively. Then, the Lagrangian is  
$$L(f, \gamma,\phi) \coloneqq \err_{\ell, \alpha}(\fD, f) + \gamma(\tau-\ent(f)) + \phi \left( \int_\Omega f(x) d \mu(x) - 1 \right).$$
Given  $\gamma \geq 0, \phi$, define 
\begin{align}
\label{def:gmu}
 g(\gamma, \phi) \coloneqq \inf_{f \in \dom} L(f, \gamma, \phi). 
 \end{align}
The dual problem is 
\begin{align}
    \max_{\gamma \in \R_{\geq 0}, \phi \in \R} g(\gamma, \phi). \label{dualform} \tag{DualOpt}
\end{align}
We first show weak duality. 
\begin{theorem}[\bf Weak duality]
\label{cl:weak}
    Consider an instance $\cI=(\Omega, \fD, \ell, \alpha, \tau)$ of~\ref{primalform} satisfying assumption~\A{3}. 
    Let $g(\gamma, \phi)$ be as defined in~\eqref{def:gmu}. Then, $$\max_{\gamma \in \R_{\geq 0}, \phi \in \R} g(\gamma, \phi) \leq \min_{f: f \text{ feasible for \ref{primalform}}} \Err_{\ell, \alpha}(\fD, f).$$
\end{theorem}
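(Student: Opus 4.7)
The plan is to follow the standard Lagrangian weak-duality argument, carefully using assumption~\A{3} to ensure that all integrals and the inequalities are well-defined. The key identity to exploit is that for any primal-feasible $f$, two of the three terms in the Lagrangian are sign-controlled by feasibility, so the Lagrangian falls below the primal objective.

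First, I would fix an arbitrary primal-feasible $f \in \dom$, i.e., one satisfying $\ent(f) \geq \tau$ and $\int_\Omega f(x)\,d\mu(x) = 1$, together with an arbitrary dual-feasible pair $(\gamma,\phi)$ with $\gamma \geq 0$ and $\phi \in \R$. Assumption~\A{3} guarantees that $I_{\fD,\ell,\alpha}(x) = \int_\Omega \ell_\alpha(x,v)\fD(v)\,d\mu(v)$ is finite for each $x$, so by Fubini's theorem (applied to $|\ell_\alpha(x,v)| f(x) \fD(v)$) the error term $\err_{\ell,\alpha}(\fD,f) = \int_\Omega I_{\fD,\ell,\alpha}(x)\,f(x)\,d\mu(x)$ is well-defined (possibly $+\infty$, in which case the primal bound is vacuous and there is nothing to prove).

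Next, I plug $f$ into the Lagrangian:
\begin{align*}
L(f,\gamma,\phi) = \err_{\ell,\alpha}(\fD,f) + \gamma\bigl(\tau - \ent(f)\bigr) + \phi\left(\int_\Omega f(x)\,d\mu(x) - 1\right).
\end{align*}
By the density constraint $\int_\Omega f\,d\mu = 1$, the $\phi$-term vanishes. By the entropy constraint $\ent(f) \geq \tau$ together with $\gamma \geq 0$, the $\gamma$-term is non-positive. Therefore
\begin{align*}
L(f,\gamma,\phi) \leq \err_{\ell,\alpha}(\fD,f).
\end{align*}
Taking infimum of the left-hand side over $f' \in \dom$ (a superset of the primal-feasible region) yields
\begin{align*}
g(\gamma,\phi) = \inf_{f' \in \dom} L(f',\gamma,\phi) \leq L(f,\gamma,\phi) \leq \err_{\ell,\alpha}(\fD,f).
\end{align*}

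Finally, since this inequality holds for every primal-feasible $f$ and every dual-feasible $(\gamma,\phi)$, I take the maximum over $(\gamma,\phi) \in \R_{\geq 0} \times \R$ on the left and the infimum over primal-feasible $f$ on the right, obtaining $\max_{\gamma \geq 0,\phi} g(\gamma,\phi) \leq \min_{f \text{ feasible}} \err_{\ell,\alpha}(\fD,f)$, which is the desired weak-duality statement. The only subtlety — and the place where \A{3} is indispensable — is the justification of the interchange of integrals in $\err_{\ell,\alpha}(\fD,f)$ and the guarantee that none of the three Lagrangian terms are of the form $\infty - \infty$; beyond that, the argument is the classical one and presents no real obstacle.
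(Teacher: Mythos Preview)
Your proof is correct and follows essentially the same approach as the paper: fix a primal-feasible $f$ and dual-feasible $(\gamma,\phi)$, bound $g(\gamma,\phi)\le L(f,\gamma,\phi)\le \err_{\ell,\alpha}(\fD,f)$ using feasibility, then take extrema on both sides. Your version is slightly more explicit about where assumption~\A{3} enters (well-definedness of the error integral via Fubini), but the argument is otherwise identical to the paper's.
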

\begin{proof}
    We assume that the primal problem has a feasible solution, otherwise the desired inequality follows trivially. 
    Let $f$ be a feasible solution to~\ref{primalform} and let $\gamma, \phi$ be real values with $\gamma \geq 0$. Then, 
    \begin{align*}
        g(\gamma, \phi) & \leq L(f, \gamma, \phi) \\
        & = \Err_{\ell, \alpha}(\fD, f) + \gamma(\tau - \ent(f)) + \phi \left( \int_\Omega f(x) d \mu(x) \right) \\
        &\leq \Err_{\ell, \alpha}(\fD, f),
    \end{align*}
    where the first inequality follows from the definition of $g(\gamma, \phi)$, and the last inequality follows from the fact that $f$ is a feasible solution to the instance $\cI$ of~\ref{primalform}. 
\end{proof}

\subsection{ Strong duality}
\label{subsec:strongduality}

\begin{theorem}[\bf Strong duality]
\label{thm:strong}
    Consider an instance  $\cI=(\Omega, \fD, \ell, \alpha, \tau)$ of~\ref{primalform} satisfying assumptions~\A{0}, \A{1} and~\A{3}.  Let $f^\star$ and $(\gammas, \phis)$ be the optimal solutions to~\ref{primalform} and~\ref{dualform} respectively. Then $g(\gammas, \phis) = \Err_{\ell, \alpha} (\fD, f^\star).$
\end{theorem}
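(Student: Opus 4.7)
The plan is to verify Slater's condition for \ref{primalform} and then invoke infinite-dimensional convex duality. First, I would confirm that \ref{primalform} is a convex program: using Fubini (justified by \A{3}), the objective rewrites as $\err_{\ell,\alpha}(\fD,f) = \int_\Omega I_{\fD,\ell,\alpha}(x)\, f(x)\, d\mu(x)$, which is linear in $f$; the normalization constraint is affine; and the differential entropy is concave in $f$ (pointwise, because $x\mapsto -x\ln x$ is concave on $[0,\infty)$), so the constraint $\ent(f)\geq\tau$ cuts out a convex subset of the nonnegative cone in $L^1(\mu)$.

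Next I would use \A{0} to exhibit a strictly feasible density. Since $\Omega$ contains a closed subinterval $[a_0,b_0]$ with $L \coloneqq b_0 - a_0 > e^\tau$, the density $f_0 = (1/L)\,\mathbbm{1}_{[a_0,b_0]}$ lies in $\dom$, satisfies $\int_\Omega f_0\, d\mu = 1$, and has $\ent(f_0) = \ln L > \tau$, so the entropy constraint is strict at $f_0$; moreover, \A{3} ensures $\err_{\ell,\alpha}(\fD,f_0) < \infty$. Given Slater's condition and a finite primal value, I would run the standard perturbation argument: define the value function $V(\epsilon) \coloneqq \inf\{\err_{\ell,\alpha}(\fD,f) : \ent(f) \geq \tau - \epsilon,\; \int f\, d\mu = 1\}$, note that $V$ is convex and non-increasing in $\epsilon$, and use the strict feasibility at $f_0$ to show $V$ is finite on a neighborhood of $0$---interpolating along the segment from a near-optimal $f$ toward $f_0$ produces a feasible perturbed solution whose cost differs from $V(0)$ by $O(\epsilon)$. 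Convexity would then supply a finite subgradient $\gammas \geq 0$ of $V$ at $0$, while the affine equality constraint automatically contributes a multiplier $\phis \in \R$, and the pair $(\gammas,\phis)$ attains the dual value $V(0) = \err_{\ell,\alpha}(\fD,f^\star)$.

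The hard part will be the infinite-dimensional subtleties: the feasible set is the nonnegative cone of $L^1(\mu)$, and $f \mapsto -\int f\ln f\,d\mu$ is neither continuous nor everywhere finite on $L^1$, so one cannot cite the finite-dimensional Slater theorem verbatim. The essential role of \A{0} together with \A{1} and \A{3} is precisely to substitute for the missing topological regularity of the entropy by furnishing a quantitative perturbation estimate near $f_0$: the strict inequality $\ent(f_0) > \tau$ leaves room to move any feasible $f$ toward $f_0$ while controlling the entropy change (via concavity along the segment $\{tf_0+(1-t)f\}_{t\in[0,1]}$) and the objective change (via the linearity of $\err_{\ell,\alpha}$ in $f$ together with the $L^1$-finiteness of $I_{\fD,\ell,\alpha}$ against both $f$ and $f_0$). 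Once this perturbation estimate is in hand, the existence of the dual optimum and the zero duality gap follow from the standard Rockafellar subdifferential machinery, giving $g(\gammas,\phis) = \err_{\ell,\alpha}(\fD,f^\star)$.
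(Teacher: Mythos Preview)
Your proposal is correct in outline but follows a different execution from the paper. The paper proves Slater-type strong duality via the \emph{image-space} route: it forms the convex sets
\[
\cA \coloneqq \Bigl\{(u,w,t)\in\R^3 : u \geq \tau - \ent(f),\ w = \textstyle\int_\Omega f\,d\mu - 1,\ t \geq \err_{\ell,\alpha}(\fD,f)\ \text{for some }f\in\dom\Bigr\}
\]
and $\cB \coloneqq \{(0,0,q): q < p^\star\}$, shows they are disjoint convex subsets of $\R^3$, and applies the finite-dimensional separating hyperplane theorem to obtain a normal $(\gammat,\phib,\nub)$; the strict feasibility from \A{0} is then used to rule out $\nub=0$, after which $(\gammas,\phis)=(\gammat/\nub,\phib/\nub)$ attains the dual value $p^\star$. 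Your approach instead goes through the perturbation/value-function machinery, extracting $\gammas$ as a subgradient of $V(\epsilon)$ at $0$. Both are standard and both work here, but the paper's route has a concrete payoff: by projecting into $\R^3$ it sidesteps entirely the infinite-dimensional topology issues you flag (discontinuity and partial domain of $f\mapsto \ent(f)$ on $L^1$), so no Rockafellar-style regularity argument is needed.

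Two small points to tighten in your write-up. First, \A{3} by itself only gives $I_{\fD,\ell,\alpha}(x)<\infty$ pointwise; to conclude $\err_{\ell,\alpha}(\fD,f_0)=\frac{1}{L}\int_{a_0}^{b_0} I_{\fD,\ell,\alpha}(x)\,d\mu(x)<\infty$ you need the monotonicity in \A{1} to dominate $\ell_\alpha(x,v)$ on $[a_0,b_0]$ by $|\ell_\alpha(a_0,v)|+|\ell_\alpha(b_0,v)|$ (the paper does exactly this). Second, you should dispose of the case $p^\star=-\infty$ explicitly (weak duality then forces the dual value to $-\infty$ as well), since your subgradient argument presumes $V(0)$ is finite; the paper handles this case separately before invoking separation.
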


\begin{proof}
    We first observe that there is a feasible solution to the instance $\cI$. 
    This is so because, by assumption~\A{0}, $\Omega$ must contain an interval $I$ of length at least $e^\tau$. 
    Now, we define $f$ as the uniform distribution on $I$.  
    Then $\ent(f) = \tau$, and hence, $f$ is a feasible solution to the instance $\cI$.
    We now argue that $\Err_{\ell, \alpha}(\fD, f)$ is finite. 

    \begin{claim}
        \label{cl:finite}
        Consider the function $f$ and the interval $I$ defined above. Then, $\Err_{\ell, \alpha}(\fD, f) < \infty$. 
    \end{claim}
\begin{proof}
    Let $a,b$ denote the left and the right end-points of the interval $I$ respectively. For any $x \in I$ and $v \in \Omega$, we claim that $\ell_\alpha(x,v) \leq \max(\ell_\alpha(a,v), \ell_\alpha(b,v))$. First, consider the case when the loss function is $\typeP$. If $v$ is at least $x$, then $\ell(x,v) \leq \ell(a,v)$; otherwise $\ell(x,v) \leq \ell(b,v)$. 
    If the loss function is $\typeN$, then we know that it is an increasing function of $x$, and therefore, $\ell(x,v) \leq \ell(b,v)$. Thus, we see that for any $x \in I$, $v \in \Omega$, 
    $\ell_\alpha(x, v) \leq \max(\ell_\alpha(a,v), \ell_\alpha(b,v)) \leq |\ell_\alpha(a,v)| + |\ell_\alpha(b,v)|.$
    Therefore (recall that $f$ is the uniform distribution on $I$), 
    $$\Err_{\ell, \alpha} (\fD, f) \leq \frac{1}{|I|} \int_I |\ell(a,v)| \fD(v) d\mu(v) + \frac{1}{|I|} \int_I |\ell(b,v)| \fD(v) d \mu(v) < \infty,$$
    where the last inequality follows from assumption~\A{3}.
\end{proof}
Thus, we see that the optimal value of~\ref{primalform} is either finite or $-\infty$. 
If it is $-\infty$, then weak duality implies that~\ref{dualform} has optimal value $-\infty$ as well. 
Thus, we have shown strong duality in this case. 

For the rest of the proof, assume that the optimal value, $p^\star$, of~\ref{primalform} is finite. 
We show strong duality using Slater's condition. 
Towards this, we define two sets $\cal A$ and $\cal B$, both of which are contained in $\R^3$. 
Define 
$$ {\cal A} \coloneqq \left\{ (u,v,t): u \geq \tau - \ent(f),  v=\int_\Omega f(x) d \mu(x) -1, t \geq \Err_{\ell, \alpha}(\fD, f), \text{for some $f \in \dom $} \right\},$$
and 
$${\cal B} \coloneqq \left\{ (0,0,q): q < p^\star \right\}.$$
It is easy to see that $\cal B$ is convex. 
We show that $\cal A$ is also convex. 
\begin{claim}
    \label{cl:convexA}
    The set $\cal A$ as defined above is a convex subset of $\R^3$.
\end{claim}
\begin{proof}
    The proof follows from the fact that $\ent(f)$ is a concave function of $f$ and $\Err_{\ell, \alpha}(\fD, f)$ is linear in $f$. 
Formally,     suppose $(u_1, v_1, t_1), (u_2, v_2, t_2) \in {\cal A}$ and $\lambda \in [0,1]$. 
We need to show that $(u,v,t) \coloneqq \lambda (u_1, v_1,t_1) + (1-\lambda) (u_2, v_2, t_2) \in \cal A$. 
By the definition of the set $\cal A$, there exist $f_1, f_2 \in \dom$ such that 
     $$u_i \geq \tau - \ent(f_i), \  v_i = \int_\Omega f_i(x) d \mu(x) - 1, \ t_i \geq \Err_{\ell, \alpha}(\fD, f_i), \quad i \in \{1,2\}.$$
Consider $f= \lambda f_1 + (1-\lambda)f_2$. 
Then $f$ is also a density and is in $\dom$. 
Further, 
    $$\Err_{\ell, \alpha} (\fD, f) = \lambda_1 \Err_{\ell, \alpha} (\fD, f_1) + (1-\lambda_1) \Err_{\ell, \alpha} (\fD, f_2) = \lambda t_1 + (1-\lambda) t_2 = t.$$
    Similarly, since $-x \ln x$ is concave, 
    \begin{align*}
        & \ent(f) = -\int_\Omega f(x) \ln f(x) d\mu(x) \\
        & \quad \geq -\lambda \int_\Omega f_1(x) \ln f_1(x) d\mu(x) -(1-\lambda)\int_\Omega f_2(x) \ln f_2(x)d\mu(x) \geq \tau. 
    \end{align*}
    Thus, $(u,v,t) \in \cal A$. 
\end{proof}
\noindent
We argue that $\mathcal{A}$ and $\mathcal{B}$ are disjoint. 
Indeed, otherwise, there is an $f \in \dom$ such  that $f$ is a density, $\ent(f) \geq \tau$ and  $\Err_{\ell, \alpha}(\fD, f) < p^\star$, a contradiction. 
By the hyperplane separation theorem~\cite{boyd_vandenberghe_2004,aco_book}, there is a hyperplane $(\gammat, \phib, \nub)^\top(x_1, x_2, x_3) = c$ such that $\cal A$ and $\cal B$ lie on different sides of this hyperplane. 
In other words, for every $(x_1, x_2, x_3) \in {\cal A}$, 
$$
    \gammat x_1 + \phib x_2 + \nub x_3 \geq c, 
$$
and for every $(x_1, x_2, x_3) \in {\cal B},$
$$
    \gammat x_1 + \phib x_2 + \nub x_3 \leq c.
$$
Therefore, for each $f \in \dom$, 
\begin{align}
    \label{eq:A}
    \gammat (\tau - \ent(f)) + \phib \left( \int_\Omega f(x) d\mu(x)-1 \right) + \nub \err_{\ell, \alpha}(\fD, f) \geq c,
\end{align}
and 
\begin{align}
    \label{eq:B}
    \nub p^\star \leq c. 
\end{align}

\begin{claim}
    \label{cl:slater}
     $\gammat$ and  $\nub$ are non-negative. 
\end{claim}
\begin{proof}
    Suppose, for the sake of contradiction, that $\gammat < 0$. 
    Consider $f \in \dom$ as given by~\Cref{cl:finite}. 
    Choose values $(x_1, x_2, x_3)$ as follows: $x_1$ is a large enough value greater than $\tau - \ent(f)$, $x_2 = 0$, $x_3=\Err_{\ell, \alpha}(\fD, f)$.~\Cref{cl:finite} shows that $x_3$ is finite. 
    Since $(x_1, x_2, x_3) \in {\cal A}$,  $\gammat x_1 + \nub x_3 \geq \alpha$. 
    Since $\gammat < 0$, we can choose $x_1$ large enough to make $\gammat x_1 + \nub x_3$ go below $\alpha$, which is a contradiction. 
    Similarly, we can show that $\nub \geq 0$. 
\end{proof}
Thus, two cases arise (i) $\nub > 0$, or (ii) $\nub = 0$.
First, consider the case when $\nub > 0$. 
Define $\gammas = \gammat/\nub, \phis = \phib/\nub$. 
Inequalities~\eqref{eq:A} and~\eqref{eq:B}  show that for all $f \in \dom$,  
$$ \gammas (\tau - \ent(f)) + \phis \left( \int_\Omega f(x) d\mu(x)-1 \right) +  \err_{\ell, \alpha}(\fD, f) \geq p^\star,$$
i.e., $g(\gammas, \phis) \geq p^\star$. 
It follows from weak duality (\Cref{cl:weak}) that $g(\gammas, \phis) = p^\star$ and the dual optimum value is equal to $p^\star$. 

   Now consider the case when $\nub=0$. 
   Again, inequalities~\eqref{eq:A} and~\eqref{eq:B}  show that for all $f \in \dom$,  
   $$ \gammat(\tau - \ent(f)) + \phib  \left( \int_\Omega f(x) d\mu(x)-1 \right) \geq 0. $$
   Next, we observe that there is an $f_0 \in \dom$ such that $f_0$ is a density and $\ent(f_0) > \tau$. 
   Indeed, let $f_0$ be the uniform distribution over an interval of length strictly larger than $e^{\tau}$ (such an interval exists by assumption~\A{0}). 
   Substitution $f=f_0$ in the above inequality, and assuming $\gammat > 0$, the l.h.s. of the above inequality becomes strictly less than 0, which is a contradiction. 
   Therefore, it must be the case that $\gammat = 0$. 
   Hence, we see that for all $f \in \dom, $
    $$ \phib  \left( \int_\Omega f(x) d\mu(x)-1 \right) \geq 0. $$
    Since all the three quantities $\gammat, \phib, \nub$ cannot be 0, it must be the case that $\phis \neq 0$. 
    But for any density $f_0 \in \dom$, by suitably scaling it by a positive real, we can make the quantity $\left( \int_\Omega f(x) d\mu(x)-1 \right)$ strictly larger than or strictly smaller than 0, which is again a contradiction. 
    Hence, we have concluded that  $\nub$ cannot be $0$. 
    This completes the proof of strong duality. 
\end{proof}

\begin{corollary}
    \label{cor:dual}
    Consider an instance $\cI = (\Omega, \fD, \ell, \alpha, \tau)$ of~\ref{primalform} and assume that the optimal value $p^\star$ is finite. 
    Then there exists a solution $(\gammas, \phis)$ to~\ref{dualform} such that $p^\star = g(\gammas, \phis)$. 
\end{corollary}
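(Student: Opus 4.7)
The plan is to obtain \Cref{cor:dual} as an immediate consequence of the proof of \Cref{thm:strong} (strong duality), which is constructive: it does not merely assert that the dual supremum equals $p^\star$, but actually exhibits a dual-feasible pair $(\gammas, \phis)$ achieving the primal optimal value. So the task reduces to isolating this existence statement from the argument given in \Cref{subsec:strongduality} and verifying that, under the finiteness hypothesis $p^\star < \infty$ stated in the corollary, the construction indeed produces a bona fide solution to \ref{dualform}.

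Concretely, I would proceed as follows. First, since $p^\star$ is assumed finite, the sets
\[
{\cal A} \coloneqq \left\{ (u,v,t): u \geq \tau - \ent(f),\ v=\textstyle\int_\Omega f(x)\, d\mu(x) -1,\ t \geq \Err_{\ell, \alpha}(\fD, f),\ f \in \dom \right\}
\]
and ${\cal B} \coloneqq \{(0,0,q): q < p^\star\}$ are both nonempty (the former by \Cref{cl:finite}, the latter trivially), convex (by \Cref{cl:convexA}), and disjoint (for otherwise a feasible $f$ would beat $p^\star$). Applying the hyperplane separation theorem yields a triple $(\gammat, \phib, \nub)$ and constant $c$ with $\gammat, \nub \geq 0$ (by \Cref{cl:slater}) satisfying inequalities~\eqref{eq:A} and~\eqref{eq:B}. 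The second step is to rule out $\nub = 0$; this is exactly the argument at the end of the proof of \Cref{thm:strong}, which uses assumption~\A{0} to exhibit a density $f_0$ with $\ent(f_0) > \tau$ and then derives contradictions in both subcases ($\gammat > 0$ and $\gammat = 0$). The third step is to set $\gammas \coloneqq \gammat/\nub$ and $\phis \coloneqq \phib/\nub$, divide \eqref{eq:A} and \eqref{eq:B} by $\nub$, and read off
\[
\gammas (\tau - \ent(f)) + \phis \left(\textstyle\int_\Omega f(x)\, d\mu(x)-1\right) + \err_{\ell, \alpha}(\fD, f) \geq p^\star \quad \text{for all } f \in \dom,
\]
which is precisely the inequality $g(\gammas, \phis) \geq p^\star$. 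Combined with weak duality (\Cref{cl:weak}), $g(\gammas, \phis) = p^\star$, and since $\gammas \geq 0$ and $\phis \in \R$, the pair $(\gammas, \phis)$ is a feasible dual solution attaining this value.

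There is no real obstacle here beyond the need to note that the assumptions of \Cref{thm:strong}, namely~\A{0}, \A{1}, and~\A{3}, are implicitly in force (they are the ambient hypotheses on the instance $\cI$ used throughout \cref{sec:characterization}), since \A{0} is used to produce $f_0$ and hence to eliminate $\nub = 0$, and \A{3} together with \A{1} is used in \Cref{cl:finite} to furnish a feasible primal point of finite cost. The only subtlety worth flagging explicitly in the write-up is that finiteness of $p^\star$ is what makes the set $\cal B$ a proper half-line rather than empty, which is precisely what enables the separation argument to yield an attained dual optimum instead of only a supremum.
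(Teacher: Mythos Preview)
Your proposal is correct and takes essentially the same approach as the paper: both extract the dual-attaining pair $(\gammas,\phis)=(\gammat/\nub,\phib/\nub)$ from the constructive separation argument in the proof of \Cref{thm:strong}, noting that finiteness of $p^\star$ forces $\nub>0$. The paper's own proof is a two-line pointer back to that argument, whereas you spell out the steps explicitly, but the content is identical.
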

\begin{proof}
    This follows from the proof of~\Cref{thm:strong}. 
    When $p^\star$ is finite, the parameter $\nub > $ and, hence, $\gammas = \gammat/\nub, \phis = \phib/\nub$ as defined in the proof of this result satisfy the property that $g(\gammas, \phis) = p^\star$. 
\end{proof}
We would now like to prove that, assuming that the optimal primal value is finite, the optimal dual variable $\gammas$ is non-zero. 
This allows us to write an explicit expression for an optimal solution to~\ref{primalform}. 
We first need to understand the properties of the following integral, which was defined in assumption~\A{5}:
    $$ I_{\fD, \ell, \alpha}(x) = \int_\Omega \ell_\alpha(x,v) \fD(v) d \mu(v). $$
When the parameters $\fD, \ell, \alpha$ will be clear from the context, we shall often abbreviate $ I_{\fD, \ell, \alpha}(x)$ as $I(x)$. 

\subsection{Properties of the integral \texorpdfstring{$I(x)$}{I(x)}}
\label{subsec:I}
We study some of the key properties of the integral $I(x)$. We shall assume that assumptions~\A{0}--\A{4} hold. 
The integral $I(x)$ can be split into two parts: 
$$I(x) = \underbrace{\alpha \int_{\Omega \cap (-\infty, x]} \ell(x,v) \fD(v) d \mu(v)}_{\coloneqq I^L(x)}+ \underbrace{\int_{\Omega \cap [x, \infty)]} \ell(x,v) \fD(v) d \mu(v)}_{\coloneqq I^R(x)}. $$
\begin{lemma}
    \label{lem:integralL}
    The integral $I^L(x)$ is a strictly increasing function of $x$.
    Further, $I^L(x_2) - I^L(x_1) \geq \alpha(\ln (x_2-x_1) - \theta_{x_1})  F_{\cal D}(x_1),$
    for all $x_2, x_1 \in \Omega$ satisfying $x_2 - x_1 \geq C$. 
    Here, $F_{\cal D}$ denotes the cumulative distribution function (c.d.f.) of $\fD$. 
\end{lemma}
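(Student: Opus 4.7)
The plan is to decompose the difference $I^L(x_2) - I^L(x_1)$ for $x_1 < x_2$ in $\Omega$ into two integrals whose non\-negativity can be read off directly from the monotonicity assumption \A{1}, and then to apply the logarithmic growth bound \A{2} to the one that contributes the lower bound. Writing
\[
I^L(x_2) - I^L(x_1) = \alpha \int_{(-\infty,x_1]\cap\Omega} \bigl[\ell(x_2,v) - \ell(x_1,v)\bigr]\,\fD(v)\,d\mu(v) + \alpha \int_{(x_1,x_2]\cap\Omega} \ell(x_2,v)\,\fD(v)\,d\mu(v),
\]
both integrals make sense under \A{3} (for the first, I would justify splitting via linearity on a common domain; for the second, the domain of integration is bounded).

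Next I would show each term is $\geq 0$. For the second term, on the integration range we have $v \leq x_2$, so \A{1} gives $\ell(x_2,v) \geq 0$ in both the \typeP case (since $\ell(x_2,\cdot)$ vanishes only at $x_2$ and is non-negative) and the \typeN case (since $\ell(x,v) \geq 0$ whenever $x \geq v$). For the first integrand, I claim that $\ell(x_2,v) \geq \ell(x_1,v)$ for every $v \leq x_1$: under \typeN this is immediate from strict monotonicity of $\ell(\cdot,v)$ in the first coordinate; under \typeP, since $x_2 > x_1 \geq v$ we have $|x_2-v| > |x_1-v|$, so strict monotonicity in $|x-v|$ gives the inequality (with equality only when $v = x_1 = x_2$, a null set). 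Combining, $I^L(x_2) - I^L(x_1) \geq 0$, with strict inequality whenever $\fD$ places positive mass in $(-\infty,x_2]$, which gives strict monotonicity of $I^L$.

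For the quantitative bound, assume $x_2 - x_1 \geq C$. Then for every $v \leq x_1$ we have $v \notin (x_1,x_2)$, so \A{2} applies and gives
\[
\bigl|\ell(x_2,v) - \ell(x_1,v)\bigr| \geq \ln(x_2 - x_1) - \theta_{x_1}.
\]
Combined with $\ell(x_2,v) \geq \ell(x_1,v)$ from the previous paragraph, the absolute value bars may be dropped, yielding $\ell(x_2,v) - \ell(x_1,v) \geq \ln(x_2-x_1) - \theta_{x_1}$ pointwise on the range of integration. Integrating against $\alpha \fD(v)$ over $(-\infty,x_1]\cap\Omega$ and discarding the non-negative second integral gives
\[
I^L(x_2) - I^L(x_1) \;\geq\; \alpha\bigl(\ln(x_2-x_1) - \theta_{x_1}\bigr) F_{\cD}(x_1),
\]
as claimed. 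The only genuinely subtle point is the last step: the absolute-value bound from \A{2} becomes a useful lower bound only because the sign of $\ell(x_2,v) - \ell(x_1,v)$ is fixed by \A{1}. If one did not have the monotonicity afforded by \A{1}, cancellations in the first integral could make this step fail; it is precisely this interaction between \A{1} and \A{2} that will carry the rest of the analysis of $I(x)$ in the sequel.
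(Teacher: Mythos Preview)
Your proof is correct and follows essentially the same route as the paper's: the same two-term decomposition of $I^L(x_2)-I^L(x_1)$, the same use of \A{1} to establish the sign of each term, and the same application of \A{2} to the first integral to obtain the quantitative lower bound. Your version is in fact slightly more careful than the paper's in two places---you explicitly separate the \typeP and \typeN cases when verifying $\ell(x_2,v)\geq\ell(x_1,v)$ for $v\leq x_1$, and you make explicit that \A{2} only gives an absolute-value bound which must be combined with the sign information from \A{1} to yield the pointwise lower bound---whereas the paper asserts these steps more tersely.
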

Recall that the parameter $\theta_{x_1}$ appears in the assumption~\A{3}.
\begin{proof}
    Consider $x_1, x_2 \in \Omega$ with $ x_1 < x_2$. 
    Then 
    $$I^L(x_2) - I^L(x_1) = \alpha \int_{-\infty}^{x_1} (\ell(x_2,v)-\ell(x_1,v)) \fD(v) d\mu(v) + \alpha \int_{[x_1, x_2] \cap \Omega} \ell(x_2,v) \fD(v) d \mu(v).$$
By assumption~\A{1}, $\ell(x_2,v) > \ell(x_1, v)$ for all $v \leq x_1$ and $\ell(x_2, v) \geq 0$ for all $v \leq x_2$, we see that $I^L(x_2) > I^L(x_1)$. 
Suppose $x_2 - x_1 \geq C$. 
Then $\ell(x_2, v) - \ell(x_1,v) \geq \ln |x_2-x_1| - \theta_{x_1}.$ 
Therefore, using~A{3}, the first integral above is at least 
$$ \alpha (\ln (|x_2 -x_1| - \theta_{x_1}) \int_{-\infty}^{x_1} \fD(v) d\mu(v) = \alpha (\ln (x_2 - x_1) - \theta_{x_1})  F_{\cal D}(x_1).$$
\end{proof}

\begin{lemma}
    \label{lem:integralR}
    The integral $I^R(x)$ is a strictly increasing function of $x$ when the loss function is \typeN, and is a strictly decreasing function of $x$ when the loss function is \typeP. 
\end{lemma}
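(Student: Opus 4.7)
The plan is to prove both directions by a direct domain-splitting argument. For any $x_1, x_2 \in \Omega$ with $x_1 < x_2$, I would write
\[
I^R(x_2) - I^R(x_1) = \int_{\Omega \cap [x_2, \infty)} \bigl(\ell(x_2, v) - \ell(x_1, v)\bigr) \fD(v)\, d\mu(v) \;-\; \int_{\Omega \cap [x_1, x_2]} \ell(x_1, v) \fD(v)\, d\mu(v),
\]
obtained by subtracting the two integrals and splitting the $x_1$-integral at $x_2$. Both terms will turn out to have the same sign, so their combination determines the monotonicity of $I^R$.

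In the \typeN case, the first integrand is strictly positive because assumption \A{1} gives that $\ell(\cdot,v)$ is strictly increasing in its first argument. In the second integral, we have $v \geq x_1$, so by the sign property of \typeN loss functions ($\ell(x,v) \leq 0$ when $x \leq v$, with equality only at $v = x_1$) the integrand is non-positive, and therefore the subtracted term is non-positive. Hence $I^R(x_2) - I^R(x_1) > 0$. In the \typeP case, for $v \geq x_2$ we have $|v - x_2| < |v - x_1|$, so the $|v-x|$-monotonicity from \A{1} gives $\ell(x_2, v) < \ell(x_1, v)$, making the first integrand strictly negative; in the second integral, $\ell(x_1, v) \geq 0$ (since \typeP is non-negative), so the subtracted term is non-negative, and again the two contributions have the same sign, yielding $I^R(x_2) - I^R(x_1) < 0$.

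The only subtlety is strictness: to upgrade ``non-positive'' and ``non-negative'' to strict inequalities I need $\fD$ to place positive mass on $[x_2, \infty) \cap \Omega$ (for the first integral) or on $(x_1, x_2) \cap \Omega$ (for the second); since $\fD$ integrates to $1$ and $x_1, x_2$ lie in $\Omega$, at least one of these holds whenever $x_1, x_2$ are in the effective support of $\fD$, which is the only range in which the monotonicity claim has nontrivial content. Well-definedness of the integrals follows from \A{3}. The main ``obstacle'' is therefore purely bookkeeping: keeping the two cases of the loss function separate and tracking how the sign property versus the $|v-x|$ monotonicity property each control one of the two pieces; no further ingredients beyond \A{1} and \A{3} are required.
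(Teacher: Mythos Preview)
Your proposal is correct and follows essentially the same approach as the paper: the paper uses the identical decomposition
\[
I^R(x_2) - I^R(x_1) = - \int_{x_1}^{x_2} \ell(x_1,v) \fD(v)\, d\mu(v) + \int_{[x_2, \infty) \cap \Omega} (\ell(x_2,v)-\ell(x_1,v)) \fD(v)\, d \mu(v),
\]
and then applies exactly the sign arguments you describe for the \typeN and \typeP cases. Your discussion of strictness is in fact slightly more careful than the paper's, which simply asserts the strict inequalities without commenting on the support of $\fD$.
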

\begin{proof}
    Consider $x_1, x_2 \in \Omega$ with $ x_1 < x_2$. 
    Then 
     $$I^R(x_2) - I^R(x_1) = - \int_{x_1}^{x_2} \ell(x_1,v) \fD(v) d\mu(v) +  \int_{[x_2, \infty) \cap \Omega} (\ell(x_2,v)-\ell(x_1,v)) \fD(v) d \mu(v).$$
First, consider the case of \typeN loss function. 
 Then for all $v \geq x_1$, $\ell(x_1, v) \leq \ell(v,v)=0$, and hence, the first integrand above is positive.
 We also know that for all $v$, $\ell(x_2,v) \geq \ell(x_1,v)$, and hence, the second integrand above is also positive. 
 This shows that $I^R(x)$ is an increasing function of $x$ when the loss function is \typeN. 

Now consider the case when the loss function is \typeP. 
For any value $v \geq x_1$, $\ell(x_1, v) \geq \ell(v,v) = 0$. 
Similarly, for $v \geq x_2$, $\ell(x_2, v) < \ell(x_1, v)$. 
Thus, $I^R(x_2) < I^R(x_1)$ in this case. 
\end{proof}
Combining~\Cref{lem:integralL} and~\Cref{lem:integralR}, we get the main result about the variation of $I(x)$.
\begin{theorem}[\bf Monotonicity of $I(x)$ with respect to $x$]
    \label{thm:I}
    Assume that conditions~\A{0}--\A{4} hold. 
    The function $I(x)$ is a continuous function of $x$. 
    For a \typeN loss function, $I(x)$ is a monotonically strictly increasing function of $x$, with $I(x)$ going to $-\infty$ as $x$ goes to $-\infty$ (assuming $\Omega$ is unbounded from below). 
    For a \typeP loss function, $I(x)$ is the sum of an increasing and a decreasing function,  and has a global minimum on $\Omega$. 
    Further, $I(x)$ goes to $\infty$ as $x$ goes to $\infty$ or $-\infty$ (assuming these values lie in $\Omega$).
\end{theorem}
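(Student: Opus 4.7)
The plan is to derive all three claims from the decomposition $I(x) = I^L(x) + I^R(x)$ together with \Cref{lem:integralL} and \Cref{lem:integralR}. For continuity, I would fix $x_0 \in \Omega$ and a neighborhood $[x_0-\delta, x_0+\delta] \subseteq \Omega$, observe that $\ell_\alpha(\cdot,v)$ is continuous at every $x$ because its two pieces match on the diagonal (since $\ell(v,v)=0$ by \A{1}), and use the monotonicity piece of \A{1} to bound $|\ell_\alpha(x,v)| \leq \alpha(|\ell(x_0-\delta,v)| + |\ell(x_0+\delta,v)|)$ uniformly in $x$ over the neighborhood. Integrability (\A{3}) makes this bound $\fD$-integrable, so dominated convergence delivers continuity of $I$.

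For a \typeN loss, \Cref{lem:integralL} and \Cref{lem:integralR} both give strictly increasing summands, so $I$ is strictly increasing. For $I(x)\to -\infty$ as $x\to -\infty$, the plan is to show $I^R$ is the dominant term. Pick a fixed $x_0 \in \Omega$ with $1-F_\cD(x_0) > 0$ (such a point exists because $\fD$ is a density on $\Omega$); for $x < x_0 - C$ and $v \geq x_0$, assumption \A{2} together with strict monotonicity of $\ell(\cdot, v)$ yields $\ell(x,v) \leq \ell(x_0,v) - (\ln(x_0-x) - \theta_{x_0})$. Since for \typeN the integrand of $I^R$ is nonpositive on the whole region $[x, \infty)$, I can drop the $[x, x_0)$ portion without decreasing the value and then integrate the displayed inequality over $[x_0,\infty)\cap\Omega$ against $\fD$; the $\ln(x_0-x)$ term then drives $I^R(x) \to -\infty$. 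Meanwhile $0 \leq I^L(x) \leq \alpha\int_{(-\infty,x]} \ell(x_0, v)\fD(v)\,d\mu(v) \to 0$ by absolute continuity of integration (the integrand is dominated by $\ell(x_0,\cdot)\fD$, which lies in $L^1(\mu)$ by \A{3}).

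For a \typeP loss, \Cref{lem:integralL} gives strict increase of $I^L$ and \Cref{lem:integralR} gives strict decrease of $I^R$, which is the decomposition claim. Both summands are nonnegative since \typeP loss satisfies $\ell \geq 0$. For $I(x)\to\infty$ as $x\to\infty$, I would apply \Cref{lem:integralL} with reference point $x_1 = m+R$ taken from \A{4} (so $F_\cD(m+R)\geq 1/2$), obtaining $I^L(x)\geq (\alpha/2)(\ln(x-m-R) - \theta_{m+R})\to\infty$, and nonnegativity of $I^R$ finishes that limit. For $x\to-\infty$, the symmetric counterpart of \Cref{lem:integralL} for $I^R$, derived by repeating its proof on the right tail using $1-F_\cD(m-R)\geq 1/2$, gives $I^R(x)\to\infty$. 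Continuity of $I$ combined with divergence at both ends (or finite values at finite endpoints of $\Omega$) then forces a global minimum to exist on $\Omega$.

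The main obstacle is the \typeN limit as $x \to -\infty$: assumption \A{2} is asymmetric through its $x_1$-dependent constant $\theta_{x_1}$ and only applies when $v$ lies outside the open interval between the two evaluation points. The clean resolution is to fix the reference point $x_0$ once up front (so that $\theta_{x_0}$ becomes a fixed constant), restrict the integration to $v \geq x_0$, where the hypothesis of \A{2} automatically holds as soon as $x < x_0 - C$, and exploit the sign control $\ell \leq 0$ on the $I^R$-region to discard the remaining part of the integration domain without reversing the direction of the inequality.
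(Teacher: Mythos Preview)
Your approach is correct and follows the same structure as the paper's proof: both rely on the decomposition $I = I^L + I^R$ together with \Cref{lem:integralL} and \Cref{lem:integralR} for the monotonicity claims. The paper handles the limiting behavior more tersely via the monotone convergence theorem, while you supply explicit quantitative bounds from \A{2} and \A{4} (and a dominated-convergence argument for continuity that the paper simply asserts); your version is more detailed but not materially different.
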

\begin{proof}
First, consider the case of \typeN loss functions. 
It follows from~\Cref{lem:integralL} and~\Cref{lem:integralR} that $I(x)$ is a strictly increasing function of $x$.
The integrand in $I^L(x)$ is ${\bf I}[x \geq v] \ell(x,v) \fD(v),$ where ${\bf I}[\,]$ denotes the indicator function. 
As shown in the proof of~\Cref{lem:integralL}, this is a monotonically decreasing function of $x$. Therefore, 
 the monotone convergence theorem~\cite{rudin} implies that $I^L(x)$ goes to 0 as $x$ goes to $-\infty$. Similarly,  $I^R(x)$ goes to $-\infty$ as $x$ goes to $-\infty$.
Similarly, in the case of \typeP loss function, $I^L(x)$ goes to $0$ and $I^R(x)$ goes to $\infty$ as $x$ goes to $-\infty$. 
Thus, $I(x)$ goes to $\infty$. 
Similarly, $I(x)$ goes to $\infty$ as $x$ goes to $\infty$. 
Since $I(x)$ is the sum of an increasing and decreasing function, and is infinite as $x$ goes to $\infty$ or $-\infty$, it must have a global minimum on $\Omega$. 
\end{proof}

\begin{corollary}
    \label{cor:I}
    Consider an instance $\cI \coloneqq (\Omega, \fD, \ell, \alpha, \tau)$ of~\ref{primalform}. If the loss function is $\typeP$, then the optimal value is always finite. If the loss function is $\typeN$ and the optimal value is finite, 
     then $\Omega$ is of the form $[a, \infty)$ or $[a,b]$ for some $a, b \in \R$. 
\end{corollary}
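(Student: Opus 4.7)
The plan is to split into the two cases of the loss function and, in each case, extract a two-sided bound on the optimal value from \Cref{thm:I}. For the \typeP case I will produce a finite upper bound (by exhibiting a feasible density with finite error) and a finite lower bound (from the nonnegativity of the loss); for the \typeN case I will argue by contradiction: if $\Omega$ is unbounded from below, the strict monotonicity of $I(x)$ together with $\lim_{x \to -\infty} I(x) = -\infty$ will let me construct a sequence of feasible densities whose error goes to $-\infty$.

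\textbf{Case 1 (\typeP).} Here assumption \A{1} gives $\ell(x,v) \geq 0$, so $\ell_\alpha(x,v) \geq 0$, hence $I(x) = \int_\Omega \ell_\alpha(x,v)\, \fD(v)\, d\mu(v) \geq 0$ pointwise. For any feasible density $f$, the error $\err_{\ell,\alpha}(\fD,f) = \int_\Omega I(x) f(x)\, d\mu(x)$ is nonnegative, so the primal optimum is bounded below by $0$. For the upper bound, I would invoke \Cref{cl:finite} (which does not use \typeP anywhere): the uniform density on any interval $I \subseteq \Omega$ of length $e^\tau$, guaranteed to exist by \A{0}, is feasible and has finite error. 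Combining the two bounds, the optimum is finite.

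\textbf{Case 2 (\typeN, contrapositive).} Assume $\Omega$ is unbounded from below, i.e.\ $\Omega = (-\infty,b]$ or $\Omega = \R$; I will show the optimal value is $-\infty$. By \Cref{thm:I}, for a \typeN loss, $I$ is continuous, strictly increasing on $\Omega$, and $I(x) \to -\infty$ as $x \to -\infty$. Given any $N > 0$, pick $x_0 \in \Omega$ small enough that $I(x_0) \leq -N$; by monotonicity, $I(x) \leq -N$ for every $x \leq x_0$. Since $\Omega$ is unbounded below, the interval $J_N \coloneqq [x_0 - e^\tau,\, x_0]$ lies entirely in $\Omega$. Let $f_N$ be the uniform density on $J_N$; then $\int_\Omega f_N\, d\mu = 1$ and $\ent(f_N) = \ln e^\tau = \tau$, so $f_N$ is feasible. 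Moreover
\[
\err_{\ell,\alpha}(\fD, f_N) \;=\; \int_{J_N} I(x)\, f_N(x)\, d\mu(x) \;\leq\; -N.
\]
Letting $N \to \infty$ forces the primal optimum to be $-\infty$, contradicting our hypothesis that it is finite. Hence $\Omega$ must be bounded below, i.e.\ $\Omega$ is of the form $[a,\infty)$ or $[a,b]$.

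\textbf{Main obstacle.} The arguments are short once \Cref{thm:I} is in hand; the only subtlety I anticipate is verifying that the test density $f_N$ in Case~2 has a well-defined (rather than $\infty - \infty$) error integral. This is clean because \A{3} ensures $\ell_\alpha(x,\cdot)\fD(\cdot) \in L^1(\mu)$ for each $x$, so $I(x)$ is finite pointwise, and the uniform $f_N$ integrates $I$ over the bounded set $J_N$ against a bounded density—so the integral is finite and the bound $I(x) \leq -N$ on $J_N$ transfers directly to the error.
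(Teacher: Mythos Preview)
Your proposal is correct and follows essentially the same approach as the paper: in both cases you take a uniform density on a translating interval of length $e^\tau$ and use the monotonicity/continuity of $I(x)$ from \Cref{thm:I} to bound the error, exactly as the paper does. Your treatment is slightly more explicit about the lower bound in the \typeP case (via $\ell_\alpha \geq 0$), which the paper leaves implicit, but otherwise the arguments coincide.
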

\begin{proof}

    First, consider the case when the loss function is~$\typeP$. We exhibit a solution $f$ with finite objective value. Let $f$ be the uniform distribution over an interval $A$ of length $e^\tau$ (by assumption~\A{0}, such an interval always exists). Since $I(x)$ is continuous, it achieves a maximum value on $A$ -- let this value be $p$. Then $\err_{\ell, \alpha}(\fD, f) \leq p$. Thus, we see that the optimal value for this instance is finite. 

    Now we prove the second assertion. 
    Suppose, for the sake of contradiction, that the loss function is  $\typeN$ and $\Omega$ is unbounded from below. 
    We claim that the optimal value for this instance is $-\infty$, which will be a contradiction.  
    To see this, consider a density $f \in \dom$ which is uniform on an interval $A=[s,t]$ of length $e^\tau$ (by assumption~\A{0}). 
    The entropy of this density is $\tau$ and, hence, this is a feasible solution to the instance $\cI$. 
    However, 
    $$ \err_{\ell, \alpha}(\fD, f) = \int_\Omega I(x) f(x) d \mu(x) = \frac{1}{e^\tau} \int_A I(x) d \mu(x) 
    \leq I(t). $$
    Thus, we can keep moving the interval $A$ to the left, which would mean that $I(t)$ would tend to $-\infty$ (using~\Cref{thm:I}). This shows that the optimal value of this instance is $-\infty$. 
    \end{proof}
We shall use the following fact about the finiteness of optimal value when condition~\A{5} is also satisfied. 
\begin{claim}
\label{cl:finite2}
    Consider an instance $\cI=(\Omega, \fD, \ell, \alpha, \tau)$ of~\ref{primalform} satisfying conditions~\A{0}--\A{5}. Then the optimal value is always finite. 
\end{claim}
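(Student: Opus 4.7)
The plan is to split on whether the loss is \typeP or \typeN, use \Cref{thm:I} together with assumption~\A{5} to rule out the one pathological case flagged by \Cref{cor:I} in which the optimum could equal $-\infty$, and then reuse the uniform-density construction from \Cref{cl:finite} for the matching upper bound.

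In the \typeP case there is nothing more to do: \Cref{cor:I} already asserts that the optimal value is finite. As a sanity check, since $\ell(x,x)=0$ and $\ell(x,v)$ strictly increases with $|x-v|$ by \A{1}, the risk-averse loss $\ell_\alpha$ is pointwise nonnegative, so $\err_{\ell,\alpha}(\fD,f)\geq 0$ for every feasible $f$, giving the lower bound trivially.

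In the \typeN case I would first argue that $\Omega$ must be bounded from below. Indeed, if $\Omega$ extended to $-\infty$, then by \Cref{thm:I} the function $I(x)$ would be strictly increasing with $I(x)\to -\infty$ as $x\to -\infty$, so $I$ would admit no global minimum on $\Omega$, contradicting~\A{5}. Hence $\Omega$ must be of the form $[a,\infty)$ or $[a,b]$, and \A{5} supplies a unique global minimizer $x^\star\in\Omega$ whose value $I(x^\star)=\int_\Omega \ell_\alpha(x^\star,v)\fD(v)\,d\mu(v)$ is finite by~\A{3}. For any feasible density $f$ this yields
\[
\err_{\ell,\alpha}(\fD,f) \;=\; \int_\Omega I(x)\,f(x)\, d\mu(x) \;\geq\; I(x^\star) \int_\Omega f(x)\, d\mu(x) \;=\; I(x^\star),
\]
a finite lower bound. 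For the matching upper bound, \A{0} guarantees an interval $J\subseteq\Omega$ of length (strictly greater than, hence at least) $e^\tau$, and the uniform density on a length-$e^\tau$ sub-interval of $J$ is feasible with finite error, exactly as shown in the proof of \Cref{cl:finite}.

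The only subtle point---and the main obstacle, such as it is---is noticing that \A{5} is doing the real work: it silently forbids the \typeN plus $\Omega$-unbounded-below configuration that \Cref{cor:I} identified as the unique way for the primal optimum to equal $-\infty$. Once this incompatibility is spotted, the remaining argument is a direct synthesis of \Cref{cl:finite}, \Cref{cor:I}, \Cref{thm:I}, and \A{3}, with no further technical obstruction.
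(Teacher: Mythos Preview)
Your proposal is correct and follows essentially the same approach as the paper: split on \typeP versus \typeN, invoke \Cref{cor:I} for the former, and in the latter case use \Cref{thm:I} together with \A{5} to force $\Omega$ to be bounded below. Your write-up is in fact more explicit than the paper's, which simply asserts that the optimal value is finite once $\Omega$ is bounded below without spelling out the lower bound $\err_{\ell,\alpha}(\fD,f)\geq I(x^\star)$ that you derive.
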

\begin{proof}
    If the loss function is $\typeP$, this follows from~\Cref{cor:I}. In the case of a $\typeN$ loss function, the optimal value is finite unless $\Omega$ is unbounded from below. In this case,~\Cref{thm:I} shows that $\argmin_{x \in \Omega} I(x)$ does not exist, and hence, assumption~\A{5} is violated. 
\end{proof}

\subsection{Positivity of the optimal dual variable}
\label{subsec:positivity}
We now prove that the optimal dual variable $\gammas$ is strictly positive. In this section, we assume that assumptions~\A{0}--\A{5} hold. 
For this, we need certain technical results. 

\begin{lemma}
    \label{lem:tech1}
    Consider an instance $\cI=(\Omega, \fD, \ell, \alpha, \tau)$ of~\ref{primalform} with the loss function being \typeP. 
    Let $x^\star$  be $\argmin_{x \in \Omega} I(x)$. 
    Then there is a value $\eta > 0$, such that any feasible solution $f$ must have $\Err_{\ell, \alpha}(\fD, f) \geq I(x^\star) + \eta$. 
\end{lemma}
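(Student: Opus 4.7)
The plan is to argue by contradiction. Suppose no such $\eta > 0$ exists; then there is a sequence $\{f_n\}$ of feasible densities for the instance $\cI$ with $\eta_n \coloneqq \Err_{\ell,\alpha}(\fD, f_n) - I(x^\star) \to 0$. Since $\Err_{\ell,\alpha}(\fD, f_n) = \int_\Omega I(x) f_n(x)\, d\mu(x)$ and $I(x) \geq I(x^\star)$ for all $x \in \Omega$, we can rewrite $\eta_n = \int_\Omega (I(x) - I(x^\star)) f_n(x)\, d\mu(x)$. Fix any $\delta > 0$. By assumption~\A{5}, there exists $\eps_\delta > 0$ with $I(x) - I(x^\star) \geq \eps_\delta$ whenever $|x - x^\star| > \delta$. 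Bounding the integral on the complement of $A_\delta \coloneqq [x^\star - \delta, x^\star + \delta] \cap \Omega$ yields $f_n(\Omega \setminus A_\delta) \leq \eta_n/\eps_\delta \to 0$, i.e.\ the mass of $f_n$ concentrates around $x^\star$.

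Next, I would pick $\delta$ small enough that $\ln(2\delta) < \tau$; this is possible for any real $\tau$ by taking $\delta < e^{\tau}/2$. Set $B = \Omega \setminus A_\delta$, let $q_n = f_n(B) \to 0$, and let $g_{\mathrm{in},n}$ and $g_{\mathrm{out},n}$ be the densities of $f_n$ conditioned on $A_\delta$ and $B$. The standard decomposition of differential entropy reads
\[
\ent(f_n) \;=\; h(q_n) + (1-q_n)\,\ent(g_{\mathrm{in},n}) + q_n\,\ent(g_{\mathrm{out},n}),
\]
where $h(q) = -q\ln q - (1-q)\ln(1-q)$ is the binary entropy. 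Since $A_\delta$ has Lebesgue measure at most $2\delta$, the uniform-maximizes-entropy bound gives $\ent(g_{\mathrm{in},n}) \leq \ln(2\delta)$. Combined with $\ent(f_n) \geq \tau$, this forces
\[
q_n\,\ent(g_{\mathrm{out},n}) \;\geq\; \tau - h(q_n) - (1-q_n)\ln(2\delta),
\]
whose right-hand side tends to $\tau - \ln(2\delta) > 0$. Hence $q_n\,\ent(g_{\mathrm{out},n}) \geq c_0$ for some constant $c_0 > 0$ and all large $n$.

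To close the argument, I would upper bound $\ent(g_{\mathrm{out},n})$ by a Gibbs inequality: for any $\beta > 0$ with $Z_\beta \coloneqq \int_B e^{-\beta I(x)} d\mu(x) < \infty$, the nonnegativity of KL divergence from $g_{\mathrm{out},n}$ to $Z_\beta^{-1} e^{-\beta I}$ gives
\[
\ent(g_{\mathrm{out},n}) \;\leq\; \beta \int_B I(x)\, g_{\mathrm{out},n}(x)\, d\mu(x) + \ln Z_\beta.
\]
Since $I \geq I(x^\star)$ on $A_\delta$, splitting the error integral yields $\int_B I\, g_{\mathrm{out},n}\, d\mu \leq I(x^\star) + \eta_n/q_n$. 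Substituting and multiplying by $q_n$ gives $q_n\,\ent(g_{\mathrm{out},n}) \leq \beta I(x^\star) q_n + \beta \eta_n + q_n \ln Z_\beta \to 0$, contradicting $q_n\,\ent(g_{\mathrm{out},n}) \geq c_0 > 0$.

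The main obstacle is verifying $Z_\beta < \infty$ for some $\beta > 0$. When $\Omega$ is bounded this is immediate since $I$ is continuous. For unbounded $\Omega$ I would invoke~\Cref{lem:integralL} applied at a point $x_1$ where $F_\cD(x_1)$ is bounded away from $0$ (which exists by~\A{4}, e.g.\ $x_1 = m+R$), together with a symmetric bound on the left tail obtained by applying~\A{2} to $I^R$, to conclude that $I(x) \geq c\,\ln|x| - c'$ for some $c > 0$ as $|x| \to \infty$. Combined with $I \geq 0$ for a $\typeP$ loss, this makes $e^{-\beta I}$ integrable on $B$ for every $\beta > c^{-1}$. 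This is precisely where the logarithmic growth hypothesis~\A{2} is essential: as the remark following~\A{2} notes, without it one can construct feasible densities that place vanishing mass on a very wide interval far from $x^\star$, driving the error arbitrarily close to $I(x^\star)$ and ruling out any positive gap $\eta$.
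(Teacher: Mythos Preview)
Your argument is correct and takes a genuinely different route from the paper's proof. The paper argues constructively: it fixes a small $\varepsilon$, builds a family of annuli $A_0, A_1, \dots$ around $x^\star$ whose lengths grow doubly exponentially, and balances the entropy gained on each annulus against the loss accumulated there (using \A{2} and \A{4} to lower bound $I$ on the far annuli). This yields, for any feasible $f$ that places more than $1-\varepsilon$ mass in $[x^\star-\varepsilon,x^\star+\varepsilon]$, an explicit lower bound on $\Err_{\ell,\alpha}(\fD,f)$, and the desired $\eta$ is read off from this dichotomy.

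Your approach replaces the annulus bookkeeping with two clean functional-analytic tools: the standard entropy decomposition $\ent(f_n)=h(q_n)+(1-q_n)\ent(g_{\mathrm{in},n})+q_n\ent(g_{\mathrm{out},n})$ together with the uniform bound $\ent(g_{\mathrm{in},n})\le\ln(2\delta)$, and then a Gibbs variational inequality against the reference density $\propto e^{-\beta I}$ on $B$ to cap $\ent(g_{\mathrm{out},n})$ by $\beta\int_B I\,g_{\mathrm{out},n}+\ln Z_\beta$. This packages exactly the same tension (entropy forces spreading, but spreading costs $I$-mass) into one line, and it avoids any reference to the optimal value $p^\star$, which the paper's proof uses as an anchor. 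Where the paper handles the growth of $I$ annulus by annulus, you encode it once in the finiteness of $Z_\beta$; your sketch of why $Z_\beta<\infty$ via \Cref{lem:integralL} at $x_1=m+R$ for the right tail and the symmetric \A{2} bound at $x_1=m-R$ for the left tail is correct, and indeed gives $I(x)\ge \tfrac12\ln|x|-c'$ for large $|x|$, so any $\beta>2$ works. Your route is more conceptual and dovetails nicely with the Gibbs structure the paper later establishes in \Cref{thm:mainopt} and \Cref{thm:gibbs}; the paper's route is more self-contained and makes the role of the half-radius $R$ in \A{4} explicit at the level of individual shells.
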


\begin{proof}
    Let $x^\star$ denote the unique global minimum of $I(x)$ on $\Omega$ (using assumption~\A{5}). 
    We show that there is a small enough value $\varepsilon > 0$ such that any feasible solution $f$ must place at least $\varepsilon$ amount of probability mass outside $I_\varepsilon\coloneqq[x^\star-\varepsilon, x^\star+\varepsilon]$.
    This suffices for the following reason. 
    We know by the assumption~\A{5} that there is a positive value $\zeta$ such that $I(x) > I(x^\star) + \zeta$ for all $x \notin I_\varepsilon$. 
   Thus, 
\begin{align*}
    \Err_{\ell, \alpha}(\fD, f) & = \int_{\Omega \cap I_\varepsilon} I(x) f(x) d \mu(x) + \int_{\Omega \setminus I_\varepsilon} I(x) f(x) d \mu(x) \\
    & \geq I(x^\star) + \varepsilon  \zeta.
\end{align*}
 Hence, we can choose the desired value $\eta$ to be $\varepsilon \zeta.$

It remains to find such a $\varepsilon$. 
We consider such a value $\varepsilon$ and assume that a feasible solution $f$ places strictly more than $1-\varepsilon$ probability mass inside $I_\varepsilon$. 
We need one notation: For an interval $A$, define
$$\ent_A(f) \coloneqq -\int_{\Omega \cap A} f(x) \ln f(x) d \mu(x).$$
Let $m$ be the mean of $\fD$ and let $R$ be the half-radius of $\fD$, i.e., $\fD$ places at least half of its mass in the interval $[m-R, m+R] \cap \Omega$. 
Let $A_0$ be a large enough interval containing $x^\star$ and the interval $[m-R, m+R]$ -- assume that the end-points of $A_0$ are at least $A_0/2$ away from any point in $[m-R,m+R]$. 
We consider intervals of growing size around $A_0$. 
Let $A_0\coloneqq[s_0, t_0]$. 
Let $A_1$ be the union of the intervals on both sides of $A_0$, each of length $e^{2/\varepsilon}$. 
Similarly, having defined $A_{i-1}$ (which will be a union of two intervals), define $A_{i}$ to be the union of two intervals on both sides of $A_{i-1}$, each of length $e^{2^i}$.  

Consider the feasible solution $f$. 
Let $\beta_i$ be the total probability mass placed by $f$ on $A_i$. 
Thus, 
$$ \ent(f) = \ent_{I_\varepsilon}(f) + \ent_{A_0 \setminus I_\varepsilon}(f)  + \sum_{i \geq 1} \ent_{A_i} (f).$$
We bound each of the terms above. 
Note that $\ent_{A_i}(f)$ is  maximized when we distribute $\beta_i$ mass uniformly over $A_i$. 
Similarly, $\ent_{I_\varepsilon}(f)$ is at most $\ln(\varepsilon)$ and $\ent_{A_0 \setminus I_\varepsilon} \leq \ln |A_0| =  \ln D$.
Thus, we see that 
$$ \tau \leq \ent(f) \leq \ln (\varepsilon D) + \sum_{i \geq 1} \beta_i \ln \frac{|A_i|}{\beta_i} \leq \ln (\varepsilon D) + \sum_{i \geq 1} \beta_i \ln |A_i|.$$
In other words, 
$$ \sum_{i \geq 1} \beta_i \ln |A_i| \geq \tau - \ln (\varepsilon D).$$
Observe that we can still choose $\varepsilon$ and, hence, we will choose it such that the r.h.s. above becomes as large as we want (ideally, much larger than $p^\star$). 
Observe that any point in $A_i$ is at least $|A_{i-1}|$ distance away from any point in $[m-R, m+R]$. 
Therefore, (and this is where we use non-negativity of the loss function, which has been assumed to be $\typeP$)
$$ \err_{\ell, \alpha}(\fD, f) \geq \sum_i \frac{\beta_i}{2} \left( \ln |A_{i-1}| - \theta_x \right)  \geq \sum_i \frac{\beta_i \ln |A_i|}{4} - \theta_x, $$
where we have used the fact that $|\ln |A_i| = 2 \ln|A_{i-1}|.$
It follows from the above two inequalities that if we choose $\varepsilon = \left(\frac{e^\tau}{4 \theta_x p^\star D}\right)^4,$
then $\err_{\ell, \alpha}(\fD, f) \geq 2p^\star$. 
Hence, either $f$ places less than $1-\varepsilon$ mass inside $I_\varepsilon$ or its objective function value is more than $2p^\star$. 
\end{proof}
The above proof only worked for \typeP loss functions. 
We need a similar result for \typeN loss functions. 
The ideas are similar, but we need to use the function $I(x)$  in a more subtle manner. 

\begin{lemma}
    \label{lem:tech2}
    Consider an instance $\cI=(\Omega, \fD, \ell, \alpha, \tau)$ of~\ref{primalform} with the loss function being \typeN. 
    Let $x^\star$  be $\argmin_{x \in \Omega} I(x)$. 
    Then there is a value $\eta > 0$, such that any feasible solution $f$ must have $\Err_{\ell, \alpha}(\fD, f) \geq I(x^\star) + \eta$. 
\end{lemma}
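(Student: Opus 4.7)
The proof will run in close parallel with that of \Cref{lem:tech1}, with two adjustments forced by the $\typeN$ setting. First, by \Cref{cor:I} the finiteness of the optimal value constrains $\Omega$ to be $[a,\infty)$ or $[a,b]$, and by \Cref{thm:I} the function $I(x)$ is strictly increasing, so the unique global minimum is at the left endpoint $x^\star = a$. The bounded case $\Omega=[a,b]$ follows from a compactness argument combined with \A{5} (since $I$ is continuous on a compact set and any feasible $f$ has entropy at least $\tau$, so cannot fully concentrate at $a$), so I focus on $\Omega=[a,\infty)$. The plan, as in the $\typeP$ case, is to fix a small $\varepsilon>0$ and set $I_\varepsilon \coloneqq [x^\star, x^\star+\varepsilon]$, then split on whether a feasible $f$ places at most or more than $1-\varepsilon$ mass in $I_\varepsilon$. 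If $f(\Omega\setminus I_\varepsilon)\geq \varepsilon$, then \A{5} (with $\delta=\varepsilon$) yields $\zeta_\varepsilon>0$ with $I(x)\geq I(x^\star)+\zeta_\varepsilon$ on $\Omega\setminus I_\varepsilon$, whence $\err_{\ell,\alpha}(\fD,f)\geq I(x^\star) + \varepsilon \zeta_\varepsilon$.

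The hard sub-case is $f(I_\varepsilon) > 1-\varepsilon$, where I adapt the annulus construction of \Cref{lem:tech1} to the one-sided domain. Using \A{4}, pick $x_1 = m+R\in\Omega$, so $F_\cD(x_1)\geq 1/2$. Let $A_0\coloneqq [a,T_0]$ with $T_0$ large enough to contain $x_1+C$ and $I_\varepsilon$, and for $i\geq 1$ set $A_i\coloneqq [T_{i-1},T_i]$ with $|A_i|=e^{2^i}$, so $\ln|A_{i-1}| = \tfrac12\ln|A_i|$; let $\beta_i\coloneqq f(A_i)$, so $\sum_{i\geq 1}\beta_i \leq \varepsilon$. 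The concavity-based entropy bound from the $\typeP$ proof transfers (with the annuli now only on one side), yielding
\[
\tau \leq \ent(f) \leq \ln(\varepsilon T_0) + \sum_{i\geq 1} \beta_i \ln|A_i| + O(\varepsilon\ln(1/\varepsilon)),
\]
so $\sum_{i\geq 1}\beta_i\ln|A_i|$ is $\Omega(\ln(1/\varepsilon))$ for small $\varepsilon$.

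The main new ingredient is handling the possibly \emph{negative} $I^R(x)$ in the loss lower bound, where the $\typeP$ proof relied on non-negativity of the loss. Here \Cref{lem:integralR} shows $I^R$ is strictly increasing for $\typeN$ loss, so $I^R(x)\geq I^R(a)$, a finite constant by \A{3}. Combined with \Cref{lem:integralL} applied at $x_1$, for $x\in A_i$ with $i\geq 1$,
\[
I(x) \geq \alpha F_\cD(x_1)\bigl(\ln(x-x_1)-\theta_{x_1}\bigr) + I^L(x_1) + I^R(a) \geq \tfrac{\alpha}{4}\ln|A_{i-1}| - C_0,
\]
for a fixed constant $C_0$ (using $x-x_1 \geq \tfrac12|A_{i-1}|$ for $T_0$ large). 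Summing over $i\geq 1$, lower bounding the $I_\varepsilon$-contribution by $I(x^\star)(1-\varepsilon)$, and invoking the entropy bound gives
\[
\err_{\ell,\alpha}(\fD,f) \geq I(x^\star) + \tfrac{\alpha}{8}\bigl(\tau - \ln(\varepsilon T_0)\bigr) - C_1,
\]
which exceeds $I(x^\star) + 1$ once $\varepsilon$ is chosen small enough. Taking $\eta\coloneqq \min(\varepsilon \zeta_\varepsilon,1)$ completes the argument in both sub-cases. The main obstacle is precisely this negativity of $I^R$: without the monotonicity of \Cref{lem:integralR}, $I^R(x)$ could in principle cancel the logarithmic growth of $I^L$ guaranteed by \A{2}, and the argument would collapse exactly as the slow-growth example in the remark following \A{2}.
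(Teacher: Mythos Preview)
Your proof is correct and follows essentially the same approach as the paper's: one-sided annuli anchored at $x^\star=a$, the entropy bound forcing $\sum_{i\geq 1}\beta_i\ln|A_i|$ to be large, and the key use of monotonicity of $I^R$ for \typeN loss (\Cref{lem:integralR}) to get the floor $I^R(x)\geq I^R(a)$ that replaces the nonnegativity of the loss used in the \typeP case. Your packaging---combining $I^L$ and $I^R$ into the pointwise bound $I(x)\geq \tfrac{\alpha}{4}\ln|A_{i-1}|-C_0$ before summing, and treating the bounded domain $[a,b]$ separately via compactness---is slightly tidier than the paper's exposition, but the argument is the same.
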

\begin{proof}
    As in the proof of~\Cref{lem:tech1}, we would like to argue that 
    there is a positive $\varepsilon > 0$ such that any feasible solution places at least $\varepsilon$ amount of mass outside the interval $I_\varepsilon$. 
    Since the optimum is finite,~\Cref{cor:I}
    shows that $\Omega$ is bounded from below, i.e., it is of the form $[a,b]$ or $[a, \infty)$.~\Cref{thm:I} now implies that $x^\star=a$.

    Again, define the sets $A_i$ as in the proof of~\Cref{lem:tech1}, 
    but now, we make $A_0$ start from the lower limit $a$ of $\Omega$. 
    Thus, each of the $A_i$ will now be a single interval, with $A_i$ being to the right of $A_{i-1}$. 
    Now we use the fact that when the loss function is \typeN, both the functions $I^L(x)$ and $I^R(x)$ are monotonically increasing (\Cref{lem:integralL} and~\Cref{lem:integralR}). 
    Further, for a point $x \in A_i$,~\Cref{lem:integralL} shows that 
    $$I^L(x) - I^L(m+R) \geq \frac{\alpha}{2} \left( \ln|A_{i-1}| - \theta_{m+R}), \right)$$
    because the c.d.f. of $\fD$ at $m+R$ is at least $1/2$. 
    Hence, 
    $$\int_{A_i \cap \Omega} (I^L(x)-I^L(x^\star) f(x) d \mu(x) \geq \frac{\alpha \beta_i}{2} (\ln |A_{i-1}| - \theta_{m+R} ).$$
    Summing the above over all $i \geq 1$, and proceeding as in the proof of~\Cref{lem:tech1}, we see that by choosing a small enough $\varepsilon,$ (here $t_0$ denotes the right end-point of $A_0$)
    $$ \sum_{[t_0, \infty) \cap \Omega} I^L(x) f(x) d\mu(x) - I^L(x^\star)$$
    can be made larger than a desired quantity $Q$, which depends only on the parameters of the input instance $\cI$. 
    Since $I^R(x)$ is an increasing function of $x$, we also get 
    $$ \int_{[t_0, \infty) \cap \Omega} I(x) f(x) d \mu(x) \geq I(x^\star) + Q.$$
    One remaining issue is that the integral on the l.h.s. does not include the terms for $[a, t_0)$ (recall that $I(x)$ can be negative). 
    However, observe that $I(x)$ is an increasing function of $x$, and hence $I(x) \geq I(a)$ for all $x$. 
    Therefore, 
    \begin{align*}
    \err_{\ell, \alpha}(\fD,f) & = \int_\Omega I(x) f(x) d \mu(x) \\
    & = 
    \int_{a}^{t_0} I(x) f(x) d \mu(x) + \int_{[t_0, \infty) \cap \Omega} I(x) f(x) d \mu(x) \\ 
    & \geq I(a) + I(a) + Q. 
    \end{align*}
     Again, by choosing $Q$ large enough, we can set the above to more than $2I(a)$. 
\end{proof}
As an immediate consequence of this result, we see that for an instance of~\ref{primalform} satisfying conditions~\A{0}--\A{5}, the optimal dual solution $\gammas$ is non-zero.  
\begin{theorem}
    \label{lem:nonneg}
    Consider an instance $\cI=(\Omega, \fD, \ell, \alpha, \tau)$ of~\ref{primalform} satisfying conditions~\A{0}--\A{5}. 
   Then there is an optimal dual solution $(\gammas, \phis)$ to~\ref{dualform} satisfying $\gammas > 0$. 
\end{theorem}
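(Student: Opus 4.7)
The plan is to argue by contradiction: assume $\gamma^\star=0$ and derive a contradiction with \Cref{lem:tech1} or \Cref{lem:tech2}. First, note that by \Cref{cl:finite2}, assumptions \A{0}--\A{5} guarantee that the optimal primal value $p^\star$ is finite, and then \Cref{cor:dual} (a consequence of strong duality, \Cref{thm:strong}) produces an optimal dual pair $(\gamma^\star,\phi^\star)$ with $g(\gamma^\star,\phi^\star)=p^\star$. The dual constraint already forces $\gamma^\star\geq 0$, so the only thing left is to rule out $\gamma^\star=0$.

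Suppose $\gamma^\star=0$. Then the Lagrangian evaluated at $(0,\phi^\star)$ simplifies to
\[
 L(f,0,\phi^\star) \;=\; \int_\Omega I(x) f(x)\,d\mu(x) + \phi^\star\!\left(\int_\Omega f(x)\,d\mu(x)-1\right) \;=\; \int_\Omega \bigl(I(x)+\phi^\star\bigr) f(x)\,d\mu(x) - \phi^\star.
\]
Taking the infimum over $f\in\dom=\{f\colon\Omega\to\R_{\geq 0}\}$ is a pointwise optimization: if $\inf_{x\in\Omega}(I(x)+\phi^\star)<0$, then by concentrating mass near a point where $I(x)+\phi^\star<0$ and scaling, the infimum is $-\infty$. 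But $g(0,\phi^\star)=p^\star$ is finite, so we must have $I(x)+\phi^\star\geq 0$ for all $x\in\Omega$, equivalently $\phi^\star\geq -I(x^\star)$, where $x^\star\coloneqq \argmin_{x\in\Omega} I(x)$ (which exists and is unique by \A{5} together with \Cref{thm:I}). In this regime the infimum is attained (in the limit) by $f\equiv 0$, giving $g(0,\phi^\star)=-\phi^\star$. Hence
\[
 p^\star \;=\; -\phi^\star \;\leq\; I(x^\star).
\]

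On the other hand, \Cref{lem:tech1} (for \typeP loss functions) and \Cref{lem:tech2} (for \typeN loss functions) together show that, under \A{0}--\A{5}, every feasible $f$ satisfies $\err_{\ell,\alpha}(\fD,f)\geq I(x^\star)+\eta$ for a fixed $\eta>0$ that depends only on the instance. Consequently $p^\star\geq I(x^\star)+\eta>I(x^\star)$, contradicting the inequality $p^\star\leq I(x^\star)$ derived from $\gamma^\star=0$. Therefore $\gamma^\star>0$, completing the proof.

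The main obstacle is not in the present argument, which is a short dual-Lagrangian calculation, but rather in the two ingredients it relies on: \Cref{cor:dual} (which requires strong duality, proved via a Slater-type hyperplane separation using \A{0} and \A{3}) and the separation lemmas \Cref{lem:tech1}--\Cref{lem:tech2}. The latter are the genuinely delicate piece, since they require the logarithmic growth of the loss (assumption \A{2}) to balance the entropy bound $\tau$ and prevent the construction of feasible densities that concentrate arbitrarily close to $x^\star$. Once those facts are in hand, as above, positivity of $\gamma^\star$ follows by observing that a zero dual multiplier would let feasible solutions approximate the point-mass optimum $I(x^\star)$, which the tech lemmas explicitly forbid.
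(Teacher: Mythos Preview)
Your proof is correct and follows essentially the same route as the paper: assume $\gamma^\star=0$, show $g(0,\phi^\star)\leq I(x^\star)$, and contradict $p^\star>I(x^\star)$ from \Cref{lem:tech1}/\Cref{lem:tech2}. The only difference is cosmetic: the paper obtains $g(0,\phi^\star)\leq I(x^\star)$ in one line by plugging a unit point mass at $x^\star$ into the Lagrangian, whereas you compute $g(0,\phi^\star)=-\phi^\star$ explicitly and then bound $-\phi^\star\leq I(x^\star)$ from the finiteness constraint---a slightly more careful version of the same calculation.
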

\begin{proof}
    Let $p^\star$ denote the optimal value of this instance (it is finite by~\Cref{cl:finite2}). 
    ~\Cref{cor:I} implies that $I(x)$ has a (unique) global minimum $x^\star$ in $\Omega$. 
Assume, for the sake of contradiction, that $\gammas = 0$. 
We claim that $g(\gammas, \phis)= I(x^\star)$. 
Indeed, consider a function $f$ which places unit probability mass at $x^\star$. 
Then it is easy to verify that $L(f, \gammas, \phis) = I(x^\star)$. 
However,  ~\Cref{lem:tech1} and ~\Cref{lem:tech2} show that $p^\star > I(x^\star)$, which is  a contradiction (because by~\Cref{thm:strong}, strong duality holds). 
\end{proof}

\subsection{Optimality conditions and uniqueness of optimal solution}
\label{subsec:optimality}
We now show that, under suitable conditions, there is an optimal solution to an instance of~\ref{primalform}.

\begin{theorem}[\bf Optimality condition]
\label{thm:mainopt1}
    Consider an instance $\cI$ of the optimization problem~\ref{primalform} defined on a space $\Omega$. 
    Let $\ell, \alpha, \tau, \fD$ denote the loss function,   risk-averseness parameter, resource-information parameter, and the density of the true utility respectively in $\cI$. 
    Assume that the optimal value for this instance is finite and assumptions~\A{0}, \A{3} hold. Let $(\gammas, \phis)$ be an optimal solution to the corresponding~\ref{dualform} with $\gammas > 0$. 
    Consider a function $f^\star \in \dom$ defined as follows:
     \begin{align}
    \label{eq:optcondition1}
        I_{\fD, \ell,\alpha}(x) + \gammas (1 + \ln f^\star(x)) + \phis = 0, \quad \quad \forall x \in \Omega.
    \end{align} 
    Then $f^\star$ is an optimal solution to the instance $\cI$. 
\end{theorem}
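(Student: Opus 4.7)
The plan is to combine the strict convexity of the Lagrangian in $f$ with the strong duality already established in~\Cref{thm:strong} and~\Cref{cor:dual} and then read off complementary slackness via the envelope theorem. First I would fix the dual-optimal pair $(\gammas, \phis)$, for which $\gammas > 0$ by hypothesis. Viewed as a functional of $f \in \dom$, the Lagrangian
\[
L(f, \gammas, \phis) = \err_{\ell,\alpha}(\fD, f) + \gammas\inparen{\tau - \ent(f)} + \phis\inparen{\int_\Omega f(x)\, d\mu(x) - 1}
\]
is strictly convex: $\err_{\ell,\alpha}(\fD, \cdot)$ and the normalization term are linear in $f$, while $-\ent(f)$ is strictly convex because $t\mapsto t\ln t$ is. A Gateaux-derivative computation in any admissible direction $h$ reduces first-order stationarity of $L(\cdot,\gammas,\phis)$ to the pointwise identity $I_{\fD,\ell,\alpha}(x) + \gammas(1+\ln f(x)) + \phis = 0$, which is precisely~\eqref{eq:optcondition1}. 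Hence the $f^\star$ defined in the statement is the unique global minimizer of $L(\cdot, \gammas, \phis)$ over $\dom$, so $L(f^\star, \gammas, \phis) = g(\gammas, \phis)$; combining with strong duality yields $L(f^\star, \gammas, \phis) = p^\star$.

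Next I would verify that $f^\star$ is primal feasible. Solving~\eqref{eq:optcondition1} explicitly, the minimizer of $L(\cdot, \gammas, \phi)$ for a generic $\phi$ is $f_\phi^\star(x) = \exp(-1 - (I_{\fD,\ell,\alpha}(x) + \phi)/\gammas)$, and substituting this back into $L$ yields the closed form $g(\gammas, \phi) = \gammas\tau - \gammas\int_\Omega f_\phi^\star\, d\mu - \phi$. Because $\phis \in \R$ maximizes the concave function $\phi \mapsto g(\gammas, \phi)$, its first-order condition $\partial_\phi g(\gammas, \phis) = 0$ — computed directly from the closed form as $\int_\Omega f^\star\, d\mu - 1$ — forces $\int_\Omega f^\star\, d\mu = 1$. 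The analogous argument in the $\gamma$ variable, permissible because $\gammas > 0$ sits in the interior of the dual-feasible region, gives $\partial_\gamma g(\gammas, \phis) = \tau - \ent(f^\star) = 0$, i.e., $\ent(f^\star) = \tau$. Thus $f^\star$ satisfies both constraints~\eqref{eq:entropycons} and~\eqref{eq:densitycons}.

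With feasibility in hand, both multiplier terms in $L(f^\star, \gammas, \phis)$ vanish, yielding
\[
\err_{\ell,\alpha}(\fD, f^\star) = L(f^\star, \gammas, \phis) = p^\star,
\]
so $f^\star$ attains the primal optimum and is therefore optimal. The main technical obstacle I anticipate is rigorously justifying the exchange of differentiation and integration needed both to read the Gateaux stationarity as a pointwise equation and to compute $\partial_\phi g$ and $\partial_\gamma g$ by differentiating under the integral sign; the explicit exponential form of $f_\phi^\star$, together with assumption~\A{3} and the finiteness of $p^\star$, should supply the integrable majorants required for dominated convergence.
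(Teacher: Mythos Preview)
Your approach is correct, and its core step—showing that the $f^\star$ defined by~\eqref{eq:optcondition1} minimizes the Lagrangian $L(\cdot,\gammas,\phis)$ via convexity and a first-order stationarity computation—is exactly what the paper does. The paper parametrizes the perturbation as $\theta(t) = L((1-t)f^\star + t\,h, \gammas, \phis)$ for arbitrary $h\in\dom$, checks convexity of $\theta$, and verifies $\theta'(0^+) = 0$ directly from~\eqref{eq:optcondition1}; your Gateaux-derivative phrasing is equivalent.

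Where your route diverges is in the explicit verification of primal feasibility. The paper's proof simply concludes from $L(f^\star,\gammas,\phis) = g(\gammas,\phis)$ that ``$f^\star$ is an optimal solution to the instance $\cI$,'' without separately checking that $\int_\Omega f^\star\,d\mu = 1$ or that $\ent(f^\star) \geq \tau$. Your idea of writing $g(\gammas,\phi)$ in closed form and reading off $\int_\Omega f^\star\,d\mu = 1$ from $\partial_\phi g(\gammas,\phis)=0$, and then using the interior condition $\gammas>0$ together with the envelope theorem to obtain $\ent(f^\star)=\tau$ from $\partial_\gamma g(\gammas,\phis)=0$, is a clean way to supply what the paper leaves implicit. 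The paper only recovers $\ent(f^\star)=\tau$ later, in~\Cref{thm:unique}, by invoking complementary slackness—which already presupposes primal optimality and hence feasibility—so your argument is the more self-contained of the two.
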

\begin{proof}
    Let $p^\star$ denote the value of the optimal solution to $\cI$. 
    Since $p^\star$ is finite,~\Cref{lem:nonneg} shows that there is an optimal dual solution satisfying $\gammas > 0$. 
    Recall that 
    $$g(\gammas, \phis) \coloneqq \min_{f \in \dom} L(f, \gammas, \phis).$$
     Let $f^\star$ be the function defined by~\eqref{eq:optcondition1} ($f^\star(x)$ is well-defined since $\gammas > 0$, this is where we need strict positivity of $\gammas$). 
     We argue that $L(f^\star, \gammas, \phis) = g(\gammas, \phis)$. 

     Indeed, consider any other function $h \in \dom$. 
     Define a function $\theta:[0, \infty) \rightarrow \R$ as follows: $$\theta(t) \coloneqq L((1-t) f^\star(x) + t \, h(x), \gammas, \phis) = L(f^\star(x)+t e(x), \gammas, \phis),$$
     where $e(x) = h(x)-f^\star(x)$. 
     We first claim that $\theta(t)$ is a convex function of $t$. 
     \begin{claim}
         The function $\theta: [0,1] \rightarrow \R$ is a convex function. 
     \end{claim}
     \begin{proof}
        We observe that $\Err(f^\star+t \cdot e, \fD)$ is a linear function of $t$. The function $\gamma(\tau - \ent(f^\star + t \cdot e))$ is a convex function of $t$, and $\phi \left( \int_\Omega (f(x) + t e(x)) d \mu(x) -1 \right)$ is a linear function of $t$. Therefore, $\theta(t)$ which is the sum of these three functions, is a convex function of $t$. 
     \end{proof}
We shall show that $\frac{d \theta(t)}{dt}\big|_{t = 0^+} = 0$. Along with the convexity of $\theta(t)$, 
     this implies that $\theta(0) \leq \theta(1)$ and hence $L(f^\star, \gammas, \phis) \leq L(h, \gammas, \phis)$. 
     A routine calculation shows that  $\frac{d \theta(t)}{dt}\big|_{t = 0^+}$ is equal to 
     $$ \int_\Omega \left( I_{\fD, \ell, \alpha}(x) + \gammas (1 + \ln f^\star(x))  + \phis \right) e(x) d \mu(x).$$
    Inequality~\eqref{eq:optcondition1}  implies that the above expression is $0$. 
    This proves the desired result. Thus, we have shown that $L(f^\star, \gammas, \phis) = g(\gammas, \phis)$, and therefore, $f^\star$ is an optimal solution to the instance $\cI$. 
\end{proof}
We now show the uniqueness of the optimal solution to~\ref{primalform}.  
Consider an instance of this optimization problem specified by  $\Omega$, loss function $\ell$, density $\fD$ and parameters $\alpha, \tau$. 
Let ${\cal F}$ denote the set of feasible densities, i.e., 
$$ {\cal F} \coloneqq \{f : f \in \dom, f {\mbox{ is a density on $\Omega$ and}} \int_{\Omega} f(x) \ln f(x) d\mu(x) \leq -\tau \}.$$ 
\begin{lemma}
    \label{lem:strconvex}
     ${\cal F}$ is strictly convex. 
\end{lemma}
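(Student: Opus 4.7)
The plan is to exploit the strict concavity of the functional $\ent(\cdot)$, which follows from strict concavity of $\psi(t) \coloneqq -t\ln t$ on $[0,\infty)$ (with $\psi(0)=0$). Concretely, I will show that for any two distinct densities $f_1, f_2 \in \cF$ and any $\lambda \in (0,1)$, the convex combination $f_\lambda \coloneqq \lambda f_1 + (1-\lambda) f_2$ satisfies the entropy constraint \emph{strictly}, i.e.\ $\ent(f_\lambda) > \tau$. This is the strict-convexity property relevant to \ref{primalform}, since $\cF$ is cut out of the simplex of densities by the single non-trivial inequality $\ent(f)\geq \tau$.

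First I would verify convexity of $\cF$: linearity of the integral gives $\int_\Omega f_\lambda\, d\mu = 1$ and $f_\lambda \geq 0$, and pointwise concavity of $\psi$ followed by integration yields $\ent(f_\lambda) \geq \lambda \ent(f_1)+(1-\lambda)\ent(f_2) \geq \tau$, so $f_\lambda\in\cF$. The heart of the argument is the upgrade from $\geq$ to $>$ on the second step. Because $\psi$ is \emph{strictly} concave on $[0,\infty)$, for every $x\in\Omega$,
\begin{equation*}
\psi\bigl(\lambda f_1(x) + (1-\lambda) f_2(x)\bigr) \;\geq\; \lambda\, \psi(f_1(x)) + (1-\lambda)\, \psi(f_2(x)),
\end{equation*}
with strict inequality whenever $f_1(x)\neq f_2(x)$. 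Since $f_1$ and $f_2$ are distinct densities, the set $E\coloneqq\{x\in\Omega : f_1(x)\neq f_2(x)\}$ has strictly positive Lebesgue measure (otherwise $f_1=f_2$ a.e.\ and they would coincide as densities).

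Next I would promote the pointwise strict inequality on $E$ to an integral strict inequality. The feasibility conditions $\ent(f_1),\ent(f_2)\geq \tau>-\infty$ ensure that $\psi(f_1)$ and $\psi(f_2)$ are integrable (their positive parts are controlled by $f_i+1$ since $\psi(t)\leq 1/e$, and their negative parts by the entropy bounds), so the relevant integrals are finite and I can split the comparison over $E$ and $\Omega\setminus E$. On $\Omega\setminus E$ we have equality; on $E$, the pointwise strict inequality combined with positive measure of $E$ gives a strict gap after integration, yielding $\ent(f_\lambda) > \lambda\ent(f_1)+(1-\lambda)\ent(f_2) \geq \tau$. (If either $\ent(f_i)=+\infty$, concavity immediately gives $\ent(f_\lambda)=+\infty>\tau$, and the conclusion is trivial.)

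The only mildly subtle point is ensuring the strict pointwise inequality passes through the integral, i.e.\ that we do not accidentally lose it to integrand blow-ups or cancellations. The finiteness of $\ent(f_1),\ent(f_2)$ rules this out, since it allows us to write the gap $\ent(f_\lambda)-\lambda\ent(f_1)-(1-\lambda)\ent(f_2)$ as the integral of a non-negative function that is strictly positive on a set of positive measure, hence strictly positive. This establishes strict convexity and will be used downstream to conclude uniqueness of the optimal primal solution in \cref{thm:mainopt}.
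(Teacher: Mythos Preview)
Your proposal is correct and follows essentially the same route as the paper: exploit the strict concavity of $t\mapsto -t\ln t$ pointwise, integrate, and use that $f_1\neq f_2$ on a set of positive measure to upgrade the inequality to a strict one. If anything, you are slightly more careful than the paper about the integrability issues needed to pass the strict inequality through the integral, but the core argument is identical.
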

\begin{proof}
    Let $f_1, f_2 \in {\cal F}$ and $\lambda$ be a parameter in $[0,1]$. 
    We first show that the density $f$ defined by 
    $f(x)\coloneqq \lambda f_1(x) + (1-\lambda)f_2(x), \, x \in \Omega,   $ is also in $\cal F$. Clearly, $f$ is a density, because $f(x) \geq 0$ and 
    $$ \int_{\Omega} f(x) d \mu(x) = \lambda \int_{\Omega} f_1(x) d \mu(x) + 
    (1-\lambda) \int_{\Omega} f_2 (x) d \mu(x) = \lambda + (1-\lambda) =1. $$
        Hence, the fact that $g(y) \coloneqq y \ln y$ is a strongly convex function on $\R_{\geq 0} $  implies that 
    \begin{align*}
        f(x) \ln f(x) & \leq \lambda f_1(x) \ln f_1(x) + (1- \lambda) f_2(x) \ln f_2(x)),
    \end{align*}
    with equality if and only if $f_1(x)=f_2(x)$. 
    Integrating both sides, and using the fact that $f_1, f_2$ belong to $\cal F$, we get 
    \begin{align*}
        \int_{\Omega} f(x) \ln f(x) d \mu(x) & \leq - \lambda \tau  -(1-\lambda) \tau \\
        & = -\tau.
    \end{align*}
   Thus, if $f_1(x)$ and $f_2(x)$ differ on a set of positive measure, then 
   $f(x) \ln f(x) < \lambda f_1(x) + (1-\lambda) f_2(x)$ on a set of positive measure. 
   Integrating both sides, we get $\ent(f) > \tau$. 
   This shows that the set $\cal F$ is strictly convex. 
    \end{proof}
Strict convexity of $\cal F$ now allows us to show the uniqueness of the optimal solution. 
\begin{theorem}[\bf Uniqueness of optimal solution]
    \label{thm:unique}
Consider an instance $\cI=(\Omega, \fD, \ell, \alpha, \tau)$ of~\ref{primalform} satisfying the property that the optimal value is finite and there is an optimal dual solution with $\gammas > 0$. 
Then there is a unique optimal solution to this instance. 
\end{theorem}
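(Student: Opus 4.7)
The plan is to derive uniqueness from the strict convexity of the Lagrangian in $f$, which is available precisely because the hypothesis gives us a dual optimum with $\gamma^\star > 0$. Concretely, I would fix an optimal dual pair $(\gamma^\star, \phi^\star)$ with $\gamma^\star > 0$ (guaranteed by the hypothesis) and show that every primal optimum minimizes $L(\cdot, \gamma^\star, \phi^\star)$ over $\dom$, then use strict convexity to conclude the minimizer is unique.

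The first step is to check that any primal optimum $f$ minimizes the Lagrangian in $f$ at $(\gamma^\star,\phi^\star)$. For any primal-feasible $f$, the entropy constraint gives $\gamma^\star(\tau - \ent(f)) \leq 0$ and the density constraint gives $\phi^\star(\int_\Omega f\, d\mu - 1)=0$, so $L(f, \gamma^\star,\phi^\star) \leq \err_{\ell,\alpha}(\fD, f)$. Combined with the definition of $g$ and strong duality (Theorem~\ref{thm:strong}), one has
\[
g(\gamma^\star,\phi^\star) \leq L(f, \gamma^\star,\phi^\star) \leq \err_{\ell,\alpha}(\fD, f).
\]
If $f$ is primal optimal then $\err_{\ell,\alpha}(\fD, f) = g(\gamma^\star,\phi^\star)$, forcing equality throughout, so $f$ attains $\inf_{h \in \dom} L(h, \gamma^\star, \phi^\star)$.

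The second step is strict convexity of $L(\cdot, \gamma^\star, \phi^\star)$ on $\dom$. The error $\err_{\ell,\alpha}(\fD, f)$ and the normalization term $\phi^\star(\int f - 1)$ are linear in $f$, while $-\gamma^\star \ent(f) = \gamma^\star \int f \ln f\, d\mu$ is strictly convex in $f$ because $y \mapsto y \ln y$ is strictly convex on $[0,\infty)$ and $\gamma^\star > 0$. This is the single place that uses $\gamma^\star > 0$, and is where the argument would fail otherwise. Thus if $f_1, f_2 \in \dom$ both minimize $L(\cdot,\gamma^\star,\phi^\star)$, then $\tfrac{1}{2}(f_1+f_2)$ would give a strictly smaller value on any set of positive measure on which $f_1 \neq f_2$, a contradiction.

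The third step just assembles these: any two primal optima $f_1, f_2$ are both minimizers of the strictly convex functional $L(\cdot,\gamma^\star,\phi^\star)$, so $f_1 = f_2$ a.e., giving uniqueness. I expect no serious obstacle here; the only subtlety is being careful that the entropy term is genuinely strictly convex and that the infimum of $L$ is in fact attained (which is what step one supplies). As an alternative route one could instead use Lemma~\ref{lem:strconvex} (strict convexity of $\cF$) combined with complementary slackness: if $f_1 \neq f_2$ were both optimal, then $f=\tfrac12(f_1+f_2)$ would be optimal with $\ent(f) > \tau$, contradicting $\gamma^\star(\tau - \ent(f)) = 0$ at optimality. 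Both routes hinge on the same point, namely that $\gamma^\star > 0$ forces the entropy constraint to be active and penalizes any ``spreading out'' of the optimum.
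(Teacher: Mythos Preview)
Your proposal is correct. In fact, the alternative route you sketch at the end---complementary slackness forces $\ent(f)=\tau$ at any optimum, so a nontrivial convex combination of two distinct optima would be optimal with $\ent(f)>\tau$ by Lemma~\ref{lem:strconvex}, a contradiction---is exactly the paper's proof.

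Your main route, via strict convexity of $L(\cdot,\gamma^\star,\phi^\star)$, is a clean and equally valid variant: instead of pushing the strict convexity into the feasible set $\cF$ and then invoking complementary slackness, you absorb it directly into the Lagrangian and use the saddle-point characterization of optimality. The two arguments are really the same idea viewed from slightly different angles, and both pivot on $\gamma^\star>0$ in the same way. One small omission: the theorem also asserts existence, which the paper pulls from Theorem~\ref{thm:mainopt1}; your write-up tacitly assumes a primal optimum exists when you say ``any two primal optima,'' so you should cite that result (or note that the explicit minimizer of $L(\cdot,\gamma^\star,\phi^\star)$ furnished there is feasible and optimal).
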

\begin{proof}
Let $p^\star$ denote the optimal value of this instance. 
    \Cref{thm:mainopt1} already shows the existence of an optimal solution. 
    We show the uniqueness of an optimal solution.
    We have assumed that there is an optimal dual solution $(\gammas, \phis)$ such that $\gammas> 0$. 
    The complementary slackness condition shows that for any optimal solution $f$ to the instance $\cI$, $\ent(f)$ must equal $\tau$. 

   Suppose, for the sake of contradiction, that there are two distinct optimal solutions $f_1^\star$ and $f_2^\star$ to $\cI$ (i.e., $f_1^\star(x) \neq f_2^\star$ on a subset of positive measure). 
   Consider a solution $f^\star = t f_1^\star + (1-t) f_2^\star$ for some $t \in (0,1)$. 
   Linearity of $\err_{\ell, \alpha}(\fD, f)$ shows that $\err_{\ell, \alpha}(\fD, f^\star) = p^\star$ as well. 
   Now $f^\star$ is also a feasible solution by~\Cref{lem:strconvex}; in fact, this lemma shows that $\ent(f^\star) > \tau$. 
   But this contradicts the fact that every optimal solution must have entropy equal to $\tau$. 
   Thus, we see that there must be a unique optimal solution to the instance $\cI$. 
\end{proof}
Combining~\Cref{lem:gammaszero},~\Cref{thm:mainopt1} and~\Cref{thm:unique}, we see that ~\Cref{thm:mainopt} holds.

\subsection{Conditions under which optimal solution is \texorpdfstring{$\fD$}{fD}}
\label{sec:soundness}
We  show that our optimization framework can output the true utility $\fD$ for a choice of the parameters 
$\alpha$, $\tau$, and the loss function $\ell$.

\begin{theorem}[\bf Parameters that recover the true utility]
    \label{lem:specialloss}
    Consider an instance $\cI=(\Omega, \fD, \ell, \alpha, \tau)$ of the optimization problem~\ref{primalform}. 
    Assume that the density $\fD$ has finite entropy. 
    Suppose, $\alpha\coloneqq1, \tau \coloneqq \ent(\fD)$, and the loss function $\ell(x,v) \coloneqq \ln \fD(v) - \ln \fD(x)$. 
    Then the density $\fD$ is the unique optimal solution to \ref{primalform} for $\cal{I}$. 
  \end{theorem}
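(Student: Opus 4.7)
The plan is to compute $I_{\fD,\ell,\alpha}(x)$ explicitly for this choice of data and then reduce the objective to a sum involving the Kullback--Leibler divergence from $f$ to $\fD$, from which the conclusion follows by Gibbs' inequality together with the entropy constraint. In particular, I will not need to appeal to the full strength of \Cref{thm:mainopt}; the special structure of $\ell(x,v) = \ln\fD(v)-\ln\fD(x)$ makes the objective directly interpretable.

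First, since $\alpha = 1$, $\ell_\alpha = \ell$, and a direct computation using the fact that $\fD$ is a density of finite entropy gives
\[
I_{\fD,\ell,1}(x) \;=\; \int_\Omega \bigl(\ln \fD(v)-\ln \fD(x)\bigr)\fD(v)\,d\mu(v) \;=\; -\ent(\fD) - \ln \fD(x).
\]
Substituting this into $\err_{\ell,1}(\fD,f)=\int_\Omega I_{\fD,\ell,1}(x) f(x)\,d\mu(x)$ and using the identity $-\int_\Omega f(x)\ln \fD(x)\,d\mu(x) = \kl{f}{\fD} + \ent(f)$ (which is well defined under the finite entropy assumption on $\fD$), I obtain the clean decomposition
\[
\err_{\ell,1}(\fD,f) \;=\; -\ent(\fD) + \ent(f) + \kl{f}{\fD}.
\]

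Next, for any feasible $f$, the entropy constraint gives $\ent(f)\ge\tau=\ent(\fD)$, and Gibbs' inequality gives $\kl{f}{\fD}\ge 0$. Adding these, $\err_{\ell,1}(\fD,f)\ge 0$. The density $\fD$ is itself feasible since $\ent(\fD)=\tau$ and $\fD$ integrates to $1$, and it achieves $\err_{\ell,1}(\fD,\fD)=0$. Hence $\fD$ is optimal. For uniqueness, any optimal $f^\star$ must satisfy $\kl{f^\star}{\fD}=0$ (because both $\ent(f^\star)\ge\ent(\fD)$ and $\kl{f^\star}{\fD}\ge 0$ and their sum equals $0$ above the lower bound $-\ent(\fD)$), which forces $f^\star=\fD$ almost everywhere by the equality case of Gibbs' inequality.

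The only nontrivial step is handling the measure-theoretic issues that ensure $\int f \ln \fD \, d\mu$ is well defined and that the identity connecting it to $\kl{f}{\fD}$ and $\ent(f)$ holds. These follow from the finiteness of $\ent(\fD)$ together with standard properties of relative entropy (e.g., allowing the value $+\infty$ for $\kl{f}{\fD}$ when $f$ is not absolutely continuous with respect to $\fD$; in that case $\err_{\ell,1}(\fD,f)=+\infty$ and the inequality is trivial). As a sanity check, one can also verify consistency with \Cref{thm:mainopt_inf}: the form $f^\star(x)\propto \exp(-I_{\fD,\ell,1}(x)/\gamma^\star) \propto \fD(x)^{1/\gamma^\star}$ yields $\fD$ exactly when $\gamma^\star=1$, and then $\ent(f^\star)=\ent(\fD)=\tau$ as required.
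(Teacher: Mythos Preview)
Your proof is correct, but it takes a genuinely different route from the paper's. The paper argues via Lagrangian duality: it exhibits explicit dual variables $\gamma=1$ and $\phi=\ent(\fD)-1$, verifies that $\fD$ satisfies the first-order optimality condition~\eqref{eq:optcondition1} for these values, and concludes $g(\gamma,\phi)=L(\fD,\gamma,\phi)=\err_{\ell,1}(\fD,\fD)$, so optimality follows from weak duality; uniqueness then comes from the strict convexity of the feasible set (\Cref{lem:strconvex}) together with complementary slackness. By contrast, you bypass duality entirely: the identity $\err_{\ell,1}(\fD,f)=-\ent(\fD)+\ent(f)+\kl{f}{\fD}$ lets you read off both optimality and uniqueness directly from Gibbs' inequality and the entropy constraint. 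Your argument is more elementary and self-contained for this particular instance, and the KL decomposition makes the role of the entropy constraint completely transparent. The paper's approach, on the other hand, has the virtue of exhibiting the optimal dual variable $\gamma^\star=1$ explicitly, which plugs straight into the general characterization~\eqref{eq:optsoln} and is consistent with the machinery developed in \Cref{sec:characterization}. Both approaches compute the same $I(x)=-\ent(\fD)-\ln\fD(x)$ as their first step.
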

\begin{proof}
    We  claim that, for the values of dual variables $\gamma\coloneqq1$ and $\phi\coloneqq\ent(\fD)-1$, 
      $g(\gamma, \phi) = L(\fD, \gamma, \phi)$. 
   Towards this,     we  show that $\fD$ minimizes $L(f, \gamma, \phi)$ over all $f \in \dom$. 
   We first show that $\fD$ satisfies the condition~\eqref{eq:optcondition1} with $\gammas=\gamma, \phis=\phi, f^\star=\fD$; it shall then follow from exactly the same arguments as in~\Cref{thm:mainopt1} that $\fD = \argmin_{f \in \dom} L(f, \gamma, \phi). $ 
   Now, we check that  $\fD$ satisfies  condition~\eqref{eq:optcondition1} (recall that the loss function $\ell(x,v) \coloneqq \ln \fD(v) - \ln \fD(x)$):
        \begin{align*}
            & I(x) + \gamma(1+\ln \fD(x)) + \phi  = \int_\Omega \ell(x,v) \fD(v) d \mu(v) + \gamma(1+\ln \fD(x)) + \phi \\
            \quad &  = \int_\Omega \fD(v) \ln \fD(v) d \mu(v) - \ln \fD(x) \int_\Omega \fD(v) d \mu(v) + (1 + \ln \fD(x)) +  \ent(\fD)-1 \\
            \quad & = -\ent(\fD)  - \ln \fD(x) + (1+ \ln \fD(x)) + \ent(\fD)-1 = 0. 
        \end{align*}
     This proves the desired claim. 
    Thus, 
    $$g(\gamma,\phi) = L(\fD, \gamma, \phi) = \Err_{\ell, \alpha}(\fD, \fD). $$
    Thus, $g(\gamma, \phi)$ is equal to the objective function value of~\ref{primalform} at $f=\fD$. 
    Hence, $\fD$ is an optimal solution to \ref{primalform} for $\cal{I}$. In order to prove uniqueness, note that the above argument also yields an optimal solution $(\gamma, \phi)$ to~\ref{dualform}. Since $\gamma > 0$, complementary slackness conditions imply that any optimal solution must have entropy exactly equal to $\tau$. Now, arguing as in~\Cref{thm:unique}, we see that there is a unique optimal solution. 
   \end{proof}
\section{Derivation of output density for a single individual}
\label{sec:single}

In this section, we consider the setting when a single individual with true utility $u$ is being evaluated. In this case, the input density $\fD(x)$ is specified by the Dirac-delta function centered at $v$, denoted $\delta_v(x)$. Using~\Cref{thm:mainopt_inf}, we can characterize the output density as follows: 
\begin{lemma}
    \label{lem:single}
    Consider an instance $\cI=(\Omega, \fD, \ell, \alpha, \tau)$ of~\eqref{prog:framework} where $\Omega = \mathbb{R}, \fD=\delta_v(x)$ for some real $v$, $\ell(x,u) = (x-u)^2$, and $\alpha$ and $\tau$ are arbitrary real parameters. Then the optimal density is given by 
    $$f^\star(x) = \left\{ \begin{array}{cc} K e^{-\frac{(x-v)^2}{\gamma^\star}} & \mbox{if $x \leq v$} \\
    K e^{-\frac{\alpha (x-v)^2}{\gamma^\star}} &\mbox{otherwise}\end{array} \right. ,$$
  where $K$ is the normalization constant and $\gamma^\star$ is the optimal dual variable for the entropy constraint. The mean of this density is equal to $v -  \sqrt{\frac{\gamma^\star}{\pi}} \cdot \frac{\sqrt{\alpha}-1}{\sqrt{\alpha}}$.
\end{lemma}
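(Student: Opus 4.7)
The plan is to specialize the characterization in \Cref{thm:mainopt_inf} to the point-mass input and then perform two half-Gaussian integrals to read off the normalization constant and the mean.

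First I would compute the integral $I_{\fD,\ell,\alpha}(x) = \int_\mathbb{R} \ell_\alpha(x,u)\,\delta_v(u)\,d\mu(u)$. Formally evaluating against the Dirac delta gives $I(x) = \ell_\alpha(x,v)$, which by the definition in \eqref{eq:loss} equals $(x-v)^2$ when $x\leq v$ and $\alpha(x-v)^2$ when $x>v$. Plugging this into the expression $f^\star(x) \propto \exp(-I(x)/\gamma^\star)$ from \Cref{thm:mainopt_inf} yields exactly the claimed piecewise form. The subtle step here is that $\fD=\delta_v$ is not a Lebesgue-measurable density, so assumptions \A{3}--\A{5} are not directly applicable; I would handle this either (i) by approximating $\delta_v$ by a sequence of Gaussians $\mathcal{N}(v,\sigma^2)$ as $\sigma\to 0$ (for which all assumptions are easily verified, as shown in \cref{sec:gaussian}) and passing to the limit, or (ii) by bypassing \Cref{thm:mainopt_inf} entirely and verifying directly that the proposed $f^\star$ satisfies the KKT-style optimality condition \eqref{eq:optcondition1} of \Cref{thm:mainopt1}, together with strong duality via the argument of \Cref{thm:strong}. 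This step is the main (minor) obstacle; once it is settled, the rest is a routine computation.

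Next I would compute the normalizing constant $K$ by substituting $y=x-v$ and splitting the integral at $0$:
\[
\frac{1}{K} = \int_{-\infty}^0 e^{-y^2/\gamma^\star}\,dy + \int_0^\infty e^{-\alpha y^2/\gamma^\star}\,dy = \tfrac{1}{2}\sqrt{\pi\gamma^\star}\Bigl(1 + \tfrac{1}{\sqrt{\alpha}}\Bigr),
\]
so that $K = \dfrac{2\sqrt{\alpha}}{\sqrt{\pi\gamma^\star}\,(\sqrt{\alpha}+1)}$.

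Finally, I would compute the mean by the same change of variables. Using $\int_0^\infty y e^{-c y^2}\,dy = 1/(2c)$, the first moment of $y=x-v$ under $f^\star$ is
\[
\int y f^\star(v+y)\,dy \;=\; -K\cdot\tfrac{\gamma^\star}{2} + K\cdot\tfrac{\gamma^\star}{2\alpha} \;=\; \tfrac{K\gamma^\star}{2}\cdot \tfrac{1-\alpha}{\alpha}.
\]
Substituting the value of $K$ and using the factorization $1-\alpha = -(\sqrt{\alpha}-1)(\sqrt{\alpha}+1)$ collapses the expression to $-\sqrt{\gamma^\star/\pi}\cdot(\sqrt{\alpha}-1)/\sqrt{\alpha}$. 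Adding back $v$ gives the claimed mean $v - \sqrt{\gamma^\star/\pi}\cdot(\sqrt{\alpha}-1)/\sqrt{\alpha}$, which is strictly less than $v$ whenever $\alpha>1$, consistent with the risk-aversion interpretation.
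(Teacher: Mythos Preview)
Your proposal is correct and follows essentially the same approach as the paper: compute $I(x)$ by evaluating against the Dirac delta, invoke the characterization $f^\star\propto e^{-I/\gamma^\star}$, and then carry out the two half-Gaussian integrals for $K$ and the mean. The paper's proof in fact omits the details of those integrals (calling them ``routine'') and does not address the Dirac-delta subtlety you flag; your write-up is more complete on both counts.
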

\begin{proof}
    We first evaluate the integral $I(x)$ as follows: $$I(x):= \int_{\Omega} \fD(u) \ell_\alpha(x,u) dv =  \alpha \int_{-\infty}^x  \delta_v(u) (x-u)^2 +  \int_x^\infty \delta_v(u) (x-u)^2 .$$ 
    When $x \leq v$, the first integral on the r.h.s. is 0, and hence, the above integral is equal to $ (x-v)^2.$ Similarly, if $x > v$, the above integral is equal to $\alpha (x-v)^2$. The expression for $f^\star(x)$ in the statement of the Lemma now follows from~\eqref{eq:optcondition1}.

    A routine calculation shows that the normalization constant $K$ is equal to $\frac{2 \sqrt{\alpha}}{\sqrt{\pi \gamma^\star} (1+ \sqrt{\alpha})}$. Now the mean of this density turns out to be 
    $$ v- \sqrt{\frac{\gamma^\star}{\pi}} \cdot \frac{\sqrt{\alpha}-1}{\sqrt{\alpha}}.$$
\end{proof}

\section{Effect of changing the resource-information  parameter \texorpdfstring{$\tau$}{τ}}\label{sec:tau}

    \subsection{Effect of decreasing \texorpdfstring{$\tau$}{τ}}
        \begin{theorem}[\bf Effect of decreasing $\tau$]\label{thm:tau_appendix}
            Fix  $\alpha \geq 1$, a density $\fD$, and loss function $\ell$. 
            Assume that the function $I_{\fD, \ell, \alpha}(x)$ satisfies condition~\A{5}, and let $x^\star\coloneqq \argmin_{x \in \Omega} I_{\fD, \ell, \alpha}(x)$.
            Given a $\tau$, let $f^\star_\tau$ denote the optimal solution to~\ref{primalform}. 
            For every $\delta > 0$, there exists a value $T_\delta$ such that when $\tau \leq T_\delta$, the solution $f^\star_{\tau}$ has the following property: For every $x \in \Omega, |x-x^\star| \geq \delta$, we have
            $$ \frac{f^\star_{\tau}(x)}{f^\star_{\tau}(x^\star)} \leq \delta. $$
            In other words, the density outside an interval of length $\delta$ around $x^\star$ has a much smaller value than at $x^\star$. 
         \end{theorem}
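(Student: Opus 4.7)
The plan is to leverage the closed form of $f^\star_\tau$ from Theorem~\ref{thm:mainopt_inf} together with the quantitative gap supplied by assumption~\A{5}, and reduce the statement to showing that the Lagrange multiplier $\gamma^\star_\tau$ attached to the entropy constraint can be forced below any positive threshold by taking $\tau$ sufficiently negative. By Theorem~\ref{thm:mainopt_inf}, $f^\star_\tau(x) = Z(\gamma^\star_\tau)^{-1} \exp(-I(x)/\gamma^\star_\tau)$ with $\gamma^\star_\tau > 0$, where I write $I(x) \coloneqq I_{\fD,\ell,\alpha}(x)$ and $Z(\gamma) \coloneqq \int_\Omega \exp(-I(x)/\gamma)\, d\mu(x)$. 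Forming the ratio at $x$ and $x^\star$ cancels the partition function:
\[
\frac{f^\star_\tau(x)}{f^\star_\tau(x^\star)} \;=\; \exp\!\left(-\,\frac{I(x) - I(x^\star)}{\gamma^\star_\tau}\right).
\]
Assumption~\A{5} provides $\varepsilon_\delta > 0$ with $I(x) - I(x^\star) \geq \varepsilon_\delta$ whenever $|x - x^\star| \geq \delta$, so the ratio is at most $\exp(-\varepsilon_\delta/\gamma^\star_\tau)$. The case $\delta \geq 1$ is immediate since this exponential is at most $1$; for $\delta \in (0,1)$ it suffices to force $\gamma^\star_\tau \leq \varepsilon_\delta/\ln(1/\delta)$.

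The next step is to bound $\gamma^\star_\tau$ using the equality $\ent(f^\star_\tau) = \tau$, also given by Theorem~\ref{thm:mainopt_inf}. Substituting the exponential form into the entropy yields
\[
\tau \;=\; \ln Z(\gamma^\star_\tau) + \frac{\mathbb{E}_{f^\star_\tau}[I]}{\gamma^\star_\tau} \;\geq\; \ln Z(\gamma^\star_\tau) + \frac{I(x^\star)}{\gamma^\star_\tau},
\]
using $I \geq I(x^\star)$ pointwise. For any $\eta > 0$, let $L_\eta \coloneqq \mu\bigl(\{x \in \Omega : I(x) \leq I(x^\star) + \eta\}\bigr)$; restricting the integral defining $Z$ to this sub-level set gives $Z(\gamma^\star_\tau) \geq L_\eta \exp\bigl(-(I(x^\star)+\eta)/\gamma^\star_\tau\bigr)$, and hence $\ln Z(\gamma^\star_\tau) \geq \ln L_\eta - (I(x^\star)+\eta)/\gamma^\star_\tau$. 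Substituting and simplifying, $\tau \geq \ln L_\eta - \eta/\gamma^\star_\tau$, or equivalently
\[
\gamma^\star_\tau \;\leq\; \frac{\eta}{\ln L_\eta - \tau} \quad \text{whenever } \tau < \ln L_\eta.
\]
Provided $L_\eta > 0$, this right-hand side tends to $0$ as $\tau \to -\infty$. Choosing $\eta \coloneqq \varepsilon_\delta$ and $T_\delta \coloneqq \ln L_{\varepsilon_\delta} - \ln(1/\delta)$, every $\tau \leq T_\delta$ then satisfies $\gamma^\star_\tau \leq \varepsilon_\delta/\ln(1/\delta)$, so the ratio is at most $\delta$.

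The only non-routine step is verifying $L_\eta > 0$ for every $\eta > 0$, which reduces to continuity of $I$ at $x^\star$. This follows by dominated convergence applied to $I(x) = \int_\Omega \ell_\alpha(x,v)\fD(v)\, d\mu(v)$, using continuity of $\ell_\alpha$ from~\A{1} together with the integrability hypothesis~\A{3} to produce a local dominating function near $x^\star$; then $I(x) \leq I(x^\star) + \eta$ on a small open interval around $x^\star$, so $L_\eta > 0$. The remaining ingredients — positivity of $\gamma^\star_\tau$, the entropy equality $\ent(f^\star_\tau)=\tau$, and the exponential form of $f^\star_\tau$ — are all delivered by Theorem~\ref{thm:mainopt_inf}, so once continuity of $I$ is in hand the argument is a short chain of inequalities.
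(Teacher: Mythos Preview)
Your proof is correct and reaches the same conclusion as the paper, but by a different and somewhat more direct route. Both arguments start identically: form the ratio $f^\star_\tau(x)/f^\star_\tau(x^\star)=\exp\bigl(-(I(x)-I(x^\star))/\gamma^\star_\tau\bigr)$ from the closed form in Theorem~\ref{thm:mainopt}, invoke~\A{5} for the gap $\varepsilon_\delta$, and then reduce to showing $\gamma^\star_\tau\to 0$ as $\tau\to-\infty$. The divergence is in how this last fact is established. The paper proves it by contradiction (Lemma~\ref{lem:gammaszero}): assuming $\limsup_{\tau\to-\infty}\gamma^\star_\tau>0$, it uses the Gibbs-equation representation $f^\star_\tau(x)=e^{-\tau}\exp\bigl((\err_{\ell,\alpha}(\fD,f^\star_\tau)-I(x))/\gamma^\star_\tau\bigr)$ together with boundedness of $\err_{\ell,\alpha}(\fD,f^\star_\tau)$ to force $f^\star_\tau\geq \eta e^{-\tau}$ on a fixed interval, contradicting integrability for $\tau$ very negative. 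You instead extract a \emph{quantitative} bound directly from the entropy identity $\tau=\ln Z(\gamma^\star_\tau)+\mathbb{E}_{f^\star_\tau}[I]/\gamma^\star_\tau$: bounding $Z$ below via the sub-level set $\{I\leq I(x^\star)+\eta\}$ of measure $L_\eta$ yields $\gamma^\star_\tau\leq \eta/(\ln L_\eta-\tau)$, from which an explicit $T_\delta=\ln L_{\varepsilon_\delta}-\ln(1/\delta)$ drops out. Your approach is more elementary (no contradiction, no appeal to the Gibbs equation) and has the mild advantage of producing an explicit threshold; the paper's approach avoids needing to think about the sub-level set and instead leverages the bounded-error claim already proved. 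The positivity $L_\eta>0$ you need is exactly continuity of $I$ at $x^\star$, which the paper records in Theorem~\ref{thm:I} under~\A{0}--\A{4}; your dominated-convergence sketch is the right idea, and the monotonicity in~\A{1} is what supplies the local dominating function.
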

   In order to prove the above result, we first show that the optimal dual value $\gamma^\star_\tau$ goes to 0 as $\tau$ goes to $-\infty$. Then we shall use~\Cref{thm:gibbs} to show that the optimal density is highly sensitive to small changes in $I(x)$. 
   
         \begin{lemma}
             \label{lem:gammaszero}
             Fix an interval $\Omega \subseteq \R$, parameter $\alpha \geq 1$, density $\fD$ of true utility and loss function $\ell$. Assume that the function $I_{\fD, \ell, \alpha}(x)$ has a unique global minimum. Let $\cI_\tau$ denote the instance $(\Omega, \fD, \ell, \alpha, \tau)$. Let $\gamma^\star_\tau$ be the optimal Lagrange variable corresponding to this instance (assuming it has a non-empty solution). Then 
             $$  \lim_{\tau \rightarrow -\infty} \gamma^\star_\tau = 0.$$
         \end{lemma}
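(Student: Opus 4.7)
The plan is to argue by contradiction: suppose the conclusion fails, so there is a constant $c > 0$ and a sequence $\tau_n \to -\infty$ with $\gamma^\star_{\tau_n} \geq c$. I will derive a uniform lower bound on $\tau_n$ from the Gibbs identity, contradicting $\tau_n \to -\infty$.

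First, I invoke \Cref{thm:mainopt} to write the optimal density as
\[
f^\star_{\tau_n}(x) \;=\; \frac{1}{Z_{\tau_n}}\exp\!\bigl(-I(x)/\gamma^\star_{\tau_n}\bigr), \qquad Z_{\tau_n} \;\coloneqq\; \int_\Omega \exp\!\bigl(-I(x)/\gamma^\star_{\tau_n}\bigr)\, d\mu(x),
\]
and invoke the Gibbs identity (\Cref{sec:Gibbs})
\[
\tau_n \;=\; \ent(f^\star_{\tau_n}) \;=\; \frac{1}{\gamma^\star_{\tau_n}}\int_\Omega I(x)\, f^\star_{\tau_n}(x)\, d\mu(x) \;+\; \ln Z_{\tau_n}.
\]
Next, I bound each term from below. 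Let $x^\star \coloneqq \argmin_{x \in \Omega} I(x)$, which exists by hypothesis and satisfies $I(x^\star) > -\infty$ by assumption~\A{3}. Since $I(x) \geq I(x^\star)$ on $\Omega$, the expectation term is bounded below by $I(x^\star)/\gamma^\star_{\tau_n}$.

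For the partition function, continuity of $I$ (\Cref{thm:I}) together with the fact that $\Omega$ is an interval guarantees that the set $N \coloneqq \{x \in \Omega : I(x) \leq I(x^\star)+1\}$ is a relatively open neighborhood of $x^\star$ in $\Omega$ and therefore contains a (possibly one-sided) subinterval of $\Omega$ of some positive Lebesgue measure $\eps_0 > 0$; note $\eps_0$ depends only on $\fD, \ell, \alpha$ and not on $\tau_n$. Then
\[
Z_{\tau_n} \;\geq\; \int_N \exp\!\bigl(-I(x)/\gamma^\star_{\tau_n}\bigr)\, d\mu(x) \;\geq\; \eps_0 \cdot \exp\!\bigl(-(I(x^\star)+1)/\gamma^\star_{\tau_n}\bigr),
\]
so $\ln Z_{\tau_n} \geq \ln \eps_0 - I(x^\star)/\gamma^\star_{\tau_n} - 1/\gamma^\star_{\tau_n}$.

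Combining the two bounds, the $I(x^\star)/\gamma^\star_{\tau_n}$ terms cancel and I obtain
\[
\tau_n \;\geq\; \ln \eps_0 \;-\; \frac{1}{\gamma^\star_{\tau_n}} \;\geq\; \ln \eps_0 \;-\; \frac{1}{c},
\]
which is a finite constant independent of $n$. This contradicts $\tau_n \to -\infty$ and proves $\gamma^\star_\tau \to 0$.

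The main technical point to verify carefully is the existence of the neighborhood $N$ of positive measure, which is where continuity of $I$ on the interval $\Omega$ is needed; boundary minimizers are not an issue since a half-neighborhood still has positive Lebesgue measure. Beyond that the argument is a direct application of the Gibbs identity combined with the crude bound $I \geq I(x^\star)$.
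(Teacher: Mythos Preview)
Your proof is correct and follows essentially the same contradiction strategy as the paper: assume $\gamma^\star_{\tau_n}\geq c>0$ along a sequence $\tau_n\to-\infty$, use continuity of $I$ to find a neighborhood of $x^\star$ of positive measure on which $I$ is bounded, and then extract a uniform lower bound on $\tau_n$ via the Gibbs relation.

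The organization differs slightly, and your version is arguably cleaner. The paper works with the pointwise density formula $f^\star_\tau(x)=e^{-\tau}\exp\!\bigl((\err_{\ell,\alpha}(\fD,f^\star_\tau)-I(x))/\gamma^\star_\tau\bigr)$: it first proves an auxiliary claim that $\err_{\ell,\alpha}(\fD,f^\star_\tau)$ stays in a bounded range (lower bound $I(x^\star)$, upper bound via monotonicity of the optimal value in $\tau$), then deduces $f^\star_\tau(x)\geq \eta\,e^{-\tau}$ on a fixed interval $U$, and finally contradicts $\int_U f^\star_\tau\leq 1$. You instead plug directly into the scalar Gibbs identity $\tau=\err/\gamma^\star+\ln Z$ and bound each summand from below; the $I(x^\star)/\gamma^\star_{\tau_n}$ contributions cancel, so you only need the trivial lower bound $\err\geq I(x^\star)$ and never need the upper bound on $\err$ that the paper establishes separately. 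Both routes rest on the same two ingredients (Gibbs relation plus a positive-measure sublevel set of $I$), so the difference is stylistic rather than substantive.
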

         \begin{proof}
             We first observe that the instance $\cI_\tau$ will always have a feasible solution for small enough $\tau$. Indeed, when $e^\tau$ is less than the length of $\Omega$, there is always a feasible solution to $\cI_\tau$. Let $\tau_0$ be a value of $\tau$ for which the instance $\cI_\tau$ 
             has a feasible solution. For sake of brevity, let $I(x)$ denote $I_{\fD, \ell, \alpha}(x)$, and let $x^\star \coloneqq \argmin_{x \in \Omega} I(x)$. We first argue that $\err(\fD, f^\star_\tau)$ remains in a bounded range. 
             \begin{claim}
             \label{cl:E}
                 Consider a value of $\tau \leq \tau_0$. Let $f^\star_\tau$ be the optimal solution to the instance $\cI_\tau$. Then 
                 $$ I(x^\star) \leq \err_{\ell, \alpha}(\fD, f^\star_\tau) \leq \err_{\ell, \alpha}(\fD, f^\star_{\tau_0}).$$
             \end{claim}
             \begin{proof}
                 The first inequality follows from the fact that 
                 $$ \err_{\ell, \alpha}(\fD, f^\star_\tau) = \int_\Omega I(x) f^\star_\tau(x) d\mu(x) \geq I(x^\star). $$
                 The second inequality follows from the fact that the solution $f^\star_{\tau_0}$ is also a feasible solution for the instance $\cI_{\tau}$. 
             \end{proof}
             \noindent Suppose for the sake of contradiction, 
              $$  \lim_{\tau \rightarrow -\infty} \gamma^\star_\tau > 0.$$
              In other words, $\theta_0 \coloneqq \limsup_{\tau \rightarrow -\infty} \gamma^\star_\tau > 0$. It follows that there is an infinite sequence $A\coloneqq(\tau_0 > \tau_1 > \cdots )$ going to $-\infty$  such that  $\gamma^\star_{\tau_i} \geq \theta_0/2$ for all $\tau_i \in A$. 

              Consider an interval of finite but non-zero length in $\Omega$ -- let $U=[s,t]$ be such an interval. Since $U$ is closed and $I(x)$ is a continuous function,  there are finite values $a_0, b_0$ such that $a_0 \leq I(x) \leq b_0$ for all $x \in U$. Thus we get: 
              \begin{claim}
                  \label{cl:gammas}
                  There is a positive real $\eta$, such that 
                  $$ \exp \left( \frac{\err_{\ell, \alpha}(\fD, f^\star_\tau)-I(x)}{\gamma^\star_\tau} \right) \geq \eta$$ holds for all $x \in U, \tau \in A$. 
              \end{claim}
              \begin{proof}
                  It follows from~\Cref{cl:E} and the observation above that for all $x \in U$ and all $\tau \leq \tau_0$, 
                  $|\err_{\ell, \alpha}(\fD, f^\star_\tau)-I(x)|$ lies in the range $[I(x^\star) -b_0, \err_{\ell, \alpha}(\fD, f^\star_\tau)+a_0]$. Since $\gamma_\tau \geq \theta_0/2$ for all $\tau \in A$, the result follows. 
              \end{proof}

\noindent The above claim along with~\Cref{thm:gibbs} shows that $f^\star_\tau(x) \geq \eta \, e^{-\tau} $ for all $x \in U, \tau \in A$. Since $A$ contains an infinite sequence of values going to $-\infty$, we can choose a value $\tau \in A$ such  that $\eta e^{-\tau} > 1/|U|$. But then $f^\star_\tau(x) > \frac{1}{|U|}$ for all $x \in U$, which is not possible because $f^\star$ is a density. This proves the lemma.
         \end{proof}

We now prove~\Cref{thm:tau_appendix}. Consider the instance $\cI$ as stated in this statement of this theorem. Let $\tau_0$ be a value of the information-resource parameter for which there is a density with entropy $\tau$ in $\Omega$ (i.e., when~\ref{primalform} has a feasible solution). 
Recall that $x^\star = \argmin_{x \in \Omega} I(x)$. Consider a $\delta > 0$. Assumption~\A{5} shows that there is a value $\varepsilon_\delta > 0$ such that $I(x) - I(x^\star) > \varepsilon_\delta$ for all $x$ satisfying $|x- x^\star| \geq \delta$. Now consider an $x$ such that $|x-x^\star| \geq \delta$. Using~\Cref{thm:gibbs}, we see that, for all $\tau \leq \tau_0$
$$ \frac{f^\star_\tau(x)}{f^\star_\tau(x^\star)} = \exp \left( \frac{I(x^\star)-I(x)}{\gamma^\star_\tau} \right) \leq \exp \left( -\varepsilon_\delta/\gamma^\star_\tau \right),$$
where the last inequality follows from the fact that $\gamma^\star_\tau$ is positive. Now,~\Cref{lem:gammaszero} shows that there a value $T_\delta$ such that for all $\tau \leq  T_\delta$, $0 < \gamma^\star_\tau \leq  \frac{\delta \,\varepsilon_\delta}{\ln(1/\delta)}.$ Therefore, for all $x, |x- x^\star| \geq \delta$, 
$$ \frac{f^\star_\tau(x)}{f^\star_\tau(x^\star)} \leq \delta.$$
This proves~\Cref{thm:tau_appendix}.

    \subsection{Effect of increasing \texorpdfstring{$\tau$}{τ}}
    \label{sec:effectincreasingtau}
    In this section, we consider the effect of an increase in $\tau$ on the variance and the mean of the optimal density. 
     \begin{theorem}[\textbf{Effect of increasing $\tau$ on variance}]\label{thm:effect_on_variance}
             Consider a continuous density $\fD$ on $\R$, loss function $\ell$ and information-resource parameter $\alpha$. For a given risk-averse parameter $\tau$, let $\cI_\tau$ denote the instance
             $(\R, \fD, \ell, \alpha, \tau)$ of~\ref{primalform}. 
            Let $f^\star_\tau$ be the optimal solution to the instance $\cI_\tau$. Then the variance of $f^\star_\tau$ is at least $\frac{1}{2\pi} e^{2 \tau-1}$. 
    \end{theorem}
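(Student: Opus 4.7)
The plan is to reduce the claim to the well-known maximum-entropy characterization of the Gaussian density, via the equality $\ent(f^\star_\tau)=\tau$ that has already been established.

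First, I would invoke \Cref{thm:mainopt} (the formal version of \Cref{thm:mainopt_inf}), which guarantees that the optimal density $f^\star_\tau$ exists, is unique, and satisfies the entropy constraint with equality:
\begin{equation*}
\ent(f^\star_\tau)=\tau.
\end{equation*}
This step requires checking that the standing assumptions \A{0}--\A{5} are compatible with the hypotheses of the theorem; since $\fD$ is continuous on $\R$, assumption \A{0} is vacuously satisfied, and the entropy constraint is tight by the complementary slackness argument developed in \Cref{subsec:optimality}.

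Next, I would appeal to the classical fact that among all probability densities on $\R$ with a fixed variance $\sigma^2$, the Gaussian density with that variance uniquely maximizes the differential entropy, with maximum value $\tfrac{1}{2}\ln(2\pi e\sigma^2)$; this is the statement cited from \cite{bookOfStatProofs}. Let $\sigma_\tau^2$ denote the variance of $f^\star_\tau$. Combining this bound with the equality from the previous paragraph yields
\begin{equation*}
\tau \;=\; \ent(f^\star_\tau) \;\leq\; \tfrac{1}{2}\ln\bigl(2\pi e\, \sigma_\tau^2\bigr).
\end{equation*}
Rearranging gives $e^{2\tau}\leq 2\pi e\,\sigma_\tau^2$, hence
\begin{equation*}
\sigma_\tau^2 \;\geq\; \frac{1}{2\pi e}\, e^{2\tau} \;=\; \frac{1}{2\pi}\, e^{2\tau-1},
\end{equation*}
which is exactly the claimed bound.

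There is no substantive obstacle in this argument: the only conceptual work is recognizing that the entropy constraint is tight at the optimum (supplied by \Cref{thm:mainopt}), after which the Gaussian max-entropy inequality does all the work. The one minor point to verify is that $\sigma_\tau^2$ is well-defined, i.e.\ $f^\star_\tau$ has a finite variance; if $\sigma_\tau^2=\infty$ the inequality holds trivially, so the argument requires no additional case analysis.
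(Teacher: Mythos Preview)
Your proposal is correct and follows essentially the same approach as the paper: both use the Gaussian max-entropy inequality $\ent(f)\leq \tfrac{1}{2}\ln(2\pi e\,\sigma^2)$ together with the entropy constraint to bound the variance, and both dispose of the infinite-variance case trivially. The one small difference is that the paper only uses the \emph{feasibility} inequality $\ent(f^\star_\tau)\geq\tau$ rather than the equality $\ent(f^\star_\tau)=\tau$, so it never needs to invoke \Cref{thm:mainopt} or check \A{0}--\A{5}; since you only need a lower bound on $\ent(f^\star_\tau)$ to push the argument through, the feasibility inequality already suffices and makes the proof self-contained.
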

    The proof relies on the following result. 
         
    \begin{theorem}[\bf Gaussian maximizes entropy; Theorem 3.2 in \cite{conrad2004probability}]
        \label{thm:gaussMaxEntropy}
        For a continuous probability density function $f$ on $\R$ with variance $\sigma^2$, $\ent(f)\leq \frac{1}{2}+\frac{1}{2}\ln\inparen{2\pi\sigma^2}$ with equality if and only if $f$ is a Gaussian density with variance $\sigma^2$.
    \end{theorem}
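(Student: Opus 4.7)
The plan is to prove the bound using non-negativity of Kullback--Leibler divergence (Gibbs' inequality) against a carefully chosen Gaussian reference density. Given $f$ with finite variance $\sigma^2$, let $m \coloneqq \int x f(x)\,d\mu(x)$ be its mean, and define the reference Gaussian
\begin{equation*}
g(x) \coloneqq \frac{1}{\sqrt{2\pi\sigma^2}} \exp\!\left( -\frac{(x-m)^2}{2\sigma^2} \right).
\end{equation*}
A direct calculation gives $\ent(g) = \tfrac{1}{2} + \tfrac{1}{2}\ln(2\pi\sigma^2)$, since $-\int g(x)\ln g(x)\,d\mu(x)$ splits into the constant $\tfrac{1}{2}\ln(2\pi\sigma^2)$ and the term $\tfrac{1}{2\sigma^2}\int (x-m)^2 g(x)\,d\mu(x) = \tfrac{1}{2}$. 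So the target inequality reads $\ent(f) \leq \ent(g)$, with equality iff $f=g$ almost everywhere.

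Next I would compute $-\int f(x)\ln g(x)\,d\mu(x)$. Since $\ln g(x) = -\tfrac{1}{2}\ln(2\pi\sigma^2) - \tfrac{(x-m)^2}{2\sigma^2}$ is a quadratic in $x$, its expectation under $f$ depends only on the first and second moments of $f$. Because $f$ has mean $m$ and variance $\sigma^2$ by construction, $\int f(x)(x-m)^2 d\mu(x) = \sigma^2$, and therefore
\begin{equation*}
-\int f(x) \ln g(x)\,d\mu(x) = \tfrac{1}{2}\ln(2\pi\sigma^2) + \tfrac{1}{2} = \ent(g).
\end{equation*}
This moment-matching step is the technical heart of the argument: it turns a cross-entropy against $g$ into the entropy of $g$ itself.

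Now apply Gibbs' inequality. The relative entropy satisfies
\begin{equation*}
\kl{f}{g} = \int f(x) \ln \frac{f(x)}{g(x)}\,d\mu(x) = -\ent(f) - \int f(x)\ln g(x)\,d\mu(x) \geq 0,
\end{equation*}
which combined with the previous identity yields $\ent(f) \leq \ent(g) = \tfrac{1}{2}+\tfrac{1}{2}\ln(2\pi\sigma^2)$. The non-negativity of $\kl{f}{g}$ follows from Jensen's inequality applied to the strictly convex function $t \mapsto t\ln t$ (or equivalently to $-\ln$ via the Radon--Nikodym derivative $f/g$), noting that $g > 0$ everywhere on $\mathbb{R}$ so $f$ is absolutely continuous with respect to $g$.

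For the equality case, strict convexity of $-\ln$ forces $\kl{f}{g}=0$ iff $f/g$ is constant $\mu$-a.e. on the support of $g$; since both $f$ and $g$ are densities integrating to one, this constant must be $1$, giving $f=g$ a.e. I expect the main obstacle to be handling integrability subtleties when $\ent(f) = -\infty$ or when $\int f\ln g$ diverges: if $\int f \ln g = -\infty$ then the inequality $\ent(f)\le\ent(g)$ holds trivially, so the substantive case is when both integrals are finite, where the KL-divergence argument applies verbatim; the strict inequality in the non-Gaussian case may require noting that $f=g$ a.e. with respect to Lebesgue measure since $g > 0$ everywhere.
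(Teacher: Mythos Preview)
Your proof is correct and follows the standard Gibbs-inequality argument: match the first two moments, compute the cross-entropy, and invoke non-negativity of $\kl{f}{g}$. The paper does not actually prove this statement---it quotes it as Theorem~3.2 of \cite{conrad2004probability} and uses it as a black box in the proof of \Cref{thm:effect_on_variance}---so there is no in-paper proof to compare against; your argument is essentially the one given in that reference.
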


    \begin{proof}[Proof of \cref{thm:effect_on_variance}]
         Consider the instance $\cI_\tau$ and the optimal solution $f^\star_\tau$ for this instance. 
         Since $f^\star_\tau$ is a feasible solution, $\ent(f^\star_\tau) \geq \tau$. If $f^\star_\tau$, has unbounded variance, then the desired result follows trivially. Hence, assume that $f^\star$ has bounded variance, say $\sigma^2$. Now,~\Cref{thm:gaussMaxEntropy} shows that 
         $$ \ent(f^\star_\tau) \leq \frac{1}{2} + \frac{1}{2} \ln (2 \pi \sigma^2).$$
         Using the fact that $\ent(f^\star_\tau) \geq \tau$, the above inequality implies that $\sigma^2 \geq \frac{1}{2\pi} e^{2 \tau-1}$. This proves the desired result. 
    \end{proof}

\noindent
    We now 
    show that the mean of the optimal density also increases with increasing $\tau$ when the input density $\fD$ is supported on $[0, \infty)$.
    \begin{restatable}[\textbf{Effect of increasing $\tau$ on mean}]{theorem}{effectonmean}
    \label{thm:effect_on_mean}
             Consider a continuous density $\fD$ on $\Omega \coloneqq [0, \infty)$, loss function $\ell$ and information-resource parameter $\alpha$. For a given risk-averse parameter $\tau$, let $\cI_\tau$ denote the instance
             $(\Omega, \fD, \ell, \alpha, \tau)$ of~\ref{primalform}, and 
            %     %
            $f^\star_\tau$ denotes the optimal solution to the instance $\cI_\tau$.
            Then the mean of $f^\star_\tau$ is at least $e^{\tau-1}.$
    \end{restatable}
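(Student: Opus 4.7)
The plan is to mirror the variance proof (\cref{thm:effect_on_variance}), but replace the Gaussian-maximizes-entropy fact with the analogous statement for densities supported on $[0,\infty)$ with a fixed mean. Concretely, I would invoke the classical fact (see, e.g., Chapter~3 of \cite{conrad2004probability}) that among all continuous densities on $[0,\infty)$ with a prescribed mean $1/\lambda$, the exponential density $\lambda e^{-\lambda x}$ uniquely maximizes differential entropy, and its entropy equals $1 - \ln\lambda = 1 + \ln(1/\lambda)$.

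First, I would note that since $f^\star_\tau$ is a feasible solution to $\cI_\tau$, its entropy satisfies $\ent(f^\star_\tau)\ge \tau$. If the mean of $f^\star_\tau$ is infinite, the bound $m\ge e^{\tau-1}$ holds trivially, so assume the mean $m\coloneqq\int_\Omega x\, f^\star_\tau(x)\, d\mu(x)$ is finite and positive. Applying the above maximum-entropy bound with $1/\lambda = m$ yields
\begin{equation*}
\tau \;\le\; \ent(f^\star_\tau) \;\le\; 1 + \ln m,
\end{equation*}
which rearranges to $m \ge e^{\tau-1}$.

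The one detail I would want to be careful about is the degenerate case where $f^\star_\tau$ has mean $0$, which would force $f^\star_\tau$ to be a point mass at $0$ and hence have entropy $-\infty$, contradicting feasibility for any finite $\tau$. So the ``positive mean'' assumption is without loss of generality. There is no real obstacle here: the entire argument is a one-line application of an extremal entropy principle, exactly parallel to how \cref{thm:effect_on_variance} uses \cref{thm:gaussMaxEntropy}. The only subtlety worth double-checking is that the entropy formula $1 - \ln\lambda$ for the exponential density is what gives the clean bound $e^{\tau-1}$, and that the maximum-entropy characterization requires only the mean (not a fixed support length), which is why the result holds uniformly in $\fD$, $\ell$, and $\alpha$.
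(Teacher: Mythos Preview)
Your proposal is correct and follows essentially the same approach as the paper: use feasibility to get $\ent(f^\star_\tau)\ge\tau$, handle the infinite-mean case trivially, and then apply the exponential-maximizes-entropy bound $\ent(f)\le 1+\ln m$ from \cite{conrad2004probability} to conclude $m\ge e^{\tau-1}$. The paper's proof is line-for-line the same, minus your extra remark about the degenerate mean-zero case.
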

\noindent
    The proof relies on the following result.
        \begin{theorem}[\textbf{Exponential maximizes entropy; Theorem 3.3 in \cite{conrad2004probability}}]
            \label{thm:ExponMaxEntropy}
            For a continuous probability density function $f$ on $[0,\infty)$ with mean $\lambda$, $\ent(f)\leq 1+\ln\inparen{\lambda}$. 
        \end{theorem}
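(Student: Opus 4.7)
The plan is to mirror the proof strategy used for \Cref{thm:effect_on_variance}, but invoking the exponential maximum-entropy bound (\Cref{thm:ExponMaxEntropy}) in place of the Gaussian one. The core observation is that $f^\star_\tau$ is, by definition, a feasible solution to the instance $\cI_\tau$ of \ref{primalform}, so it satisfies the entropy constraint $\ent(f^\star_\tau) \geq \tau$. Since $f^\star_\tau$ is supported on $\Omega = [0,\infty)$ (as it is a density on $\Omega$), it is exactly the class of densities to which \Cref{thm:ExponMaxEntropy} applies.

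First, I would split into two cases based on whether $f^\star_\tau$ has a finite mean. If the mean of $f^\star_\tau$ is $+\infty$, the bound $\mathrm{mean}(f^\star_\tau) \geq e^{\tau-1}$ holds trivially. Otherwise, let $\lambda \coloneqq \mathrm{mean}(f^\star_\tau) < \infty$. Then \Cref{thm:ExponMaxEntropy} gives
\begin{equation*}
\ent(f^\star_\tau) \leq 1 + \ln \lambda.
\end{equation*}
Combining this with the feasibility inequality $\ent(f^\star_\tau) \geq \tau$ yields $\tau \leq 1 + \ln \lambda$, which rearranges to $\lambda \geq e^{\tau - 1}$, as desired.

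There is no real obstacle here: both ingredients are already in hand, and the argument reduces to chaining one inequality with the exponential maximum-entropy bound. The only small care needed is the case distinction for the unbounded-mean case, and verifying that the hypotheses of \Cref{thm:ExponMaxEntropy} (continuity and support on $[0,\infty)$) are inherited by $f^\star_\tau$ from the setup; the latter is immediate since $f^\star_\tau$ is the (unique) optimum on the domain of densities over $\Omega = [0,\infty)$, and by \Cref{thm:mainopt_inf} it has the explicit form $f^\star_\tau(x) \propto \exp(-I_{\fD,\ell,\alpha}(x)/\gamma^\star)$, which is continuous on $\Omega$ under the standing assumptions.
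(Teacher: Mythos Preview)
Your proposal does not prove the stated theorem. The statement you were asked to prove is \Cref{thm:ExponMaxEntropy}: that any continuous density $f$ on $[0,\infty)$ with mean $\lambda$ satisfies $\ent(f)\leq 1+\ln\lambda$. Instead, your argument \emph{assumes} \Cref{thm:ExponMaxEntropy} and derives the conclusion of \Cref{thm:effect_on_mean} (that the mean of $f^\star_\tau$ is at least $e^{\tau-1}$). In other words, you have reproduced the paper's proof of \Cref{thm:effect_on_mean}, not a proof of \Cref{thm:ExponMaxEntropy} itself.

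To actually prove \Cref{thm:ExponMaxEntropy}, you need to show the maximum-entropy characterization of the exponential density directly. The standard route (as in the cited reference) is: for $f$ supported on $[0,\infty)$ with mean $\lambda$, let $g(x)=\frac{1}{\lambda}e^{-x/\lambda}$ be the exponential density with the same mean, and compute
\[
\ent(f)-\ent(g) = -\int_0^\infty f(x)\ln f(x)\,d\mu(x) + \int_0^\infty g(x)\ln g(x)\,d\mu(x).
\]
Since $\ln g(x) = -\ln\lambda - x/\lambda$ is affine in $x$, and $f$ and $g$ share the same mean, one has $\int f\ln g = \int g\ln g$, so the right-hand side equals $-\int f\ln(f/g) = -\KL(f\,\|\,g)\le 0$. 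Hence $\ent(f)\le \ent(g)=1+\ln\lambda$. None of this appears in your proposal.
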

        \begin{proof}[Proof of \cref{thm:effect_on_mean}]
        The proof proceeds along similar lines as that of~\Cref{thm:effect_on_variance}.
         We know that $\ent(f^\star_\tau) \geq \tau$ because it is a feasible solution to the instance $\cI_\tau$. If $f^\star_\tau$ has unbounded mean, we are done; therefore,  assume its mean, denoted $\lambda$, is finite.~\Cref{thm:ExponMaxEntropy} shows that $\ent(f) \leq 1 + \ln \lambda$. Since $\ent(f^\star_\tau) \geq \tau$, we see that $\lambda \geq e^{\tau-1}$. This proves the theorem. 
    \end{proof}

\section{Effect of changing the risk-averseness parameter \texorpdfstring{$\alpha$}{α}}\label{sec:alpha}\label{sec:monotonicity}

\begin{theorem}[\bf Monotonicity of $I(x)$ with respect to $\alpha$]
    \label{lem:I}
    Consider an instance $\cI=(\Omega,\fD,\ell,\alpha,\tau)$ of  the optimization problem~\ref{primalform}.
Then,  for any $x \in \Omega$,
     $ I_{\fD,\ell,\alpha}(x)$ is an increasing function of $\alpha$. 
\end{theorem}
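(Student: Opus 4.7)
The plan is to exploit the explicit split of $\ell_\alpha$ into a $v \leq x$ piece (where the factor $\alpha$ appears) and a $v > x$ piece (which is independent of $\alpha$). Using this split directly gives a clean decomposition of $I_{\fD,\ell,\alpha}(x)$, and monotonicity in $\alpha$ will follow once I argue that the $v \leq x$ piece is nonnegative.

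More concretely, first I would write
\[
 I_{\fD,\ell,\alpha}(x) \;=\; \alpha \int_{\Omega \cap (-\infty,x]} \ell(x,v)\,\fD(v)\,d\mu(v) \;+\; \int_{\Omega \cap [x,\infty)} \ell(x,v)\,\fD(v)\,d\mu(v),
\]
using the definition of $\ell_\alpha$ from \eqref{eq:loss}. The second term does not depend on $\alpha$, so the claim reduces to showing that the inner integral $J(x) \coloneqq \int_{\Omega \cap (-\infty,x]} \ell(x,v)\,\fD(v)\,d\mu(v)$ is nonnegative; then for any $\alpha_2 \geq \alpha_1 \geq 1$,
\[
 I_{\fD,\ell,\alpha_2}(x) - I_{\fD,\ell,\alpha_1}(x) \;=\; (\alpha_2-\alpha_1)\,J(x) \;\geq\; 0.
\]

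To establish $J(x)\geq 0$, I would invoke assumption \A{1} in both cases. If $\ell$ is of \typeP, then $\ell(x,v)\geq 0$ for all $x,v$ with equality only when $x=v$, so the integrand is pointwise nonnegative. If $\ell$ is of \typeN, then for fixed $v$ the function $\ell(\cdot,v)$ is strictly increasing and vanishes at $v$, so $\ell(x,v)\geq 0$ whenever $v \leq x$, which is exactly the range of integration. In either case the integrand in $J(x)$ is nonnegative on its domain (since $\fD$ is a density and hence nonnegative a.e.), so $J(x)\geq 0$.

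I do not anticipate a substantive obstacle here: the only subtlety is the measurability/integrability needed to split the integral over $(-\infty,x]$ and $[x,\infty)$, which is handled by the integrability assumption \A{3} together with the continuity of $\ell$ from \A{1}. Note that in the \typeP case $J(x)$ is in fact strictly positive whenever $\fD$ puts positive mass on $(-\infty,x)$, so the monotonicity is strict outside the left tail of $\fD$; but this strengthening is not needed for the stated theorem, so I would only mention it as a brief remark at the end.
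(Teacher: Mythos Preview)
Your proposal is correct and follows essentially the same approach as the paper: split $I_{\fD,\ell,\alpha}(x)$ into the $v\leq x$ piece (carrying the factor $\alpha$) and the $v>x$ piece (independent of $\alpha$), then use nonnegativity of $\ell(x,v)$ on $\{v\leq x\}$ to conclude. The only cosmetic difference is that the paper invokes the model requirement ``$\ell(x,v)\geq 0$ when $x\geq v$'' directly, whereas you derive it by case-splitting on \typeP/\typeN via \A{1}; both routes are equivalent.
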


\begin{proof}
    By definition 
    $$I_\alpha (x)\coloneqq I_{\fD,\ell,\alpha}(x) = \int_{v \leq x} \ell_\alpha(x,v) \fD(v) d\mu(v) + \int_{v > x}  \ell(x,v) \fD(v) d\mu(v).$$
Recall from \eqref{eq:loss}, that for any $x,v \in \Omega$, $\ell_\alpha(x,v)\coloneqq\ell(x,v)$ if $x < v$ and $\ell_\alpha(x,v)\coloneqq\alpha \ell(x,v)$ when $x \geq v$.
Since our model requires $\ell(x,v)\geq 0$ whenever $x \geq v$,
$\ell_\alpha (x,v) \geq \ell_{\alpha'}(x,v)$ for any $x\geq v$ and $\alpha \geq \alpha'$.
Thus, we have that $I_\alpha(x) \geq I_{\alpha'}(x)$ for all $x \in \Omega$.
\end{proof}

\begin{theorem}[\bf Monotonicity of $\Err$ with respect to $\alpha$]
    \label{lem:err}
    Consider an instance $\cI_\alpha=(\Omega,\fD,\ell,\alpha,\tau)$ of  the optimization problem~\ref{primalform}.
    Suppose $\cI_\alpha$ satisfies the assumption of Theorem \ref{thm:mainopt} and let $f^\star_\alpha$ be the optimal solution to instance $\cI_\alpha$. 
   Then, the function $ \Err_{\ell, \alpha}(f^\star_\alpha, \fD)$ is an increasing function of $\alpha$. 
\end{theorem}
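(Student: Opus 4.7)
The plan is to prove the statement by the standard ``monotonicity of optimal values under pointwise monotone objectives with a fixed feasible set'' argument. The essential observation is that the entropy constraint $\ent(f)\geq \tau$ does not depend on $\alpha$, so the set of feasible densities is the same for every instance $\cI_\alpha$; only the objective changes with $\alpha$. Therefore it suffices to show (i) that $\Err_{\ell,\alpha}(\fD,f)$ is pointwise nondecreasing in $\alpha$ for every fixed feasible $f$, and (ii) that $f^\star_\alpha$ is feasible for $\cI_{\alpha'}$ whenever $\alpha' \leq \alpha$, which is immediate since the feasibility condition is identical.

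For step (i), I would recall the definition of the risk-averse loss in \eqref{eq:loss}: for $x \geq v$, $\ell_\alpha(x,v)=\alpha\,\ell(x,v)$ with $\ell(x,v)\geq 0$, and for $x < v$, $\ell_\alpha(x,v) = \ell(x,v)$ is independent of $\alpha$. Consequently, whenever $\alpha \geq \alpha' \geq 1$, we have $\ell_\alpha(x,v) \geq \ell_{\alpha'}(x,v)$ pointwise on $\Omega \times \Omega$. Integrating this inequality against the nonnegative measure $f(x)\fD(v)\,d\mu(x)\,d\mu(v)$ (which is legitimate by Fubini/Tonelli under assumption \A{3}) yields $\Err_{\ell,\alpha}(\fD,f) \geq \Err_{\ell,\alpha'}(\fD,f)$ for every $f \in \dom$. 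Equivalently, using Theorem~\ref{lem:I}, we have $I_{\fD,\ell,\alpha}(x) \geq I_{\fD,\ell,\alpha'}(x)$ for every $x$, and integrating against $f(x)\,d\mu(x)$ gives the same conclusion.

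Now the main inequality follows in two steps. Fix $\alpha \geq \alpha' \geq 1$ and let $f^\star_\alpha$, $f^\star_{\alpha'}$ be the (unique, by Theorem~\ref{thm:mainopt}) optimal solutions to $\cI_\alpha$ and $\cI_{\alpha'}$ respectively. By step (i) applied to $f=f^\star_\alpha$,
\begin{equation*}
\Err_{\ell,\alpha}(\fD,f^\star_\alpha) \;\geq\; \Err_{\ell,\alpha'}(\fD,f^\star_\alpha).
\end{equation*}
Since $f^\star_\alpha$ is feasible for $\cI_{\alpha'}$ (same entropy constraint) and $f^\star_{\alpha'}$ is the minimizer for $\cI_{\alpha'}$,
\begin{equation*}
\Err_{\ell,\alpha'}(\fD,f^\star_\alpha) \;\geq\; \Err_{\ell,\alpha'}(\fD,f^\star_{\alpha'}).
\end{equation*}
Chaining the two displays gives $\Err_{\ell,\alpha}(\fD,f^\star_\alpha) \geq \Err_{\ell,\alpha'}(\fD,f^\star_{\alpha'})$, which is exactly the claimed monotonicity.

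The argument has no real obstacle: once the pointwise inequality $\ell_\alpha \geq \ell_{\alpha'}$ is observed, everything follows from envelope-type reasoning. The only minor subtlety is to make sure the interchange of integrals in step (i) is justified, which is handled by the integrability assumption \A{3}; and to be clear that feasibility is $\alpha$-independent, so no feasibility argument is needed when passing $f^\star_\alpha$ into $\cI_{\alpha'}$.
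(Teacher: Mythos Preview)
Your proposal is correct and follows essentially the same approach as the paper: both arguments rest on (i) the pointwise inequality $\ell_\alpha(x,v)\geq \ell_{\alpha'}(x,v)$ for $\alpha\geq\alpha'$, which yields $\Err_{\ell,\alpha}(\fD,f)\geq\Err_{\ell,\alpha'}(\fD,f)$ for every fixed $f$, and (ii) the observation that the feasible set is independent of $\alpha$. The only cosmetic difference is that the paper phrases the final step as a proof by contradiction whereas you chain the two inequalities directly; your presentation is arguably cleaner.
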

\begin{proof}
    As noted in Section~\ref{sec:characterization}, if the instance $\mathcal{I}$ satisfies the assumptions for  $\alpha = 1$, then the instances obtained by changing $\alpha$ continue to satisfy the assumptions needed in Theorem \ref{thm:mainopt}.
Thus, we may assume the optimal solution exists for each version of $\cal{I}$ where we vary $\alpha \geq 1$, and let $f_\alpha^\star$ denote the optimal density.
    We first show that for any fixed density $f$, $\Err_{\ell, \alpha}(f, \fD)$ is an increasing function of $\alpha$. 
    This is so because 
    \begin{align*}
        \Err_{\ell, \alpha}(f, \fD) &= 
        \int_{v \in \Omega}\left( \int_{x < v}  \ell(x,v) f(x) d\mu(x)\right) \fD(v)  d\mu(v)\\
        &\quad +  \int_{v \in \Omega}\left( \int_{x \geq v}  \ell_\alpha(x,v) f(x) d\mu(x)\right) \fD(v)  d\mu(v) 
    \end{align*}
    and 
    $\ell_\alpha(x,v)$ is an increasing function of $\alpha$ for any $x,v\in\Omega$.
    Consider two values of the parameter $\alpha$: $1 \leq \alpha_1 < \alpha_2$.
    Note that the instances corresponding to both $\alpha_1$ and $\alpha_2$ are feasible as  $\alpha$ only appears in the objective and, hence, does not affect feasibility. 
    Suppose for the sake of  contradiction that $\Err_{\ell, \alpha_2}(f^\star_{\alpha_2}, \fD) < \Err_{\ell, \alpha_1}(f^\star_{\alpha_1}, \fD).$ 
  $f^\star_{\alpha_2}$ satisfies $\ent(f^\star_{\alpha_2})\geq \tau$ as it is a feasible solution of the problem instance defined by $\alpha_2$ and, hence, it is also a feasible solution for the problem instance defined by $\alpha_1$.
  This and the definition of $f^\star_{\alpha_1}$ imply that $\Err_{\ell, \alpha_1}(f^\star_{\alpha_1}, \fD) \leq \Err_{\ell, \alpha_1}(f^\star_{\alpha_2}, \fD)$. 
  Thus, we get 
  $\Err_{\ell, \alpha_2}(f^\star_{\alpha_2}, \fD) < \Err_{\ell, \alpha_1}(f^\star_{\alpha_2}, \fD),$ which contradicts the (above observed) monotonicity of $\Err_{\ell,\alpha}(f^\star_{\alpha_2},\fD)$ with respect to $\alpha$.
\end{proof}

\section{Gaussian density}\label{sec:gaussian}
 The Gaussian density is defined as follows over $\Omega=\R$ and has parameters $m\in\R$ and $\sigma$:
\[
    f_{\cG}(x) \coloneqq \frac{1}{\sqrt{2\pi\sigma^2}} e^{-\frac{(x-m)^2}{2\sigma^2}},\quad x\in \R.
\]
$m$ is the mean and $\sigma^2$ is the variance.
The differential entropy of $f_\mathcal{G}$ is $\frac{1}{2}+\frac{1}{2}\ln(2\pi \sigma^2)$ \cite{maxEntropyWiki}.
We consider the loss function to be $\ell(x,v)\coloneqq(x-v)^2$.
First, we compute the expression of $I(x)$ which we use to verify the applicability of \cref{thm:mainopt} with the above parameters.
    \begin{lemma}[\textbf{Expression for $I_\alpha(x)$}]
        \label{lem:GaussI}
        Consider an instance $\cI=(\Omega, f_\mathcal{G}, \ell, \alpha, \tau)$ of~\ref{primalform} where
            $\Omega=\R$, 
            $\ell(x,v)\coloneqq(x-v)^{2}$, and 
            $f_\mathcal{G}$ is the Gaussian density with mean $m$ and variance $\sigma^2$. 
        Then 
        $$I(x) = (\alpha-1){\sigma^2} \left( (w^2 +1) \Phi(w) +  w \phi(w) \right) + {\sigma^2} \left( w^2 + 1 \right), $$
        where $w\coloneqq\frac{x-m}{\sigma}$, $\phi(x)\coloneqq\frac{1}{\sqrt{2\pi}}e^{-\frac{x^2}{2}}$, and $\Phi(x)\coloneqq\int_{-\infty}^x \phi(x)d\mu(x)$ denotes the cumulative distribution function of the Gaussian density. 
    \end{lemma}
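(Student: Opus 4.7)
The plan is to decompose $I(x)$ into a full-range part (which gives the familiar second moment) and a one-sided correction (which captures the risk-averse asymmetry), then evaluate each piece via standard truncated Gaussian identities. Writing $\ell_\alpha(x,v)=(x-v)^2 + (\alpha-1)(x-v)^2\mathbf{1}[v\le x]$, one can split
\begin{equation*}
I(x)=\int_{-\infty}^{\infty}(x-v)^2 f_{\cG}(v)\,d\mu(v)+(\alpha-1)\int_{-\infty}^{x}(x-v)^2 f_{\cG}(v)\,d\mu(v).
\end{equation*}
The first integral is just $\mathbb{E}_{v\sim f_\cG}[(x-v)^2]=(x-m)^2+\sigma^2=\sigma^2(w^2+1)$, with $w\coloneqq (x-m)/\sigma$, giving the final term of the claimed formula directly.

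For the truncated integral, I would make the standard substitution $u=(v-m)/\sigma$, under which $v\le x$ becomes $u\le w$, $dv=\sigma\,du$, and $(x-v)^2=\sigma^2(w-u)^2$. This yields
\begin{equation*}
\int_{-\infty}^{x}(x-v)^2 f_{\cG}(v)\,d\mu(v)=\sigma^2\int_{-\infty}^{w}(w-u)^2\phi(u)\,du,
\end{equation*}
which reduces the problem to evaluating three elementary truncated moments of the standard normal.

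Next I would expand $(w-u)^2=w^2-2wu+u^2$ and compute each piece: the constant piece gives $w^2\Phi(w)$; the linear piece uses $\phi'(u)=-u\phi(u)$ to obtain $\int_{-\infty}^{w} u\phi(u)\,du=-\phi(w)$, contributing $2w\phi(w)$; the quadratic piece uses integration by parts $\int_{-\infty}^{w} u^2\phi(u)\,du=-w\phi(w)+\Phi(w)$. Summing gives $(w^2+1)\Phi(w)+w\phi(w)$, so the truncated integral equals $\sigma^2\bigl[(w^2+1)\Phi(w)+w\phi(w)\bigr]$. Multiplying by the prefactor $(\alpha-1)$ and adding the full-range part yields exactly the stated expression for $I(x)$.

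There is no real obstacle here; the only place one could slip is in signs during the truncated-moment calculations, which I would handle by writing $\phi'(u)=-u\phi(u)$ explicitly and applying the fundamental theorem of calculus and integration by parts rather than quoting tables. Note that assumption~\A{3} (integrability) is trivially satisfied because $(x-v)^2 f_\cG(v)$ is integrable for every fixed $x$, so Fubini-type manipulations in the decomposition are legal.
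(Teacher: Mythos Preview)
Your proposal is correct and follows essentially the same approach as the paper: both split $I(x)$ into the full-range second moment $\sigma^2(w^2+1)$ plus an $(\alpha-1)$-weighted truncated integral, then evaluate the latter via the change of variables $u=(v-m)/\sigma$ and the standard truncated Gaussian moments. Your decomposition of $\ell_\alpha$ up front is marginally cleaner than the paper's add-and-subtract manipulation, but the substance is identical.
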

    \begin{proof}
    By definition,
        \begin{align*}
         I(x) & = \alpha \int_{-\infty}^x (x-v)^{2} f_\cG(v) d\mu(v)  + \int_{x}^\infty (x-v)^2 f_\cG(v) d\mu(v). 
        \end{align*}
    We first make a change of variables. Let $w\coloneqq\frac{x-m}{\sigma}$ and $y \coloneqq \frac{v-m}{\sigma}$. 
    Thus, the density of $y$ is Gaussian with mean $0$ and variance $1$.
    We denote this density by $\phi$.
    The above integral becomes
    \begin{align*}
        & {\alpha}{\sigma^2} \int_{-\infty}^w (w-y)^{2} {f_\cG(v)} d\mu(v)  + {\sigma^2} \int_{w}^\infty (w-y)^2 f_\cG(v) d\mu(v) \\
        & = {\alpha}{\sigma^2} \int_{-\infty}^w (w^2 + y^2 - 2wy) \phi(y) d\mu(y) + 
        {\sigma^2} \int_{w}^\infty (w^2 + y^2 - 2wy) \phi(y) d\mu(y) \\
        & = {\alpha}{\sigma^2} \int_{-\infty}^w (w^2 + y^2 - 2wy) \phi(y) d\mu(y) - {\sigma^2} \int_{-\infty}^w (w^2 + y^2 - 2wy) \phi(y) d\mu(y)\\ 
        &\quad + {\sigma^2} \int_{-\infty}^\infty (w^2 + y^2 - 2wy) \phi(y) d\mu(y) \\
      &=  {(\alpha-1)}{\sigma^2} \left( w^2 \Phi(w) + \Phi(w) - w \phi(w) +2w \phi(w)\right)] + {\sigma^2} \left( w^2 + 1 \right) \\
      & = {(\alpha-1)}{\sigma^2} \left( (w^2 +1) \Phi(w) +  w \phi(w) \right) + {\sigma^2} \left( w^2 + 1 \right).
    \end{align*}  
    \end{proof}

\paragraph{Applicability of \cref{thm:mainopt}.}
  We verify that any instance $\cI=(\Omega, f_\mathcal{G}, \ell, \alpha, \tau)$ of~\ref{primalform} defined by $\Omega=\R$, $f_\cG$ as a Gaussian density, a finite $\alpha$, and $\tau$ satisfies the assumptions in \cref{thm:mainopt}.
    Since $\Omega=\R$, \textbf{(A0)} holds for any finite $\tau$.
        \textbf{(A1)} and \textbf{(A2)} hold due to the choice of the loss function.
        \textbf{(A3)} holds since $f_\cG$ has a finite variance.
        \textbf{(A4)} holds with, e.g., $R=\sigma^2$.
        In \cref{lem:GaussI}, we compute 
        $$I(x) = (\alpha-1) {\sigma^2} \left( (w^2 +1) \Phi(w) +  w \phi(w) \right) + {\sigma^2} \left( w^2 + 1 \right), $$
        where $w=\frac{x-m}{\sigma}$.
        From this expression, it follows that $I(x)$ is differentiable and 
        \[
            \frac{\partial^2 I(x)}{\partial^2x} = 2 + 2(\alpha-1) \Phi(w).
        \]
        Since $\alpha\geq 1$ and $\Phi(\cdot)$ is non-negative, it follows that  $I(x)$ is strongly convex and, hence, has a unique global minimum. 
        Therefore, \textbf{(A5)} holds.
        % %
        % %
    Since assumptions \textbf{(A0)--(A5)} hold, we invoke \cref{thm:mainopt} to deduce the form of $f^\star$.
    \begin{figure}[t!]
     \centering
    % \vspace{-0.25in}
     \subfigure[\label{subfig:mean:std_normal:tau}]{
        \includegraphics[width=0.45\linewidth]{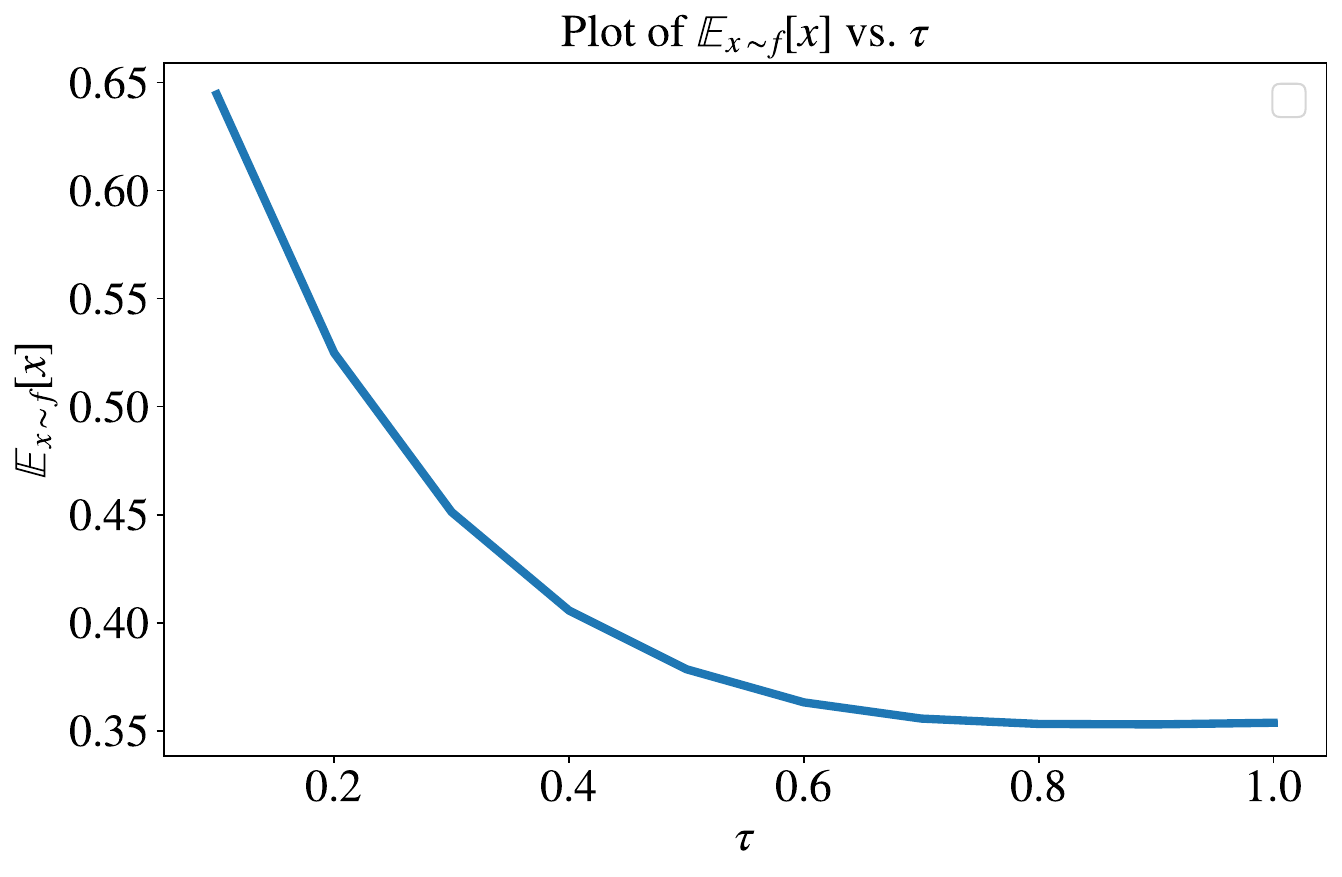}
     }
     \subfigure[\label{subfig:variance:std_normal:tau}]{
        \includegraphics[width=0.45\linewidth]
        {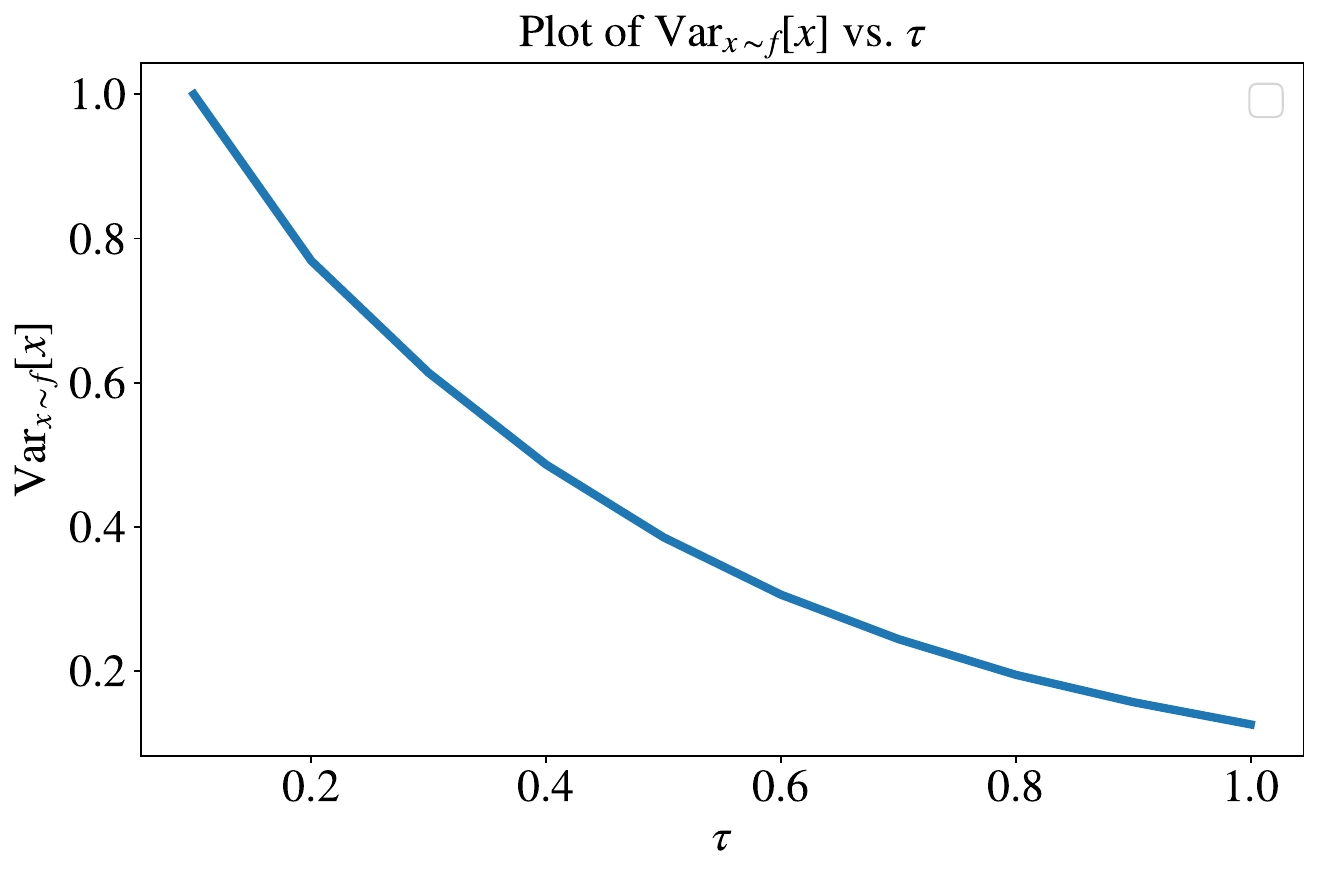}
     }
     
     %\vspace{-0.1in}
     \caption{
        \small Mean and variance
        of the output density $f$, i.e., $\Ex_{x\sim f}[x]$ 
        and ${\rm Var}_{x\sim f}[x]$, 
        as a function of  $\tau$ 
        when $\fD$ is the standard normal density and $\alpha$ is fixed to 2. 
    }
    \label{fig:3d_plot_gaussian:tau}
 \end{figure}

  \begin{figure}[t!]
     \centering
     \subfigure[\label{subfig:mean:std_normal:alpha}]{
        \includegraphics[width=0.45\linewidth]{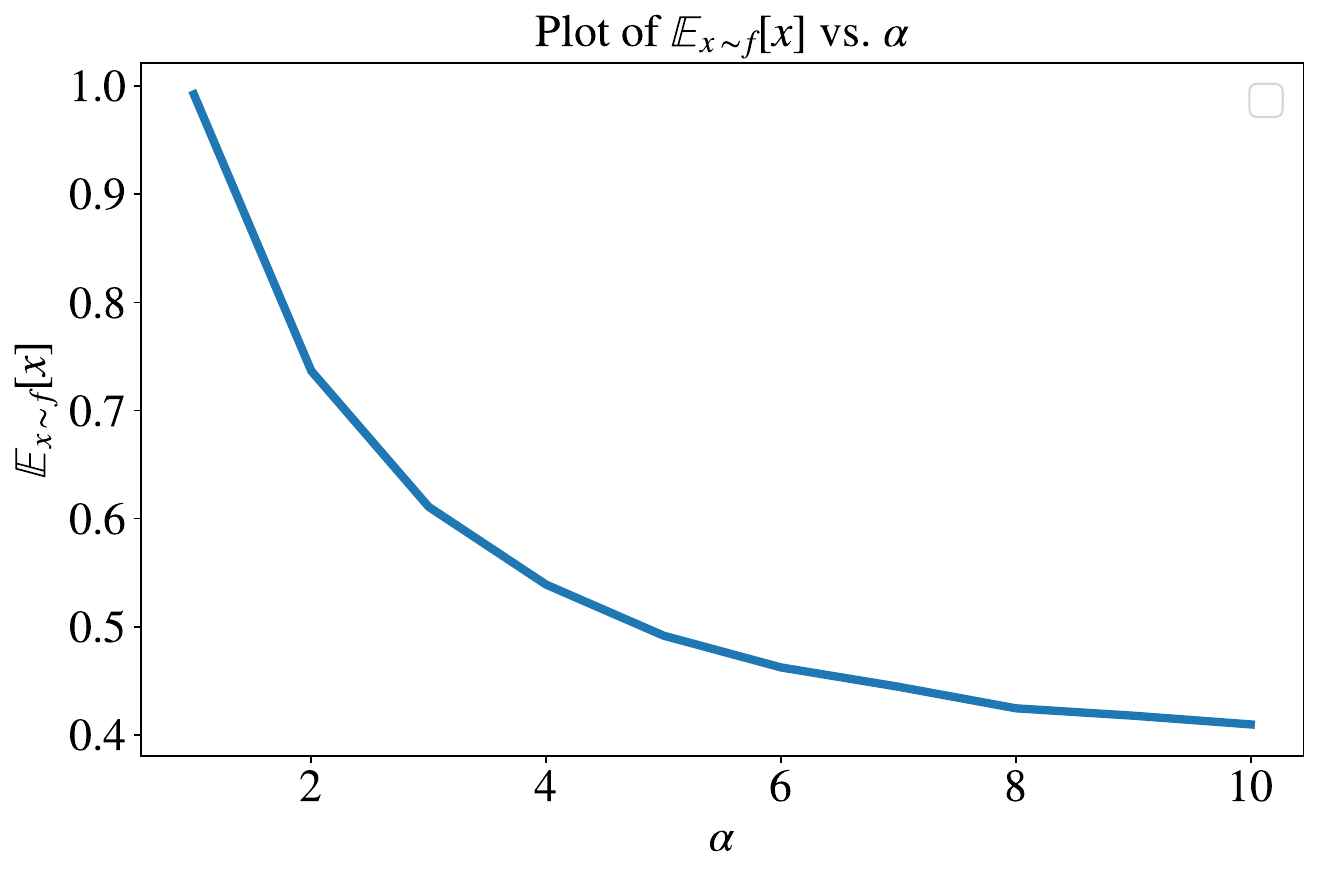}
     }
     \subfigure[\label{subfig:variance:std_normal:alpha}]{
        \includegraphics[width=0.45\linewidth]
        {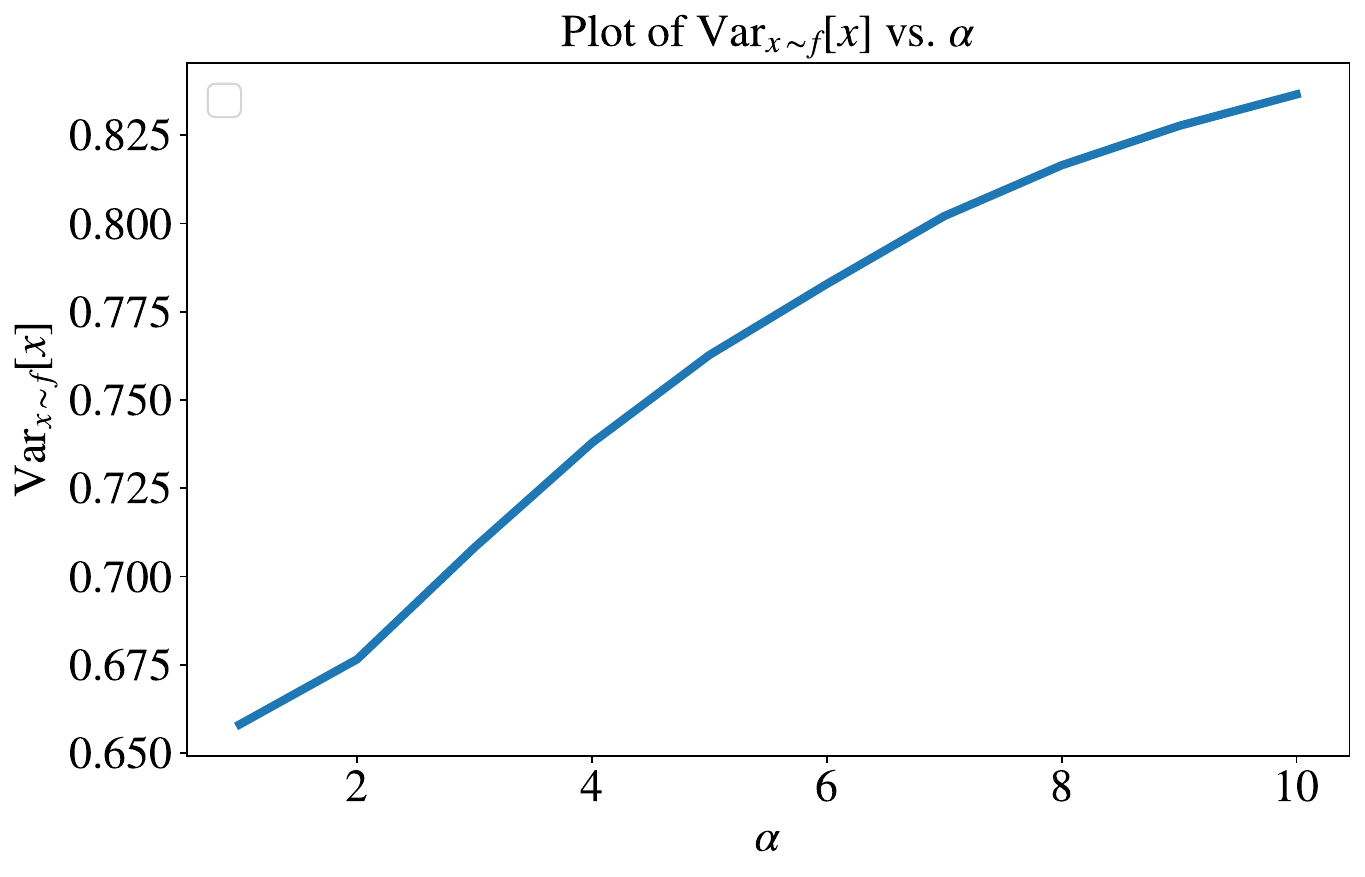}
     }
     \caption{
        \small Mean and variance 
        of the output density $f$, i.e., $\Ex_{x\sim f}[x]$ 
        and ${\rm Var}_{x\sim f}[x]$, 
        as a function of $\alpha$ 
        when $\fD$ is the standard normal density and $\tau$ is fixed to the entropy of $\fD$. 
        }
     \label{fig:3d_plot_gaussian:alpha}
     %\vspace{-0.25in}
 \end{figure}

\noindent    \Cref{fig:3d_plot_gaussian:tau}  plots the mean and the variance of the output density $f^\star$ as a function of the parameter $\tau$  when the input density $\fD$ is the standard normal density.~\Cref{fig:3d_plot_gaussian:alpha} plots the mean and the variance of the output density as a function of the parameter $\alpha$ in this setting.
    \begin{theorem}[\textbf{Expression for $f^\star$ when $\alpha=1$}]
        \label{cor:gauss}
        Consider an instance $\cI=(\Omega, \fD, \ell, 1, \tau)$ of~\ref{primalform} where
            $\Omega=\R$, 
            $\ell(x,v)\coloneqq(x-v)^2$, 
            and 
            $f_\cG$ is the Gaussian density with mean $m$ and variance $\sigma^2$. 
        Let $f^\star$ be the optimal solution of $\cI$.
        Then $f^\star$ is a Gaussian with mean $m$ and variance $\frac{1}{2\pi}e^{2\tau-1}$.
    \end{theorem}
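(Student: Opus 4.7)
The plan is to combine the explicit expression for $I(x)$ from \cref{lem:GaussI} (specialized to $\alpha=1$) with the optimality characterization in \cref{thm:mainopt}, and then pin down the dual variable $\gamma^\star$ using the fact that $\ent(f^\star)=\tau$ together with the closed-form entropy of a Gaussian.

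First, I would invoke \cref{lem:GaussI} with $\alpha=1$: the $(\alpha-1)$-term vanishes, leaving
\[ I(x) = \sigma^2(w^2+1) = (x-m)^2 + \sigma^2, \qquad w = (x-m)/\sigma. \]
Next, I would appeal to the verification already carried out just before the statement of the theorem that assumptions \A{0}--\A{5} hold for any instance $(\R, f_\cG, \ell, \alpha, \tau)$ with $\ell(x,v)=(x-v)^2$. Hence \cref{thm:mainopt} applies and yields a unique optimal density of the form
\[ f^\star(x) \;\propto\; \exp\!\left(-\tfrac{I(x)}{\gamma^\star}\right) \;=\; \exp\!\left(-\tfrac{(x-m)^2+\sigma^2}{\gamma^\star}\right) \;\propto\; \exp\!\left(-\tfrac{(x-m)^2}{\gamma^\star}\right), \]
where the constant factor $e^{-\sigma^2/\gamma^\star}$ is absorbed into the normalization. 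This is exactly a Gaussian density with mean $m$ and variance $\sigma_\star^2 \coloneqq \gamma^\star/2$.

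It remains to identify the variance by computing $\gamma^\star$. By \cref{thm:mainopt}, the optimal density satisfies $\ent(f^\star)=\tau$. On the other hand, the differential entropy of a Gaussian with variance $\sigma_\star^2$ is $\tfrac{1}{2}+\tfrac{1}{2}\ln(2\pi\sigma_\star^2)$. Setting these equal,
\[ \tfrac{1}{2}+\tfrac{1}{2}\ln(2\pi\sigma_\star^2) = \tau \quad\Longleftrightarrow\quad \sigma_\star^2 = \tfrac{1}{2\pi}\,e^{2\tau-1}, \]
which is the claimed variance. Since all the heavy lifting (existence, uniqueness, positivity of $\gamma^\star$, and the exponential form of the optimum) is already done in \cref{thm:mainopt}, there is no real obstacle here; the only step requiring a moment of care is noting that the additive constant $\sigma^2$ inside $I(x)$ does not affect the shape of $f^\star$ and that matching entropies uniquely determines $\gamma^\star$, which works because the optimal density is already known to be Gaussian and the entropy of a Gaussian is a strictly increasing function of its variance.
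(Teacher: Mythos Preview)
Your proposal is correct and follows essentially the same approach as the paper's own proof: specialize \cref{lem:GaussI} to $\alpha=1$ to get $I(x)=(x-m)^2+\sigma^2$, invoke \cref{thm:mainopt} to obtain the Gaussian form $f^\star(x)\propto e^{-(x-m)^2/\gamma^\star}$, and then solve for the variance by equating the Gaussian entropy formula with $\tau$. The paper's proof is essentially identical, differing only in that it records the intermediate value $\gamma^\star=\tfrac{1}{\pi}e^{2\tau-1}$ explicitly before stating the variance.
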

    Thus, for $\alpha=1$, increasing $\tau$ does not change the mean, but increases the variance of the output density.
    \begin{proof}
        % %
        For $\alpha=1$, \cref{lem:GaussI} implies that
        \begin{align*}
            I(x) 
            &= (x-m)^2 + \sigma^2.
        \end{align*}
        % %
        As shown earlier, by \cref{thm:mainopt}, $f^\star$ has the following form
        \[
            f^\star(x) \propto e^{\frac{-(x-m)^2 + \sigma^2}{\gamma^\star}} \propto e^{\frac{-(x-m)^2}{\gamma^\star}}.
        \]
        where the proportionality constant and $\gamma^\star$ are determined by $\int_\R f^\star(x)d\mu(x)=1$ and $\ent(f^\star)=\tau$.
         $f^\star$ is a Gaussian density with mean $m$ and variance $\frac{\gamma^\star}{2}$ and, hence, $\ent(f^\star)=\frac{1}{2}+\frac{1}{2}\ln\inparen{\pi \gamma^\star }$ \cite{maxEntropyWiki}.
        Since $\ent(f^\star)=\tau$, the previous equality implies that $\gamma^\star = \frac{1}{\pi}e^{2\tau-1}$.
        It follows that $f^\star(x) = \frac{1}{\sqrt{2e^{2\tau-1}}} \exp\inparen{-\pi\frac{(x-m)^2}{e^{2\tau-1}} }$ which is the Gaussian density with mean $m$ and {variance $\frac{1}{2\pi}e^{2\tau-1}$.}
    \end{proof}

\paragraph{Shift in mean with increase in $\alpha$.}
We consider the effect of the parameter $\alpha$ on the mean of the output density. We consider an instance $\cI_1=(\Omega, \fD, \ell, \alpha, \tau)$ where $\Omega = \mathbb{R}$, $\fD$ is the  normal density $N(\mu, \sigma^2)$, $\ell(v,x) = (x-v)^2$, $\tau = \ent(\fD) = \ln(\sigma \sqrt{2 \pi e})$ and $\alpha = 1$. We know from the proof of~\Cref{cor:gauss} that the output density $f^\star$ is the same as $\fD$ and hence, has mean $0$. 

Now consider an instance $\cI_2$, that corresponds to the disadvantaged group and has the same parameters as that of $\cI_1$, except that the parameter $\alpha$ is larger than $1$. 
Let $f^\star_\alpha$ denote the corresponding output density. We know from~\Cref{thm:mainopt1} that the output density is proportional to $e^{-I_\alpha(x)/\gamma^\star_\alpha}$, where $\gamma^\star_\alpha$ is the optimal dual variable for the entropy constraint and~\Cref{lem:GaussI} shows that $I_\alpha(x)$ is given by:
$$I_\alpha(x) = (\alpha-1){\sigma^2} \left( (w^2 +1) \Phi(w) +  w \phi(w) \right) + {\sigma^2} \left( w^2 + 1 \right),$$
where $w = (x-\mu)/\sigma$. 
For sake of brevity, let $g(w)$ denote $(w^2 +1) \Phi(w) +  w \phi(w)$, and assume w.l.o.g. that $\mu=0$ and $\sigma=1,$ i.e., the input density is standard normal. 
Thus, $I_\alpha(x)$ can be written as $(\alpha-1)g(x) + (x^2 + 1).$ When $\alpha$ is large, the first term here dominates the second term as long as $\Phi(x) \gg 1/\alpha$. 
Since $\Phi(x) \geq 1/2$ for all $x \geq 0$, $I_\alpha(x)$ is relatively large for all $x \geq 0$, and hence, $f^\star_\alpha(x)$ goes to 0 for $x \in [0, \infty)$ as $\alpha$ increases. 
In fact, $g(x)$ is much larger than $1/\alpha$ when $x$ is larger than $-\Omega(\sqrt{\ln(\alpha)})$ and, hence, $f^\star_\alpha$ should be small for such values of $x$ as well. 
Therefore, we expect the mean of the output density $f^\star_\alpha$ to go to $-\infty$ as $\alpha$ goes to $\infty$. Further, the mean of $f^\star_\alpha$ should decrease at a logarithmic rate with respect to $\alpha$. We verify these observations numerically in~\cref{fig:largealphagauss}.

\begin{figure}[t!]
     \centering
     %\vspace{-0.25in}
     \subfigure[\label{plot:gaussalpha1}]{
        \includegraphics[width=0.45\linewidth]{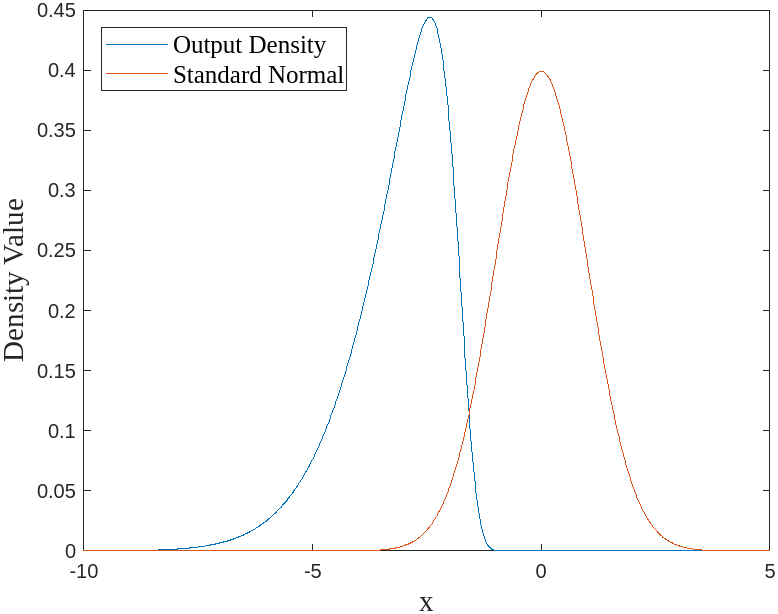}
     }
     \subfigure[\label{plot:gaussalpha2}]{
        \includegraphics[width=0.45\linewidth]{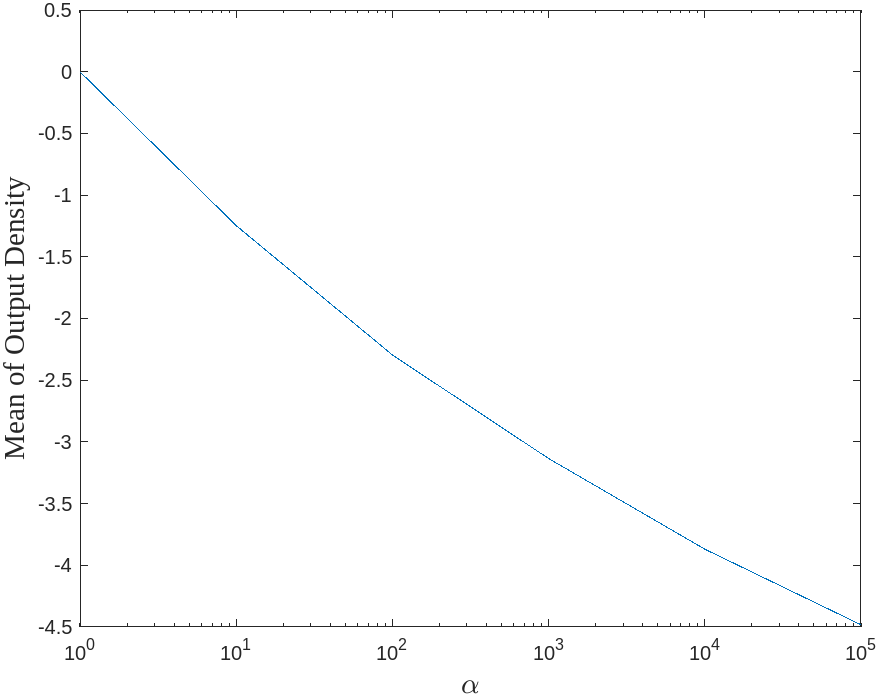}
     }
     %\vspace{-0.1in}
     \caption{
        \small Figure~(a) plots the input density, i.e., standard normal density,  and the output density when the parameter $\alpha=1000$. Figure~(b) plots the mean of the output density as a function of the parameter $\alpha$ (on a log scale).  
    }
    \label{fig:largealphagauss}
 \end{figure}

\paragraph{Deriving the implicit variance model.}
We show that the implicit variance model can be derived from our optimization framework. In particular, we prove the following result:
\begin{restatable}{theorem}{implicit}
\label{thm:implicit_variance}
    Consider an instance of the implicit variance model given by parameters $\mu, \sigma, \sigma_0$. Consider instances  $I_1=(\Omega, f_{\mathcal{G}}, \ell, \alpha, \tau_1)$ and $I_2=(\Omega, f_{\mathcal{G}}, \ell, \alpha, \tau_2)$ of~\eqref{prog:framework}, where  $\Omega=\mathbb{R}$, $f_{\mathcal{G}}$ is the normal density $N(\mu, \sigma_0^2)$, $\alpha=1$, $\ell(v,x) = (x-v)^2$, $\tau_1=\frac{1}{2}(1+\ln (2 \pi \sigma_0^2))$ and  $\tau_2 = \frac{1}{2}(1 + \ln(2 \pi (\sigma_0^2+ \sigma^2)))$. Then the output density of~\eqref{prog:framework} on $I_1$ is $N(\mu, \sigma_0^2)$ and the output density of~\eqref{prog:framework} on $I_2$ is $N(\mu, \sigma_0^2 + \sigma^2)$.
\end{restatable}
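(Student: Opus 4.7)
The plan is to observe that Theorem~\ref{thm:implicit_variance} is essentially a corollary of~\Cref{cor:gauss}, which already characterizes the optimal solution to~\eqref{prog:framework} when the input density is Gaussian, $\alpha = 1$, and the loss is $\ell(v,x) = (x-v)^2$. Specifically,~\Cref{cor:gauss} tells us that under these conditions the output density is Gaussian with the same mean $\mu$ as the input and variance $\frac{1}{2\pi} e^{2\tau - 1}$.

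So the proof reduces to a direct verification that the specific choices of $\tau_1$ and $\tau_2$ in the statement produce variances $\sigma_0^2$ and $\sigma_0^2 + \sigma^2$ respectively. For the instance $I_1$, applying~\Cref{cor:gauss} with input Gaussian $N(\mu, \sigma_0^2)$ and $\tau = \tau_1 = \tfrac{1}{2}(1 + \ln(2\pi \sigma_0^2))$ yields output variance
\[
\frac{1}{2\pi} e^{2\tau_1 - 1} = \frac{1}{2\pi} e^{\ln(2\pi \sigma_0^2)} = \sigma_0^2,
\]
so the output density is $N(\mu, \sigma_0^2)$. For the instance $I_2$, applying~\Cref{cor:gauss} with the same input density and $\tau = \tau_2 = \tfrac{1}{2}(1 + \ln(2\pi (\sigma_0^2 + \sigma^2)))$ yields output variance
\[
\frac{1}{2\pi} e^{2\tau_2 - 1} = \frac{1}{2\pi} e^{\ln(2\pi (\sigma_0^2 + \sigma^2))} = \sigma_0^2 + \sigma^2,
\]
giving output density $N(\mu, \sigma_0^2 + \sigma^2)$.

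There is no real obstacle here: the two statements are equivalent once~\Cref{cor:gauss} is in hand. The only thing worth being careful about is to confirm that~\Cref{cor:gauss} is applicable to both instances, i.e.\ that assumptions \textbf{(A0)}--\textbf{(A5)} hold for each. Since $\Omega = \mathbb{R}$ has infinite length and both $\tau_1, \tau_2$ are finite, \textbf{(A0)} holds; \textbf{(A1)}--\textbf{(A3)} follow from the choice of loss and the finite variance of $\fD$; \textbf{(A4)} holds with, e.g., $R = \sigma_0^2$; and \textbf{(A5)} follows from strict convexity of $I(x)$ when $\alpha = 1$ (see the paragraph on applicability of~\Cref{thm:mainopt} in~\cref{sec:gaussian}). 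Thus~\Cref{cor:gauss} applies verbatim to both $I_1$ and $I_2$, and the theorem follows immediately by substituting the chosen values of $\tau_1, \tau_2$.
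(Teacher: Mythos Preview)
Your proposal is correct and follows essentially the same approach as the paper: both invoke \Cref{cor:gauss} directly and then substitute $\tau_1$ and $\tau_2$ into the variance formula $\frac{1}{2\pi}e^{2\tau-1}$ to obtain $\sigma_0^2$ and $\sigma_0^2+\sigma^2$ respectively. Your added verification that assumptions \textbf{(A0)}--\textbf{(A5)} hold is a nice touch, though the paper simply relies on the applicability discussion already given earlier in \cref{sec:gaussian}.
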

\begin{proof}
    The result follows from~\Cref{cor:gauss}. For the instance~$\cI_1$,~\Cref{cor:gauss} shows that the output density of~\ref{prog:framework} is the normal density with mean $\mu$ and variance 
    $$\frac{1}{2\pi} e^{2 \tau_1 - 1} = \sigma_0^2.$$
    Similarly, it follows that the output density for the instance~$\cI_2$ is normal with mean $\mu$ and variance $$\frac{1}{2\pi} e^{2 \tau_2 - 1} = \sigma_0^2+\sigma^2.$$
\end{proof}
\section{Pareto density}\label{sec:pareto}

The Pareto density is defined as follows over $\Omega=[1,\infty)$ and has a parameter $\beta>1$:
$$ f_{\cP}(x) \coloneqq \frac{\beta}{x^{\beta+1}}, \quad x \in [1, \infty).$$ 
The mean of this density is $\frac{\beta}{\beta-1}$.
Thus, the mean is finite only when $\beta>1$.
Its differential entropy is $1+\frac{1}{\beta}+\ln\frac{1}{\beta}$ \cite{maxEntropyWiki}.
  We consider the loss function  $\ell(x,v)\coloneqq\ln x-\ln v$.
First, we compute the expression for $I(x)$ which we use to verify the applicability of \cref{thm:mainopt} with the above parameters. 
    \begin{lemma}[\textbf{Expression for $I_\alpha(x)$}]
        \label{lem:paretoI}
        Consider an instance $\cI=(\Omega, f_\cP, \ell, \alpha, \tau)$ of~\ref{primalform} where
                $\Omega=[1,\infty)$, 
                $\ell(x,v)\coloneqq\ln x - \ln v$, and 
                $f_\cP$ is the Pareto density with parameter $\beta>1$. %, and 
        Then 
        $$I(x) =  \alpha \ln x + \frac{\alpha-1}{\beta x^\beta} -\frac{\alpha}{\beta}. $$
    \end{lemma}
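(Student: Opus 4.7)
The plan is to substitute the specific forms of $\fD$ and $\ell_\alpha$ directly into the defining integral for $I(x)$ and evaluate. Because the loss function is of \typeN (since $\ell(x,v) = \ln x - \ln v$ is strictly increasing in $x$), the definition of $\ell_\alpha$ in \eqref{eq:loss} splits the integration range at $v=x$:
\begin{align*}
I(x) \;=\; \alpha \int_{1}^{x} (\ln x - \ln v)\,\frac{\beta}{v^{\beta+1}}\,d\mu(v)
\;+\; \int_{x}^{\infty} (\ln x - \ln v)\,\frac{\beta}{v^{\beta+1}}\,d\mu(v).
\end{align*}
Both integrals are valid since $x \geq 1$ and the Pareto density is supported on $[1,\infty)$.

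Next I will compute the four resulting pieces. The pieces of the form $\int \frac{\beta}{v^{\beta+1}}\,dv$ are immediate, giving $1 - x^{-\beta}$ on $[1,x]$ and $x^{-\beta}$ on $[x,\infty)$, so that $(\ln x)$-pieces combine to $\alpha\ln x\,(1 - x^{-\beta}) + \ln x \cdot x^{-\beta}$. The pieces of the form $\int \ln v \cdot \frac{\beta}{v^{\beta+1}}\,dv$ are handled by integration by parts with $u = \ln v$ and $dw = \beta v^{-\beta-1} \,dv$, yielding the antiderivative $-v^{-\beta}\ln v - \tfrac{1}{\beta} v^{-\beta}$. Evaluating at the appropriate endpoints gives
\begin{align*}
\int_{1}^{x} \ln v \cdot \frac{\beta}{v^{\beta+1}} \,dv = -x^{-\beta}\ln x + \tfrac{1}{\beta}(1 - x^{-\beta}),
\qquad
\int_{x}^{\infty} \ln v \cdot \frac{\beta}{v^{\beta+1}} \,dv = x^{-\beta}\ln x + \tfrac{1}{\beta} x^{-\beta}.
\end{align*}

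Finally, I assemble and simplify. Plugging these into the expression for $I(x)$, the $x^{-\beta}\ln x$ terms cancel within the $\alpha$-factor and within the second integral, leaving
\begin{align*}
I(x) = \alpha\bigl(\ln x - \tfrac{1}{\beta} + \tfrac{x^{-\beta}}{\beta}\bigr) - \tfrac{x^{-\beta}}{\beta}
     = \alpha \ln x + \frac{\alpha-1}{\beta x^{\beta}} - \frac{\alpha}{\beta},
\end{align*}
which is the claimed identity. There is no main obstacle; the computation is routine once the split induced by \typeN risk-averse losses is set up correctly, and the only minor check is that the boundary term $x^{-\beta}\ln x$ at infinity vanishes (which holds since $\beta > 0$).
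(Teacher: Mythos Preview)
Your proof is correct and follows essentially the same approach as the paper: split the integral at $v=x$ according to the definition of $\ell_\alpha$, then evaluate using integration by parts. The only cosmetic difference is that you separate $\ln x - \ln v$ into two pieces and integrate by parts on the $\ln v$ piece alone, whereas the paper integrates the combined factor $(\ln x - \ln v)$ by parts directly; both routes give the same computation.
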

    \begin{proof}
     We use integration by parts to derive the following expression for     $I(x)$: 
        \begin{align*}
          I(x)  &= \alpha \int_1^x (\ln x - \ln v) \frac{\beta}{v^{\beta+1}} d\mu(v) +  \int_{x}^\infty (\ln x - \ln v) \frac{\beta}{v^{\beta+1}} d\mu(v) \\
            & = \alpha \left[ \frac{\ln v -\ln x}{v^\beta} \right]_1^x - \alpha \int_1^x \frac{1}{v^{\beta+1}} d\mu(v) + 
            \left[ \frac{\ln v -\ln x}{v^\beta} \right]_x^\infty  + \int_x^\infty \frac{1}{v^{\beta+1}} d\mu(v) \\
            & = \alpha \ln x + \frac{\alpha}{\beta} \left( \frac{1}{x^\beta} - 1\right) - \frac{1}{\beta x^\beta}.
        \end{align*}
        
    \end{proof}

    \paragraph{Applicability of \cref{thm:mainopt}.} Next, we verify that assumptions \textbf{(A0)--(A5)} hold.
        Since $\Omega=[1,\infty)$, \textbf{(A0)} holds for any finite $\tau$.
        \textbf{(A1)} and \textbf{(A2)} hold due to the choice of the loss function.
        \textbf{(A3)} can be shown to hold since $f_\cP$ is a Pareto density: To see this note that for any finite $x\geq 1$
        \begin{align*}
            \int_\Omega \abs{\ell(x,v)}f_\cP(v)d\mu(v)
            &= 
                \int_1^x \inparen{\ln(x) + \ln(v)}f_\cP(v)d\mu(v)
                + \int_x^\infty \inparen{\ln(v) - \ln(x)} f_\cP(v)d\mu(v)\\
            &\leq \int_1^\infty \inparen{\ln(x) + \ln(v)}f_\cP(v)d\mu(v)\\ 
            &= \ln(x) + \int_1^\infty \ln(v)f_\cP(v)d\mu(v)\\ 
            &= \ln(x) + \insquare{-\frac{\ln(v)}{v^{\beta}}}_1^\infty + \int_{1}^\infty \frac{1}{v^{\beta+1}} d\mu(v)\\ 
            &= \ln(x) + \insquare{-\frac{v^{-\beta}}{\beta}}_1^\infty\\ 
            &= \ln(x) + \frac{1}{\beta}\\ 
            &<\infty.
        \end{align*}
        Thus, \textbf{(A3)} holds.
        \textbf{(A4)} holds with, e.g., $R=2$.
        By \cref{lem:paretoI}, 
        $$I(x) =  \alpha \ln x + \frac{\alpha-1}{\beta x^\beta} -\frac{\alpha}{\beta}.$$
       Thus, $I(x)$ is differentiable at each $x\in \Omega$.
        Moreover, for all $x\in \Omega$
        \[
           \frac{ \partial I(x)}{\partial x} = \frac{\alpha}{x} - \frac{\alpha-1}{x^{\beta+1}} = \frac{1}{x}\inparen{\alpha - \frac{\alpha-1}{x^\beta}} \stackrel{x\geq 1}{>} 0.
        \]
        Since the derivative is positive for all $x\in \Omega=[1,\infty)$, it follows that $I(x)$ has a unique global minimum at $x=1$.
        Therefore, \textbf{(A5)} holds.
        Since assumptions \textbf{(A0)--(A5)} hold, we invoke \cref{thm:mainopt} to deduce the form of $f^\star$.
\Cref{fig:3d_plot_pareto:tau}  plots the mean and the variance of the output density $f^\star$ as a function of the parameter $\tau$  when the input density $\fD$ is the standard normal density.~\Cref{fig:3d_plot_pareto:alpha} plots the mean and the variance of the output density as a function of the parameter $\alpha$ in this setting.
\begin{figure}[t!]
     \centering
     \subfigure[\label{subfig:mean:pareto:tau}]{
        \includegraphics[width=0.45\linewidth]{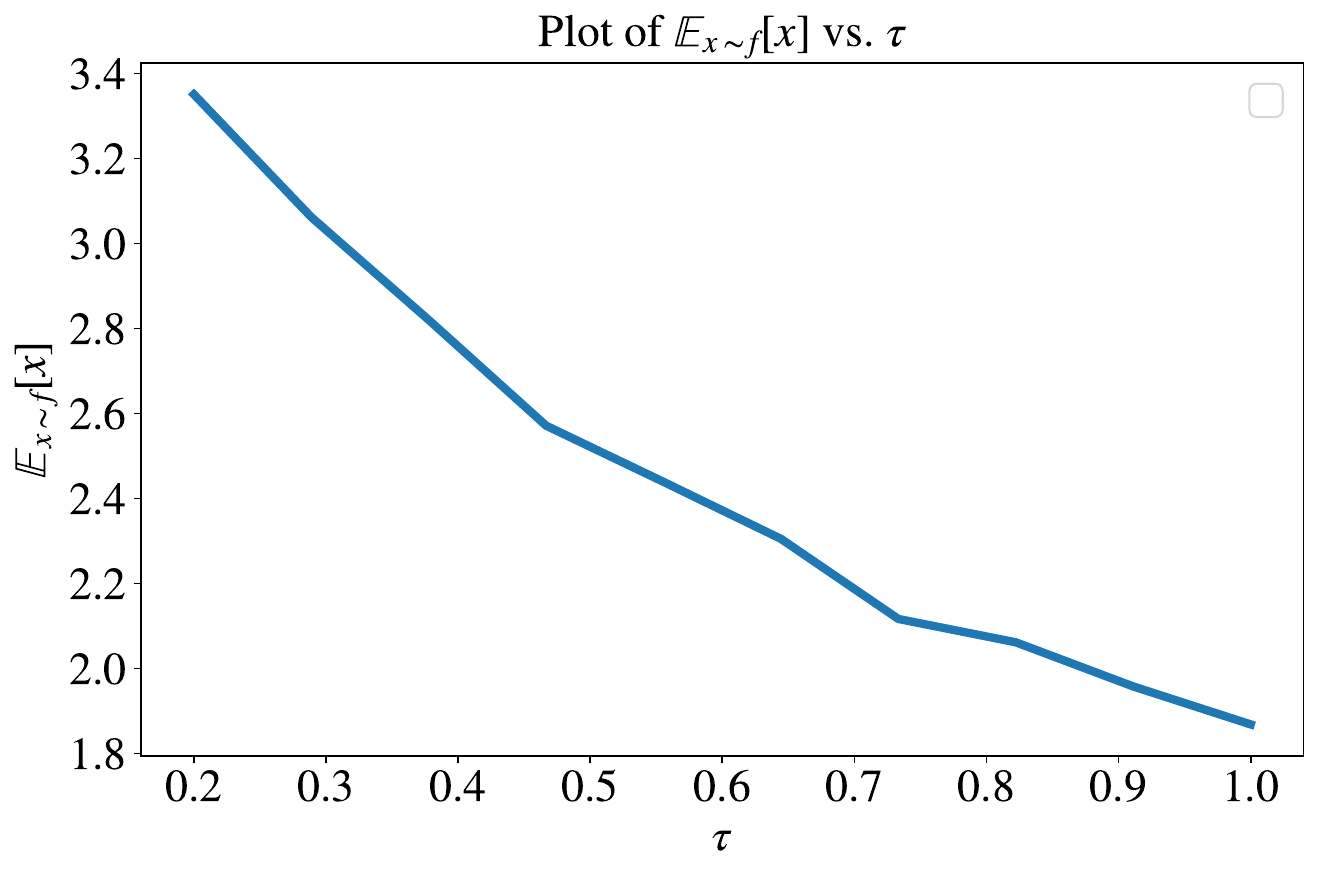}
     }
     \subfigure[\label{subfig:var:pareto:tau}]{
        \includegraphics[width=0.45\linewidth]
        {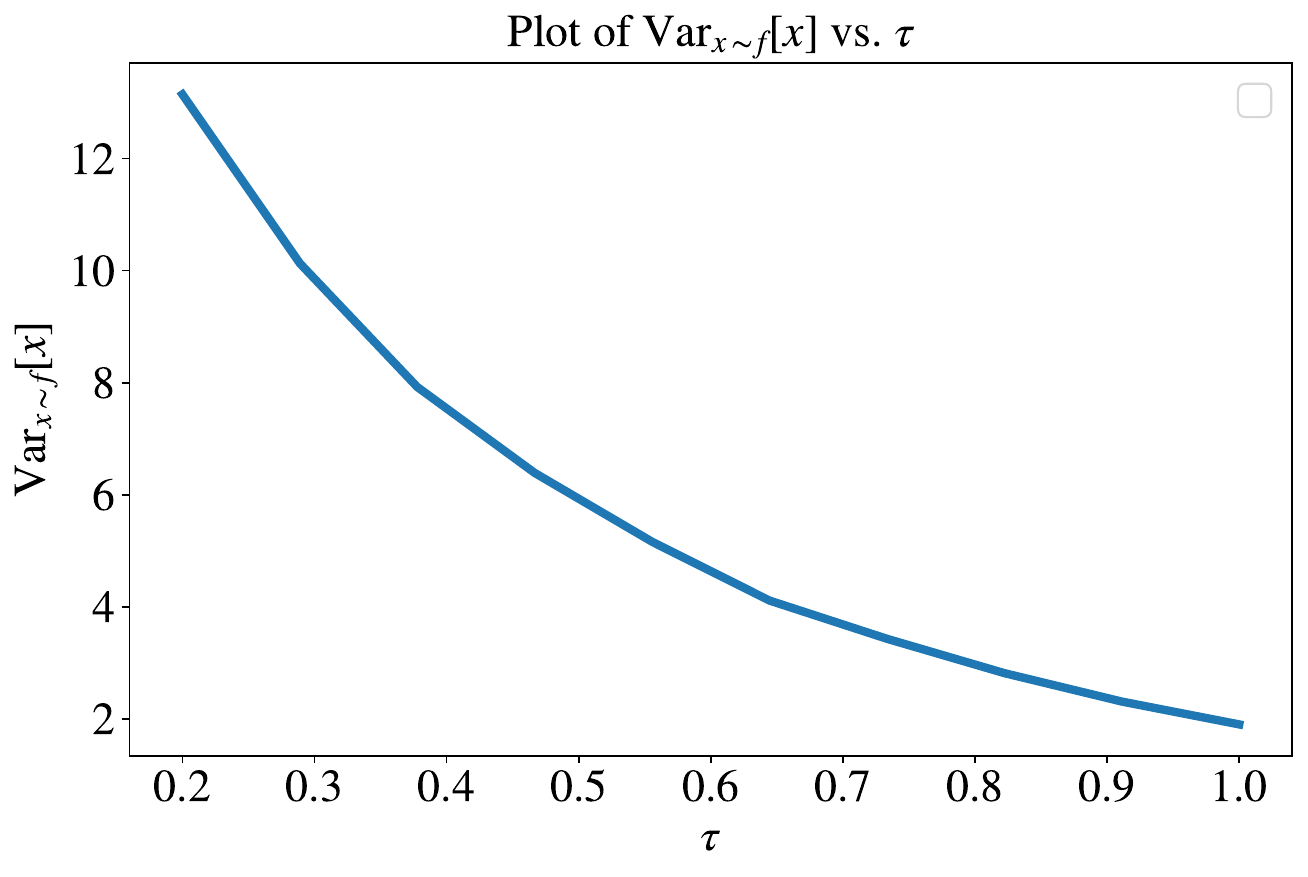}
     }
     % 
     %\vspace{-0.1in}
     \caption{
        \small Mean and variance
        of the output density $f$, i.e., $\Ex_{x\sim f}[x]$ 
        and ${\rm Var}_{x\sim f}[x]$, 
        as a function of $\tau$
        when $\fD$ is the Pareto distribution with parameter 3 and $\alpha$ is fixed to 2. 
        }
     \label{fig:3d_plot_pareto:tau}
 \end{figure}

  \begin{figure}[t!]
     \centering
     \subfigure[\label{subfig:mean:pareto:alpha}]{
        \includegraphics[width=0.45\linewidth]{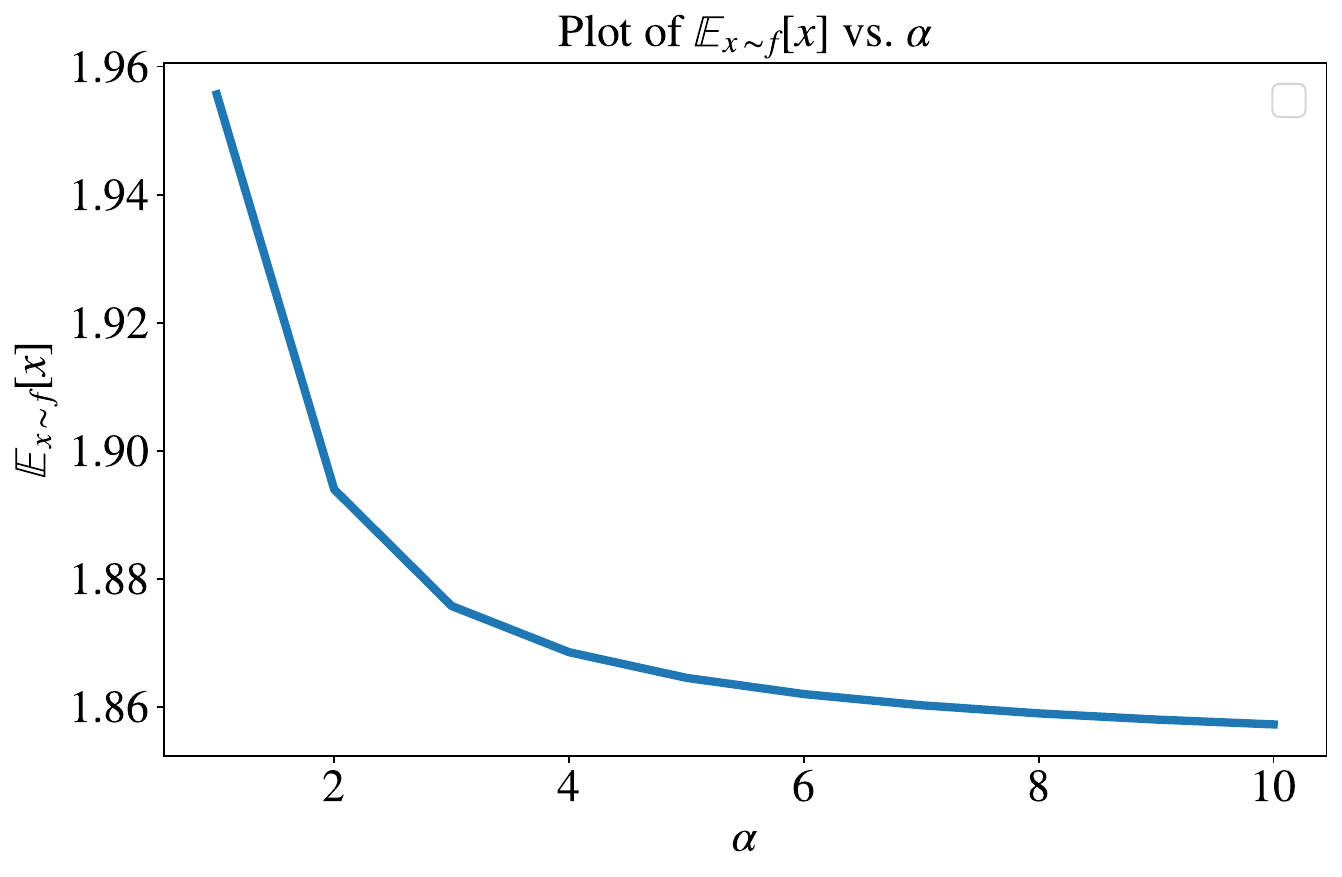}
     }
     \subfigure[\label{subfig:var:pareto:alpha}]{
        \includegraphics[width=0.45\linewidth]
        {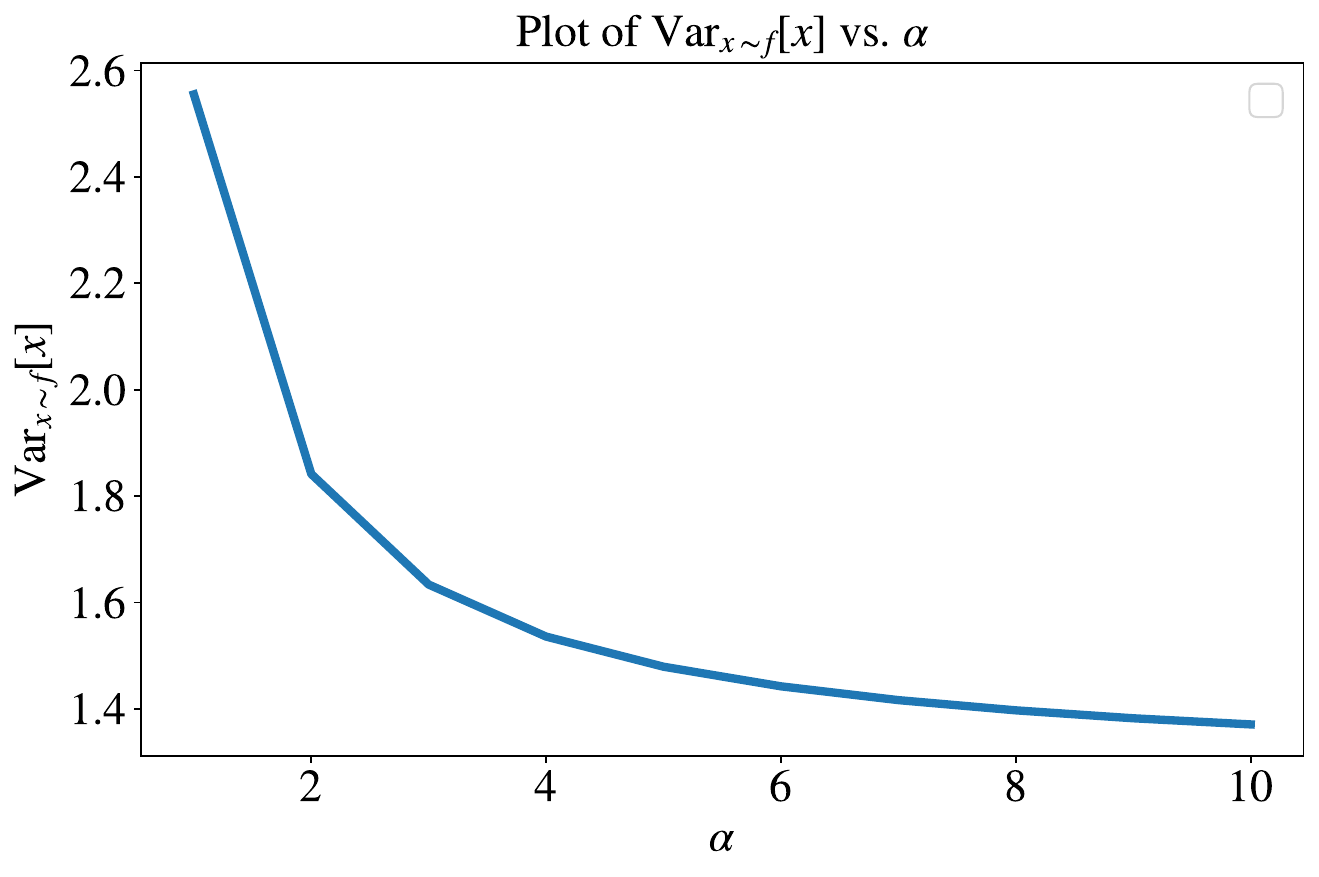}
     }
     \caption{
        \small Mean and variance 
        of the output density $f$, i.e., $\Ex_{x\sim f}[x]$ 
        and ${\rm Var}_{x\sim f}[x]$, 
        as a function of $\alpha$ 
        when $\fD$ is the Pareto distribution with parameter 3 and $\tau$ is fixed to the entropy of $\fD$. 
        }
     \label{fig:3d_plot_pareto:alpha}
     %\vspace{-0.1in}
 \end{figure}

    \begin{theorem}[\textbf{Expression for $f^\star$ with $\alpha=1$}]
        \label{cor:pareto}
        Consider an instance $\cI=(\Omega, f_\cP, \ell, 1, \tau)$ of~\ref{primalform} where
            $\Omega=[1, \infty)$, 
            $\ell\coloneqq\ln x - \ln v$, and 
            $f_\cP$ is the Pareto density with parameter $\beta>1$. 
        Let $f^\star$ be the optimal solution of instance $\cI$.
        $f^\star$ is a Pareto density with parameter $\beta_\tau$ satisfying the following condition:
        $$1 + \frac{1}{\beta_\tau} - \ln \beta_\tau = \tau. $$
        Let $m_\tau$ and $\sigma^2_\tau$ be the mean and variance of $f^\star_\tau$ as a function of $\tau$.
        It holds that $m_\tau$ is monotonically increasing in $\tau$ and $\sigma^2_\tau$ is either infinite or monotonically increasing in $\tau$.
    \end{theorem}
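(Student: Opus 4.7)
The plan is to apply Theorem~\ref{thm:mainopt} (whose hypotheses have just been verified for this instance) to obtain an explicit form of $f^\star$, identify it as a Pareto density, use the entropy constraint to pin down its parameter, and then deduce the monotonicity statements by elementary calculus.

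First, I would substitute $\alpha=1$ into Lemma~\ref{lem:paretoI} to obtain $I(x)=\ln x - 1/\beta$, so that Theorem~\ref{thm:mainopt} gives $f^\star(x)\propto \exp(-I(x)/\gamma^\star)\propto x^{-1/\gamma^\star}$ on $[1,\infty)$. For this to be a probability density we need $1/\gamma^\star>1$, so setting $\beta_\tau \coloneqq 1/\gamma^\star - 1 > 0$ and normalizing (the integral on $[1,\infty)$ evaluates to $1/\beta_\tau$) yields $f^\star(x)=\beta_\tau\,x^{-\beta_\tau-1}$, i.e., a Pareto density with parameter $\beta_\tau$. Combining the fact that $\ent(f^\star)=\tau$ (also from Theorem~\ref{thm:mainopt}) with the known Pareto entropy formula $1+1/\beta_\tau-\ln\beta_\tau$ then gives the stated defining equation.

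For the monotonicity claims, define $h(\beta)\coloneqq 1+1/\beta-\ln\beta$. Its derivative $h'(\beta)=-1/\beta^2-1/\beta$ is strictly negative on $(0,\infty)$, so $h$ is a strictly decreasing bijection from $(0,\infty)$ onto $\mathbb{R}$, and hence $\beta_\tau=h^{-1}(\tau)$ is a strictly decreasing function of $\tau$. The Pareto mean $\beta/(\beta-1)$ (which is $+\infty$ for $\beta\le 1$) is strictly decreasing in $\beta$ on $(1,\infty)$, so $m_\tau$ is strictly increasing in $\tau$ and becomes infinite once $\tau$ exceeds the threshold $h(1)=2$. For the variance, when $\beta_\tau>2$ one has $\sigma^2_\tau=\beta_\tau/((\beta_\tau-1)^2(\beta_\tau-2))$; a logarithmic derivative gives $1/\beta-2/(\beta-1)-1/(\beta-2)<0$ for $\beta>2$, so $\sigma^2_\tau$ is decreasing in $\beta_\tau$ on this range and therefore increasing in $\tau$. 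For $\beta_\tau\le 2$ the variance integral diverges, yielding the stated dichotomy.

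The argument is essentially a computation, so no major obstacle is expected; the only mildly delicate point is the range book-keeping, namely verifying that strict positivity of $\gamma^\star$ from Theorem~\ref{thm:mainopt} together with integrability on $[1,\infty)$ forces $\beta_\tau>0$, and tracking how the finiteness thresholds $\beta_\tau=1$ and $\beta_\tau=2$ translate via $h$ into the threshold values of $\tau$ beyond which the mean and variance become infinite.
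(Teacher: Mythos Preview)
Your proposal is correct and follows essentially the same approach as the paper: substitute $\alpha=1$ into Lemma~\ref{lem:paretoI}, apply Theorem~\ref{thm:mainopt} to identify $f^\star$ as Pareto, use $\ent(f^\star)=\tau$ and the Pareto entropy formula to fix $\beta_\tau$, and then argue monotonicity via the decreasing map $\beta\mapsto 1+1/\beta-\ln\beta$ together with the standard Pareto mean and variance formulas. Your treatment of the integrability/finiteness thresholds ($\beta_\tau>0$, $\beta_\tau>1$, $\beta_\tau>2$) is in fact slightly more careful than the paper's.
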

    \begin{proof}
        For $\alpha=1$, \cref{lem:paretoI} implies that 
        \begin{align*}
            I(x) 
                &= \ln(x) - \frac{1}{\beta}.
        \end{align*}
        As shown earlier, one can invoke \cref{thm:mainopt} for instance $\cI$ for any finite $\tau$, which implies that $f^\star$ has the following form
        \[
            f^\star(x) \propto e^{\frac{-\ln(x) + \frac{1}{\beta}}{\gamma^\star}} \propto x^{-\frac{1}{\gamma^\star}}.
        \]
        where the proportionality constant and $\gamma^\star$ are determined by $\int_\Omega f^\star(x)d\mu(x)=1$ and $\ent(f^\star)=\tau$.
        $f^\star$ is a Pareto density with parameter $\beta_\tau = \frac{1}{\gamma^\star} - 1$ and, hence, has a differential entropy of $1+\frac{1}{\beta_\tau}+\ln{\frac{1}{\beta_\tau}}$ \cite{maxEntropyWiki}.
        This combined with the condition $\ent(f^\star)=\tau$ implies that $f^\star$ is the Pareto density with $\beta_\tau$ satisfying
        \[
            1+\frac{1}{\beta_\tau}+\ln{\frac{1}{\beta_\tau}}=\tau.
        \]
        Since $1+\frac{1}{\beta_\tau}+\ln{\frac{1}{\beta_\tau}}$ is a decreasing function of $\beta_\tau$, it follows that increasing $\tau$ monotonically decreases $\beta_\tau$.
        Note that $\beta_\tau>1$ for any finite $\tau$.
        The mean and variance of $f^\star$ are $m_\tau = \frac{\beta_\tau}{\beta_\tau - 1}$ and 
        $\sigma^2_\tau = \frac{\beta_\tau}{(\beta_\tau-1)^2 (\beta_\tau-2)}$ respectively.
        Since $m_\tau$ is a monotonically decreasing function of $\beta_\tau$ (for $\beta_\tau>1$)  and $\beta_\tau$ is a monotonically decreasing function of $\tau$, it follows that $m_\tau$ is a monotonically increasing function of $\beta_\tau$.
        The variance $\sigma^2_\tau$ is finite when $\beta_\tau>2$.
        Moreover, if $\beta_\tau>2$, then $\sigma^2_\tau$ is a monotonically decreasing function of $\beta_\tau$.
        Since $\beta_\tau$ is a monotonically decreasing function of $\tau$, it follows that $\sigma^2_\tau$ is either infinite or a monotonically increasing function of $\beta_\tau$.
    \end{proof}

\paragraph{Reduction in mean with increase in $\alpha$.}

\begin{figure}[t!]
     \centering
     %\vspace{-0.25in}
     \subfigure[\label{plot:mult1}]{
        \includegraphics[width=0.45\linewidth]{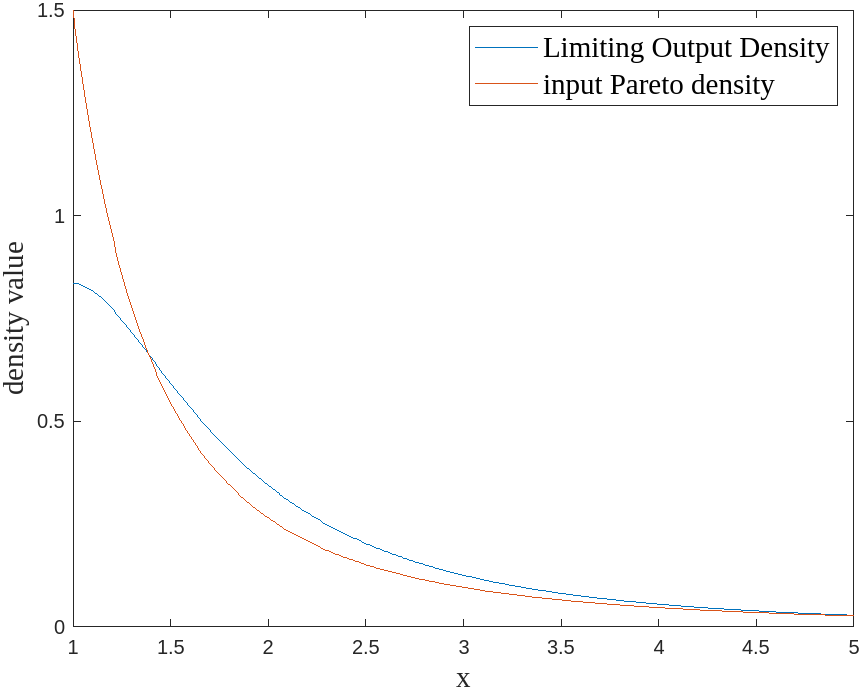}
     }
     \subfigure[\label{plot:mult2}]{
        \includegraphics[width=0.45\linewidth]{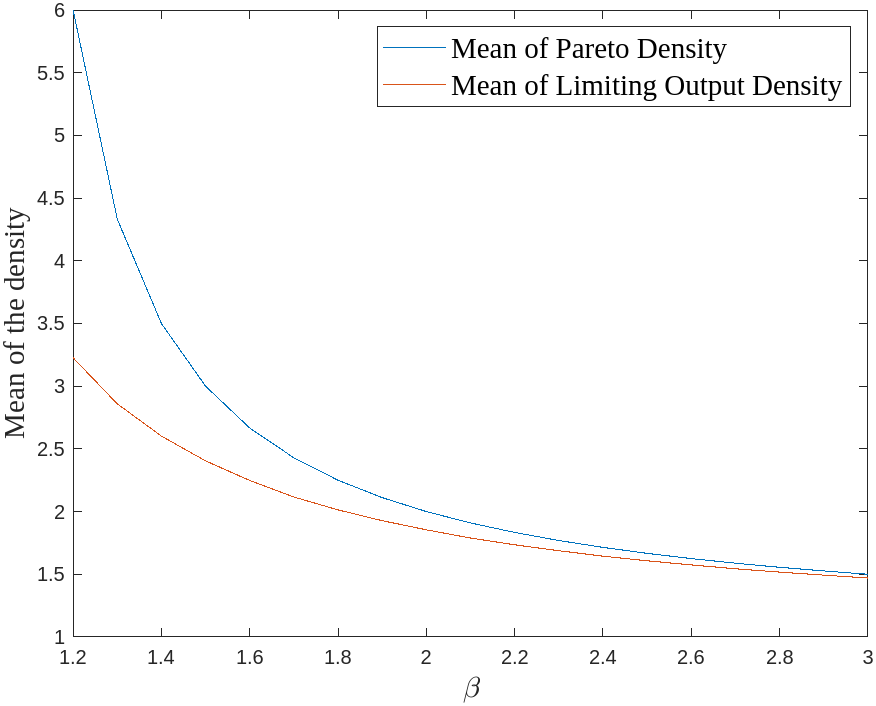}
     }
     %\vspace{-0.1in}
     \caption{
        \small Figure~(a) plots the input Pareto density with parameter $\beta=1.5$ and the corresponding limiting output density. Figure ~(b) plots the mean of the input Pareto density and the corresponding limiting output density as a function of the parameter $\beta$. 
    }
    \label{fig:multiplicative-bias}
 \end{figure}

We show how our framework can capture a similar phenomenon as the multiplicative-bias model of \cite{KleinbergR18}.
Recall that in the multiplicative-bias model, the estimated utility of the disadvantaged group is scaled down by a factor $\rho > 1$. 

Fix a parameter $\beta > 1$. 
Consider an instance $\cI_1$ given by the  parameters $I_1=(\Omega=[1, \infty), \fD, \ell, \alpha, \tau)$  where $\fD$ is the Pareto density with parameter $\beta$,  $\tau=\ent(\fD) = 1+\frac{1}{\beta} + \ln \frac{1}{\beta}, \alpha=1$ and $\ell(x,v)=\ln x - \ln v$. 
As shown in~\Cref{cor:pareto}, the output density $f^\star$ is the same as $\fD$. 
The proof of this result also shows that the output density $f^\star(x)$ is proportional to $e^{-I(x)/\gamma^\star}$, where $I(x) = \ln x - \frac{1}{\beta}$ and $\gamma^\star$ is the optimal dual variable for the corresponding entropy constraint. 

The disadvantaged group is modeled by an instance $\cI_2$ which has the same parameters as that of $\cI_1$ except that the parameter $\alpha$ is larger than 1. In this case,~\Cref{cor:pareto} shows that the optimal density $f^\star_\alpha(x)$ is proportional to $e^{-I_\alpha(x)/\gamma_\alpha^\star}$, where $I_\alpha(x) = \alpha \ln x + \frac{\alpha-1}{\beta x^\beta} - \frac{\alpha}{\beta}$ and $\gamma_\alpha^\star$ is the corresponding optimal dual variable. For large $\alpha$, $I_\alpha(x) \sim \alpha (I(x) + \frac{1}{\beta x^\beta}).$ Hence, the output density (for large $\alpha$) is given by 
$$ f^\star_\alpha(x) = K \cdot e^{-\frac{\alpha}{\gamma^\star_\alpha} \left(I(x) + \frac{1}{\beta x^\beta} \right)}, $$
where $K$ is a normalization constant. 
The normalization constant $K$ and the ratio $\gamma^\star/\alpha$ are given by two constraints: (i) integral of $f^\star_\alpha$ over the domain $\Omega$ should be 1, and (ii) the entropy of $f^\star_\alpha$ should be equal to $\tau = \ent(\fD)$.
This shows that the ratio $\gamma^\star_\alpha/\alpha$ tends to a constant for large $\alpha$, and hence, the output density converges to the density given by
$$g^\star(x) = K e^{-C \left( \ln x + \frac{1}{\beta x^\beta} \right)}.$$
We plot the output density $g^\star(x)$ for $\beta=1.5$ in~\cref{plot:mult1}. In~\cref{plot:mult2}, we observe that for all considered values of the parameter $\beta$, the mean of the limiting output density $g^\star$  always remains below that of the input density $\fD$. 
It is also worth noting that for large $\beta$, the gap between the mean of the input density and that of the  (limiting) output density diminishes. 
Intuitively, this happens because as $\beta$ increases, the mean of the input (Pareto) density gets closer to 1 (recall that the mean of a  Pareto density with parameter $\beta$ is equal to $\frac{\beta}{\beta-1}$). Now the output density also places more mass closer to 1, but gets restricted because of two conditions: (i) the entropy of the output density must be the same as that of the corresponding input density, and (ii) it cannot place any probability mass on values below 1. Hence, there is not much ``room'' for the output density to place extra probability mass on values close to 1 (as compared to the corresponding Pareto density). Hence its mean cannot go much below that of the corresponding Pareto density.

\section{Conclusion, limitations, and future work}
    %
  %\vspace{-1mm}
    
    %
    We present a new optimization-based approach to modeling bias in evaluation processes (\eqref{prog:framework}).
    Our model has two parameters, risk averseness $\alpha$ and resource-information trade-off $\tau$, which are well documented to lead to evaluation biases in a number of contexts.
    We show that it can generate rich classes of output densities (\cref{thm:mainopt_inf}) and discuss how the output densities depend on the two parameters (\cref{sec:theory}). 
    Empirically, we demonstrate that the densities arising from our model have a good fit with the densities of biased evaluations in multiple real-world datasets and a synthetic dataset; often, leading to a better fit than models of prior works \cite{KleinbergR18,EmelianovGGL20} (\cref{tab:validation_results}).
    We use our model as a tool to evaluate different types of bias-mitigating interventions in a downstream selection task--illustrating how this model could be used by policymakers to explore available interventions (\cref{fig:selection_results,sec:additional_simulations:case_study}); see also \cref{sec:specific_examples}.
    Our work relies on the assumptions in prior works that there are no differences (at a population level) between $G_{1}$ and $G_{2}$; see, e.g., \cite{KleinbergMR17,EmelianovGGL20,celis2020interventions}. If this premise is false, then the effectiveness of interventions can be either underestimated or overestimated which may lead a policymaker to select a suboptimal intervention. That said, if all the considered interventions reduce risk aversion and/or resource constraints, then the chosen intervention should still have a positive impact on the disadvantaged group.
    Our model can be easily used to study multiple socially-salient groups by considering a group-specific risk-aversion parameter and a group-specific information constraint. For example, if two groups $G_1,G_2$ overlap, then we can consider three disjoint subgroups $G_1\cap G_2,$ $G_1\backslash G_2$ and $G_2 \backslash G_1$. 
     Our model of evaluation processes considers scenarios where candidates are evaluated along a single dimension.
 It can also be applied -- in a dimension-by-dimension fashion -- to scenarios where individuals are evaluated along multiple dimensions, but the evaluation in any dimension is independent of the evaluation in other dimensions. 
   Modeling evaluation processes involving multiple correlated dimensions is an interesting direction.
    % %
    %
    While we illustrate the use of our model in a downstream selection task, utilities generated from biased evaluation processes are also used in other decision-making tasks (such as regression and clustering), and studying the downstream impact of evaluation biases on them is an important direction.
    Moreover, the output of or model can be used by policymakers to assess the impact of interventions in the supermodular set aggregation setting, where the utility of the selected group is more than the sum of the individuals.
   Our model cannot be directly used to understand the effect of interventions in the long term. 
    Additional work would be required to do so, perhaps as in \cite{celis2021longterm}, and would be an important direction for future work.
     Finally, any work on debiasing could be used adversarially to achieve the opposite goal. We need third-party evaluators, legal protections, and available recourse for affected parties -- crucial components of any system -- though beyond the scope of this work.

\paragraph{Acknowledgments.} This project is supported in part by NSF Awards CCF-2112665 and IIS-2045951.

\newpage

\bibliographystyle{plain} 
\bibliography{references}

\newpage

\appendix

\section{Other related work}\label{sec:other_related}

    \paragraph{Models of bias in Economics.}
            There are two prominent models of discrimination in the Economics literature: taste-based discrimination and statistical discrimination \cite{onuchic2022recent}.
            These capture different types of biases \cite{phelps1972Statistical,coateLoury1993affirmative,cornell1996culture,arrow1998economicsRacialDiscrimination,becker2010economics,Chambers2020Phelpsian,arrow2015theory} (also see \cite{benhabib2011theories,BERTRAND2017309,onuchic2022recent}).
            Taste-based discrimination \cite{becker2010economics} models explicit biases (e.g., racism and sexism) and, in the vanilla taste-based discrimination model, individuals are divided into two groups and a decision-maker pays an additional cost for interacting with individuals in the disadvantaged group.
            This additional additive cost diminishes the value of disadvantaged individuals for the decision-maker.
            While we do not model explicit biases, such an additive bias also arises in our model of evaluation processes with specific parameter choices, suggesting that additive biases may also arise due to resource constraints and risk averseness (\cref{sec:characterization}).
            Statistical discrimination models how group-wise disparities in the noise in the inputs to a Bayesian decision-maker propagate to systematic disparities in decisions \cite{phelps1972Statistical,arrow2015theory}.
            Our mathematical model can be viewed as giving an explanation of why such disparities may arise in the input.

        \paragraph{Implicit biases in Psychology.}
            There is a long and rich history of the study of implicit (and explicit) biases in Psychology, e.g., \cite{allport1954nature,lippmann1922public,greenwald2006implicit,lyness2006fit,gendler2011Epistemic,sadler2012world,williams2014double}.
            This body of works proposes various theories about why implicit biases arise \cite{gendler2011Epistemic} and their relation to real-world stimuli \cite{lawrence1985racial,Epstein2015,patterns2019charlesworth}.
            We refer the reader to \cite{greenwald1995implicit,kite2016psychology} for an overview.
            \cite{gendler2011Epistemic} explains that the ease of categorizing individuals into categories (defined by, e.g., color, race, or gender) provides an incentive to the evaluator to use their prior (possibly biased) knowledge, and this leads to implicit biases.
            In contrast, we show that even when the evaluator has the same prior estimate for all social groups, biases can arise in evaluation processes when the information-to-resource trade-off or the degree of risk averseness is different for different groups.
            Further, since resource constraints and risk averseness are not specific to the setting of a single evaluator, our model of evaluation processes also models scenarios where the evaluator denotes a group or an organization.

     \paragraph{Optimization and human behavior.} 
       The use of optimization to model human behavior dates back to (at least) von Neumann and Morgenstern's work that showed that, under a small number of assumptions, the behavior of an evaluator is as if they are optimizing the expected value of a ``utility function'' \cite{von2007theory}.
        Since then, numerous apparent deviations from this theory were discovered \cite{kahneman1982judgment}.
        These deviations, broadly referred to as irrational behavior or cognitive biases, laid the foundation of Behavioral Economics \cite{thaler2015misbehaving}. 
        Several theories have been proposed to account for the deviations in human behavior from utility maximization.
        Including prospect theory that models risk averseness of individuals -- the empirical observation that humans process losses and gains of equal amounts (monetary or otherwise) asymmetrically \cite{kahneman1982judgment} --
        bounded rationality that reconciles irrational human behavior by proposing that humans solve underlying optimization problems approximately (instead of optimally) \cite{simon1991bounded}, and 
         resource rational analysis that proposes that humans trade-off the utility of with the costs (e.g., such as effort and time) required to find a solution with higher utility \cite{griffiths2020resourcerational}.
        These works model hidden costs and constraints on humans that lead to deviations from the traditional ``rational'' utility maximization.
        Like this work, these works also use optimization to explain human behavior, but while they focus on irrational behaviors and {cognitive biases, our work focuses on biases with respect to socially-salient attributes in evaluation processes.}

        %     % 
        \paragraph{Other entropy-based models.}
        Maximum entropy distributions have been widely deployed in machine learning \cite{dudik2007maximum} and theoretical computer science \cite{SinghV14}.
        Maximum entropy distributions have been shown to be ``stable'' \cite{straszakMED2019}.  
        The maximum-entropy framework over discrete distributions has been used to preprocess data to debias it \cite{CelisKV20}.
        In our optimization program, we use entropy to measure the amount of ``information'' in a distribution --- this appears as a constraint in our program.
        An entropy-constrained viewpoint is also prevalent in explanations of other phenomena.
        For instance, various optimization algorithms can be captured using an optimization- and entropy-based viewpoint \cite{AHK2012,leger2021sinkhorn,aco_book}, 
        it is also known to arise in explanations of biological phenomena \cite{chastain2014algorithms}, 
        and leads to Page rank and other popular random-walk-based procedures \cite{mahoneyOrecchia2011regularization}.
        Finally, taking an optimization viewpoint when studying a dynamical system also has additional benefits, such as providing a ``potential function'' that functions that gives an efficient and interpretable method of tracking the progress of a complex dynamical system \cite{straszak2022IRLS}.

\section{Connection to the Gibbs equation}\label{sec:Gibbs}

\begin{theorem}[\bf Gibbs equation]
\label{thm:gibbs}
    Consider an instance $\cI=(\Omega,\fD,\ell,\alpha,\tau)$ of the optimization problem~\ref{primalform} that satisfies the assumptions of Theorem \ref{thm:mainopt}.
Then, the following  holds:
  \begin{align}
-\gamma^\star \ln Z^\star= \Err_{\ell, \alpha}(f^\star, \fD)  + \gamma^\star    \ent(f^\star).
     \end{align}
Here, $f^\star(x) \propto  e^{-\frac{I(x)}{\gamma^\star}}$ is the  solution to \ref{primalform},   
     $Z^\star\coloneqq \int e^{-\frac{I(x)}{\gamma^\star}}d\mu(x)$, and $\gamma^\star>0$ is the  solution to \ref{dualform}. 
Recall that $I(x)\coloneqq\int_\Omega \ell_\alpha(x,v) \fD(v) d \mu(v)$.
\end{theorem}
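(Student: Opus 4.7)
The plan is to obtain the Gibbs equation as a direct algebraic consequence of the closed-form expression for $f^\star$ furnished by \cref{thm:mainopt}, together with the definitions of $\ent$ and $\err_{\ell,\alpha}$. No further variational or duality arguments are needed because all the heavy lifting---existence, uniqueness, strong duality, and positivity of $\gamma^\star$---has already been done in \cref{subsec:strongduality,subsec:positivity,subsec:optimality}.

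Concretely, I would proceed as follows. By \cref{thm:mainopt}, the unique optimal solution can be written as $f^\star(x) = Z^{\star\,-1} \exp\!\bigl(-I(x)/\gamma^\star\bigr)$, where $Z^\star = \int_\Omega \exp(-I(x)/\gamma^\star)\,d\mu(x)$ is the finite normalizing constant guaranteed by $\gamma^\star>0$ and the integrability of $e^{-I/\gamma^\star}$ (which in turn follows from assumption \A{3}, the monotonicity/growth properties of $I$ established in \cref{thm:I}, and the fact that $f^\star$ is a bona fide density). Taking logarithms gives the pointwise identity
\begin{equation*}
 -\ln f^\star(x) \;=\; \frac{I(x)}{\gamma^\star} + \ln Z^\star.
\end{equation*}
I would then multiply both sides by $f^\star(x)$, integrate over $\Omega$ against the Lebesgue measure $\mu$, and use $\int_\Omega f^\star(x)\,d\mu(x)=1$ to obtain
\begin{equation*}
 \ent(f^\star) \;=\; \frac{1}{\gamma^\star}\int_\Omega I(x)\,f^\star(x)\,d\mu(x) \;+\; \ln Z^\star.
\end{equation*}

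The remaining step is to identify the integral $\int_\Omega I(x) f^\star(x)\,d\mu(x)$ with $\err_{\ell,\alpha}(f^\star,\fD)$. This is where assumption \A{3} is invoked: it guarantees $\ell_\alpha(x,\cdot)\fD(\cdot)\in L^1(\mu)$ for each $x$, which (combined with non-negativity on the relevant parts of the domain and the decomposition $I = I^L + I^R$ from \cref{subsec:I}) allows a Fubini--Tonelli swap of the integrals in
\begin{equation*}
 \err_{\ell,\alpha}(\fD,f^\star) \;=\; \int_\Omega \!\!\int_\Omega \ell_\alpha(x,v)\,f^\star(x)\,\fD(v)\,d\mu(x)\,d\mu(v) \;=\; \int_\Omega I(x)\,f^\star(x)\,d\mu(x).
\end{equation*}
Substituting back and multiplying through by $\gamma^\star>0$ yields $\gamma^\star\,\ent(f^\star) = \err_{\ell,\alpha}(f^\star,\fD) + \gamma^\star \ln Z^\star$, which rearranges to the claimed Gibbs identity.

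The only subtlety---and the single place where one has to be slightly careful rather than mechanical---is justifying the Fubini swap and the fact that $\int I(x) f^\star(x) d\mu(x)$ is finite; for \typeN loss this requires combining the monotonicity of $I^L, I^R$ from \cref{lem:integralL,lem:integralR} with the exponential decay of $f^\star$ (since $I(x)\to\infty$ at the ``bad'' end of $\Omega$ by \cref{thm:I}, the factor $\exp(-I/\gamma^\star)$ dominates any polynomial-in-$I$ growth at infinity). Once this integrability is in hand the remainder is pure bookkeeping, so I do not expect a substantial obstacle here.
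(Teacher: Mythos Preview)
Your proposal is correct and takes essentially the same approach as the paper: both start from the closed form $f^\star(x)=Z^{\star\,-1}e^{-I(x)/\gamma^\star}$ given by \cref{thm:mainopt}, take logarithms, and integrate against $f^\star$ to convert the pointwise identity into the Gibbs relation. The only cosmetic difference is that the paper phrases the pointwise identity as the KKT stationarity condition \eqref{eq:optcondition1} (so the second dual variable $\phi^\star$ appears and is then eliminated via $Z^\star=e^{1+\phi^\star/\gamma^\star}$), whereas you work directly with $\ln Z^\star$; your route is marginally cleaner and your explicit Fubini justification is a detail the paper leaves implicit.
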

\begin{proof}
Theorem \ref{thm:mainopt} implies that there exists $f^\star(x)$ and $\gamma^\star>0$ such that 
$$ f^\star(x) \propto  e^{-\frac{I(x)}{\gamma^\star}}.
$$
Thus, if we let $Z^\star\coloneqq\int e^{-\frac{I(x)}{\gamma^\star}}d\mu(x)$, then  $$f^\star(x)=\frac{e^{-\frac{I(x)}{\gamma^\star}}}{Z^\star}.$$
Thus, from the optimality condition 
\eqref{eq:optcondition1}, we obtain that there is a $\phi^\star$ such that
    \begin{align*}
           I(x) + \gamma^\star (1 + \ln f^\star(x)) + \phi^\star = 0.
        \end{align*} 
        Since $\gamma^\star>0$, we can divide by it to obtain $Z^\star=e^{1+\frac{\phi^\star}{\gamma^\star}}$.
    We integrate the above with respect to the density $f^\star(x)$ to get 
    
    \begin{align*}
     \Err_{\ell, \alpha}(f^\star, \fD)  + \gamma^\star - \gamma^\star \tau  + \phi^\star = 0.
     \end{align*}
   Thus, we obtain:
    
    \begin{align}
    \gamma^\star \tau  = \gammas\ent(f^\star) =  \Err_{\ell, \alpha}(f^\star, \fD)  + \gammas \ln Z^\star.
     \end{align}
Rearranging this equation we obtain the theorem.
 \end{proof}
In analogy with the Gibbs equation in statistical physics \cite{lairez2023short},  $Z^\star$ can be viewed as the {\em partition function} corresponding to the energy function $I(x)$, $\gamma^\star$ corresponds to the {\em temperature}, $-\gammas\ln Z^\star$ corresponds to the {\em free energy} and $\Err(f^\star,\fD)$ is the {\em internal energy}.

It follows from the theorem  that we can write $f^\star(x)$ as 
\begin{equation}
 f^\star(x) = e^{-\tau} \exp{\left( -\frac{I(x) - \Err_{\ell, \alpha}(f^\star, \fD)}{\gamma^\star} \right)}.
 \end{equation}

\section{Exponential density}\label{sec:exponential}
     The exponential density is defined as follows over $\Omega=[0,\infty)$ and has a  parameter $\lambda>0$:  
    $$f_{\rm Exp}(x) \coloneqq \lambda e^{-\lambda x},\quad x \in [0, \infty).$$
$\lambda$ is referred to as the ``rate'' parameter. 
The mean of the exponential density is $\frac{1}{\lambda}$.
The differential entropy of $f_{\rm Exp}$ is $1-\ln \lambda$ \cite{maxEntropyWiki}.
     We consider the loss function  $\ell(x,v)\coloneqq x-v$.
    First, we compute the expression of $I(x)$ which we use to verify the applicability of \cref{thm:mainopt} with the above parameters. 

    \begin{lemma}[\textbf{Expression for $I(x)$}]
        \label{lem:expI}
        Consider an instance $\cI=(\Omega, f_{\rm Exp}, \ell, \alpha, \tau)$ of~\ref{primalform} where
            $\Omega=[0,\infty)$,
            $f_{\rm Exp}$ is the Exponential density with rate parameter ${\lambda}$, and 
            $\ell(x,v)\coloneqq x-v$.
               Then
        $$I(x) = \frac{1}{\lambda} \left( \alpha (\lambda x -1) + (\alpha-1) e^{-\lambda  x} \right).$$
    \end{lemma}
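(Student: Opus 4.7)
The plan is entirely a direct computation. By the definition of $\ell_\alpha$ in \eqref{eq:loss} and since $\ell(x,v)=x-v$ is a \typeN loss, we have $\ell_\alpha(x,v)=\alpha(x-v)$ for $v\leq x$ and $\ell_\alpha(x,v)=x-v$ for $v>x$. Splitting the integral defining $I(x)$ at $v=x$ therefore yields
\begin{equation*}
I(x) \;=\; \alpha \int_{0}^{x} (x-v)\,\lambda e^{-\lambda v}\,d\mu(v) \;+\; \int_{x}^{\infty} (x-v)\,\lambda e^{-\lambda v}\,d\mu(v).
\end{equation*}

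Next I would evaluate each of the two integrals by integration by parts (treating $x-v$ as the polynomial factor and $\lambda e^{-\lambda v}$ as the exponential factor). For the first integral, integration by parts gives $\int_{0}^{x}(x-v)\lambda e^{-\lambda v}d\mu(v) = x - \tfrac{1}{\lambda}(1-e^{-\lambda x})$. For the second, the boundary terms vanish at infinity because $(x-v)e^{-\lambda v}\to 0$, and at $v=x$ the factor $(x-v)$ is zero; what remains is $-\int_{x}^{\infty} e^{-\lambda v}d\mu(v) = -\tfrac{1}{\lambda}e^{-\lambda x}$.

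Assembling these two pieces gives
\begin{equation*}
I(x) \;=\; \alpha\Bigl(x - \tfrac{1}{\lambda} + \tfrac{1}{\lambda}e^{-\lambda x}\Bigr) \;-\; \tfrac{1}{\lambda}e^{-\lambda x} \;=\; \alpha x - \tfrac{\alpha}{\lambda} + \tfrac{\alpha-1}{\lambda}e^{-\lambda x},
\end{equation*}
which rearranges to $I(x) = \tfrac{1}{\lambda}\bigl(\alpha(\lambda x - 1) + (\alpha-1) e^{-\lambda x}\bigr)$, as claimed.

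There is no real obstacle here: each integral is standard and converges because of the exponential decay of $f_{\rm Exp}$, and the splitting at $v=x$ is dictated by the piecewise definition of $\ell_\alpha$. The only mild care needed is in verifying that the boundary term $(x-v)e^{-\lambda v}$ vanishes as $v\to\infty$, which follows from $\lambda>0$.
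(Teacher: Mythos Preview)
Your proof is correct and essentially the same as the paper's: both split $I(x)$ at $v=x$ according to the definition of $\ell_\alpha$ and evaluate the two elementary integrals. The only cosmetic difference is that the paper first substitutes $y=x-v$ and then uses the antiderivative of $y e^{\lambda y}$, whereas you apply integration by parts directly to $(x-v)\lambda e^{-\lambda v}$; the computations are equivalent.
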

    \begin{proof}
        The desired integral is 
        $$ \lambda \alpha \int_0^x (x-v) e^{-\lambda v} d\mu(v) + \lambda \int_x^\infty (x-v) e^{-\lambda v} d\mu(v).$$
        We do a change of variable, with $y\coloneqq x-v$ above, to get
        \begin{align*}
        \lambda \alpha \int_{0}^x y e^{\lambda (y-x)} d\mu(y) + \lambda \int_{-\infty}^0 y e^{\lambda (y-x)} d\mu(y) & = \lambda \alpha e^{-\lambda x} \left[ e^{\lambda y } \frac{\lambda y-1}{\lambda^2}\right]_0^x + 
        \lambda e^{-\lambda x} \left[  e^{\lambda y } \frac{\lambda y-1}{\lambda^2} \right]_{-\infty}^0 \\
        & = \frac{\alpha(\lambda x-1)}{\lambda} + \frac{\alpha e^{-\lambda x}}{\lambda} -   \frac{e^{-\lambda x}}{\lambda}  \\
        & = \frac{1}{\lambda} \left( \alpha (\lambda x -1) + (\alpha-1) e^{-\lambda  x} \right).
        \end{align*}
    \end{proof}

    \paragraph{Applicability of \cref{thm:mainopt}.} 
     We show that assumptions \textbf{(A0)--(A5)} hold.
        Since $\Omega=[0,\infty)$, \textbf{(A0)} holds for any finite $\tau$.
        \textbf{(A1)} and \textbf{(A2)} hold due to the choice of the loss function.
        \textbf{(A3)} can be shown to hold since $f_\mathrm{Exp}$ is an Exponential density: To see this note that for any finite $x\in \Omega$
        \begin{align*}
            \int_\Omega \abs{\ell(x,v)}f_\mathrm{Exp}(v)d\mu(v)
            &= 
                \int_1^x \inparen{x-v}f_\mathrm{Exp}(v)d\mu(v)
                + \int_x^\infty \inparen{v-x} f_\mathrm{Exp}(v)d\mu(v)\\
            &\leq \int_1^\infty \inparen{\abs{x} + \abs{v}}f_\mathrm{Exp}(v)d\mu(v)\\ 
            &= \abs{x} + \frac{1}{\lambda}\\
            &<\infty.
        \end{align*}
        Thus, \textbf{(A3)} holds.
        \textbf{(A4)} holds with, e.g., $R=\frac{1}{\lambda}$.
        By \cref{lem:expI}, 
        $$I(x) =  \frac{1}{\lambda} \left( \alpha (\lambda x -1) + (\alpha-1) e^{-\lambda  x} \right).$$
        From this expression, it follows that $I(x)$ is differentiable at each $x\in \Omega$.
        Moreover, for all $x\in \Omega$
        \[
            \frac{\partial^2 I(x)}{\partial^2x} = \lambda(\alpha-1)e^{-\lambda x}.
            % %
        \]
        For any $\alpha>1$, this derivative is positive for all $x\in \Omega$ and, hence, $I(x)$ strictly convex whenever $\alpha>1$ and, thus, it has a unique global minimum.
        If $\alpha=1$, then $I(x)=x-\frac{1}{\lambda}$.
        This function has a unique global minimum at $x=0$ over $\Omega=[0,\infty)$.
        Combining with the $\alpha>1$ case, it follows that \textbf{(A5)} holds.
        Since assumptions \textbf{(A0)--(A5)} hold, one can invoke \cref{thm:mainopt} to deduce the form of $f^\star$.

    \begin{theorem}[\textbf{Expression for $f^\star$ when $\alpha=1$}]
        \label{cor:expon}
        Consider an instance $\cI=(\Omega, f_\mathrm{Exp}, \ell, 1, \tau)$ of~\ref{primalform} where
            $\Omega=[0,\infty)$, 
            $\ell(x,v)\coloneqq x-v$, 
            and 
            $f_\mathrm{Exp}$ is the Exponential with rate parameter ${\lambda}$. 
        Let $f^\star$ be the optimal solution of $\cI$.
        Then $f^\star$ is the Exponential density with mean $e^{\tau-1}$
    \end{theorem}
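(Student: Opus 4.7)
The plan is to apply \Cref{thm:mainopt} directly, using the computation of $I(x)$ from \Cref{lem:expI} specialized to $\alpha=1$, and then pin down the normalizing constant via the entropy constraint. The applicability of \Cref{thm:mainopt} for this instance has already been verified in the paragraph on applicability preceding the statement, so there is no further work needed to establish the general form of $f^\star$.

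First, I would substitute $\alpha=1$ into the expression from \Cref{lem:expI}, which gives
\[
I(x) \;=\; \frac{1}{\lambda}\bigl(\lambda x - 1\bigr) \;=\; x - \frac{1}{\lambda}.
\]
By \Cref{thm:mainopt}, the optimal density satisfies $f^\star(x) \propto \exp\!\bigl(-I(x)/\gamma^\star\bigr)$ for some unique $\gamma^\star>0$. Absorbing the additive constant $-1/\lambda$ into the proportionality constant, I obtain
\[
f^\star(x) \;\propto\; \exp\!\bigl(-x/\gamma^\star\bigr), \qquad x \in [0,\infty),
\]
which is exactly an exponential density on $[0,\infty)$ with rate parameter $1/\gamma^\star$ (this integrates to a finite value since $\gamma^\star>0$, so the normalization is well-defined). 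Thus $f^\star$ is exponential with mean $\gamma^\star$.

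To identify $\gamma^\star$, I would invoke the last sentence of \Cref{thm:mainopt} which asserts $\ent(f^\star) = \tau$. Using the known differential entropy of the exponential density with rate $1/\gamma^\star$, namely $1 - \ln(1/\gamma^\star) = 1 + \ln \gamma^\star$, the entropy constraint becomes
\[
1 + \ln \gamma^\star \;=\; \tau,
\]
which yields $\gamma^\star = e^{\tau-1}$. Therefore $f^\star$ is the exponential density with mean $e^{\tau-1}$, as claimed.

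There is no substantive obstacle in this proof: the heavy lifting has been done upstream, in \Cref{thm:mainopt} (which guarantees existence, uniqueness, and the exponential-of-$-I/\gamma^\star$ form) and in \Cref{lem:expI} (which evaluates the integral $I(x)$). The only small point requiring care is to match the entropy formula $1+\ln\gamma^\star$ to the stated mean; since an exponential with mean $m$ has entropy $1+\ln m$, this matches $m=\gamma^\star=e^{\tau-1}$ consistently.
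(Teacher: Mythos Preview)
Your proposal is correct and follows essentially the same approach as the paper: specialize \Cref{lem:expI} to $\alpha=1$ to obtain $I(x)=x-1/\lambda$, apply \Cref{thm:mainopt} to deduce $f^\star(x)\propto e^{-x/\gamma^\star}$, and then use $\ent(f^\star)=\tau$ together with the known entropy $1+\ln\gamma^\star$ of an exponential density to solve for $\gamma^\star=e^{\tau-1}$.
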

Thus, for $\alpha=1$, increasing $\tau$ increases the rate parameter of the output density.
    
    \begin{proof}
        Since $\alpha=1$, \cref{lem:expI} implies that 
        \begin{align*}
            I(x) 
            &= x-\frac{1}{\lambda}.
        \end{align*}
        As shown earlier, one can invoke \cref{thm:mainopt} for instance $\cI$ for any finite $\tau$, which implies that $f^\star$ has the following form
        \[
            f^\star(x) 
                \propto \exp\inparen{\frac{-x-\frac{1}{\lambda}}{\gamma^\star}} 
                \propto \exp\inparen{\frac{-x}{\gamma^\star}}.
        \]
        where the proportionality constant and $\gamma^\star$ are determined by $\int_\Omega f^\star(x)d\mu(x)=1$ and $\ent(f^\star)=\tau$.
       Since $f^\star$ is an Exponential density with rate parameter $\frac{1}{\gamma^\star}$, its entropy is $\ent(f^\star)=1+\ln\inparen{\gamma^\star}$ \cite{maxEntropyWiki}.
        Since $\ent(f^\star)=\tau$, the previous equality implies that $\gamma^\star = e^{\tau-1}$.
        It follows that $f^\star(x) = e^{1-\tau}\cdot  \exp\inparen{-\frac{x}{\exp\inparen{\tau-1}} }$ which is the Exponential density with mean $e^{\tau-1}$.
    \end{proof}

\section{Laplace density}\label{sec:laplace}

    The Laplace density is defined as follows over $\Omega=\R$ and has parameters $a\in\R$ and $b>0$:
    $$f_{\cL}(x) \coloneqq \frac{1}{2b} e^{-\frac{1}{b}\abs{x-a}},\quad x \in \R.$$
$a$ is referred to as the ``location'' parameter and $b$ as ``diversity.''
The differential entropy of $f_\mathcal{L}$ is $1+\ln(2b)$ \cite{maxEntropyWiki}.
    We consider the loss function to be $\ell(x,v)\coloneqq\abs{x-a}-\abs{v-a}$ for $a \in \R$.
    First, we compute the expression for $I(x)$ that we use to show that \cref{thm:mainopt} is applicable with the above parameters. 
    \begin{lemma}[\bf Expression for $I(x)$]
        \label{lem:LapI}
        Consider an instance $\cI=(\Omega, f_{\mathcal{L}}, \ell, \alpha, \tau)$ of~\ref{primalform} where
            $\Omega=\R$,
            $f_{\mathcal{L}}$ is the Laplace density with parameters $a\in \R$ and $b>0$, and 
            $\ell(x,v)\coloneqq\abs{x-a}-\abs{v-a}$.
        Then 
        $$I(x) = \begin{cases}  \alpha b (w -1) +  b \frac{(\alpha-1)}{2}e^{-w} & \text{if $x\geq a$} \\
        - b (w+1)+  b \frac{(1-\alpha)}{2}e^w & \text{otherwise}
        \end{cases} ,$$
    where $w\coloneqq\frac{x-a}{b}$. 
    \end{lemma}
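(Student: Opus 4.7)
The plan is to compute $I(x)=\int_{\Omega}\ell_{\alpha}(x,v)\,f_{\mathcal L}(v)\,d\mu(v)$ directly. Using the definition of $\ell_\alpha$ in~\eqref{eq:loss}, split the domain of integration at $v=x$:
\[
I(x)=\alpha\int_{-\infty}^{x}(|x-a|-|v-a|)\,\tfrac{1}{2b}e^{-|v-a|/b}\,dv+\int_{x}^{\infty}(|x-a|-|v-a|)\,\tfrac{1}{2b}e^{-|v-a|/b}\,dv.
\]
First, I would perform the change of variables $w=(x-a)/b$, $y=(v-a)/b$, which rescales the density to the standard Laplace density $\tfrac{1}{2}e^{-|y|}$ and turns the loss into $b(|w|-|y|)$. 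This reduces the problem to evaluating, for the standard Laplace, the integrals $J^{L}(w)\coloneqq\int_{-\infty}^{w}(|w|-|y|)e^{-|y|}dy$ and $J^{R}(w)\coloneqq\int_{w}^{\infty}(|w|-|y|)e^{-|y|}dy$, since $I(x)=\tfrac{\alpha b}{2}J^{L}(w)+\tfrac{b}{2}J^{R}(w)$.

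Next I would handle the two cases $w\ge 0$ (i.e.\ $x\ge a$) and $w<0$ separately. In the case $w\ge 0$, split $J^{L}(w)$ further at $y=0$: on $(-\infty,0]$ we have $|y|=-y$ and on $[0,w]$ we have $|y|=y$; the integrand $J^{R}(w)$ involves only $y\ge w\ge 0$ so $|y|=y$. Each resulting piece is one of the standard elementary integrals $\int e^{\pm y}dy$ or $\int y\,e^{\pm y}dy$ (the latter via integration by parts, $\int y\,e^{-y}dy=-(y+1)e^{-y}$). A straightforward calculation yields $J^{L}(w)=2w-2+e^{-w}$ and $J^{R}(w)=-e^{-w}$, so
\[
I(x)=\tfrac{\alpha b}{2}(2w-2+e^{-w})+\tfrac{b}{2}(-e^{-w})=\alpha b(w-1)+b\tfrac{\alpha-1}{2}e^{-w},
\]
matching the stated formula.

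For the case $w<0$, I would exploit the symmetry of the standard Laplace density $\tfrac12 e^{-|y|}$ about $0$: the substitution $y\mapsto -y$ swaps the two regions of integration and sends $w$ to $-w$, so the calculation mirrors the previous case with appropriate sign flips. Concretely, one obtains $J^{L}(w)=-e^{w}$ and $J^{R}(w)=-2w-2+e^{w}$, which gives
\[
I(x)=\tfrac{\alpha b}{2}(-e^{w})+\tfrac{b}{2}(-2w-2+e^{w})=-b(w+1)+b\tfrac{1-\alpha}{2}e^{w},
\]
again as claimed.

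This is essentially a bookkeeping exercise rather than a conceptually hard one. The main obstacle is keeping the two independent case splits straight (the sign of $w$ coming from the kink in $|x-a|$, and the sign of $y$ coming from the kink in the Laplace density), together with the asymmetric weighting by $\alpha$ on the region $\{v\le x\}$; careful tracking of the four sub-integrals and their signs is where errors are most likely to creep in, so using the symmetry argument for the $w<0$ case is useful to halve the computational burden.
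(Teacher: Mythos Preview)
Your proposal is correct and follows essentially the same approach as the paper: split at $v=x$, change variables to $w=(x-a)/b$ and $y=(v-a)/b$, then handle the cases $w\ge 0$ and $w<0$ by further splitting at $y=0$ and evaluating the resulting elementary integrals. The only cosmetic difference is that for the case $w<0$ you invoke the symmetry $J^{L}(w)=J^{R}(-w)$ of the standard Laplace density to recycle the $w\ge 0$ computation, whereas the paper just recomputes the three sub-integrals directly; both lead to the same result with comparable effort.
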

    \begin{proof}
        The desired integral is 
        $$ \frac{\alpha}{2b} \int_{-\infty}^x (|x-a|-|v-a|) e^{-|v-a|/b} d\mu(v) + \frac{1}{2b} \int_{x}^\infty (|x-a|-|v-a|) e^{-|v-a|/b} d\mu(v).$$
        We perform a change of variables with $w \coloneqq \frac{x-a}{b}$ and $y = \frac{v-a}{b}$. Then the above integral becomes
        $$ \frac{\alpha b }{2} \int_{-\infty}^w (|w|-|y|) e^{-|y|} d\mu(y) + \frac{ b }{2} \int_{w}^\infty (|w|-|y|) e^{-|y|} d\mu(y).$$
      Now two cases arise (i) $w \geq 0$, or (ii) $w \leq 0$. First, consider the case when $w \geq 0$. Then the above integral becomes: 
    \begin{align*}
        & \frac{\alpha}{2} \int_{-\infty}^0 (w+y) e^{y} d\mu(y) +
         \frac{\alpha}{2} \int_{0}^w (w-y) e^{-y} d \mu(y) + 
        \frac{1}{2} \int_{w}^\infty (w-y) e^{-y} d \mu(y) \\
        & = \frac{\alpha(w-1)}{2} + \frac{\alpha(w+e^{-w}-1)}{2} -\frac{e^{-w}}{2} \\
        & = \alpha  b (w -1) + \frac{(\alpha-1) b }{2}e^{-w}.
    \end{align*}
    In the second case, we get 
    \begin{align*}
        & \frac{\alpha}{2} \int_{-\infty}^w (-w+y) e^{y} d\mu(y) +
         \frac{1}{2} \int_{w}^0 (-w+y) e^{y} d\mu(y)+ 
        \frac{1}{2} \int_{0}^\infty (-w-y) e^{-y} d\mu(y) \\
        & = -\frac{\alpha e^{w} }{2} + \frac{e^w-w-1}{2} - \frac{1+w}{2} \\
        & = -(w+1) b + \frac{(1-\alpha) e^w b }{2}.
    \end{align*}
    \end{proof}

    \paragraph{Applicability of \cref{thm:mainopt}.}
       We show that for any finite $a\in \R$ and $b>0$, assumptions \textbf{(A0)--(A5)} hold for the instance $\cI=(\Omega, f_\mathcal{L}, \ell, \alpha, \tau)$ of~\ref{primalform} where
        $\Omega=\R$,
        $f_\mathcal{L}$ is the Laplace density with parameters $a\in \R$ and $b>0$, and 
        $\ell(x,v)=\abs{x-a}-\abs{v-a}$. 
        Since $\Omega=\R$, \textbf{(A0)} holds for any finite $\tau$.
        \textbf{(A1)} and \textbf{(A2)} hold due to the choice of the loss function.
        \textbf{(A3)} can be shown to hold since $f_\mathcal{L}$ is a Laplace density: To see this note that for any finite $x\in \Omega$
        \begin{align*}
            \int_\Omega \abs{\ell(x,v)}f_\cL(v)d\mu(v)
            &\leq \int_0^\infty \inparen{\abs{x-a} + \abs{v-a}}f_\cL(v)d\mu(v)\\ 
            &= \abs{x-a} + \frac{b}{2}\\
            &<\infty.
        \end{align*}
        Thus, \textbf{(A3)} holds.
        \textbf{(A4)} holds with, e.g., $R=b\ln(2)$.
        By \cref{lem:LapI}, 
        $$I(x) = \begin{cases}  \alpha {b}(w -1) + {b}\frac{(\alpha-1)}{2}e^{-w} & \text{if $x\geq a$} \\
        -{b}(w+1)+ {b}\frac{(1-\alpha)}{2}e^w & \text{otherwise}
    \end{cases},$$
        where $w=\frac{x-a}{b}$.
        One can check that $I(x)$ is continuous and differentiable at each $x\in \Omega\setminus \inbrace{a}$.
        Moreover, for all $x<a$, $\frac{\partial I(x)}{\partial x}<0$ and for all $x\geq a$,  $\frac{\partial I(x)}{\partial x}\geq 0$.
        Hence, it follows that  $I(x)$ has a unique global minimum at $x=a$.
        Therefore, \textbf{(A5)} holds.
        Since assumptions \textbf{(A0)--(A5)} hold, we invoke \cref{thm:mainopt} to deduce the form of the optimal density.

        \begin{theorem}[\bf Expression for $f^\star$ when $\alpha=1$]
            \label{cor:laplace}
            Consider an instance $\cI=(\Omega, f_\mathcal{L}, \ell, 1, \tau)$ of~\ref{primalform} where
                $\Omega=\R$,
                $\fD$ is the Laplace density with parameters $a\in \R$ and $b>0$, and 
                $\ell(x,v)=\abs{x-a}-\abs{v-a}$.
            Let $f^\star$ be the optimal solution of $\cI$.
            Then $f^\star$ is the Laplace density with parameters $(a, e^{\tau-1}/2)$.
        \end{theorem}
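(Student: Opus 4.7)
The plan is to apply Theorem~\ref{thm:mainopt} directly, using the expression for $I(x)$ already derived in Lemma~\ref{lem:LapI}. The applicability of Theorem~\ref{thm:mainopt} has been verified in the discussion just preceding the statement, so the only remaining work is to simplify the resulting optimal density and pin down the dual variable $\gamma^\star$ via the entropy constraint.

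First, I would substitute $\alpha = 1$ into the expression from Lemma~\ref{lem:LapI}. The exponential terms carry a factor of $(\alpha - 1)$ or $(1-\alpha)$ and hence vanish, leaving
\[
I(x) \;=\; \begin{cases} b(w - 1) & \text{if } x \geq a, \\ -b(w+1) & \text{if } x < a, \end{cases}
\]
with $w = (x-a)/b$. In both cases this simplifies to $I(x) = |x - a| - b$. Then by Theorem~\ref{thm:mainopt},
\[
f^\star(x) \;\propto\; \exp\!\left(-\frac{I(x)}{\gamma^\star}\right) \;\propto\; \exp\!\left(-\frac{|x-a|}{\gamma^\star}\right),
\]
since the constant $b/\gamma^\star$ gets absorbed into the normalization. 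This is exactly a Laplace density with location parameter $a$ and diversity parameter $\gamma^\star$.

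Next, I would pin down $\gamma^\star$. Because Theorem~\ref{thm:mainopt} guarantees $\gamma^\star > 0$, complementary slackness gives $\ent(f^\star) = \tau$. The entropy of a Laplace density with diversity $\gamma^\star$ is $1 + \ln(2\gamma^\star)$ (as noted in the introduction to this section), so the equation
\[
1 + \ln(2\gamma^\star) \;=\; \tau
\]
yields $\gamma^\star = e^{\tau - 1}/2$. Substituting back identifies $f^\star$ as the Laplace density with parameters $(a, e^{\tau - 1}/2)$, which is the claim.

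There is no substantive obstacle here; the theorem is essentially a corollary of Theorem~\ref{thm:mainopt} together with Lemma~\ref{lem:LapI}, and parallels the Gaussian, Pareto, and Exponential cases (Theorems~\ref{cor:gauss}, \ref{cor:pareto}, \ref{cor:expon}) verbatim in structure. The only place to be careful is the case split at $x = a$ when simplifying $I(x)$, but this is a routine check since both branches collapse to $|x-a| - b$.
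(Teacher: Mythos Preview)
Your proposal is correct and follows essentially the same approach as the paper: substitute $\alpha=1$ into Lemma~\ref{lem:LapI}, apply Theorem~\ref{thm:mainopt} to identify the Laplace form, and solve for $\gamma^\star$ via the entropy constraint $\ent(f^\star)=\tau$. In fact your simplification $I(x)=|x-a|-b$ is the correct one (the paper's proof writes $-b/2$, which appears to be a typo, but the constant is absorbed into the normalization either way).
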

Thus, for $\alpha=1$, increasing $\tau$ does not change the location parameter, but increases the ``diversity'' parameter of the output density.
        
        \begin{proof}
        Since $\alpha=1$, \cref{lem:LapI} implies that 
        \begin{align*}
            I(x)
            &= \abs{x-a}-\frac{b}{2}.
        \end{align*}
        As shown earlier, one can invoke \cref{thm:mainopt} for instance $\cI$ for any finite $\tau$, which implies that $f^\star$ has the following form
        \[
            f^\star(x) 
                \propto \exp\inparen{\frac{-\abs{x-a}-\frac{b}{2}}{\gamma^\star}} 
                \propto \exp\inparen{\frac{-\abs{x-a}}{\gamma^\star}}.
        \]
        where the proportionality constant and $\gamma^\star$ are determined by $\int_\R f^\star(x)d\mu(x)=1$ and $\ent(f^\star)=\tau$.
        Clearly, $f^\star$ is a Laplace density with the diversity parameter $\gamma^\star$, its entropy is $\ent(f^\star)=1+\ln\inparen{2\gamma^\star}$ \cite{maxEntropyWiki}.
        On the other hand, since $\ent(f^\star)=\tau$, the previous equality implies that $\gamma^\star = \frac{1}{2}e^{\tau-1}$.
        It follows that $f^\star(x) = e^{1-\tau} \cdot  \exp\inparen{-\frac{2\abs{x-a}}{\exp\inparen{\tau-1}} }$ which is the Laplace density with parameters $(a,\frac{1}{2}e^{\tau-1})$.
    \end{proof}

\section{Implementation details and additional empirical results}\label{sec:additional_simulations}
    In this section, we present additional discussions and evaluations of intervention in the JEE setting (\cref{sec:additional_simulations:case_study}), plots omitted from \cref{sec:empirics} (\cref{sec:additional_simulations:plots}), and implementation details of our model (\cref{sec:additional_simulations:implementation_details}). 
    The code for this paper is available at \url{https://github.com/AnayMehrotra/Bias-in-Evaluation-Processes}.

        \subsection{Case Study: Evaluating bias-mitigating interventions in IIT-JEE admissions}\label{sec:additional_simulations:case_study}
            In this section, we continue our study of the effectiveness of different interventions in a downstream selection task. 
            Like in \cref{sec:empirics}, we consider selection based on the JEE 2009 scores, but here consider representational constraints actually used in admissions to IITs.
            We also discuss additional interventions being implemented by the Indian state and central governments to reduce inequity in JEE scores.

            Recall that, the Indian Institutes of Technology (IITs) are a group of engineering institutes in India. 
            In 2009, there were 15 IITs and today this has grown to 23. 
            Undergraduate admissions at IITs are decided based on the scores of candidates in the Joint Entrance Exam (JEE). 
            JEE is conducted once every year. 
            In 2009, the scores, (binary) genders, and birth categories of all candidates who appeared in JEE 2009 were released in response to a Right to Information application filed in June 2009 \cite{rti_against_jee}.
            The birth category of the candidates is an official socioeconomic status label recognized by the government of India \cite{Sowell_2008}.

            Here, we focus on two groups of candidates: the candidates in the general (GEN) category (the most privileged) and candidates not in the general category.
            We begin by discussing some of the interventions in place to reduce inequity in JEE scores and subsequent admissions at IITs.

\medskip                        \noindent\textbf{Interventions used in IIT admissions.}
                The Indian constitution allows the central government and state governments to enforce affirmative action in the form of quotas or lower bound constraints for official SES groups at educational institutes, employments, and political bodies \cite{jeenger2020Reservation,weisskopf2004reservation}.
                In 2005 lower-bound interventions were introduced in the admissions process at the IITs.
                %                %
                Concretely, in 2009, out of the 7,440 seats, 3,688 (49.6\%) were reserved for students who are not in the GEN category.
                This means that at least 3,688 out of the $7,440$ students admitted into IITs must not be in the GEN category.
                Note that this allows more than 3,688 or even all admitted students to be outside the GEN category.
                We call this constraint the \texttt{Reservation} constraint and, in this section, we study its effectiveness compared to other forms of interventions.

            Apart from reservations, a number of other interventions have also been proposed and/or implemented to reduce biases in the JEE.
            We discuss two other types of interventions next.

\medskip                        \noindent\textbf{Interventions to reduce skew.}
                Private coaching institutes that train students for JEE have been criticized for being exorbitantly expensive and, hence, inaccessible for students in low SES groups \cite{cramCity2023NYT}.
            Lack of accessibility to training resources can reduce the scores of candidates in low SES groups--creating a skew in the scores.
                To improve accessibility to training, in 2022, the Delhi government established a new program that will provide free training to students enrolled in government-funded schools \cite{FreeCoachingDelhi2022Hindu}. 
                Similar programs have also been introduced in other states \cite{FreeCoachingTripura2020IndianExpress,FreeCoachingUP2021Mint} and
                by school education boards that span multiple states \cite{CBSEGirlStudentInitiative2017Mint}.
            In the context of our model, these interventions can be seen as reducing this skew in the evaluation process.

\medskip                    \noindent\textbf{Interventions to reduce information constraint.}
        A criticism of JEE is that it is only offered in the English and Hindi languages.
        This is undesirable because only 44\% of Indians report English or Hindi as their first language and, according to the 2011 census, less than 68\% of Indians list one of these languages among the three languages they are most comfortable with \cite{wikiListOfLanguagesIndia}.
        IITs have been repeatedly criticized for not offering the exam in regional languages \cite{IITinGujarati2011TOI,IITinGujarati2011TOI2,IITinTamil2012Hindu}.
        The main concern is that the current exam reduces the performance of students less familiar with English and Hindi. 
        In the context of our model, this can be thought of as placing a stronger information constraint on candidates who do not speak English or Hindi as a first language: these students would need to spend a higher cognitive load to understand the questions.
        This constraint not only acts during the exam but also during the preparation period because students speaking regional languages (and not English or Hindi), have to devote additional time to learning either English or Hindi in addition to the technical material for the exam.

        While the JEE exam itself has not been offered in regional languages yet.
        Recently, in 2021, the screening test that candidates have to clear before appearing in JEE was offered in 11 regional languages in addition to English and Hindi \cite{JeeMainInRegionalLanguage2021NDTV}.

        In this section, we compare the effectiveness of the above three interventions -- \texttt{Reservation} for lower SES groups, interventions to reduce skew (change $\alpha$ by $\Delta_\alpha$ percent), and interventions to reduce information-constraint (change $\tau$ by $\Delta_\tau$ percent).

            \begin{figure}[t]
                \centering
                %\vspace{-0.2in}
                \includegraphics[width=0.45\linewidth]{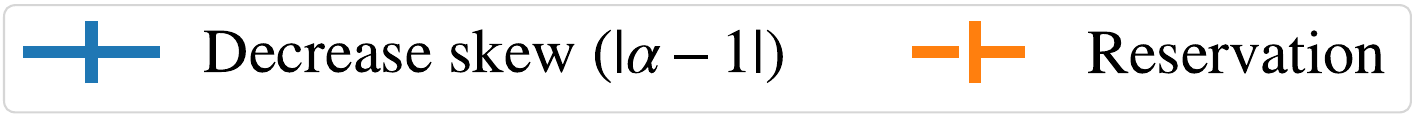}
                \hspace{5mm}
                \includegraphics[width=0.35\linewidth]{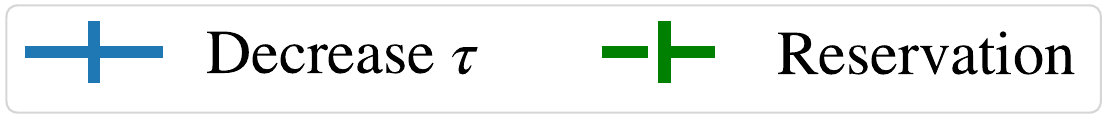}
                \par 
                \subfigure[Changing $\alpha$ by $\Delta_\alpha$ percent\label{fig:case_study_alpha}]{
                    \includegraphics[width=0.45\linewidth,trim={0cm 0cm 0cm 0cm}, clip]{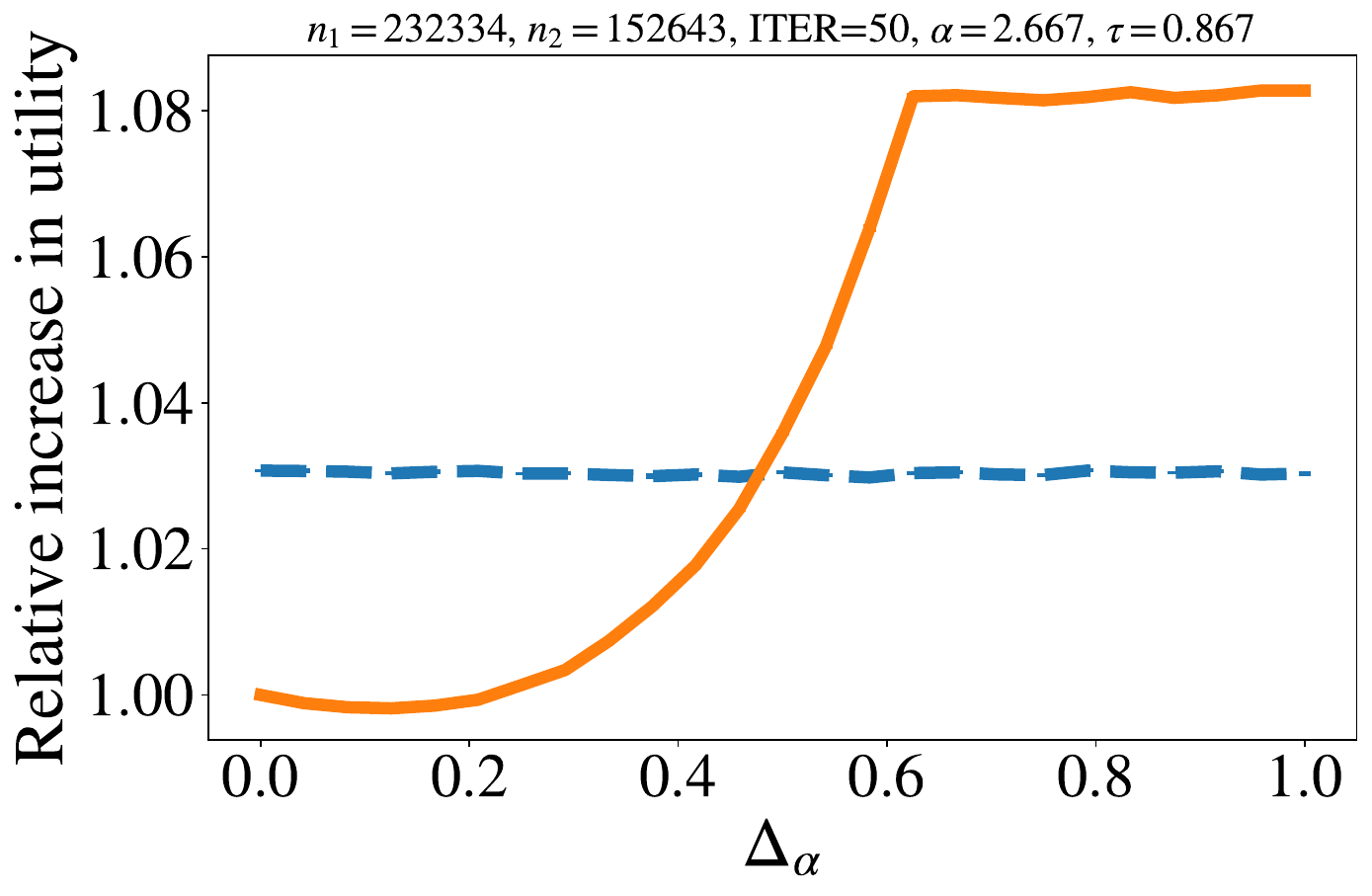}
                }
                \subfigure[Changing $\tau$ by $\Delta_\tau$ percent\label{fig:case_study_tau}]{
                    \includegraphics[width=0.45\linewidth,trim={0cm 0cm 0cm 0cm}, clip]{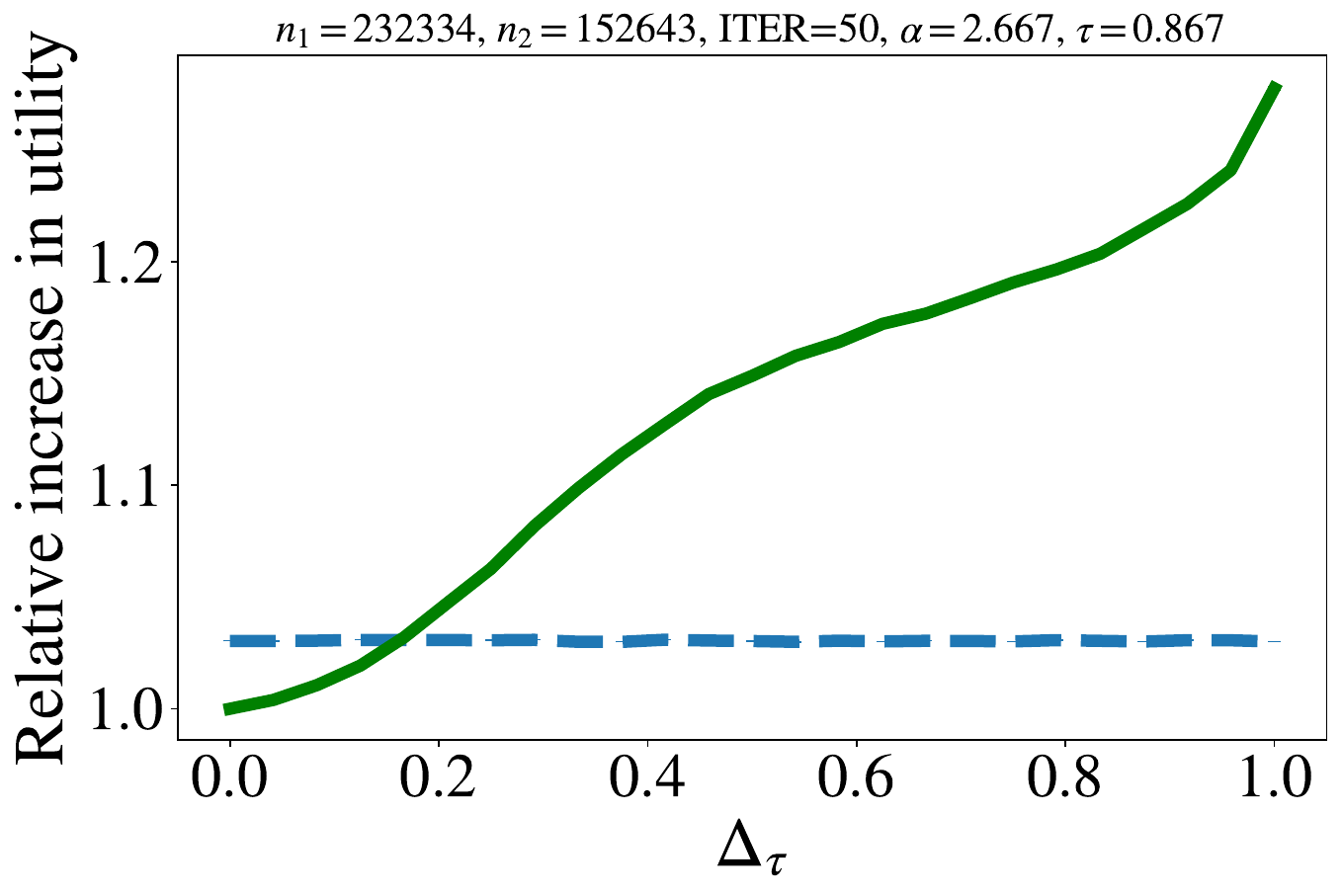}
                }
                \caption{
                    \small 
                    \textit{Effectiveness of different interventions on the selection-utility--as estimated by our model:}
                    We vary the strengths of the interventions ($\Delta_\alpha\in [0,1]$ and $\Delta_\tau \in [0,1]$) and report the expected utilities of the subset output by all three interventions. 
                    The $x$-axis shows the strength of the intervention changing $\alpha$ (\cref{fig:case_study_alpha}) or $\tau$ (\cref{fig:case_study_tau}).
                    The $y$-axis shows the ratio of the (true) utility of the subset output with an intervention to the (true) utility of the subset output without any intervention.
                    Our main observation is that for each of the three interventions, there is a value of the percentage change in $\alpha$ and $\tau$ (i.e., $\Delta_\alpha$ and $\Delta_\tau$ respectively) for which the intervention outperforms the other two interventions.
                    Hence, depending on the amount of change a policymaker expects a specific intervention (e.g., providing free coaching) to have on the parameters $\alpha$ and $\tau$, they can use our framework as a tool to inform their decision about which intervention to enforce.
                    Error bars represent the standard error of the mean over 100 repetitions.
                }
                \label{fig:case_study}
                %\vspace{-0.2in}
            \end{figure}

            \begin{figure}[b]
                \centering
                %\vspace{-0.2in}
                %
                \includegraphics[width=0.45\linewidth]{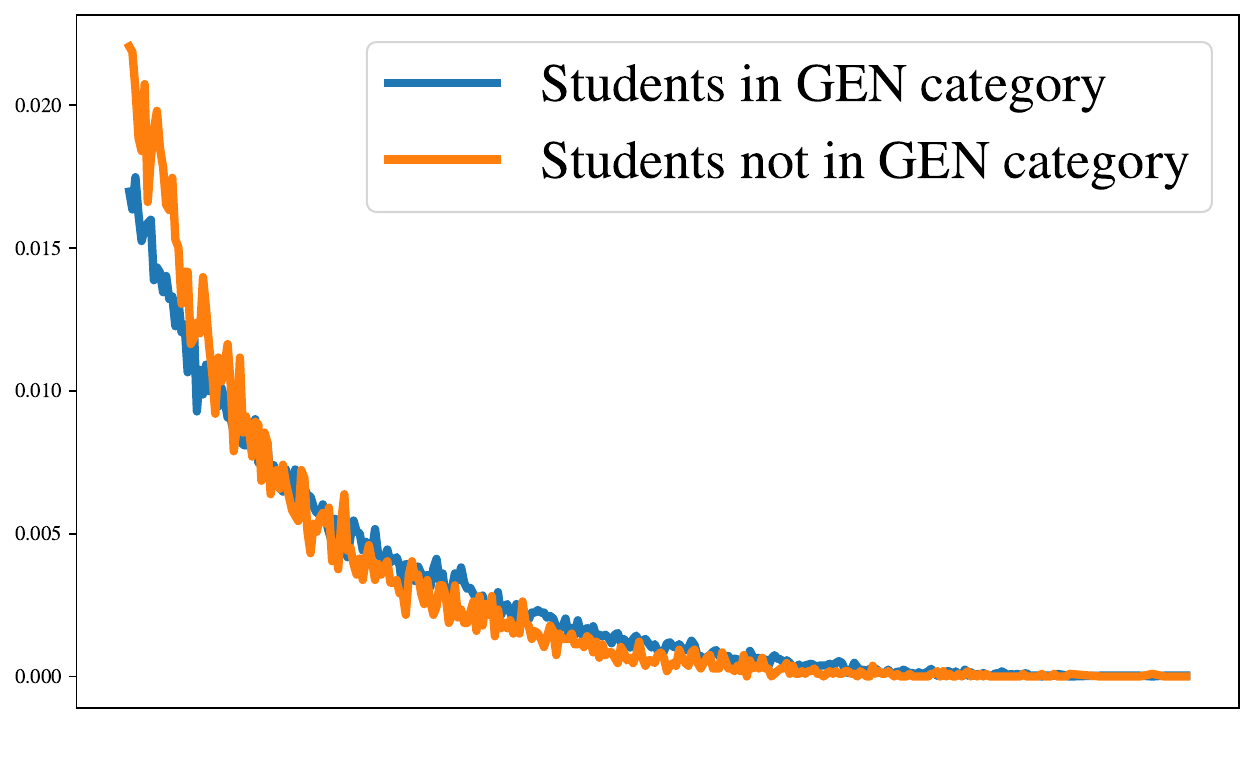}
                \caption{
                    \small 
                    The densities of scores of students in GEN category (blue) and students not in GEN category (orange) in JEE-2009--only for students scoring at least 80 points out of 480.
                }
                \label{fig:densities_of_top_30k}
                %\vspace{-0.2in}
            \end{figure}

\medskip                        \noindent\textbf{Setup (Group sizes and $k$).}
            Admissions into IITs are highly selective.
            For instance, in 2009, 384,977 applicants (232,334 from GEN; 60\%) took the exam and just 7,440 (2\%) were admitted to IITs. 
            The admission is based on the candidates' All India Rank (henceforth just rank)--which denotes the candidate's position in the list of candidates ordered in decreasing order of their scores in JEE.
            Let $G_1$ be the group of students in GEN category and $G_2$ be all other students.  
            To study the impact of different interventions for admissions into IITs, we fix group sizes and $k$ to match real numbers: $\abs{G_1}=232,334$, $\abs{G_2}=152,643$, and $k=7,400$.
            We focus on the set of candidates who scored at least 80 (out of 480) on the exam. (The threshold 80 ensures that at least 10k candidates outside the GEN category are considered and this is significantly lower than the $k$-th highest score of 167).
            We fix $f_{\cD}$ to be the density of utilities of all candidates in $G_1$ who scored at least 80.
            Since $f_\cD$ has a Pareto-like density (see \cref{fig:densities_of_top_30k}), we fix $\ell(x,v)=\ln(x)-\ln(v)$.
            We fix $\Omega$ to be the set of all possible scores and $f_{G_2}$ to be the density of all candidates in $G_2$ who scored at least 80.
            As in \cref{sec:empirics}, we select $\alpha$ and $\tau$ that lead to the density closest in TV distance to $f_{G_2}$.
            The rest of the setup is the same as in \cref{sec:empirics}.

            Unlike the main body, here, we only consider high-scoring candidates (those with a score of at least 80) because JEE is highly selective ($k/n\leq 0.02$) and, hence, to have meaningful results the estimated density $f_\cE$ should have a good fit to the density from the real-data on the top 2\% quantile, i.e., the right tail.
            To ensure this, we specifically consider the right tail of the distribution (by dropping candidates with a score below 80).

\medskip            \noindent\textbf{Observations and discussion.}
                We vary $\Delta_\alpha\in [0,1]$ and $\Delta_\tau \in [0,1]$ and report the expected utilities of the subset output by all three interventions over 100 iterations in \cref{fig:case_study}. 
                Our main observation is that for each of the three interventions, there is a value of the percentage change in $\alpha$ and $\tau$ (i.e., $\Delta_\alpha$ and $\Delta_\tau$ respectively) for which the intervention outperforms the other two interventions.
                Hence, depending on the amount of change a policymaker expects a specific intervention (e.g., providing free coaching) to have on the parameters $\alpha$ and $\tau$, they can use our framework as a tool to inform their decision about which intervention to enforce.
                Further, we observe that, as expected, increasing $\Delta_\alpha$ and $\Delta_\tau$, i.e., the percentage of change in $\alpha$ and $\tau$, improves the utility achieved by the corresponding interventions.

\medskip            \noindent\textbf{Limitations and further discussion.}
                Next, we discuss some of the limitations of our study.
                First, we note that interventions such as increasing the accessibility of education can not only reduce inequity in JEE but can also have positive effects on other exams and hiring.
                Hence, such interventions can have a larger positive (or negative) impact than suggested by our simulations.
                Studying these auxiliary effects is beyond the scope of this paper.
                Further, our study also does not model the response of the students, e.g., how do interventions affect the students' incentive to invest in skill development? 
                Finally, our model only predicts the effect of $\alpha$ and $\tau$ on utility distributions. 
                These predictions may not be accurate and a careful post-deployment evaluation may be required to accurately assess the effectiveness of different interventions.

        \subsection{Additional plots for simulations in Section~\ref{sec:empirics}}\label{sec:additional_simulations:plots}
            In this section, we present plots of the best-fit densities output by our framework on different datasets. 
            \begin{figure}[ht!]

                    \centering
                    \subfigure[Best-fit distribution ($\alpha=3\cdot 10^{-4}$ and $\tau=1.51$) with JEE-2009 (Birth category)\label{fig:jee_data}]{
                        \includegraphics[width=0.7\linewidth]{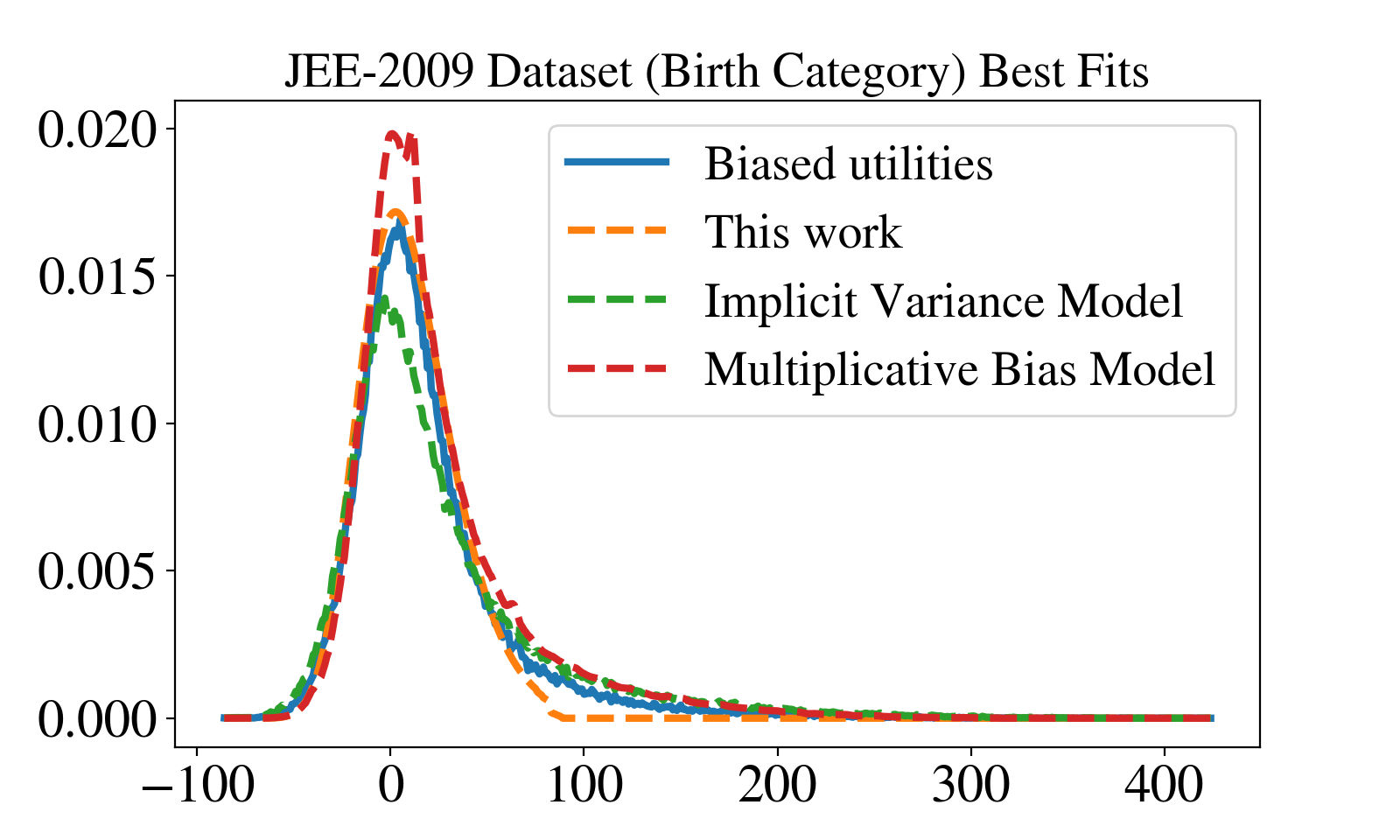}
                        
                    }%
                    \hspace{3mm}
                    \subfigure[Best-fit distribution ($\alpha=3\cdot 10^{-4}$ and $\tau=1.51$) with JEE-2009 (Gender)\label{fig:jee_data_gender}]{
                        \includegraphics[width=0.7\linewidth]{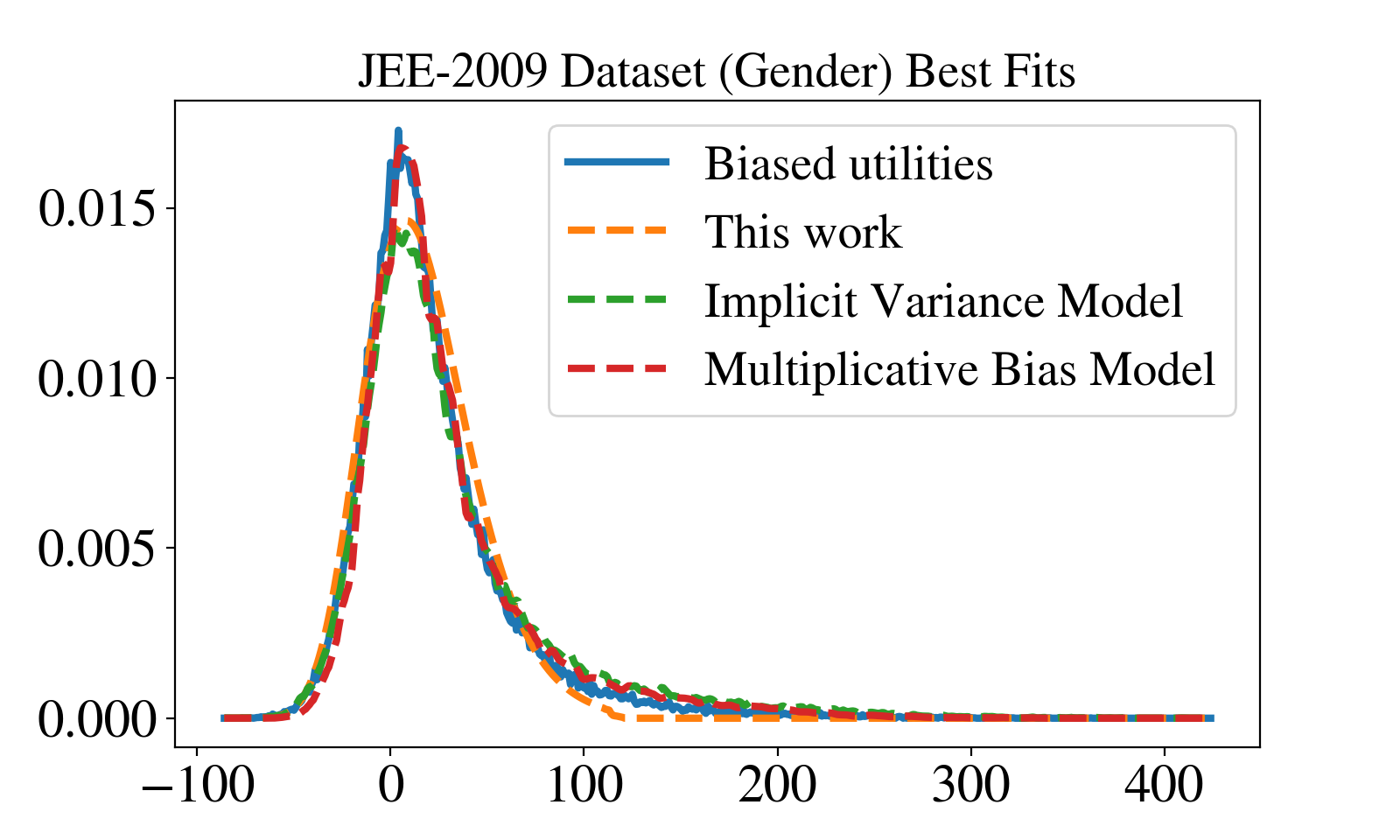}
                    }
                    \caption{
                    Illustration of the best-fit distribution output by our framework for the JEE-2009 dataset. 
                    Captions of subfigures report the best fit $\alpha$ and $\tau$.
                }
                \label{fig:bestfitplots:jee}
                    \end{figure}

                    \begin{figure}
                     \centering
                    \subfigure[Best-fit distribution ($\alpha=22.78$ and $\tau=2.09$) with Semantic Scholar Open Research Corpus\label{fig:semantic_scholar}]{
                        \includegraphics[width=0.7\linewidth]{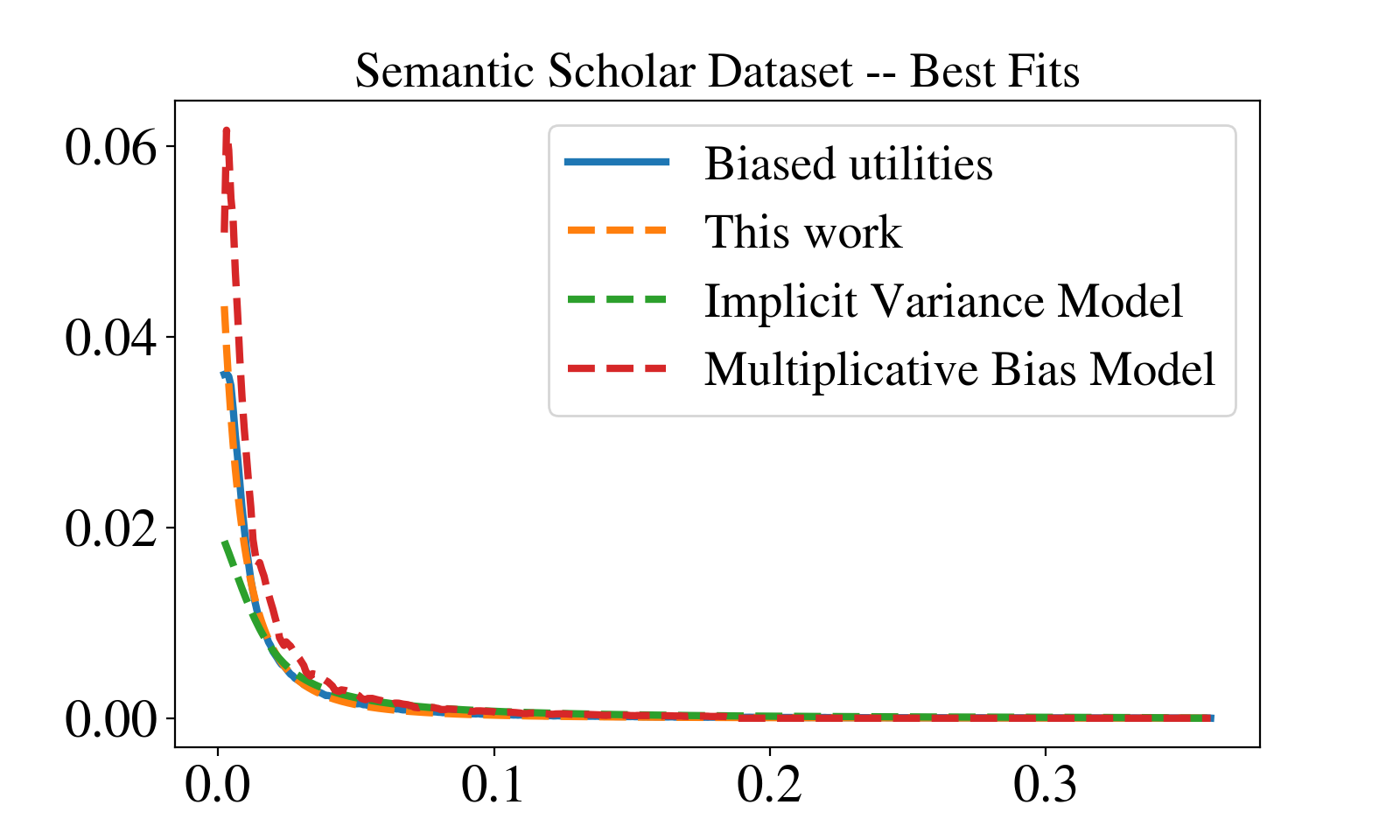}
                    }\hspace{3mm}
                    \subfigure[Best-fit distribution ($\alpha=1.92$ and $\tau=3.19$) with synthetic network data\label{fig:synthetic_social_network}]{
                        \includegraphics[width=0.7\linewidth]{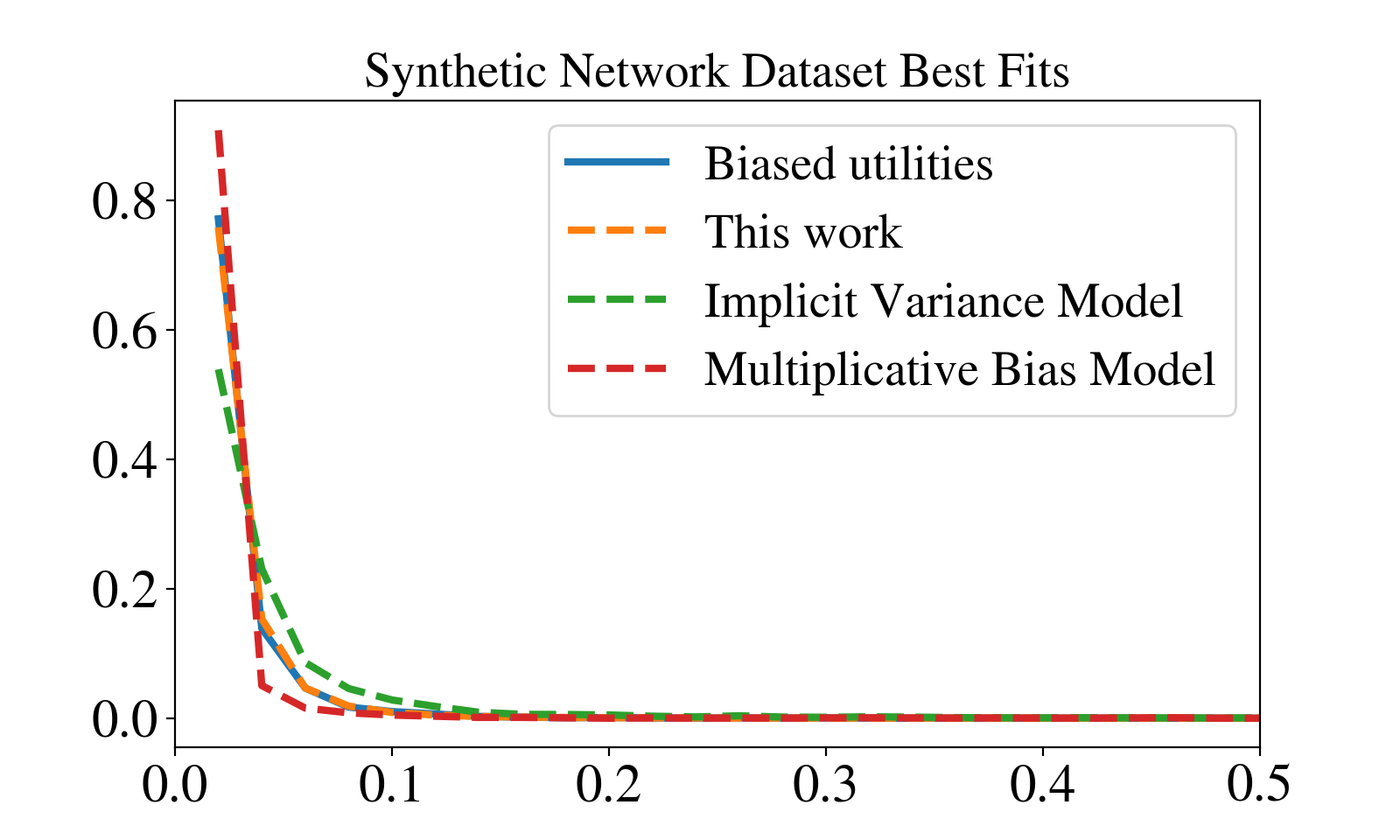}
                    }%
                \caption{
                    Illustration of the best-fit distribution output by our framework for the Semantic Scholar Open Research Corpus. 
                    Captions of subfigures report the best fit $\alpha$ and $\tau$.
                }
                \label{fig:bestfitplots:semantic}
            \end{figure}

        \subsection{Implementation details}\label{sec:additional_simulations:implementation_details}
            \subsubsection{Our framework and other models}
                In this section, we give implementation details of our model. 
                Recall that our model outputs the density which is the optimal solution of the following optimization program.
                \begin{tcolorbox}[left=3pt,right=3pt,top=2pt,bottom=2pt,]
                    % 
                    %\vspace{-2.5mm}
                  \begin{align*}
                                \textstyle \hspace{-6.5mm}\argmin_{\text{$f$: density on $\Omega$}}\ \ 
                                 & \textstyle  \err_{\ell,\alpha}\inparen{\fD,f}  \coloneqq 
                                 \int_{\Omega} \left[ \int_{\Omega} \ell_\alpha(x, v) f(x) d \mu(x) \right] \fD(v) d \mu(v)\textstyle , \hspace{-6.5mm}\customlabel{prog:framework:appendix}{OptProg} 
                                 \tag{OptProg-App}\\
                       \textstyle            \text{such that} & \quad  \textstyle 
                                 - \int_{\Omega} f(x) \log 
                                  {f(x)} d \mu(x) 
                                  \geq \textstyle 
                                  \tau. 
                        \end{align*}
                \end{tcolorbox}

\noindent
                An instance of this program is specified by the following parameters.
                \begin{enumerate}
                    \item A domain $\Omega\subseteq \R$ (e.g., $\Omega=\R$ and $\Omega=[1,\infty)$);
                    \item A true density $f_\cD$ over $\Omega$ with respect to the Lebesgue measure $\mu$;
                    \item A loss function $\ell\colon \Omega\times \Omega\to \R$ (e.g., $\ell(x,v)=(x-v)^2$ and $\ell(x,v)=\ln\inparen{\sfrac{x}{v}}$); 
                    \item A risk-averseness (or risk-eagerness) parameter $\alpha>0$; and 
                    \item A resource-information parameter $\tau>0$.
                \end{enumerate}
                Recall that $\ell_\alpha$ is a risk-averse loss defined by $\ell$ and $\alpha$ as in \eqref{eq:loss}.
                For our simulations, we consider the shifted variant of $\ell_\alpha$ mentioned in \cref{sec:model}:
                given a shift parameter $v_0\in \R$, a loss function $\ell\colon \Omega\times \Omega\to \R$, and parameter $\alpha>0$
                \[
                    \ell_{\alpha,v_0}(x, v) = \begin{cases}
                        \alpha\cdot \ell(x,v+v_0) & \text{if } x>v+v_0,\\
                        \ell(x,v+v_0) & \text{otherwise.}
                    \end{cases}
                \]
                Let $f^\star_{\alpha,\tau,v_0}$ be the optimal solution to the instance $\cI_{\alpha,\tau,v_0}=\inparen{\Omega,f_\cD,\ell,\alpha,\tau,v_0}$ of \eqref{prog:framework:appendix}. 

               \medskip
                \noindent\textbf{Algorithmic task.}
                Given a ``target'' density $f_\mathcal{T}$ (denoting the density of biased utilities in the data), risk-averse loss function $\ell_\alpha$, and true density $f_\cD$, the goal of our implementation is to find $\alpha^\circ$, $\tau^\circ$, and $v_0^\circ$ that minimize the total variation distance between $f_{\mathcal{T}}$ and $f^\star_{\alpha,\tau,v_0}$: 
                \[
                    (\alpha^\circ, \tau^\circ, v_0^\circ)
                        \coloneqq 
                        \argmin_{\alpha,\tau,v_0} d_{\rm TV}(f^\star_{\alpha, \tau, v_0}, f_{\mathcal{T}}).
                \]

                \noindent\textbf{Algorithmic approach and implementation.} 
                    We perform grid-search over all three parameters $\alpha, \tau,$ and $v_0$.
                    Given a specific $\alpha, \tau,$ and $v_0$, to solve the above problem, we use the characterization in \cref{thm:mainopt_inf} to find $f^\star_{\alpha, \tau, v_0}$.
                    Recall that the optimal solution of \eqref{prog:framework:appendix} is of the following form 
                    \[
                        f^\star_{\alpha, \tau, v_0}(x) = C\cdot \exp \left( - \sfrac{I_{\alpha, v_0}(x)}{\gammas}  \right)
                    \]
                    where $I_{\alpha, v_0}(x) \coloneqq \int_\Omega \ell_{\alpha,v_0}(x,v) f_\cD(x) d \mu(x)$ and $C,\gamma^\star>0$ are constants that are uniquely specified by the following two equations 
                    \[
                        \int_{\Omega} f^\star_{\alpha, \tau, v_0}(x)d\mu(x) = 1 
                        \quad\text{and}\quad 
                        -\int_{\Omega} f^\star_{\alpha, \tau, v_0}(x)\log\inparen{f^\star_{\alpha, \tau, v_0}(x)}d\mu(x) = \tau.
                    \]
                    Algorithmically, finding $C$ and $\gamma^\star$ requires computing a double integral over $\Omega$.
                    In all of the simulations in \cref{sec:empirics}, $\Omega$ is a discrete domain, so these integrals reduce to summations and we compute them exactly.
                    We also provide an implementation of our algorithm for continuous domains. 
                    The implementation for continuous domains uses the \texttt{quad} function in \texttt{scipy} to compute the integrals.
                    For the grid search itself, we varied $\alpha$ over $[10^{-4}, 10^2]$, $\tau$ over $[10^{-1}, 10]$, and $v_0$ over $\Omega$.
                    We found this range to be sufficient for our simulation, but it would be interesting to design a principled way of specifying the ranges given other parameters and target density $f_\mathcal{T}$.

                \paragraph{Implementation details of multiplicative bias model \cite{KleinbergR18} and implicit variance model \cite{EmelianovGGL20}.}
                    Recall that the multiplicative bias and the implicit variance models are specified by parameters $\rho$ and $\sigma$ respectively: given a fixed true value $v\in \R$, the output of the multiplicative bias model is $v/\rho$ and the output of the implicit variance model is $v+\zeta$ where $\zeta$ is a zero-mean normal random variable with variance $\sigma^2$.
                    In addition, we allow both models to introduce a shift $v_0$.
                    For the multiplicative bias model, given a true density $f_\cD$ and a target density $f_\mathcal{T}$, we compute $(\rho^\circ,v_0^\circ)$ that solves $\argmin_{\rho,v_0}d_{\rm TV}(f_{\rho,v_0}, f_\mathcal{T})$ where $f_{\rho,v_0}$ is the density of $(v/\rho)+v_0$ for $v\sim f_\cD$.
                    For the implicit variance model, given a true density $f_\cD$ and a target density $f_\mathcal{T}$, we compute $(\sigma,v_0)$ that solves $\argmin_{\sigma,v_0} d_{\rm TV}(f_{\sigma,v_0}, f_\mathcal{T})$ where $f_{\sigma,v_0}$ is the density of $v+v_0+\zeta$ for $v\sim f_\cD$ and a zero-mean normal random variable $\zeta$ with variance $\sigma^2$.
                    For both models, we compute the optimal parameters using grid search: we vary $v_0$ over $\Omega$, $(1/\rho)$ over $[0,1]$, and $\sigma$ over $[10^{-2}, 10]$. 

            \subsubsection{Computational resources used}
                All simulations were run on a MacBook Pro with 16 GB RAM and an Apple M2 Pro processor.

            \subsubsection{JEE-2009 Scores}

            \paragraph{Additional discussion of the dataset.}
                The JEE-2009 test scores were released in response to a Right to Information application filed in June 2009 \cite{rti_against_jee}. 
                This dataset contains the scores of all students from JEE-2009 (384,977 total) \cite{rti_against_jee}; we used the version available provided by \cite{celis2020interventions}.
                In addition to the scores, for each student, the data contains their self-reported  (binary) gender and their birth category.
                The birth category of a student is an officially designated indicator of their socioeconomic group, where the general (GEN) category is the most privileged; see \cite{Sowell_2008,baswana2015joint} for more details.

                We observe that students not in the GEN category have significantly lower average scores than students in the GEN category (18.2 vs.~35.1); this may not imply that students not in the GEN category would perform poorly if admitted. 
            Indeed, among students of equal true ``potential,'' those from underprivileged groups are known to perform poorer on standardized tests \cite{elsesser2019lawsuitSATACT}. 
            In the Indian context, this could be due to many reasons, including that in India, fewer students outside the GEN category attend primary school compared to students from the general category, and on average a lower amount of money is spent on the education of students in the non-general category compared to the general category \cite{kumar2021castenexus}.

                \begin{figure}
            \centering
                \includegraphics[width=0.45\linewidth, trim={0cm 0cm 0cm 0cm},clip]{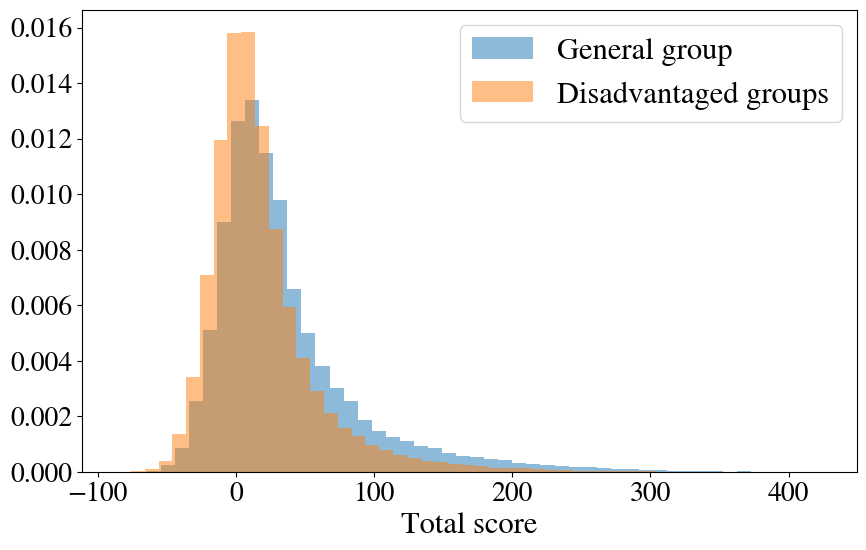}
            % }
            %
            \caption{
                Distribution of scores in the JEE dataset for different protected groups based on birth category. 
                See \cref{sec:empirics} for a discussion of the dataset.
            }
            \label{fig:scores_jee_data}
        \end{figure}

            \subsubsection{Semantic Scholar Open Research Corpus}\label{sec:additional_simulations:sec:semantic_scholar}
            
                \noindent\textbf{Cleaning and predicting author names.}
                    We follow the procedure used by \cite{celis2020interventions}.
                    Concretely, we remove papers without publication year (1.86\% of total) and predict author gender using their first name from a publicly available dataset \cite{first_name_dataset}, containing first names and gender of everyone born between 1890 to 2018 and registered with the US social security administration (USSSA).
                    We remove authors whose first name has 2 or fewer characters, as these names are likely to be abbreviations (retaining 75\% of the total), and then categorize an author as female (respectively male) if more than $\phi=0.9$ fraction of the people of the same first name are female (respectively male) in the USSSA data.
                    We drop all uncategorized authors (32.25\% of the remaining).
                    This results in  3,900,934 women and 5,074,426 men (43.46\% females). 
                    We present the tradeoff between the total number of authors retained and $\phi$ in \cref{fig:citation:tradeoff_with_threshold}.

            \noindent\textbf{Counting the number of citations.}
                We aim to ensure that the citation counts we compute correspond to the total citations received by an author over their lifetime (so far).
                Since the dataset only contains citations from 1980 onwards, we remove authors who published their first paper before 1980 as the dataset does not have information about their earlier citations.
                This is the same as the cleaning procedure used by \cite{celis2020interventions}.
                We present the resulting citation-distributions for male and female authors respectively in \cref{fig:citation:utility_distribution}.

            \begin{figure}[ht!]
                \centering
                \includegraphics[width=0.45\linewidth]{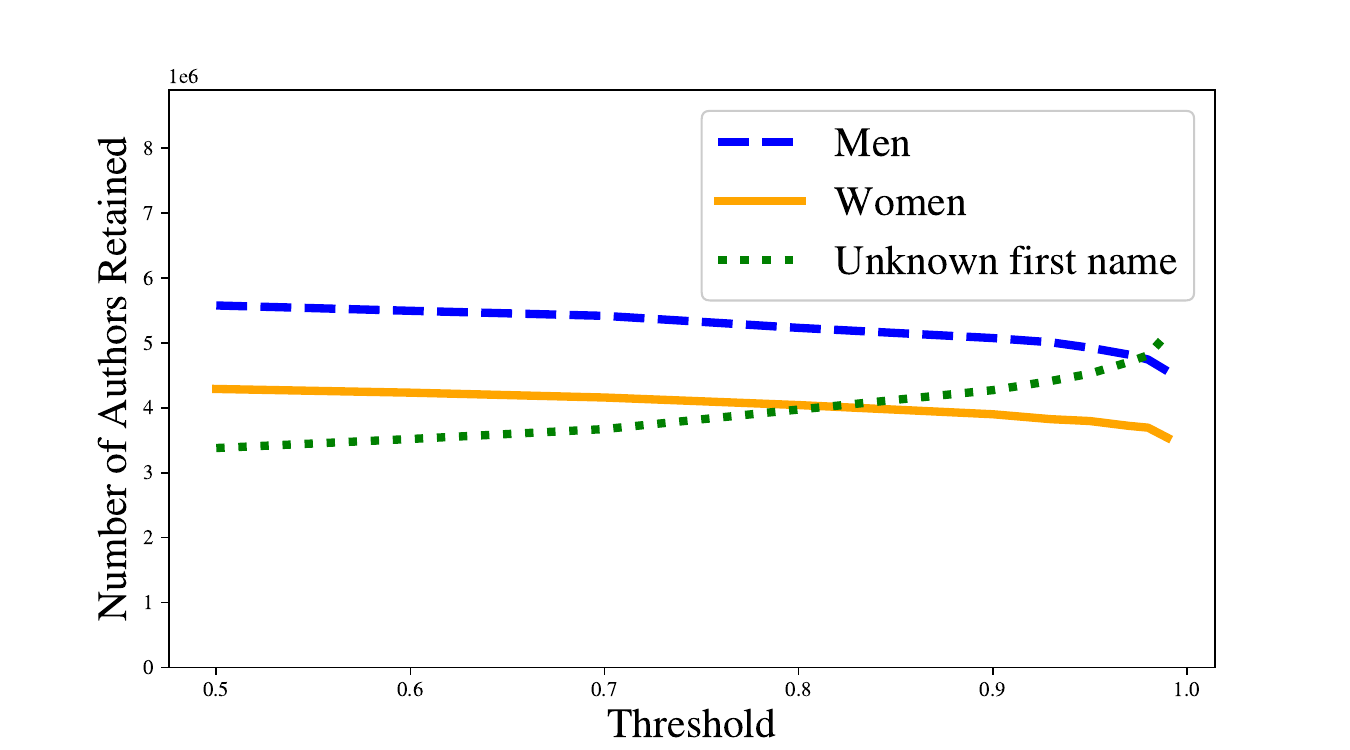}
                \caption{
                    The tradeoff between the threshold $\phi$ used for clearing the Semantic Scholar Open Research Corpus and the number of authors retained. Details appear in \cref{sec:additional_simulations:sec:semantic_scholar}.
                }
                \label{fig:citation:tradeoff_with_threshold}
            \end{figure}
            
            \begin{figure}[ht!]
                \centering
                \includegraphics[width=0.45\linewidth]{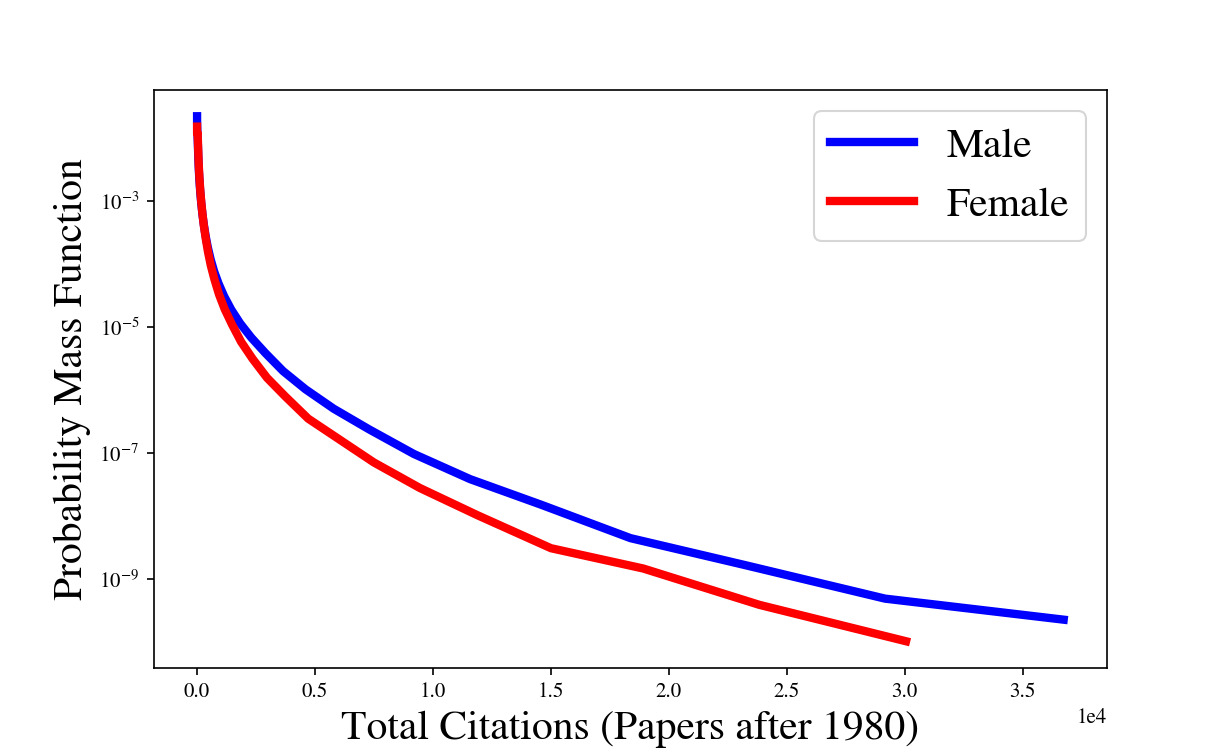}
                \caption{
                    Distributions of total citations of men and women in the Semantic Scholar Open Research Corpus. Details appear in \cref{sec:additional_simulations:sec:semantic_scholar}.
                 }
                \label{fig:citation:utility_distribution}
            \end{figure}

\end{document}